



\NeedsTeXFormat{LaTeX2e}
\documentclass{mscs}

\title[On Recursive Operations Over LLTS]
  {On Recursive Operations Over Logic LTS \thanks{
This work received financial support of the National Natural Science of
China (No. 60973045), Fok Ying-Tung Education Foundation and NSF of the Jiangsu Higher Education Institutions (No. 13KJB520012).}}
\author[Yan Zhang, Zhaohui Zhu and Jinjin Zhang]
  {Y\ls A\ls N\ns Z\ls H\ls A\ls N\ls G$^1$\ns
   Z\ls H\ls A\ls O\ls H\ls U\ls I\ns Z\ls H\ls U\ls$^1$ \thanks{
Corresponding author. E-mail address: zhaohui@nuaa.edu.cn,
zhaohui.nuaa@gmail.com \quad (Zhaohui Zhu)} \ns J\ls I\ls N\ls J\ls I\ls N\ns Z\ls H\ls A\ls N\ls G$^2$ \\
$^1$ College of Computer Science, Nanjing University of Aeronautics and Astronautics,\\
Nanjing P. R. China, 210016\\
$^2$ School of Information Science, Nanjing Audit University, Nanjing P. R. China, 211815}



\pubyear{2001}
\pagerange{\pageref{firstpage}--\pageref{lastpage}}
\volume{00}
\doi{S0960129501003425}

\newtheorem{theorem}{Theorem}[section]
\newtheorem{lemma}{Lemma}[section]
\newtheorem{proposition}{Proposition}[section]
\newtheorem{corollary}{Corollary}[section]

\newtheorem{mydefn}{Definition}[section]

\newtheorem{convention}{Convention}[section]
\newtheorem{rmk}{Remark}[section]

\usepackage{amsmath}

\usepackage{amssymb}

\begin{document}

\label{firstpage}
\maketitle

\begin{abstract}
Recently, in order to mix algebraic and logic styles of specification in a uniform framework, the notion of a logic labelled transition system (Logic LTS or LLTS for short) has been introduced and explored. A variety of constructors over LLTS, including usual process-algebraic operators, logic connectives ($conjunction$ and $disjunction$) and standard temporal operators ($always$ and $unless$), have  been given. However, no attempt has made so far to develop general theory concerning (nested) recursive operations over LLTS and a few fundamental problems are still open. This paper intends to study this issue in pure process-algebraic style.
A few fundamental properties, including precongruence and the uniqueness of consistent solutions of equations, will be established.
\end{abstract}


\ifprodtf \newpage \else \vspace*{-1\baselineskip}\fi

\section{Introduction}

Algebra and logic are two dominant approaches for the specification,
verification and systematic development of reactive and concurrent systems.
They take different standpoints for looking at specifications and
verifications, and offer complementary advantages.

Logical approaches devote themselves to specifying and verifying abstract
properties of systems. In such frameworks, the most common reasonable
property of concurrent systems, such as safety, liveness, etc., can be
formulated in terms of logic formulas without resorting to operational
details and verification is a deductive or model-checking activity. However,
due to their global perspective and abstract nature, logical approaches often
give little support for modular designing and compositional reasoning.

Algebraic approaches put attention to behavioral aspects of systems, which have tended
to use formalisms in algebraic style. These formalisms are referred to as
process algebra or process calculus. In such a paradigm, a specification and
its implementation usually are formulated by terms (expressions) of a formal
language built from a number of operators, and the underlying semantics is
often assigned operationally.
The verification amounts to comparing terms, which is often referred to as \emph{implementation verification} or \emph{equivalence checking}.
Algebraic approaches often support compositional construction
and reasoning, which bring us advantages in developing systems, such as,
supporting modular design and verification, avoiding verifying the whole
system from scratch when its parts are modified, allowing reusability of
proofs and so on \cite{Andersen94}.
Thus such approaches offer significant
support for rigorous systematic development of reactive and concurrent
systems. However, since algebraic approaches specify a system by means of
prescribing in detail how the system should behave, it is often difficult
for them to describe abstract properties of systems, which is a major
disadvantage of such approaches.

In order to take advantage of these two approaches when designing systems,
so-called heterogeneous specifications have been proposed, which uniformly
integrate these two specification styles. Amongst, based on B\"uchi automata
and labelled transition system (LTS) augmented with a predicate, Cleaveland
and L\"uttgen provide a semantic framework for heterogenous system design (Cleaveland
and L\"uttgen 2000, 2002).
In this framework, not only usual operational operators but also logic
connectives are considered, and must-testing preorder presented in \cite{Nicola83} is
adopted to capture refinement relation.
Unfortunately, this setting does not
support compositional reasoning since must-testing preorder is not a
precongruence in this situation.
Moreover, the logic connective conjunction
in this framework lacks the desired property that $r$ is an implementation
of a given specification $p\wedge q$ if and only if $r$ implements both $p$
and $q$.

Recently, L\"uttgen and Vogler have introduced the notion of a Logic LTS (LLTS), which combines operational and logic styles of specification in
one unified framework (L\"uttgen and Vogler 2007, 2010, 2011). 
In addition to usual operational constructors,
e.g., CSP-style parallel composition, hiding and so on, logic connectives
(conjunction and disjunction) and standard modal operators ($always$ and $%
unless$) are also integrated into this framework.
Moreover, the drawbacks in (Cleaveland
and L\"uttgen 2000, 2002)
mentioned above have been remedied by adopting
ready-tree semantics \cite{Luttgen07}.
In order to support compositional reasoning in the
presence of the parallel constructor, a variant of the usual notion of ready
simulation is employed to capture the refinement relation, which has been
shown to be the largest precongruence satisfying some desired properties \cite{Luttgen10}.

Along the direction suggested by L\"uttgen and Vogler in \cite{Luttgen10}, a process
calculus called CLL is presented in \cite{Zhang11}, which reconstructs their setting in
pure process-algebraic style. Moreover, a sound and ground-complete
proof system for CLL is provided. In effect, it gives an axiomatization
of ready simulation in the presence of logic operators.
However, CLL is lack of capability of describing infinite behaviour, that is important for representing reactive systems.


It is well known that recursive operations are fundamental mechanisms for
representing objects with infinite behavior in terms of finite expressions
(see, for instance \cite{Bergstra01}).
We extend CLL with recursive operations and propose a process calculus named CLL$_R$.
Since LLTS involves consideration of inconsistencies, it is far from straightforward to re-establish existent results concerning recursive operations in this framework.
A solid effort is required, especially for handling inconsistencies.
This paper intends to explore recursive operations over
LLTS in pure process-algebraic style.
A behavioral theory of CLL$_R$ will be established, especially, we prove that the behavioral relation (i.e., ready simulation mentioned above) is precongruent w.r.t all operators in CLL$_R$.
Moreover, the uniqueness of solution of equations  will be obtained, provided conditions that, given an equation $X=t_X$, $X$ is strongly guarded and does not occur in the scope of any conjunction in $t_X$.


The remainder of this paper is organized as follows. The next section
recalls some related notions. Section~3 introduces SOS rules of $\text{CLL}_R
$. In section~4, the existence and uniqueness of stable transition model for
$\text{CLL}_R$ is demonstrated, and a few of basic properties of the LTS
associated with $\text{CLL}_R$ are given. More further properties are
considered in Section~5. In section~6, we shall show that the variant of
ready simulation presented by L\"uttgen and Vogler is precongruent in the
presence of (nested) recursive operations.
In section~7, a theorem on the uniqueness of solution of equations is obtained. Finally, a brief
conclusion and discussion are given in Section~8.

\section{Preliminaries}

\subsection{Logic LTS and ready simulation}

This subsection will set up notations and briefly recall the notions of Logic LTS and ready simulation presented by L{\" u}ttgen and Vogler. For motivation behind these notions we refer the reader to (L\"uttgen and Vogler 2007, 2010, 2011).

Let $Act$ be the set of visible actions ranged over by letters $a$, $b$, etc., and let $Act_{\tau}$ denote $Act \cup \{\tau\}$ ranged over by $\alpha$ and $\beta$, where $\tau$ represents invisible actions.
A labelled transition system (LTS) with a predicate is a quadruple $(P,Act_{\tau},\longrightarrow,F)$, where $P$ is a set of processes (states), $\longrightarrow \subseteq P\times Act_{\tau}\times P$ is the transition relation and $F\subseteq P$.

As usual, we write $p \stackrel{\alpha}{\longrightarrow} q$ if $(p,\alpha,q)\in \longrightarrow$.
$q$ is an $\alpha$-derivative of $p$ if $p \stackrel{\alpha}{\longrightarrow}q$.
We write $p \stackrel{\alpha}{\longrightarrow}$ (or, $p \not \stackrel{\alpha}{\longrightarrow}$) if $\exists q\in P.p\stackrel{\alpha}{\longrightarrow}q$ ($\nexists q\in P.p  \stackrel{\alpha}{\longrightarrow}q$ respectively).
Given a process $p$, the ready set $\{\alpha \in Act_{\tau}|p \stackrel{\alpha}{\longrightarrow}\}$ of $p$ is denoted by $\mathcal{I}(p)$.
A state $p$ is stable if it cannot engage in any $\tau$-transition, i.e., $p \not\stackrel{\tau}{\longrightarrow}$.
The list below contains some useful decorated transition relations:

$p \stackrel{\alpha}{\longrightarrow}_F q$ iff $p \stackrel{\alpha}{\longrightarrow} q$ and $p,q\notin F$.

$p \stackrel{\epsilon}{\Longrightarrow}q$ iff $p (\stackrel{\tau}{\longrightarrow})^* q$, where $(\stackrel{\tau}{\longrightarrow})^* $ is the transitive reflexive closure of $\stackrel{\tau}{\longrightarrow}$.

$p \stackrel{\alpha}{\Longrightarrow}q$ iff $\exists r,s\in P.p \stackrel{\epsilon}{\Longrightarrow} r \stackrel{\alpha}{\longrightarrow}s \stackrel{\epsilon}{\Longrightarrow} q$.

$p \stackrel{\gamma}{\Longrightarrow}|q$ iff $p \stackrel{\gamma}{\Longrightarrow}q \not\stackrel{\tau}{\longrightarrow}$ with $\gamma \in Act_{\tau}\cup \{\epsilon\}$.

$p\stackrel{\epsilon }{\Longrightarrow }_Fq$ iff there exists a sequence of $\tau -$labelled transitions from $p$ to $q$ such that all states along this sequence, including $p$ and $q$, are not in $F$. The decorated transition $p \stackrel{\alpha }{\Longrightarrow }_Fq$ may be defined similarly.

$p \stackrel{\epsilon}{\Longrightarrow}_F|q$ (or, $p \stackrel{\alpha}{\Longrightarrow}_F|q$) iff $p \stackrel{\epsilon}{\Longrightarrow}_F q$ ($p \stackrel{\alpha}{\Longrightarrow}_F q$ respectively) and $q$ is stable.

\begin{rmk}\label{R:TRANSITION_CONSISTENCY}
  Notice that some notations above are slightly different
from ones adopted by L\"uttgen and Vogler.
In
(L\"uttgen and Vogler 2010, 2011)
the notation $p%
\stackrel{\epsilon }{\Longrightarrow }\mspace{-8mu}|q$ (or, $p\stackrel{\alpha }{%
\Longrightarrow }\mspace{-8mu}|q$) has the same meaning as $p\stackrel{\epsilon }{%
\Longrightarrow }_F|q$ ($p\stackrel{\alpha }{\Longrightarrow }_F|q$ respectively)
in this paper, while $p\stackrel{\epsilon }{\Longrightarrow }|q $ in this paper does not involve any requirement on consistency.
\end{rmk}

\removebrackets
\begin{mydefn}[\cite{Luttgen10}]\label{D:LLTS}
 An LTS $(P,Act_{\tau},\longrightarrow,F)$ is an LLTS if, for each $p \in P$,

\noindent\textbf{(LTS1) }$p \in F$ if $\exists\alpha\in \mathcal{I}(p)\forall q\in P(p \stackrel{\alpha}{\longrightarrow}q \;\text{implies}\; q\in F)$;

\noindent\textbf{(LTS2)} $p \in F$ if $\nexists q\in P.p \stackrel{\epsilon}{\Longrightarrow}_F|q$.
\end{mydefn}

Here the predicate $F$ is used to denote the set of all inconsistent states.
In the sequel, we shall use the phrase ``{\it %
inconsistency} {\it predicate}'' to refer to $F$.
The condition (LTS1)
formalizes the backward propagation of inconsistencies, and (LTS2) captures
the intuition that divergence (i.e., infinite sequences of $\tau $%
-transitions) should be viewed as catastrophic.

Compared with usual LTSs, it is one distinguishing feature of LLTS that it
involves consideration of inconsistencies.
Roughly speaking, the motivation
behind such consideration lies in dealing with inconsistencies caused by
conjunctive composition.
For example, consider a simple composition $a.0 \wedge b.0$, it cannot be interpreted as deadlock 0 because a run of a process cannot begin with both actions $a$ and $b$, that is $a.0$ and $b.0$ specify processes with different ready sets.
It is proper to tag it as an inconsistent specification.
Moreover, inconsistencies could propagate backward.
$a.b.0 \wedge a.c.0$ specifies the absence of any alternative $a$-transition leading to a consistent state.
It should  also be tagged as an inconsistent one.
For more intuitive ideas and motivation about inconsistency, the reader may
refer to (L\"uttgen and Vogler 2007, 2010).

\removebrackets
\begin{mydefn}[\cite{Luttgen10}]
An LTS $(P,Act_{\tau},\longrightarrow,F)$ is {$\tau$}-pure if, for each $p \in P$, $p\stackrel{\tau}{\longrightarrow}$ implies $\nexists a\in Act.\;p\stackrel{a}{\longrightarrow}$.
\end{mydefn}

Hence, for any state $p$ in a $\tau$-pure LTS, either ${\mathcal I}(p)=\{\tau\}$ or ${\mathcal I}(p)\subseteq Act$ and intuitively, it represents either an external or internal (disjunctive) choice between its outgoing transitions.
Following \cite{Luttgen10}, this paper will focus on $\tau$-pure LLTSs.

In (L\"uttgen and Vogler 2010, 2011), the notion of ready simulation below is adopted to capture the refinement relation, which is a variant of the usual notion of weak ready simulation.
Such kind of ready simulation cares only stable consistent states.

\begin{mydefn}[Ready simulation on LLTS]\label{D:RS}
Let $(P, Act_{\tau}, \longrightarrow , F)$ be a  LLTS.
A relation ${\mathcal R} \subseteq P \times P$ is a stable ready simulation relation, if for any $(p,q) \in {\mathcal R}$ and $a \in Act $\\
\textbf{(RS1)} both $p$ and $q$ are stable;\\
\textbf{(RS2)} $p \notin F$ implies $q \notin F$;\\
\textbf{(RS3)} $p \stackrel{a}{\Longrightarrow}_F|p'$ implies $\exists q'.q \stackrel{a}{\Longrightarrow}_F|q'\; \textrm{and}\;(p',q') \in {\mathcal R}$;\\
\textbf{(RS4)} $p\notin F$ implies ${\mathcal I}(p)={\mathcal I}(q)$.

\noindent We say that $p$ is stable ready simulated by $q$, in symbols $p \underset{\thicksim}{\sqsubset}_{RS} q$, if there exists a stable ready simulation relation $\mathcal R$ with $(p,q) \in {\mathcal R}$.
 Further, $p$ is ready simulated by $q$, written $p\sqsubseteq_{RS}q$, if
 $\forall p'(p\stackrel{\epsilon}{\Longrightarrow}_F| p' \;\text{implies}\; \exists q'(q \stackrel{\epsilon}{\Longrightarrow}_F| q'\; \text{and}\;p' \underset{\thicksim}{\sqsubset}_{RS} q'))$.
 The kernels of $\underset{\thicksim}{\sqsubset}_{RS}$ and $\sqsubseteq_{RS}$ are denoted by $\approx_{RS}$ and $=_{RS}$ respectively.
 It is easy to see that $\underset{\thicksim}{\sqsubset}_{RS}$ is a stable ready simulation relation and both $\underset{\thicksim}{\sqsubset}_{RS}$ and $\sqsubseteq_{RS}$ are pre-order (i.e., reflexive and transitive).
\end{mydefn}


\subsection{Transition system specification}
Structural Operational Semantics (SOS) is proposed by G.~Plotkin in \cite{Plotkin81}, which adopts a syntax oriented view on operational semantics, and gives operational semantics in logical style.
Transition System Specifications (TSSs), as presented by Groote and Vaandrager in \cite{Groote92}, are formalizations of SOS.
This subsection recalls basic concepts related to TSS.
For further information on this issue we refer the reader to \cite{Aceto01,Bol96,Groote92}.

 Let $V_{AR}$ be an infinite set of variables and $\Sigma$  a signature. The set of $\Sigma $-terms over $V_{AR}$, denoted by $T(\Sigma ,V_{AR})$, is the least set such that (I) $V_{AR} \subseteq T(\Sigma ,V_{AR})$ and (II) if $f\in \Sigma$ and $t_1,\dots,t_n \in T(\Sigma ,V_{AR})$, then $f(t_1,\dots,t_n) \in T(\Sigma,V_{AR})$, where $n$ is the arity of $f$.
 $T(\Sigma ,\emptyset )$ is abbreviated by $T(\Sigma )$, elements in $T(\Sigma )$ are called closed or ground terms.

    A substitution $\sigma $ is a mapping from $V_{AR}$ to $T(\Sigma ,V_{AR})$.
    As usual, a substitution $\sigma $ may be lifted to a mapping $T(\Sigma ,V_{AR}) \rightarrow T(\Sigma ,V_{AR})$ by $\sigma (f(t_1,...,t_n))\triangleq f(\sigma(t_1),\dots ,\sigma (t_n))$ for any n-arity $f\in \Sigma$ and $t_1,\dots ,t_n \in T(\Sigma ,V_{AR})$. A substitution is closed if it maps all variables to ground terms.

    A TSS is a quadruple $\mathcal{P}=(\Sigma,\mathbb{A},\mathbb{P},\mathbb{R})$, where $\Sigma$ is a signature, $\mathbb A$ is a set of labels, $\mathbb P$ is a set of predicate symbols and $\mathbb R$ is a set of rules.
    Positive literals are all expressions of the form $t \stackrel{a}{\longrightarrow} t'$ or $tP$, while negative literals are all expressions of the form $t \not\stackrel{a}{\longrightarrow}$ or $t\neg P$, where $t,t'\in  T(\Sigma ,V_{AR})$, $a\in \mathbb{A}$ and $P\in {\mathbb P}$.
    A literal is a positive or negative literal, and $\varphi$, $\psi$, $\chi$ are used to range over literals.
    A literal is ground or closed if all terms occurring in it are ground.
    A rule $r\in \mathbb R$ has the form like $\frac{H}{C}$, where $H$, the premises of the rule $r$, denoted  $prem(r)$, is a set of literals, and $C$, the conclusion of the rule $r$, denoted $conc(r)$, is a positive literal.
    Furthermore, we write $pprem(r)$ for the set of positive premises of $r$ and $nprem(r)$ for the set of negative premises of $r$.
    A rule $r$ is positive if $nprem(r)= \emptyset$.
    A TSS is  positive if it has only positive rules.
    Given a substitution $\sigma$ and a rule $r \in \mathbb{R}$, $\sigma(r)$ is the rule obtained from $r$ by replacing each variable in $r$ by its $\sigma$-image, that is, $\sigma(r) \triangleq \frac{\{\sigma(\varphi)|\varphi \in prem(r)\}}{\sigma(conc(r))}$.
    Moreover, if $\sigma$ is closed then $\sigma(r)$ is a ground instance of $r$.

\begin{mydefn}[Proof in positive TSS]\label{D:PROOF}
    Let $\mathcal{P}=(\Sigma,\mathbb{A},\mathbb{P},\mathbb{R})$ be a positive TSS. A proof of a closed positive literal $\psi$ from $\mathcal{P}$ is a well-founded, upwardly branching tree, whose nodes are labelled with closed positive literals, such that\\
    --- the root is labelled with $\psi$,\\
    --- if $\chi$ is the label of a node $q$ and $\{\chi_i:i\in I\}$ is the set of labels of the nodes directly above $q$, then there is a rule $\{\varphi_i:i \in I\} \slash \varphi$ in $\mathbb{R}$ and a closed substitution $\sigma$ such that $\chi=\sigma(\varphi)$ and $\chi_i=\sigma(\varphi_i)$ for each $i \in I$.

    If a proof of $\psi$ from $\mathcal{P}$ exists, then $\psi$ is provable from $\mathcal{P}$, in symbols $\mathcal{P}\vdash \psi$.
\end{mydefn}

A natural and simple method of describing the operational nature of closed terms is in terms of LTSs.
Given a TSS, an important problem is how to associate LTS with any given closed terms.
For positive TSS, the answer is straightforward.
However, this problem is far from trivial for TSS containing negative premises.
The notions of stable model and stratification of TSS play an important role in dealing with this issue.
The remainder of this subsection intends to recall these notions briefly.


Given a TSS $\mathcal{P}=(\Sigma,\mathbb{A},\mathbb{P},\mathbb{R})$, a transition model $M$ is a subset of $Tr(\Sigma ,\mathbb{A})\cup Pred(\Sigma ,\mathbb{P} )$, where $Tr(\Sigma ,\mathbb{A})=T(\Sigma)\times \mathbb{A}\times T(\Sigma )$ and $Pred(\Sigma ,\mathbb{P} )=T(\Sigma )\times \mathbb{P}$, elements $(t,a,t^{\prime })$ and $(t,P)$ in $M$ are written as $t \stackrel{a}{\longrightarrow} t'$ and $tP$ respectively.
     A positive closed literal $\psi$ holds in $M$ or $\psi$ is valid in $M$, in symbols $M\models \psi$, if $\psi \in M$. A negative closed literal  $t \not \stackrel{a}{\longrightarrow}$ (or, $t \neg P$) holds in $M$, in symbols $M \models t \not \stackrel{a}{\longrightarrow}$ ($M \models t \neg P$ respectively), if there is no $t'$ such that $t \stackrel{a}{\longrightarrow} t' \in M$($tP \notin M$ respectively). For a set of closed literals $\Psi$, we write $M \models \Psi$ iff $M \models \psi$ for each $\psi \in \Psi$.
     $M$ is a model of $\mathcal{P}$ if, for each $r\in \mathbb{R}$ and $\sigma:V_{AR}\longrightarrow T(\Sigma)$, we have $M \models conc(\sigma(r))$ whenever $M \models prem(\sigma(r))$.
    $M$ is supported by $\mathcal{P}$ if, for each $\psi \in M$, there exists $r \in \mathbb{R}$ and $\sigma:V_{AR} \longrightarrow T(\Sigma)$ such that $M \models prem(\sigma(r))$ and $conc(\sigma(r))=\psi$.
    $M$ is a supported model of $\mathcal{P}$ if $M$ is supported by $\mathcal{P}$ and $M$ is a model of $\mathcal{P}$.

\removebrackets
\begin{mydefn}[\cite{Aceto01,Bol96}]\label{D:STRATIFICATION}
    Let $\mathcal{P}=(\Sigma,\mathbb{A},\mathbb{P},\mathbb{R})$ be a TSS and $\alpha$ an ordinal number. A function $S:Tr(\Sigma,\mathbb{A})\cup Pred(\Sigma, \mathbb{P}) \longrightarrow \alpha$ is a stratification of $\mathcal{P}$ if, for every rule $r \in \mathbb{R}$ and every substitution $\sigma :V_{AR} \longrightarrow T(\Sigma)$, the following conditions hold.\\
    (1) $S(\psi)\leq S(conc( \sigma(r)))$ for each $\psi \in pprem(\sigma (r))$,\\
    (2) $S(tP)<S(conc(\sigma(r)))$ for each $t \neg P \in nprem(\sigma(r))$, and\\
    (3) $S(t \stackrel{a}{\longrightarrow} t') < S(conc(\sigma(r)))$ for each $t' \in T(\Sigma)$ and $t \not\stackrel{a}{\longrightarrow} \in nprem(\sigma(r))$.

    A TSS is stratified iff there exists a stratification function for it.
\end{mydefn}

\removebrackets
\begin{mydefn}[\cite{Bol96,Gelfond88}] \label{D:STABLE}
   Let $\mathcal{P}=(\Sigma,\mathbb{A},\mathbb{P},\mathbb{R})$ be a TSS and $M$ a transition model. $M$ is a stable transition model for $\mathcal P$ if
    \[M=M_{Strip({\mathcal P},M)},\]
    where
    $Strip({\mathcal P},M)$ is the TSS $(\Sigma,\mathbb{A},\mathbb{P},Strip(\mathbb{R},M))$ with
    \[Strip(\mathbb{R},M)\triangleq\left\{\frac{pprem(r)}{conc(r)}|\quad r\in \mathbb{R}_{ground}\;\text{and}\;M\models nprem(r)\right\},
    \]
    where $\mathbb{R}_{ground}$ denotes the set of all ground instances of rules in $\mathbb{R}$,
    and $M_{Strip({\mathcal P},M)}$ is the least transition model of the positive TSS $Strip({\mathcal P},M)$.
\end{mydefn}

As is well known, stable models are supported models and each stratified TSS $\mathcal P$ has a unique stable model \cite{Bol96}; moreover, such stable model does not depend on particular stratification function \cite{Groote93}.

\section{Syntax and SOS rules of $\text{CLL}_R$}
The calculus $\text{CLL}_R$ is obtained from CLL by enriching it with recursive operations.
Following \cite{Baeten08}, this paper adopts the notation $\langle X | E \rangle$ to denote recursive operations, which encompasses both the CCS operator $recX.t$ and standard way of expressing recursion in ACP.
Formally, the terms in $\text{CLL}_{R}$ are defined by BNF:
\[ t::= 0\;|\perp\;|\;(\alpha.t) \;|\; (t\Box t)\;|\;(t\wedge t)\;|\;(t\vee t)\;|\;(t\parallel_A t)\;|\;X\; | \;\langle X|E \rangle \]
where $X \in V_{AR}$, $\alpha\in Act_\tau$, $A\subseteq Act$ and recursive specification
$E = E(V)$ is a nonempty finite set of equations $E = \{X = t| X \in V\}$.
As usual, 0 encodes deadlock.
The prefix $\alpha.t$ has a single capability, expressed by $\alpha$; the process $t$ cannot proceed until $\alpha$ has been exercised.
$\Box$ is an external choice operator.
$\parallel_A $ is a CSP-style parallel operator, $t_1\parallel_A t_2$ represents a process that behaves as $t_1$ in parallel with $t_2$ under the synchronization set $A$.
$\bot$ represents an inconsistent process with empty behavior.
$\vee$ and $\wedge$ are logical operators, which are intended for describing logical combinations of processes.

In the sequel, we often denote $\langle X|\{X=t_X\}\rangle$ briefly by $\langle X|X=t_X\rangle$.
Given a term $\langle X|E \rangle$ and variable $Y$, the phrase
\textquotedblleft $Y$ occurs in $\langle X|E \rangle$" means that $Y$ occurs in $t_Z$ for some $Z= t_Z \in E$.
Moreover, the scope of a recursive operation $\langle X|E \rangle$ exactly consists of all $t_Z$ with $Z = t_Z \in E$.
An occurrence of a variable $X$ in a given $t$ is  free if it does not occur in the scope of any recursive operation $\langle Y|E \rangle$ with $E = E(V)$ and $X \in V$.
A variable $X$ in term $t$ is a free variable if all occurrences of $X$ in $t$ are free, otherwise $X$ is a recursive variable in $t$.

\begin{convention}\label{C:REC_VAR}
  Throughout this paper, as usual, we make the assumption that recursive variables are distinct from each other. That is, for any two recursive specifications $E(V_1)$ and $E'(V_2)$ we have $V_1 \cap V_2 = \emptyset$. Moreover, we will tacitly restrict our attention to terms where no recursive variable  has free occurrences. For example we will not consider terms such as $X \Box \langle X| X= a.X \rangle$ because this term could be replaced by the clear term $X \Box \langle Y | Y = a.Y \rangle$ with the same meaning.
\end{convention}

On account of the above convention, given a term $t$, the set $FV(t)$ of all free variables of $t$ may be defined recursively as:
\begin{itemize}
  \item $FV(X) = \{X\}$; $FV(0)=FV(\bot)= \emptyset$; $FV(\alpha.t) = FV(t)$;
  \item $FV(t_1 \odot t_2) = FV(t_1) \cup FV(t_2)$ with $\odot \in \{\vee,\wedge,\Box,\parallel_A\}$;
  \item $FV(\langle Y|E \rangle) = \bigcup_{Z = t_Z \in E}FV(t_Z) - V$ where $E = E(V)$.
\end{itemize}

As usual, a term $t$ is closed if $FV(t) = \emptyset$.
The set of all closed terms of $\text{CLL}_R$ is denoted $T(\Sigma_{\text{CLL}_R} )$.
In the following, a term is a \emph{process} iff it is closed.
Unless noted otherwise we use $p,q,r$ to represent processes.
We shall always use $t_1 \equiv t_2$ to mean that expressions $t_1$ and $t_2$ are syntactically identical.
In particular, $\langle Y | E \rangle \equiv \langle Y'| E' \rangle$ means that $Y \equiv Y'$ and for any $Z$ and $t_Z$, $Z = t_Z \in E$ iff  $Z = t_Z \in E'$.

\begin{mydefn}
  For any recursive specification $E(V)$ and term $t$, we define $\langle t|E \rangle$ to be $t\{\langle X|E \rangle/X: X \in V\}$, that is, $\langle t|E \rangle$ is obtained from $t$ by simultaneously replacing all free occurrences of each $X(\in V)$ by $\langle X|E \rangle$.
\end{mydefn}

For example, consider $t \equiv X \Box a.\langle Y | Y = X \ \Box Y \rangle$ and $E(\{X\})=\{X=t_X\}$ then $\langle t| E\rangle \equiv \langle X|X =t_X\rangle \Box a.\langle Y | Y = \langle X|X=t_X\rangle \Box Y \rangle$.
In particular, for any recursive specification $E(V)$ and $t \equiv X$, $\langle t|E \rangle \equiv \langle X|E\rangle$ whenever $X \in V$ and $\langle t|E \rangle \equiv X$ if $X \notin V$.


As usual, an occurrence of $X$ in $t$ is strongly (or, weakly) guarded if such occurrence is within some subexpression $a.t_1$ with $a \in Act$ ($\tau.t_1$ or $t_1 \vee t_2$ respectively).
A variable $X$ is strongly (or, weakly) guarded in $t$ if each occurrence of $X$ is strongly (weakly respectively) guarded.
Notice that, since the first move of $r \vee s$ is a $\tau$-labelled transition (see Table~\ref{Ta:OPERATIONAL_RULES}), which is independent of $r$ and $s$, any occurrence of $X$ in $r \vee s$ is treated as being weakly guarded.
A recursive specification $E(V)$ is guarded if for each $X \in V$ and $Z = t_Z \in E$, each occurrence of $X$ in $t_Z$ is (weakly or strongly) guarded.

\begin{convention}\label{C:REC_SPEC}
  It is well known that unguarded processes cause many problems in many aspects of the theory \cite{Milner83} and  unguarded recursion is incompatible with negative rules \cite{Bloom94}.
  As usual, this paper will focus on guarded recursive specifications.
  That is, we assume that all recursive specifications considered in the remainder of this paper are guarded.
\end{convention}


We now provide SOS rules to specify the behavior of processes (i.e., closed terms) formally.
All SOS rules are divided into two parts: operational and predicate rules.

\begin{table}
\begin{center}
    $\begin{array}{ll}
    \displaystyle \qquad (Ra_1)\frac{-}{\alpha.x_1 \stackrel{\alpha}{\longrightarrow} x_1}  &
    \displaystyle \qquad (Ra_2)\frac{x_1 \stackrel{a}{\longrightarrow} y_1, x_2 \not \stackrel{\tau}{\longrightarrow}}{x_1 \Box x_2 \stackrel{a}{\longrightarrow} y_1}\\
      & \\
    \displaystyle \qquad (Ra_3)\frac{x_1 \not\stackrel{\tau}{\longrightarrow} , x_2 \stackrel{a}{\longrightarrow} y_2 }{x_1 \Box x_2 \stackrel{a}{\longrightarrow} y_2}&
    \displaystyle \qquad(Ra_4)\frac{x_1 \stackrel{\tau}{\longrightarrow} y_1}{x_1 \Box x_2 \stackrel{\tau}{\longrightarrow} y_1 \Box x_2}\\
     & \\
    \displaystyle \qquad (Ra_5)\frac{x_2 \stackrel{\tau}{\longrightarrow} y_2}{x_1 \Box x_2 \stackrel{\tau}{\longrightarrow} x_1 \Box y_2}&
    \displaystyle \qquad (Ra_6)\frac{x_1 \stackrel{a}{\longrightarrow} y_1, x_2 \stackrel{a}{\longrightarrow}y_2}{x_1 \wedge x_2 \stackrel{a}{\longrightarrow} y_1 \wedge y_2}\\
    & \\
    \displaystyle \qquad (Ra_7)\frac{x_1 \stackrel{\tau}{\longrightarrow} y_1}{x_1 \wedge x_2 \stackrel{\tau}{\longrightarrow} y_1 \wedge x_2} &
    \displaystyle \qquad (Ra_8)\frac{x_2 \stackrel{\tau}{\longrightarrow} y_2}{x_1 \wedge x_2 \stackrel{\tau}{\longrightarrow} x_1 \wedge y_2} \\
    & \\
    \displaystyle \qquad(Ra_9)\frac{-}{x_1 \vee x_2 \stackrel{\tau}{\longrightarrow} x_1} &
    \displaystyle \qquad(Ra_{10})\frac{-}{x_1 \vee x_2 \stackrel{\tau}{\longrightarrow} x_2} \\
    & \\
    \displaystyle \qquad(Ra_{11})\frac{x_1 \stackrel{\tau}{\longrightarrow} y_1}{x_1 \parallel_A x_2 \stackrel{\tau}{\longrightarrow} y_1\parallel_A x_2} &
    \displaystyle \qquad(Ra_{12})\frac{x_2 \stackrel{\tau}{\longrightarrow} y_2}{x_1 \parallel_A x_2 \stackrel{\tau}{\longrightarrow} x_1 \parallel_A y_2} \\
    & \\
    \displaystyle \qquad(Ra_{13})\frac{x_1 \stackrel{a}{\longrightarrow} y_1 , x_2 \not \stackrel{\tau}{\longrightarrow} }{x_1 \parallel_A x_2 \stackrel{a}{\longrightarrow} y_1 \parallel_A x_2}(a\notin A)&
    \displaystyle \qquad(Ra_{14})\frac{x_1 \not\stackrel{\tau}{\longrightarrow} , x_2 \stackrel{a}{\longrightarrow} y_2 }{x_1 \parallel_A x_2 \stackrel{a}{\longrightarrow} x_1 \parallel_A y_2}(a\notin A)\\
    & \\
    \displaystyle \qquad(Ra_{15})\frac{x_1 \stackrel{a}{\longrightarrow} y_1, x_2 \stackrel{a}{\longrightarrow}y_2}{x_1\parallel_A x_2 \stackrel{a}{\longrightarrow} y_1 \parallel_A y_2} (a\in A)&
    \displaystyle \qquad(Ra_{16})\frac{\langle t_X| E \rangle  \stackrel{\alpha}{\longrightarrow} y}{\langle X|E \rangle \stackrel{\alpha}{\longrightarrow} y}(X=t_X \in E)\\
    &
    \end{array}
    $

\caption{Operational rules\label{Ta:OPERATIONAL_RULES}}
\end{center}
\end{table}

Operational rules $Ra_i(1 \leq i \leq 16)$ are listed in Table~\ref{Ta:OPERATIONAL_RULES}, where $a \in Act$, $\alpha \in Act_{\tau}$ and $A \subseteq Act$.
Negative premises in Rules $Ra_2$, $Ra_3$, $Ra_{13}$ and $Ra_{14}$ give $\tau$-transition precedence over transitions labelled with visible actions, which guarantees that the transition model of $\text{CLL}_{R}$ is $\tau$-pure.
Rules $Ra_9$ and $Ra_{10}$ illustrate that the operational aspect of $t_1\vee t_2$ is same as internal choice in usual process calculus.
Rule $Ra_6$ reflects that conjunction operator is a synchronous product for visible transitions.
The operational rules of the other operators are as usual.

\begin{table}[ht]
\begin{center}
    $\begin{array}{ll}
    \displaystyle \qquad(Rp_1)\frac{-}{\bot F}&
    \displaystyle \qquad(Rp_2)\frac{x_1 F}{\alpha .x_1 F}\\
    &\\
    \displaystyle \qquad(Rp_3)\frac{x_1 F, x_2 F}{x_1\vee x_2 F}&
    \displaystyle \qquad(Rp_4)\frac{x_1 F}{x_1\Box x_2 F}\\
    &\\
    \displaystyle \qquad(Rp_5)\frac{x_2 F}{x_1\Box x_2 F}&
    \displaystyle \qquad(Rp_6)\frac{x_1 F}{x_1\parallel_A x_2 F}\\
    &\\
    \displaystyle \qquad(Rp_7)\frac{x_2 F}{x_1\parallel_A x_2 F}&
    \displaystyle \qquad(Rp_8)\frac{x_1 F}{x_1\wedge x_2 F}\\
    &\\
    \displaystyle \qquad(Rp_9)\frac{x_2 F}{x_1\wedge x_2 F}&
    \displaystyle \qquad(Rp_{10})\frac{x_1 \stackrel{a}{\longrightarrow} y_1, x_2 \not\stackrel{a}{\longrightarrow}, x_1 \wedge x_2 \not\stackrel{\tau}{\longrightarrow}}{x_1 \wedge x_2 F}\\
    &\\
    \displaystyle \qquad(Rp_{11})\frac{x_1 \not\stackrel{a}{\longrightarrow} , x_2 \stackrel{a}{\longrightarrow} y_2, x_1 \wedge x_2 \not\stackrel{\tau}{\longrightarrow}}{x_1 \wedge x_2 F}&
    \displaystyle \qquad(Rp_{12})\frac{x_1 \wedge x_2 \stackrel{\alpha}{\longrightarrow} z, \{yF:x_1 \wedge x_2 \stackrel{\alpha}{\longrightarrow}y\}}{x_1 \wedge x_2 F} \\
    &\\
    \displaystyle \qquad(Rp_{13})\frac{\{yF:x_1 \wedge x_2 \stackrel{\epsilon}{\Longrightarrow}|y\}}{x_1 \wedge x_2 F} &
    \displaystyle \qquad(Rp_{14})\frac{\langle t_X|E \rangle F}{\langle X|E \rangle F}(X = t_X \in E) \\
    &\\
    \displaystyle \qquad(Rp_{15})\frac{\{yF:\langle X|E \rangle \stackrel{\epsilon}{\Longrightarrow}|y\}}{\langle X|E \rangle F}&\\
    &
    \end{array}
    $
\caption{Predicate rules}\label{Ta:PREDICATIVE_RULES}
\end{center}
\end{table}
Predicate rules in Table~\ref{Ta:PREDICATIVE_RULES} specify  the inconsistency predicate $F$.
$0$ and $\bot$ represent different processes.
Rule $Rp_1$ says that $\bot$ is inconsistent.
Thus $\bot$ cannot be implemented.
While $0$ is consistent, which is an implementable process.
Rule $Rp_3$ reflects that if both two disjunctive parts are inconsistent then so is the disjunction.
Rules $Rp_4-Rp_9$ describe the system design strategy that if one part is inconsistent, then so is the whole composition.
Rules $Rp_{10}$ and $Rp_{11}$ reveal that a stable conjunction is inconsistent if its conjuncts have distinct ready sets.


Rules $Rp_{13}$ and $Rp_{15}$ are used to capture (LTS2) in Def.~\ref{D:LLTS}, which are the abbreviation of the rules with the format
\[\frac{\{yF: \exists y_0,y_1,\dots,y_n(z \equiv y_0 \stackrel{\tau}{\longrightarrow}y_1\stackrel{\tau}{\longrightarrow} \dots \stackrel{\tau}{\longrightarrow}y_n \equiv y\;\text{and}\; y \not\stackrel{\tau}{\longrightarrow})\}}{zF}\]
with $z \equiv x_1 \wedge x_2$ or $\langle X|E \rangle$.
Intuitively, these two rules say that if all stable $\tau$-descendants of $z$  are inconsistent, then $z$ itself is inconsistent.
Notice that, especially for readers who are familiar with notations used in \cite{Luttgen10}, the transition relation $\stackrel{\epsilon}{\Longrightarrow}|$ occurring in these two rules does not involve any requirement on consistency (see Remark~\ref{R:TRANSITION_CONSISTENCY} and notations above it).

Since the behavior of any process in CLL is finite, each process can reach a stable state, and Rules $Rp_{1}-Rp_{12}$  suffice to capture the inconsistency predicate $F$.
In particular, these rules guarantee that the LTS associated with CLL satisfies (LTS1) and (LTS2) in Def.~\ref{D:LLTS} \cite{Zhang11}.
However, for $\text{CLL}_R$, Rules $Rp_{1}-Rp_{12}$ are insufficient even if the usual rule for recursive operations (i.e. $Rp_{14}$) is added.
For instance, consider processes $q \equiv \langle X|X= \tau.X\rangle$ and $p \equiv \langle X| X = X \vee 0 \rangle \wedge a.0$, it is not difficult to see that neither $qF$ nor $pF$ can be inferred by using only Rules $Rp_{1} - Rp_{12}$ and $Rp_{14}$, however, both $p$ and $q$ should be inconsistent due to (LTS2).
Fortunately, an inference of $pF$ (or, $qF$) is at hand by admitting Rule $Rp_{13}$ ($Rp_{15}$ respectively).


Summarizing, the TSS for $\text{CLL}_{R}$ is $\mathcal{P}_{\text{CLL}_{R}} =(\Sigma_{\text{CLL}_{R}},Act_{\tau},\mathbb{P}_{\text{CLL}_{R}},\mathbb{R}_{\text{CLL}_{R}})$, where
\begin{itemize}
  \item $\Sigma_{\text{CLL}_{R}} = \{ \Box,\wedge,\vee,0,\bot\}\cup\{\alpha.(): \alpha \in Act_\tau\}\cup\{\parallel_A: A\subseteq Act\} \cup \{\langle X|E \rangle : E= E(V)\;\text{is a guarded recursive specification with } X \in V\}$,
  \item $\mathbb{P}_{\text{CLL}_{R}}=\{F\}$, and
  \item $\mathbb{R}_{\text{CLL}_{R}}=\{Ra_1, \dots, Ra_{16}\}\cup\{Rp_1, \dots ,Rp_{15}\}$.
\end{itemize}

\section{Stable transition model of $\mathcal{P}_{\text{CLL}_{R}}$}


This section will consider the well-definedness of the TSS ${\mathcal P}_{\text{CLL}_{R}}$ (i.e., the existence and uniqueness of the stable model of ${\mathcal P}_{\text{CLL}_{R}}$) and provide a few basic properties of the LTS associated with ${\mathcal P}_{\text{CLL}_R}$.




    As we know, it is not trivial that a TSS with rules with negative premises and recursion has a unique stable model.
    In order to demonstrate that ${\mathcal P}_{\text{CLL}_{R}}$ has one, it is sufficient to give a stratification function of ${\mathcal P}_{\text{CLL}_R}$.
    To this end, a few preliminary notations are introduced. Given a term $t$, the degree of $t$, denoted by $|t|$, is inductively defined as:
    \begin{itemize}
      \item $|0|=|\bot|=|\langle X|E \rangle |\triangleq1$;
      \item $|t_1 \odot t_2|\triangleq|t_1|+|t_2|+1$ for each $\odot \in \{\wedge,\Box,\vee,\parallel_A\}$;
      \item $|\alpha .t|\triangleq|t|+1$ with $\alpha \in Act_{\tau}$.
    \end{itemize}

The function $G: T(\Sigma_{\text{CLL}_{R}}) \longrightarrow \mathbb{N}$ is defined by:
\begin{itemize}
  \item $G(\langle X|E\rangle)\triangleq 1$;
  \item $G(0)=G(\bot)=G(\alpha.t)=G(t_1 \vee t_2)  \triangleq 0$ with $\alpha \in Act_{\tau}$;
  \item $G(t_1 \odot t_2) \triangleq G(t_1)+G(t_2)$ for each $\odot \in \{\wedge,\Box, \parallel_A\}$.
\end{itemize}

Clearly, given a term $t$, $G(t)$ is the number of unguarded recursive operations occurring in $t$.
Further, the function $S_{{\mathcal P}_{\text{CLL}_{R}}}$ from  $Tr(\Sigma_{\text{CLL}_{R}},Act_{\tau})\;\cup\; Pred(\Sigma_{\text{CLL}_{R}},{\mathbb P}_{\text{CLL}_{R}})$ to $\omega \times 2+1$ is given below, where $\omega$ is the initial limit ordinal,
\begin{itemize}
  \item $S_{{\mathcal P}_{\text{CLL}_{R}}}(t\stackrel{\alpha}{\longrightarrow} t')\triangleq G(t)\times \omega + |t|$;
  \item $S_{{\mathcal P}_{\text{CLL}_{R}}}(tF)\triangleq \omega \times 2$.
\end{itemize}

Since each recursive specification is assumed to be guarded (see, Convention~\ref{C:REC_SPEC}), it is not difficult to check that this function $S_{{\mathcal P}_{\text{CLL}_{R}}}$ is a stratification of ${\mathcal P}_{\text{CLL}_{R}}$. Moreover, since each stratified TSS has a unique stable model \cite{Bol96}, ${\mathcal P}_{\text{CLL}_{R}}$ has a unique stable transition model.
From now on, we use $M_{\text{CLL}_{R}}$ to denote such stable model.

\begin{mydefn}
    The LTS associated with $\text{CLL}_{R}$, in symbols $LTS(\text{CLL}_{R})$, is the quadruple
    $(T(\Sigma_{\text{CLL}_{R}}),Act_{\tau},\longrightarrow_{\text{CLL}_{R}},F_{\text{CLL}_{R}})$, where \\
    --- $p \stackrel{\alpha}{\longrightarrow}_{\text{CLL}_{R}} p'$ iff $p\stackrel{\alpha}{\longrightarrow} p' \in M_{\text{CLL}_{R}}$;\\
    --- $p\in F_{\text{CLL}_{R}}$ iff $pF \in M_{\text{CLL}_{R}}$.
\end{mydefn}

Therefore, $p \stackrel{\alpha}{\longrightarrow}_{\text{CLL}_{R}} p'$ (or, $p \in F_{\text{CLL}_R}$) if and only if  $Strip({\mathcal P}_{\text{CLL}_{R}}, M_{\text{CLL}_{R}}) \vdash p\stackrel{\alpha}{\longrightarrow}p'$ ($pF$ respectively) for any processes $p$, $p'$ and $\alpha \in Act_{\tau}$.
This allows us to proceed by induction on the depth of inferences when demonstrating propositions concerning $\longrightarrow_{\text{CLL}_R}$ and $F_{\text{CLL}_R}$.

\begin{convention}
  For the sake of convenience, in the remainder of this paper, we shall omit the subscript in labelled transition relations $\stackrel{\alpha}{\longrightarrow}_{\text{CLL}_R}$, that is, we shall use $\stackrel{\alpha}{\longrightarrow}$ to denote transition relation in $LTS(\text{CLL}_R)$.
  Thus, the notation $\stackrel{\alpha}{\longrightarrow}$ has double utility:
  predicate symbols in the TSS ${\mathcal P}_{\text{CLL}_R}$ and labelled transition relations on processes in $LTS(\text{CLL}_R)$.
  However, it usually does not lead to confusion in a given context.
  Similarly, the notation $F_{\text{CLL}_R}$ will be abbreviated by $F$.
  Hence the symbol $F$ is overloaded, predicate symbol in the TSS ${\mathcal P}_{\text{CLL}_R}$ and the set of all inconsistent processes within $LTS(\text{CLL}_R)$, in each case the context of use will allow us to make the distinction.
\end{convention}

In the following, we intend to provide a number of simple properties of $LTS({\text{CLL}_{R}})$. In particular, we will show that $LTS({\text{CLL}_{R}})$ is a $\tau$-pure LLTS.

\begin{lemma}\label{L:F_NORMAL}
Let $p$ and $q$ be any two processes.
  \begin{enumerate}[(1)]
\renewcommand{\theenumi}{(\arabic{enumi})}
    \item  $p \vee q \in F $  iff $p,q \in F $.
    \item  $\alpha.p \in F $ iff $p \in F $ for each $\alpha \in Act_{\tau}$.
    \item  $p \odot q \in F $  iff either $p \in F $ or $q \in F $  with $\odot \in \{\Box, \parallel_A\}$.
    \item Either $p \in F $ or $q \in F $ implies $p \wedge q \in F $.
    \item $0 \notin F $ and $\bot \in F $.
    \item $\langle X|X = \tau.X \rangle \in F$.
    \item If $\forall q(p \stackrel{\epsilon}{\Longrightarrow}|q\;\text{implies}\;q \in F)$ then $p \in F$.
    \item $\langle X|E \rangle  \in F $ iff $\langle t_X|E \rangle \in F $ for each $X$ with $X = t_X \in E$.
  \end{enumerate}
\end{lemma}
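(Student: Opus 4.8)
The plan is to work throughout inside the unique stable model $M_{\text{CLL}_R}$, exploiting two features recorded just before the lemma. First, $pF \in M_{\text{CLL}_R}$ iff $pF$ is provable in the positive TSS $Strip(\mathcal{P}_{\text{CLL}_R},M_{\text{CLL}_R})$, which licenses arguments by induction on the depth of inference. Second, $M_{\text{CLL}_R}$ is a \emph{supported} model, so every $pF\in M_{\text{CLL}_R}$ must be the conclusion of some ground rule instance whose premises already hold in $M_{\text{CLL}_R}$. For each ``if'' (sufficiency) direction I would simply exhibit the matching predicate rule; for each ``only if'' (necessity) direction I would invoke supportedness and inspect which rules can possibly carry the given head operator in their conclusion.

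For items (1), (2), (3) and (5) this inspection is immediate. Sufficiency: apply $Rp_3$ for $p\vee q$, $Rp_2$ for $\alpha.p$, $Rp_4$/$Rp_5$ for $\Box$, $Rp_6$/$Rp_7$ for $\parallel_A$, and $Rp_1$ for $\bot$. Necessity: the only rule concluding $x_1\vee x_2\,F$ is $Rp_3$, the only one concluding $\alpha.x_1\,F$ is $Rp_2$, those concluding $x_1\Box x_2\,F$ are exactly $Rp_4,Rp_5$, and those concluding $x_1\parallel_A x_2\,F$ are $Rp_6,Rp_7$; reading off their premises gives the stated equivalences. For $0\notin F$ I note that no rule has a conclusion of the form $0F$ (in particular the divergence rules $Rp_{13},Rp_{15}$ fire only on $\wedge$- and recursion-headed terms), so $0F$ is unprovable. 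Item (4) is deliberately one-directional: the converse fails because $\wedge$ additionally admits $Rp_{10}$--$Rp_{13}$, so a conjunction may be inconsistent through clashing ready sets or divergence even when neither conjunct is; hence I would prove only the easy direction via $Rp_8$/$Rp_9$.

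The crux is item (7), the general form of condition (LTS2), which I would prove by induction on the structure of the process $p$. If $p$ is stable then $p\stackrel{\epsilon}{\Longrightarrow}| p$, so the hypothesis directly yields $p\in F$. If $p$ is unstable I split on its head operator. For $\tau.p'$ and $p_1\vee p_2$ the set of stable $\tau$-descendants of $p$ coincides with that of the immediate subterm(s), so the hypothesis transfers downward, the induction hypothesis applies to the structurally smaller subterms, and items (2), (1) repackage the result. For a conjunction or a recursion term the hypothesis is literally the premise of $Rp_{13}$ or $Rp_{15}$ respectively, so $p\in F$ follows by one rule application with no descent into the body (this is precisely why those two operators carry dedicated divergence rules). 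The delicate cases are $\Box$ and $\parallel_A$: assuming $p_1\Box p_2\notin F$, part (3) gives $p_1,p_2\notin F$, whence the contrapositive of the induction hypothesis produces stable $\tau$-descendants $p_1'',p_2''\notin F$; using $\tau$-purity (so that $\tau\notin A$ and componentwise $\tau$-moves interleave freely) the composite $p_1''\Box p_2''$ is a \emph{stable} $\tau$-descendant of $p_1\Box p_2$, and part (3) gives $p_1''\Box p_2''\notin F$, contradicting the hypothesis; the case $\parallel_A$ is identical.

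With (7) in hand the remaining items are short. Item (6) is a corollary: computing via $Ra_{16}$ and $Ra_1$ that $\langle X|X=\tau.X\rangle\stackrel{\tau}{\longrightarrow}\langle X|X=\tau.X\rangle$ is its only transition, this process has no stable $\tau$-descendant, so the hypothesis of (7) holds vacuously and it lies in $F$ (equivalently, $Rp_{15}$ applies with empty premise set). For item (8), the ``if'' direction is a single application of $Rp_{14}$. For ``only if'', supportedness of $\langle X|E\rangle\,F$ leaves two possibilities: either $Rp_{14}$, giving $\langle t_X|E\rangle\,F$ outright, or $Rp_{15}$; in the latter case I would observe that $Ra_{16}$ makes $\langle X|E\rangle$ and $\langle t_X|E\rangle$ share exactly the same transitions, hence the same stable $\tau$-descendants, so all of the latter lie in $F$ and part (7) yields $\langle t_X|E\rangle\in F$. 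I expect the main obstacle to be precisely the induction in (7): getting the measure right (structural on the term, with $\wedge$ and recursion discharged by their own rules rather than by descending into subterms, which would not be well-founded under unfolding) and handling the vacuous, divergent instances of $Rp_{13}$ and $Rp_{15}$ carefully.
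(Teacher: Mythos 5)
Your proposal is correct and takes essentially the same route as the paper: items (1)--(6) by direct inspection of the predicate rules, item (7) by structural induction on $p$ with Rules $Rp_{13}$ and $Rp_{15}$ discharging the conjunction and recursion cases outright, and item (8) by case analysis on the last rule applied, combining item (7) with the fact that $\langle X|E \rangle$ and $\langle t_X|E \rangle$ share the same one-step transitions. The paper's own proof is merely terser---it declares (1)--(6) straightforward and only hints at (7) and (8)---so your worked-out $\Box$, $\parallel_A$ and stable-process cases are a faithful filling-in of the same argument rather than a different one.
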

\begin{proof}
  Items (1) - (6) are straightforward.
  For item (7), it proceeds by induction on $p$, in particular, for the case where $p$ is of the format $p_1 \wedge p_2$ (or  $\langle X |E \rangle$), the conclusion immediately follows due to Rule $Rp_{13}$ ($Rp_{15}$ respectively).

For item (8), the implication from right to left is straightforward.
The argument of the converse implication splits into two cases based on the last rule applied in the proof tree of  $Strip({\mathcal P}_{\text{CLL}_{R}}, M_{\text{CLL}_{R}}) \vdash \langle X |E \rangle F$.
If Rule $Rp_{14}$ is the last rule then the proof is trivial.
For the other case where Rule $Rp_{15}$ is used, it is also straightforward by applying item (7) in this lemma and the fact that $\langle X|E \rangle \stackrel{\tau}{\longrightarrow}r$ iff $\langle t_X |E \rangle \stackrel{\tau}{\longrightarrow} r$ for any $r$.
\end{proof}

The notion of $\tau$-purity is a technical constraint for LLTSs (L\"uttgen and Vogler 2007, 2010).
The result below shows that $LTS({\text{CLL}_{R}})$ is indeed $\tau$-pure.

\begin{theorem}\label{L:TAU_PURE}
    $LTS({\text{CLL}_{R}})$ is $\tau$-pure.
\end{theorem}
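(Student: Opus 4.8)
The plan is to prove $\tau$-purity not by structural induction on the process $p$, but by induction on the \emph{height} of the inference witnessing a $\tau$-transition. Recall from the construction of $M_{\text{CLL}_R}$ via $Strip$ (Definition~\ref{D:STABLE}) that $p\stackrel{\alpha}{\longrightarrow}p'$ holds in $LTS(\text{CLL}_R)$ exactly when it is derivable in the positive TSS $Strip(\mathcal{P}_{\text{CLL}_R},M_{\text{CLL}_R})$. Any such derivation uses only (stripped instances of) the operational rules $Ra_1,\ldots,Ra_{16}$, all of which are finitary, so every derivation of a transition is a finite well-founded tree of some natural-number height. I would therefore establish the sharper claim: \emph{for every process $p$, if $p\stackrel{\tau}{\longrightarrow}p'$ has a proof of height $n$, then $p\not\stackrel{a}{\longrightarrow}$ for all $a\in Act$.} Theorem~\ref{L:TAU_PURE} is the specialization of this claim to arbitrary $p$.

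The induction proceeds by case analysis on the last rule of the derivation of $p\stackrel{\tau}{\longrightarrow}p'$, which is determined by the head operator of $p$. The base-like cases are the prefix $\tau.t$ (rule $Ra_1$) and the disjunction $t_1\vee t_2$ (rules $Ra_9,Ra_{10}$): the $\tau$-move has a one-node proof, and no operational rule can ever produce a \emph{visible}-action transition from a term with one of these heads, so there is nothing to rule out. For the composite heads $\Box$, $\wedge$ and $\parallel_A$, the $\tau$-transition must come from one of $Ra_4,Ra_5$ (resp.\ $Ra_7,Ra_8$; resp.\ $Ra_{11},Ra_{12}$), whose single premise is a $\tau$-transition $t_i\stackrel{\tau}{\longrightarrow}$ of one operand $t_i$, proved at height $n-1$. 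Assume toward a contradiction that $p\stackrel{a}{\longrightarrow}$ for some $a\in Act$. The applicable visible rule ($Ra_2/Ra_3$, $Ra_6$, or $Ra_{13}/Ra_{14}/Ra_{15}$) would, for that same active operand $t_i$, require either $t_i\stackrel{a}{\longrightarrow}$ or $t_i\not\stackrel{\tau}{\longrightarrow}$. The induction hypothesis applied to $t_i\stackrel{\tau}{\longrightarrow}$ (height $n-1$) yields $t_i\not\stackrel{a}{\longrightarrow}$, while $t_i\stackrel{\tau}{\longrightarrow}$ directly contradicts $t_i\not\stackrel{\tau}{\longrightarrow}$. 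Hence no visible rule is applicable and $p\not\stackrel{a}{\longrightarrow}$.

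The genuinely delicate case, and the reason for inducting on proof height rather than on term structure, is the recursion head $p\equiv\langle X|E\rangle$. Here the only applicable rule is $Ra_{16}$, so $\langle X|E\rangle\stackrel{\alpha}{\longrightarrow}y$ holds iff $\langle t_X|E\rangle\stackrel{\alpha}{\longrightarrow}y$, and any proof of the former contains a proof of the latter as an immediate subtree of strictly smaller height. Since the unfolding $\langle t_X|E\rangle$ is in general \emph{not} a subterm of $\langle X|E\rangle$ (it may be strictly larger), a naive structural induction breaks down precisely at this point; on the height measure, however, the premise sits at level $n-1$. Thus from $\langle X|E\rangle\stackrel{\tau}{\longrightarrow}p'$ (height $n$) I extract $\langle t_X|E\rangle\stackrel{\tau}{\longrightarrow}p'$ (height $n-1$), apply the induction hypothesis to get $\langle t_X|E\rangle\not\stackrel{a}{\longrightarrow}$ for every $a\in Act$, and conclude $\langle X|E\rangle\not\stackrel{a}{\longrightarrow}$ by $Ra_{16}$. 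I expect this recursion step to be the main obstacle; the remaining cases are a routine rule-by-rule check. The legitimacy of the height induction rests on the finiteness of these proof trees, which is in turn underwritten by the guardedness assumption (Convention~\ref{C:REC_SPEC}) and the well-definedness of $M_{\text{CLL}_R}$ established earlier in this section.
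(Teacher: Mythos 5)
Your proposal is correct and is essentially the paper's own argument: the paper likewise reduces the claim to showing $p\not\stackrel{a}{\longrightarrow}$ whenever $p\stackrel{\tau}{\longrightarrow}q$, and proves it by induction on the depth of the proof tree of $Strip({\mathcal P}_{\text{CLL}_{R}}, M_{\text{CLL}_{R}}) \vdash p \stackrel{\tau}{\longrightarrow}q$, which the paper leaves as "straightforward" and you have carried out rule by rule (including the recursion case, where the unfolding premise sits at strictly smaller proof height even though it is not a smaller term).
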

\begin{proof}
Suppose $p \stackrel{\tau}{\longrightarrow}$.
Hence $p \stackrel{\tau}{\longrightarrow} q$ for some $q$.
Then the lemma would be established by proving that $p \not\stackrel{a}{\longrightarrow}$ for any $a \in Act$.
It is straightforward by induction on  the depth of the proof tree of $Strip({\mathcal P}_{\text{CLL}_{R}}, M_{\text{CLL}_{R}}) \vdash p \stackrel{\tau}{\longrightarrow}q$.
\end{proof}

%
In order to prove that $LTS(\text{CLL}_R)$ is a LLTS, the result below is needed.
Its converse is an instance of (LTS1) with $\alpha= \tau$, and hence also holds by Theorem~\ref{L:LLTS}.

\begin{lemma}\label{L:FAILURE_TAU_I}
For any process $p$ with $\tau \in {\mathcal I}(p)$, if $p \in F $ then $\forall q(p\stackrel{\tau}{\longrightarrow} q\;\text{implies}\; q\in F)$.
\end{lemma}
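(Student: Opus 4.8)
The plan is to argue by induction on the depth of the proof tree of $Strip(\mathcal{P}_{\text{CLL}_{R}}, M_{\text{CLL}_{R}}) \vdash pF$, performing a case analysis on the last rule applied. First I would exploit the hypothesis $\tau \in \mathcal{I}(p)$ in two ways: via $\tau$-purity (Theorem~\ref{L:TAU_PURE}) it gives $\mathcal{I}(p) = \{\tau\}$, so $p$ admits no visible transition, and it directly discards several candidate last rules, namely $Rp_1$ (which needs $p \equiv \bot$, but then $\tau \notin \mathcal{I}(\bot)$) and $Rp_{10}, Rp_{11}$ (which require $p$ to be stable, contradicting $\tau \in \mathcal{I}(p)$).

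Next I would dispatch the ``structural'' rules using the two directions of Lemma~\ref{L:F_NORMAL} together with the induction hypothesis. For $Rp_2$ the label must be $\tau$, so $p \equiv \tau.p_1$ and its single $\tau$-derivative $p_1$ lies in $F$ by the premise; for $Rp_3$, $p \equiv p_1 \vee p_2$ has $\tau$-derivatives exactly $p_1, p_2$, both in $F$ by the premises. The pairs $Rp_4/Rp_5$ ($\Box$), $Rp_6/Rp_7$ ($\parallel_A$) and $Rp_8/Rp_9$ ($\wedge$) I would treat uniformly: with, say, premise $p_1 \in F$, a $\tau$-derivative that leaves the inconsistent component $p_1$ untouched is in $F$ by Lemma~\ref{L:F_NORMAL}(3) (or (4) for $\wedge$), while a $\tau$-derivative that unfolds $p_1$ via some $p_1 \stackrel{\tau}{\longrightarrow} y_1$ forces $\tau \in \mathcal{I}(p_1)$, so the induction hypothesis applied to the subproof of $p_1 F$ gives $y_1 \in F$, and again Lemma~\ref{L:F_NORMAL}(3)/(4) finishes.

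The interesting work is concentrated in the rules peculiar to conjunction and recursion. For $Rp_{12}$ the rule fires at some label $\alpha$; I would invoke $\tau$-purity to conclude $\alpha = \tau$, so its premise set $\{yF : p \stackrel{\tau}{\longrightarrow} y\}$ already asserts the conclusion. For the divergence rules $Rp_{13}$ (with $p \equiv p_1 \wedge p_2$) and $Rp_{15}$ (with $p \equiv \langle X|E\rangle$), whose premise is that every stable $\tau$-descendant of $p$ is in $F$, I would take an arbitrary $q$ with $p \stackrel{\tau}{\longrightarrow} q$, observe that each stable $\tau$-descendant of $q$ is also one of $p$ (prefix the witnessing $\tau$-run with $p \stackrel{\tau}{\longrightarrow} q$), hence in $F$, and then apply Lemma~\ref{L:F_NORMAL}(7) to obtain $q \in F$. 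Finally, for $Rp_{14}$ we have $p \equiv \langle X|E\rangle$ with premise $\langle t_X|E\rangle F$; since $Ra_{16}$ makes the $\tau$-derivatives of $\langle X|E\rangle$ and $\langle t_X|E\rangle$ coincide, I would apply the induction hypothesis to $\langle t_X|E\rangle$ and transfer the result.

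The hard part will be keeping the conjunction and recursion cases honest. Because the converse of Lemma~\ref{L:F_NORMAL}(4) fails, an inconsistent $p_1 \wedge p_2$ need not have an inconsistent conjunct, so I cannot reduce $Rp_{12}, Rp_{13}$ to the structural cases; these are precisely where $\tau$-purity and Lemma~\ref{L:F_NORMAL}(7) must do the work. Equally, $Rp_{14}$ replaces $\langle X|E\rangle$ by the non-subterm $\langle t_X|E\rangle$, which is the reason the induction has to run on the depth of the proof tree of $pF$ rather than on the syntactic structure of $p$; I should double-check that the infinitely-branching rules $Rp_{13}, Rp_{15}$ cause no trouble, which they do not, since those cases are closed directly through Lemma~\ref{L:F_NORMAL}(7) and never appeal to the induction hypothesis.
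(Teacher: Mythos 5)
Your proposal is correct, but it runs the induction along a different axis than the paper does, so the two proofs are genuinely distinct. The paper fixes the transition $p\stackrel{\tau}{\longrightarrow}q$ and inducts on the depth of \emph{its} proof tree, splitting on the form of $p$: for $\Box$, $\parallel_A$ and prefix it first decomposes $p\in F$ via Lemma~\ref{L:F_NORMAL} and then applies the induction hypothesis to the transition subproof of the moving component; for $\langle X|E\rangle$ it transfers inconsistency to $\langle t_X|E\rangle$ via Lemma~\ref{L:F_NORMAL}(8); and only in the conjunction case does it open a nested case analysis on the last predicate rule (where, exactly as in your argument, $\tau$-purity forces $\alpha=\tau$ in $Rp_{12}$, and the divergence rule $Rp_{13}$ is re-applied to $q$). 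You instead induct on the proof tree of $pF$ and split once, on the last predicate rule, applying the induction hypothesis to predicate subproofs ($p_iF$ in $Rp_4$--$Rp_9$, $\langle t_X|E\rangle F$ in $Rp_{14}$) and closing $Rp_{12}$, $Rp_{13}$, $Rp_{15}$ without any appeal to the hypothesis. Your route buys a single uniform case analysis covering all operators, it never needs the decomposition direction of Lemma~\ref{L:F_NORMAL}(3)/(4) (the rule's own premise names the inconsistent component), and Lemma~\ref{L:F_NORMAL}(7) disposes of both divergence rules at once irrespective of the shape of $q$. The paper's route buys a finite induction measure, since operational proof trees are finitely branching, whereas your measure must be read as a well-founded (ordinal) induction because $Rp_{12}$, $Rp_{13}$ and $Rp_{15}$ may have infinitely many premises --- a point you correctly flag as harmless, since precisely those cases never invoke the induction hypothesis. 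Both arguments rest on the same external ingredients, namely $\tau$-purity (Theorem~\ref{L:TAU_PURE}) and Lemma~\ref{L:F_NORMAL}, so neither is more demanding on the surrounding development.
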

\begin{proof}
Suppose $p\stackrel{\tau}{\longrightarrow} q$. We may prove $q \in F$ by induction on the depth of the proof tree $\mathcal T$ of $Strip({\mathcal P}_{\text{CLL}_{R}}, M_{\text{CLL}_{R}}) \vdash p\stackrel{\tau}{\longrightarrow}q$.
It proceeds by distinguishing different cases based on the form of $p$.
Here we  handle only three cases as examples.\\

\noindent Case 1 $p \equiv p_1 \Box p_2$.

W.l.o.g, assume the last rule applied in $\mathcal T$  is $\frac{p_1 \stackrel{\tau}{\longrightarrow}p_1'}{p_1 \Box p_2 \stackrel{\tau}{\longrightarrow} p_1' \Box p_2}$.
Hence $q \equiv p_1' \Box p_2$.
Since $p \in F$, by Lemma~\ref{L:F_NORMAL}(3), $p_1 \in F$ or $p_2 \in F$.
If $p_2 \in F$ then it immediately follows from Lemma~\ref{L:F_NORMAL}(3) that $q \equiv p_1' \Box p_2 \in F$.
If $p_1 \in F$ then $p_1' \in F$ by induction hypothesis (IH, for short). Hence $p_1' \Box p_2 \in F$, as desired.\\

\noindent Case 2 $p \equiv \langle X|E \rangle$.

The last rule applied in $\mathcal T$  is $\frac{\langle t_X |E \rangle \stackrel{\tau}{\longrightarrow}q}{\langle X| E \rangle \stackrel{\tau}{\longrightarrow} q}$ with $X = t_X \in E$.
Since $p \in F$, by Lemma~\ref{L:F_NORMAL}(8), we have $\langle t_X |E \rangle \in F$.
Then $q \in F$ by applying IH.\\

\noindent Case 3 $p \equiv p_1 \wedge p_2$.

W.l.o.g, assume the last rule applied in $\mathcal T$ is $\frac{p_1 \stackrel{\tau}{\longrightarrow}p_1'}{p_1 \wedge p_2 \stackrel{\tau}{\longrightarrow} p_1' \wedge p_2}$.
Hence $q \equiv p_1' \wedge p_2$.
In the following, we intend to show $q \in F$ by distinguishing four cases based on the last rule applied in the inference of $Strip({\mathcal P}_{\text{CLL}_{R}}, M_{\text{CLL}_{R}}) \vdash p_1 \wedge p_2F$.\\

\noindent Case 3.1 $\frac{p_1 F}{p_1\wedge p_2 F}$ or $\frac{p_2 F}{p_1\wedge _2 F}$.

Similar to Case 1, omitted.\\

\noindent Case 3.2  $\frac{p_1 \stackrel{a}{\longrightarrow} r, p_2 \not\stackrel{a}{\longrightarrow}, p_1 \wedge p_2 \not\stackrel{\tau}{\longrightarrow}}{p_1 \wedge p_2 F}$ or $\frac{p_1 \not\stackrel{a}{\longrightarrow} , p_2 \stackrel{a}{\longrightarrow} r, p_1 \wedge p_2 \not\stackrel{\tau}{\longrightarrow}}{p_1 \wedge p_2 F}$.

This case is impossible because of $\tau \in {\mathcal I}(p_1 \wedge p_2)$.\\

\noindent Case 3.3 $\frac{p_1 \wedge p_2 \stackrel{\alpha}{\longrightarrow} r, \{r'F:p_1 \wedge p_2 \stackrel{\alpha}{\longrightarrow}r'\}}{p_1 \wedge p_2 F}$.

Since $LTS(\text{CLL}_R)$ is $\tau$-pure and $p_1 \wedge p_2 \stackrel{\tau}{\longrightarrow}$, we have $\alpha = \tau$.
Hence $q \in F$ immediately.\\

\noindent Case 3.4 $\frac{\{rF:p_1 \wedge p_2 \stackrel{\epsilon}{\Longrightarrow}|r\}}{p_1 \wedge p_2 F}$.

Assume $q \equiv p_1'\wedge p_2 \stackrel{\epsilon}{\Longrightarrow}| r'$.
Thus $r' \in F$ due to $p \stackrel{\tau}{\longrightarrow}p_1'\wedge p_2\stackrel{\epsilon}{\Longrightarrow}| r'$.
Hence $p_1' \wedge p_2 \in F$ by applying Rule $Rp_{13}$.
\end{proof}

Now we are ready to show that  $LTS({\text{CLL}_{R}})$ is a LLTS.
\begin{theorem}\label{L:LLTS}
    $LTS({\text{CLL}_{R}})$ is a LLTS.
\end{theorem}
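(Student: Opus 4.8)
The plan is to check directly that $LTS(\text{CLL}_R)$ satisfies the two closure conditions (LTS1) and (LTS2) of Definition~\ref{D:LLTS} for every process $p$. A useful preliminary observation is that (LTS2) and the case $\alpha=\tau$ of (LTS1) can be disposed of by one and the same argument, resting on the forward propagation of inconsistency along $\tau$-steps (Lemma~\ref{L:FAILURE_TAU_I}) together with Lemma~\ref{L:F_NORMAL}(7). So I would first settle these, and then treat (LTS1) for visible actions separately, using $\tau$-purity (Theorem~\ref{L:TAU_PURE}) to know that any process with a visible transition in its ready set is stable.

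For (LTS2), assume $\nexists q.\,p\stackrel{\epsilon}{\Longrightarrow}_F|q$. I would show that \emph{every} stable $\tau$-descendant is inconsistent, i.e.\ $p\stackrel{\epsilon}{\Longrightarrow}|q$ implies $q\in F$ (recall this relation carries no consistency requirement, cf.\ Remark~\ref{R:TRANSITION_CONSISTENCY}). Given a path $p=p_0\stackrel{\tau}{\longrightarrow}\cdots\stackrel{\tau}{\longrightarrow}p_n=q$ with $q$ stable and $q\notin F$, either all $p_i$ avoid $F$, in which case $p\stackrel{\epsilon}{\Longrightarrow}_F|q$ contradicts the hypothesis, or some $p_i\in F$; taking the largest such $i<n$ and applying Lemma~\ref{L:FAILURE_TAU_I} to $p_i\stackrel{\tau}{\longrightarrow}p_{i+1}$ forces $p_{i+1}\in F$, contradicting maximality. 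Hence $q\in F$, and Lemma~\ref{L:F_NORMAL}(7) yields $p\in F$. The subcase $\alpha=\tau$ of (LTS1) is then immediate: if every $\tau$-derivative of $p$ lies in $F$ then no consistent $\tau$-path can even start, so $\nexists q.\,p\stackrel{\epsilon}{\Longrightarrow}_F|q$ and (LTS2) applies.

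The remaining work is (LTS1) for $a\in Act$: assuming $a\in{\mathcal I}(p)$ and every $a$-derivative of $p$ is in $F$, derive $p\in F$, where $p$ is stable. The key decision is the induction measure: structural induction fails because unfolding a recursion $\langle X|E\rangle$ into $\langle t_X|E\rangle$ need not decrease size. Instead I would induct on the depth of the inference of a witnessing transition $p\stackrel{a}{\longrightarrow}q$ and split on its last rule. The prefix case uses Lemma~\ref{L:F_NORMAL}(2); for $\Box$ and $\parallel_A$ I use Lemma~\ref{L:F_NORMAL}(3): stability lifts the components' $a$-transitions to $p$, so the hypothesis transfers to a component whose witness has strictly smaller depth, and the induction hypothesis gives inconsistency of that component (for $\parallel_A$ with $a\in A$ this is phrased contrapositively on both operands, producing a consistent synchronised derivative otherwise). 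The conjunction case is essentially free: rule $Rp_{12}$ is exactly an instance of (LTS1), so $p\stackrel{a}{\longrightarrow}q$ together with all $a$-derivatives in $F$ yields $p\in F$ with no appeal to the induction hypothesis. Finally, for $p\equiv\langle X|E\rangle$ the witnessing transition is produced by $Ra_{16}$ from a strictly shorter inference $\langle t_X|E\rangle\stackrel{a}{\longrightarrow}q$; since $\langle X|E\rangle$ and $\langle t_X|E\rangle$ have identical derivatives and $\langle t_X|E\rangle$ is stable, the induction hypothesis gives $\langle t_X|E\rangle\in F$, and Lemma~\ref{L:F_NORMAL}(8) transports this back to $\langle X|E\rangle\in F$.

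The main obstacle is precisely the recursion case, and it dictates the shape of the whole argument. Guardedness (Convention~\ref{C:REC_SPEC}) is what guarantees a genuine decrease: each appeal to $Ra_{16}$ strips one layer of inference, so depth-of-inference is well-founded even though terms may grow under unfolding, and Lemma~\ref{L:F_NORMAL}(8) is the bridge that lets the inductive conclusion about the unfolded body be pulled back to the recursion term. A secondary delicate point is the synchronising case $a\in A$ of $\parallel_A$, where one cannot conclude that all $a$-derivatives of a single component are inconsistent; here the contrapositive form of the induction hypothesis, applied to both components, is needed to exhibit a consistent product derivative under the assumption $p\notin F$.
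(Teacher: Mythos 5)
Your proposal is correct and takes essentially the same route as the paper: (LTS2) follows from Lemma~\ref{L:F_NORMAL}(7) combined with the forward-propagation Lemma~\ref{L:FAILURE_TAU_I}, and (LTS1) is proved by induction on the depth of the inference of a witnessing transition with a case split on the last rule, the conjunction case being discharged directly by Rule $Rp_{12}$, the synchronising $\parallel_A$ case argued contrapositively on both operands, and recursion handled via $Ra_{16}$ together with Lemma~\ref{L:F_NORMAL}(8). The only (harmless) deviations are that you derive the $\tau$-instance of (LTS1) from the already-established (LTS2) rather than treating it inside the induction, and that you spell out the recursion case, which the paper leaves to the reader.
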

\begin{proof}
\noindent \textbf{(LTS1)} Suppose $\alpha \in {\mathcal I}(p)$ and $\forall r(p \stackrel{\alpha}{\longrightarrow}r\;\text{implies}\;r \in F)$.
Then $p \stackrel{\alpha}{\longrightarrow} q$ for some $q$.
To complete the proof, we intend to show $ p \in F$.
 It proceeds by induction on the depth of the proof tree $\mathcal T$ of $Strip({\mathcal P}_{\text{CLL}_{R}}, M_{\text{CLL}_{R}}) \vdash p\stackrel{\alpha}{\longrightarrow}q$.
We distinguish different cases based on the form of $p$.
In particular, the proof for the case $p \equiv p_1 \wedge p_2$ is immediate by Rule $Rp_{12}$.
In the following, we give the proof for the case $p \equiv p_1 \parallel_A p_2$, the other cases are left to the reader.
The argument splits into two cases depending on $\alpha$.\\

\noindent Case 1 $\alpha = \tau$.

W.l.o.g, assume the last rule applied in $\mathcal T$ is $\frac{p_1 \stackrel{\tau}{\longrightarrow}p_1'}{p_1 \parallel_A p_2 \stackrel{\tau}{\longrightarrow} p_1' \parallel_A p_2}$.
Thus $q \equiv p_1' \parallel_A p_2$.
        If $p_2 \in F  $ then $p_1 \parallel_A p_2 \in F $ follows from Lemma~\ref{L:F_NORMAL}(3) at once.
         For the other case $p_2 \notin F$, it is not difficult to see that each $\tau$-derivative of $p_1$ is inconsistent, that is $\forall p_1'' (p_1  \stackrel{\tau}{\longrightarrow}  p_1'' \;\text{implies}\; p_1'' \in F)$. Hence $p_1 \in F$ by IH.
        Therefore it follows from Lemma~\ref{L:F_NORMAL}(3) that $p_1\parallel_A p_2 \in F$, as desired.\\

  \noindent Case 2 $\alpha \in Act$.

         In this situation, the last rule applied in $\mathcal T$ has one of the following three formats:

\noindent  (1) $\frac{p_1 \stackrel{\alpha}{\longrightarrow} p_1',p_2 \not \stackrel{\tau}{\longrightarrow}}{p_1 \parallel_A p_2 \stackrel{\alpha}{\longrightarrow} p_1' \parallel_A p_2} (\alpha\notin A)$;
  (2) $\frac{p_2 \stackrel{\alpha}{\longrightarrow} p_2',p_1 \not \stackrel{\tau}{\longrightarrow}}{p_1 \parallel_A p_2 \stackrel{\alpha}{\longrightarrow} p_1\parallel_A p_2'}(\alpha\notin A)$;
  (3)  $\frac{p_1 \stackrel{\alpha}{\longrightarrow} p_1',p_2 \stackrel{\alpha}{\longrightarrow} p_2'}{p_1 \parallel_A p_2 \stackrel{\alpha}{\longrightarrow} p_1' \parallel_A p_2'}( \alpha\in A) $.

        We consider only (3), the other two may be handled in a similar manner as the case $\alpha = \tau$.
Since $\forall r(p_1 \parallel_A p_2 \stackrel{\alpha}{\longrightarrow}r\;\text{implies}\;r \in F)$, by Lemma~\ref{L:F_NORMAL}(3), it is easy to see that either $\forall r(p_1 \stackrel{\alpha}{\longrightarrow}r\;\text{implies}\;r \in F)$ or $\forall r(p_2 \stackrel{\alpha}{\longrightarrow}r\;\text{implies}\;r \in F)$.
Furthermore, due to $\alpha \in {\mathcal I}(p_1)$ and $\alpha \in {\mathcal I}(p_2)$, by IH, we have $p_1 \in F$ or $p_2 \in F$, which implies $p_1 \parallel_A p_2 \in F$.\\

\noindent \textbf{(LTS2)} It suffices to show that, for each $p$, if $p\notin F $ then $p \stackrel{\epsilon}{\Longrightarrow}_F|q$ for some $q$.
    Suppose $p\notin F $.
    By Lemma~\ref{L:F_NORMAL}(7),  there exists $q$ such that $p\stackrel{\epsilon}{\Longrightarrow} |q$ and $q \notin F $.
    Then it immediately follows from Lemma~\ref{L:FAILURE_TAU_I} that $p\stackrel{\epsilon}{\Longrightarrow}_F |q$, as desired.
\end{proof}

\begin{rmk}
It is worth pointing out that Lemma~\ref{L:FAILURE_TAU_I} does not always hold for LLTS.
In fact, the property ``\emph{$p \in F$ implies $q \in F$ for each $\tau$-derivative $q$ of $p$}" is logically independent of Def.~\ref{D:LLTS}.
It is SOS rules adopted in this paper that bring such additional property.
Hence this paper restricts itself to specific LLTSs, which makes reasoning about inconsistency a bit easier than in the general LLTS setting.
%
%
\end{rmk}

A simple observation on proof trees for $Strip({\mathcal P}_{\text{CLL}_{R}}, M_{\text{CLL}_{R}}) \vdash p \wedge q F$ is given below, which will be used in establishing a fundamental property of conjunctive compositions.

\begin{lemma} \label{L:RS_CON_Pre}
For any finite sequence $p_0 \wedge q_0 \stackrel{\tau}{\longrightarrow} , ..,\stackrel{\tau}{\longrightarrow}  p_i \wedge q_i \stackrel{\tau}{\longrightarrow} ,..,
            \stackrel{\tau}{\longrightarrow}  |p_n \wedge q_n(n \geq 0)$, if $p_i \wedge q_i \in F  $ and $p_i,q_i\notin F  $ for each $i \leq n$, then the inference of $p_0 \wedge q_0F$ essentially depends on $p_n \wedge q_nF$, that is,
            each proof tree for $Strip({\mathcal P}_{\text{CLL}_{R}}, M_{\text{CLL}_{R}}) \vdash p_0 \wedge q_0F$ has a subtree with root $p_n \wedge q_nF$, in particular, such subtree is proper if $n \geq 1$.
\end{lemma}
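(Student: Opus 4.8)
The plan is to argue by induction on $n$, analyzing the last (i.e. bottom-most) rule applied in a given proof tree $\mathcal{T}$ for $Strip({\mathcal P}_{\text{CLL}_{R}}, M_{\text{CLL}_{R}}) \vdash p_0 \wedge q_0 F$. The decisive preliminary observation is that, for every $i<n$, the conjunction $p_i \wedge q_i$ is \emph{not} stable (it performs the $\tau$-transition to $p_{i+1} \wedge q_{i+1}$) while $p_i \notin F$ and $q_i \notin F$. Inspecting the rules whose conclusion has the shape $(\cdot \wedge \cdot)F$, Rules $Rp_8$ and $Rp_9$ are then excluded because they require $p_i F$ or $q_i F$, and Rules $Rp_{10}$, $Rp_{11}$ are excluded because they require the conjunction to be stable. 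Furthermore, by $\tau$-purity (Theorem~\ref{L:TAU_PURE}) the non-stable state $p_i \wedge q_i$ admits no visible transition, so whenever Rule $Rp_{12}$ is applied to derive $p_i \wedge q_i F$ it can only use $\alpha=\tau$. Hence the last rule deriving any such $p_i \wedge q_i F$ is either $Rp_{12}$ (with $\alpha=\tau$) or $Rp_{13}$.

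First I would dispose of the base case $n=0$: the claimed subtree is then $\mathcal{T}$ itself, rooted at $p_0 \wedge q_0 F$, and no properness is required. For the inductive step with $n \geq 1$, I split on the last rule of $\mathcal{T}$. If it is $Rp_{12}$ (with $\alpha=\tau$), its premise set contains $yF$ for \emph{every} $\tau$-derivative $y$ of $p_0 \wedge q_0$, in particular $(p_1 \wedge q_1)F$; thus $\mathcal{T}$ has an immediate subtree $\mathcal{T}_1$ rooted at $p_1 \wedge q_1 F$. Applying the induction hypothesis to $\mathcal{T}_1$ along the shortened sequence from $p_1 \wedge q_1$ to $p_n \wedge q_n$ (which still satisfies all hypotheses) yields a subtree of $\mathcal{T}_1$ rooted at $p_n \wedge q_n F$; since $\mathcal{T}_1$ is already a proper subtree of $\mathcal{T}$, this subtree is a proper subtree of $\mathcal{T}$, as required. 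If instead the last rule is $Rp_{13}$, I would use that $p_n \wedge q_n$ is a stable $\tau$-descendant of $p_0 \wedge q_0$, i.e. $p_0 \wedge q_0 \stackrel{\epsilon}{\Longrightarrow}|p_n \wedge q_n$; hence $(p_n \wedge q_n)F$ occurs directly among the premises of $Rp_{13}$, so $\mathcal{T}$ has an immediate---and therefore proper---subtree rooted at $p_n \wedge q_n F$.

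The step I expect to be most delicate is precisely this case analysis, because the two admissible rules behave quite differently: $Rp_{12}$ descends exactly one $\tau$-step and so feeds the recursion, whereas $Rp_{13}$ reaches all stable $\tau$-descendants at once and therefore lands on $p_n \wedge q_n$ directly, short-circuiting the induction. Getting the properness bookkeeping uniform for $n \geq 1$ hinges on noting that in the $Rp_{12}$ branch the target subtree lies inside the proper subtree $\mathcal{T}_1$, while in the $Rp_{13}$ branch it is an immediate child of the root. The exclusion of $Rp_8$--$Rp_{11}$ uses only the hypotheses $p_i, q_i \notin F$ together with non-stability and $\tau$-purity, so beyond Theorem~\ref{L:TAU_PURE} and Lemma~\ref{L:F_NORMAL} no further machinery is needed.
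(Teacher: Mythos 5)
Your proposal is correct and follows essentially the same route as the paper's own proof: induction on $n$, exclusion of Rules $Rp_8$--$Rp_{11}$ via $p_0,q_0\notin F$ and instability, the $\tau$-purity argument forcing $\alpha=\tau$ in the $Rp_{12}$ case (feeding the induction hypothesis on the immediate subtree rooted at $p_1\wedge q_1F$), and the direct hit on $p_n\wedge q_nF$ among the premises in the $Rp_{13}$ case. Your treatment is merely more explicit about the rule exclusions and the properness bookkeeping, which the paper states tersely.
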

\begin{proof}
            We prove the statement by induction on $n$.
            For the inductive basis $n=0$, it holds trivially due to $p_0 \wedge q_0 \equiv p_n \wedge q_n$.
            For the inductive step, assume that $p_0 \wedge q_0 \stackrel{\tau}{\longrightarrow} p_1 \wedge q_1 (\stackrel{\tau}{\longrightarrow} )^k |p_{k+1} \wedge q_{k+1}$.
            Let $\mathcal T$ be any proof tree for $Strip({\mathcal P}_{\text{CLL}_{R}}, M_{\text{CLL}_{R}}) \vdash p_0 \wedge q_0 F$.
            Since $p_0,q_0 \notin F $ and $p_0 \wedge q_0  \stackrel{\tau}{\longrightarrow} $,
            the last rule applied in $\mathcal T$ is
            \[
            \text{either}\;\frac{p_0 \wedge q_0 \stackrel{\alpha}{\longrightarrow} r',\{rF:p_0 \wedge q_0 \stackrel{\alpha}{\longrightarrow} r\}}{p_0 \wedge q_0  F}\;\text{or}\;\frac{\{rF:p_0 \wedge q_0 \stackrel{\epsilon}{\Longrightarrow}|r\}}{p_0 \wedge q_0 F}.
            \]

                For the first alternative, since $LTS(\text{CLL}_{R})$ is $\tau$-pure, we have $\alpha = \tau$.
                Then it follows from $p_0 \wedge q_0 \stackrel{\tau}{\longrightarrow} p_1 \wedge q_1$ that, in the proof tree $\mathcal T$, one of nodes directly above the root is labelled with  $p_1 \wedge q_1 F$.
                Thus, by IH, $\mathcal T$  has a proper subtree with root $p_{k+1} \wedge q_{k+1}F$.

            For the second alternative, since $p_0 \wedge q_0 \stackrel{\epsilon}{\Longrightarrow} |p_{k+1} \wedge q_{k+1}$,  one of nodes directly above the root of $\mathcal T$ is labelled with $ p_{k+1} \wedge q_{k+1} F $, as desired.
\end{proof}

The next three results has been obtained for CLL in pure process-algebraic style in \cite{Zhang11}, where the proof essentially depends on the fact that, for any process $p$ within CLL and $\alpha \in
Act_\tau $, $p$ is of more complex structure than its $\alpha $-derivatives.
Unfortunately, such property does not always hold for $\text{CLL}_R$.
For instance, consider the process $\langle X| X= a.X \parallel_{\emptyset} a.b.X \rangle$.
Here we give another proof along lines presented in \cite{Zhu13}.

\begin{lemma}\label{L:RS_CON}
If $p_1 \underset{\thicksim}{\sqsubset}_{RS} p_2$, $p_1 \underset{\thicksim}{\sqsubset}_{RS} p_3$  and $p_1 \notin F $ then $p_2 \wedge p_3 \notin F $.
\end{lemma}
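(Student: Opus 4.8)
The plan is to argue by contradiction via a well-founded induction on the size of the proof tree witnessing the inconsistency of a conjunction. Since in $\text{CLL}_R$ a process need not be structurally more complex than its derivatives, the structural induction available for CLL is unavailable; instead I would fix a \emph{minimal counterexample} measured by proof-tree size. First I would record the consequences of the hypotheses: from $p_1 \underset{\thicksim}{\sqsubset}_{RS} p_2$ and $p_1 \underset{\thicksim}{\sqsubset}_{RS} p_3$, clauses (RS1), (RS2) and (RS4) together with $p_1 \notin F$ give that $p_2$ and $p_3$ are stable and consistent and that $\mathcal I(p_2) = \mathcal I(p_1) = \mathcal I(p_3)$, this common ready set being contained in $Act$ because $p_1$ is stable; hence $p_2 \wedge p_3$ is stable. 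Now suppose the lemma fails, and among all triples meeting the hypotheses but with $p_2 \wedge p_3 \in F$ choose one for which the proof tree of $Strip(\mathcal P_{\text{CLL}_R}, M_{\text{CLL}_R}) \vdash p_2 \wedge p_3 F$ has least size.

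Next I would analyse the last rule of this minimal tree. Rules $Rp_8$ and $Rp_9$ are excluded since $p_2, p_3 \notin F$; rules $Rp_{10}$ and $Rp_{11}$ are excluded since $\mathcal I(p_2) = \mathcal I(p_3)$; rule $Rp_{13}$ is excluded because $p_2 \wedge p_3$ is stable, so its only stable $\tau$-descendant is itself and $Rp_{13}$ would violate well-foundedness. Thus the last rule is $Rp_{12}$, and, $p_2 \wedge p_3$ being stable, there is some $a \in Act$ with $p_2 \wedge p_3 \stackrel{a}{\longrightarrow}$ such that every $a$-derivative is inconsistent via a strictly smaller subtree; moreover $a \in \mathcal I(p_2) \cap \mathcal I(p_3) = \mathcal I(p_1)$.

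Then I would use ready simulation to manufacture a smaller counterexample. Since $a \in \mathcal I(p_1)$ and $p_1 \notin F$, clause (LTS1) (available by Theorem~\ref{L:LLTS}) read contrapositively yields a consistent $a$-derivative of $p_1$, and Lemma~\ref{L:F_NORMAL}(7) together with Lemma~\ref{L:FAILURE_TAU_I} yields a stable consistent $p_1'$ with $p_1 \stackrel{a}{\Longrightarrow}_F| p_1'$. Applying (RS3) to $(p_1,p_2)$ and to $(p_1,p_3)$ provides $p_2 \stackrel{a}{\longrightarrow} q_2 \stackrel{\epsilon}{\Longrightarrow}_F| p_2'$ and $p_3 \stackrel{a}{\longrightarrow} q_3 \stackrel{\epsilon}{\Longrightarrow}_F| p_3'$ (the first step is the $a$-step because $p_2, p_3$ are stable) with $p_1' \underset{\thicksim}{\sqsubset}_{RS} p_2'$ and $p_1' \underset{\thicksim}{\sqsubset}_{RS} p_3'$. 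By $Ra_6$, $q_2 \wedge q_3$ is an $a$-derivative of $p_2 \wedge p_3$, whence $q_2 \wedge q_3 \in F$ with a proof tree properly smaller than the minimal one.

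Finally I would transport this inconsistency to the stable pair and invoke minimality. Choosing the interleaving path $q_2 \wedge q_3 \stackrel{\epsilon}{\Longrightarrow} p_2' \wedge p_3'$ that first drives the left and then the right conjunct along $q_2 \stackrel{\epsilon}{\Longrightarrow}_F| p_2'$ and $q_3 \stackrel{\epsilon}{\Longrightarrow}_F| p_3'$, every conjunct occurring along the path is consistent; since $q_2 \wedge q_3 \in F$, Lemma~\ref{L:FAILURE_TAU_I} propagates inconsistency along each $\tau$-step, so every conjunction on the path, and in particular the stable endpoint $p_2' \wedge p_3'$, lies in $F$. Lemma~\ref{L:RS_CON_Pre} then guarantees that the proof tree of $q_2 \wedge q_3 F$ contains a subtree rooted at $p_2' \wedge p_3' F$, so $p_2' \wedge p_3' F$ admits a proof tree strictly smaller than the minimal one. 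As $(p_1', p_2', p_3')$ again satisfies all hypotheses of the lemma, this contradicts minimality, completing the argument. I expect the main obstacle to be exactly this final size-control: establishing that the derived inconsistency $p_2' \wedge p_3' \in F$ comes with a strictly smaller proof tree, which is precisely what Lemma~\ref{L:RS_CON_Pre} is designed to deliver and without which the induction would not be well-founded.
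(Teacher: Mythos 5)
Your proof is correct and takes essentially the same route as the paper's: the paper packages the well-founded induction as the claim that every proof tree of an inconsistent conjunction $q\wedge r$ with $q,r$ ready-simulating a consistent $p$ has a proper subtree rooted at another such conjunction (contradicting well-foundedness by infinite descent), whereas you phrase it as a minimal counterexample over proof-tree size, but the core steps coincide exactly — ruling out all last rules except $Rp_{12}$/$Rp_{13}$ via (RS2)/(RS4) and stability, producing $p_1 \stackrel{a}{\Longrightarrow}_F| p_1'$ from (LTS1)/(LTS2), matching it through (RS3), interleaving the two $\tau$-sequences, propagating inconsistency forward with Lemma~\ref{L:FAILURE_TAU_I}, and invoking Lemma~\ref{L:RS_CON_Pre} for the size control. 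The only caveat is that ``proof-tree size'' should be read as ordinal rank (the premise sets of $Rp_{12}$/$Rp_{13}$ need not be finite), which is exactly the technicality the paper's infinite-descent formulation sidesteps.
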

\begin{proof}
  Let $\Omega = \{q \wedge r :   p \underset{\thicksim}{\sqsubset}_{RS} q, p \underset{\thicksim}{\sqsubset}_{RS} r\;\text{and}\; p  \notin F\}$.
Clearly, it suffices to prove that $F \cap \Omega = \emptyset$.
  Conversely, suppose that $F \cap \Omega \neq \emptyset$.
  In the following, we intend to prove that, for each $t \in \Omega$, any proof tree of $tF$  is not well-founded.
  Then a contradiction arises at this point due to Def.~\ref{D:PROOF}.
  Thus, to complete the proof, it suffices to show the claim below. \\

\noindent \textbf{Claim} For any $s \in \Omega$,  each proof tree of $sF$ has a proper subtree with root $s'F$ for some $s' \in \Omega$.\\

Suppose $q \wedge r \in \Omega$.
    Then  $p \underset{\thicksim}{\sqsubset}_{RS} q$, $p \underset{\thicksim}{\sqsubset}_{RS} r$ and $p \notin F$ for some $p$.
    Thus it follows  that
    \[q \notin F, r \notin F \text{ and }{\mathcal I}(p)={\mathcal I}(q)={\mathcal I}(r).\tag{\ref{L:RS_CON}.1}\]
    Let $\mathcal T$ be any proof tree of $Strip({\mathcal P}_{\text{CLL}_{R}}, M_{\text{CLL}_{R}}) \vdash q \wedge rF$.
    By (\ref{L:RS_CON}.1), the last rule applied in $\mathcal T$ is of the form
\[\text{either}\;\frac{\{sF:q \wedge r \stackrel{\epsilon}{\Longrightarrow}|s\}}{q \wedge r  F}\;\text{or}\; \frac{q \wedge r \stackrel{\alpha}{\longrightarrow} s',\{sF:q \wedge r \stackrel{\alpha}{\longrightarrow} s\}}{q \wedge r  F}.\]
    Since both $q$ and $r$ are stable,  so is $q \wedge r$.
    Then, for the first alternative, the label of the node directly above the root of $\mathcal T$ is $q \wedge rF$ itself, as desired.

    Next we consider the second alternative.
        In this case, $\tau \neq \alpha \in {\mathcal I}(q \wedge r) $ and
        \[\forall s(q \wedge r \stackrel{\alpha}{\longrightarrow}  s \;\text{implies}\;s \in F).\tag{\ref{L:RS_CON}.2}\]
        Hence $\alpha \in {\mathcal I}(q) \cap {\mathcal I}(r)$.
        Then $\alpha \in {\mathcal I}(p)$ due to (\ref{L:RS_CON}.1).
        Further, since $p\notin F $, by Theorem~\ref{L:LLTS}, we get
         \[p \stackrel{\alpha}{\longrightarrow}_F p' \stackrel{\epsilon}{\Longrightarrow}_F| p''\text{ for some } p'\;\text{and}\;p''.\tag{\ref{L:RS_CON}.3}\]
        Then it immediately follows from   $p \underset{\thicksim}{\sqsubset}_{RS} q$ and $p \underset{\thicksim}{\sqsubset}_{RS} r$ that
        \[q \stackrel{\alpha}{\longrightarrow}_F q'\stackrel{\epsilon}{\Longrightarrow}_F| q''\;\text{and}\; p'' \underset{\thicksim}{\sqsubset}_{RS} q''\;\text{for some}\; q',q'',\text{and} \tag{\ref{L:RS_CON}.4}\]
        \[r \stackrel{\alpha}{\longrightarrow}_F r'\stackrel{\epsilon}{\Longrightarrow}_F| r''\;\text{and}\; p'' \underset{\thicksim}{\sqsubset}_{RS} r''\;\text{for some}\; r',r''. \tag{\ref{L:RS_CON}.5}\]
        So,  $q \wedge r  \stackrel{\alpha}{\longrightarrow} q'\wedge r' $.
        Then $q' \wedge r' \in F$ by (\ref{L:RS_CON}.2).
        Moreover, we obtain
        $q' \equiv q_0 \stackrel{\tau}{\longrightarrow}_{F},\dots,\stackrel{\tau}{\longrightarrow}_{F}| q_n\equiv q''$ for some $q_i(0 \leq i \leq n)$,
        and $r'\equiv r_0 \stackrel{\tau}{\longrightarrow}_{F},\dots,\stackrel{\tau}{\longrightarrow}_{F}| r_m\equiv r''$ for some $r_j(0 \leq j \leq m)$.
        Then
\[
            q' \wedge r' \equiv q_0 \wedge r_0  \stackrel{\tau}{\longrightarrow} ,..,\stackrel{\tau}{\longrightarrow}  q_n \wedge r_0 \stackrel{\tau}{\longrightarrow}q_n \wedge r_{1} ,..,
            \stackrel{\tau}{\longrightarrow}  |q_n \wedge r_m \equiv q''\wedge r''. \tag{\ref{L:RS_CON}.6}
\]
        By Lemma~\ref{L:FAILURE_TAU_I}, it follows from $q' \wedge r' \in F$ that
        \[q_i \wedge r_j \in F \;\text{for each}\; q_i \wedge r_j\text{ occurring in }
        (\ref{L:RS_CON}.6). \tag{\ref{L:RS_CON}.7}\]
        It follows from (\ref{L:RS_CON}.3), (\ref{L:RS_CON}.4) and (\ref{L:RS_CON}.5) that $q_n \wedge r_m \equiv q''\wedge r'' \in \Omega$.
        Moreover, since one of nodes directly above the root of $\mathcal T$ is labelled with $q' \wedge r'F$, by (\ref{L:RS_CON}.6), (\ref{L:RS_CON}.7) and Lemma~\ref{L:RS_CON_Pre}, it follows from $q_i \notin F(0 \leq i \leq n)$ and $r_j \notin F(0 \leq j \leq m)$ that $\mathcal T$ has a proper subtree with root $q_n \wedge r_mF$.
\end{proof}

\begin{lemma}\label{L:WEDGE_PRECONGRUENCE}
  If $p  \underset{\thicksim}{\sqsubset}_{RS} q$ and $p \underset{\thicksim}{\sqsubset}_{RS}r$ then $p  \underset{\thicksim}{\sqsubset}_{RS} q \wedge r$.
\end{lemma}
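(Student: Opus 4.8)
The plan is to exhibit a stable ready simulation relation containing the pair $(p, q\wedge r)$. Since $p \underset{\thicksim}{\sqsubset}_{RS} q$ and $p \underset{\thicksim}{\sqsubset}_{RS} r$ both hold via $\underset{\thicksim}{\sqsubset}_{RS}$ itself (recall it is a stable ready simulation relation), I would set
\[
\mathcal{R} \triangleq \{(p', q'\wedge r') : p' \underset{\thicksim}{\sqsubset}_{RS} q',\; p'\underset{\thicksim}{\sqsubset}_{RS} r'\} \cup \underset{\thicksim}{\sqsubset}_{RS},
\]
and verify the four clauses (RS1)--(RS4) for an arbitrary pair $(p', q'\wedge r')$ coming from the first component (pairs from $\underset{\thicksim}{\sqsubset}_{RS}$ are already fine). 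Throughout, the witnessing hypotheses $p'\underset{\thicksim}{\sqsubset}_{RS} q'$ and $p'\underset{\thicksim}{\sqsubset}_{RS} r'$ give me, via (RS1)--(RS4) applied to the two source pairs, that $q'$ and $r'$ are stable, that ${\mathcal I}(p')={\mathcal I}(q')={\mathcal I}(r')$ whenever $p'\notin F$, and the matching clause for stable transitions.

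First I would dispatch (RS1): since $q'$ and $r'$ are both stable, and the operational rules $Ra_7,Ra_8$ are the only ways a conjunction can perform a $\tau$-move, $q'\wedge r'$ is stable as well. For (RS2), I must show $p'\notin F$ implies $q'\wedge r'\notin F$; this is exactly the content of Lemma~\ref{L:RS_CON} applied to $p'\underset{\thicksim}{\sqsubset}_{RS} q'$, $p'\underset{\thicksim}{\sqsubset}_{RS} r'$ and $p'\notin F$. For (RS4), assuming $p'\notin F$, I get ${\mathcal I}(p')={\mathcal I}(q')={\mathcal I}(r')$ from the two hypotheses, so $q'$ and $r'$ share a common ready set; because they are stable and consistent, rule $Ra_6$ (synchronous product on visible actions) together with the consistency just established via (RS2) gives ${\mathcal I}(q'\wedge r')={\mathcal I}(q')\cap{\mathcal I}(r')={\mathcal I}(p')$, as required. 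I would need to confirm that no spurious $\tau$ appears and that the inconsistency rules $Rp_{10},Rp_{11}$ do not fire here (they cannot, precisely because the ready sets coincide).

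The clause demanding real work is (RS3). Suppose $p'\stackrel{a}{\Longrightarrow}_F| p''$. Since $p'$ is stable this means $p'\stackrel{a}{\longrightarrow}_F s \stackrel{\epsilon}{\Longrightarrow}_F| p''$. Applying (RS3) to each of $p'\underset{\thicksim}{\sqsubset}_{RS} q'$ and $p'\underset{\thicksim}{\sqsubset}_{RS} r'$ yields $q'\stackrel{a}{\Longrightarrow}_F| q''$ with $p''\underset{\thicksim}{\sqsubset}_{RS} q''$ and $r'\stackrel{a}{\Longrightarrow}_F| r''$ with $p''\underset{\thicksim}{\sqsubset}_{RS} r''$. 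The obstacle is to assemble these two separate weak transitions of $q'$ and $r'$ into a single consistent weak transition $q'\wedge r'\stackrel{a}{\Longrightarrow}_F| (q''\wedge r'')$ whose target lies in $\mathcal{R}$; the target is in $\mathcal{R}$ by construction since $p''\underset{\thicksim}{\sqsubset}_{RS} q''$ and $p''\underset{\thicksim}{\sqsubset}_{RS} r''$. Here I expect to mimic the transition-combining argument in Lemma~\ref{L:RS_CON}, equation~(\ref{L:RS_CON}.6): since $q'$ and $r'$ are stable and the $a$-move of a conjunction is synchronous (rule $Ra_6$), I form $q'\wedge r'\stackrel{a}{\longrightarrow} \bar q\wedge\bar r$ where $\bar q,\bar r$ are the $a$-derivatives, then interleave the two $\tau$-chains $\bar q\stackrel{\epsilon}{\Longrightarrow}_F| q''$ and $\bar r\stackrel{\epsilon}{\Longrightarrow}_F| r''$ to reach the stable state $q''\wedge r''$. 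The delicate point is \emph{consistency along this combined path}: I must ensure every intermediate conjunction is $\notin F$ so that the transitions are genuinely $\stackrel{}{\longrightarrow}_F$ moves. This I would secure by invoking Lemma~\ref{L:RS_CON} again at the relevant intermediate states — each such state has the form $q_i\wedge r_j$ with a common source $p''$ (or one of its derivatives) still ready-simulating both sides — thereby guaranteeing $q_i\wedge r_j\notin F$ and closing the argument.
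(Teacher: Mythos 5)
Your overall strategy --- defining $\mathcal{R}$ to pair $p'$ with $q'\wedge r'$ whenever $p'\underset{\thicksim}{\sqsubset}_{RS} q'$ and $p'\underset{\thicksim}{\sqsubset}_{RS} r'$, and verifying (RS1)--(RS4) with Lemma~\ref{L:RS_CON} supplying the consistency facts --- is exactly the paper's proof, and your treatment of (RS1), (RS2) and (RS4) is correct. But the step you yourself flag as delicate rests on a claim that fails: you cannot invoke Lemma~\ref{L:RS_CON} at the intermediate states $q_i\wedge r_j$ of the interleaved $\tau$-path. That lemma requires a common consistent process that is stably ready simulated by both conjuncts, and $\underset{\thicksim}{\sqsubset}_{RS}$ relates \emph{stable} states only (clause (RS1)); every intermediate $q_i$ with $i<n$ and every $r_j$ with $j<m$ performs a further $\tau$-step, hence is unstable, so no process can stand in $\underset{\thicksim}{\sqsubset}_{RS}$ to it. In particular neither $p''$ nor any ``derivative'' of $p''$ can serve as the common source: $p''$ is already the stable endpoint of the weak transition and has no residual moves that track the $\tau$-chains of $\bar q$ and $\bar r$. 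As written, the consistency of the combined path is unproved.

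The repair is cheap but uses a different lemma of the paper. Apply Lemma~\ref{L:RS_CON} only at the two ends: at the source it gives $q'\wedge r'\notin F$, and at the target, since $p''\notin F$, $p''\underset{\thicksim}{\sqsubset}_{RS} q''$ and $p''\underset{\thicksim}{\sqsubset}_{RS} r''$, it gives $q''\wedge r''\notin F$. Then invoke Lemma~\ref{L:FAILURE_TAU_I}: in $LTS(\text{CLL}_R)$ inconsistency propagates \emph{forward} along $\tau$-transitions, so if any state on the $\tau$-path from $\bar q\wedge\bar r$ to $q''\wedge r''$ were in $F$, the endpoint $q''\wedge r''$ would be in $F$ as well, a contradiction. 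Hence every state on the path, including $\bar q\wedge\bar r$, is consistent, and $q'\wedge r'\stackrel{a}{\Longrightarrow}_F| \,q''\wedge r''$ follows, with $(p'',q''\wedge r'')\in\mathcal{R}$. With this substitution your argument is complete and coincides with the paper's; your extra union of $\mathcal{R}$ with $\underset{\thicksim}{\sqsubset}_{RS}$ is harmless but unnecessary, since the target pair already lies in the first component.
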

\begin{proof}
  Set
\[{\mathcal R} = \{(p_1,p_2 \wedge p_3):p_1 \underset{\thicksim}{\sqsubset}_{RS} p_2\;\text{and}\; p_1\underset{\thicksim}{\sqsubset}_{RS}p_3\}.\]
It suffices to show that $\mathcal R$ is a stable ready simulation relation, which is almost immediate by using Lemma~\ref{L:RS_CON} to handle (RS2) and (RS3).
\end{proof}

We conclude this section with recalling a result obtained in \cite{Luttgen10} and \cite{Zhang11}  in different style, which reveals that $\sqsubseteq_{RS}$ is precongruent w.r.t the operators $\Box$, $\parallel_A$, $\vee$ and $\wedge$.
Formally,

\begin{theorem}\label{L:pre_precongruence} \hfill
  \begin{enumerate}[(1)]
\renewcommand{\theenumi}{(\arabic{enumi})}
    \item For each  $\odot \in \{\Box, \parallel_A, \wedge\}$, if $p \underset{\thicksim}{\sqsubset}_{RS} q$ and  $s \underset{\thicksim}{\sqsubset}_{RS} r$ then $p \odot s \underset{\thicksim}{\sqsubset}_{RS} q \odot r$.
\item For each  $\odot \in \{\Box, \parallel_A, \vee, \wedge\}$, if $p \sqsubseteq_{RS} q$ and  $s \sqsubseteq_{RS} r$ then $p \odot s \sqsubseteq_{RS} q \odot r$.
  \end{enumerate}
\end{theorem}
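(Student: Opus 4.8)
The plan is to establish part (1) as the engine and then reduce part (2) to it, since $\sqsubseteq_{RS}$ is defined by quantifying over the stable consistent $\tau$-descendants of a process and then invoking $\underset{\thicksim}{\sqsubset}_{RS}$. Note that $\vee$ occurs only in part (2): because $r\vee s$ is never stable (rules $Ra_9,Ra_{10}$), it cannot preserve $\underset{\thicksim}{\sqsubset}_{RS}$ directly and must be handled at the level of $\sqsubseteq_{RS}$.

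For part (1) with $\odot\in\{\Box,\parallel_A\}$ I would exhibit explicit witnessing relations. For $\Box$ take $\mathcal R=\{(p\Box s,q\Box r):p\underset{\thicksim}{\sqsubset}_{RS}q,\ s\underset{\thicksim}{\sqsubset}_{RS}r\}\cup\underset{\thicksim}{\sqsubset}_{RS}$, and for $\parallel_A$ the analogous set. RS1 is immediate ($p\Box s$ is stable iff $p,s$ are). RS2 and RS4 follow from Lemma~\ref{L:F_NORMAL}(3) together with $\mathcal I(p\Box s)=\mathcal I(p)\cup\mathcal I(s)$ and the component equalities $\mathcal I(p)=\mathcal I(q)$, $\mathcal I(s)=\mathcal I(r)$ given by RS4 of $p\underset{\thicksim}{\sqsubset}_{RS}q$, $s\underset{\thicksim}{\sqsubset}_{RS}r$. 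For RS3 one unfolds $p\Box s\stackrel{a}{\Longrightarrow}_F|w$; since $p\Box s$ is stable the move begins with the visible $a$ and the choice is resolved, so it projects to $p\stackrel{a}{\Longrightarrow}_F|w$ (for the firing component), RS3 of the component relation yields a matching $q\stackrel{a}{\Longrightarrow}_F|w'$, and one reassembles $q\Box r\stackrel{a}{\Longrightarrow}_F|w'$ with $(w,w')\in\underset{\thicksim}{\sqsubset}_{RS}\subseteq\mathcal R$. The parallel case is the same in spirit, with the extra bookkeeping of the three transition formats ($a\in A$ versus $a\notin A$); here derivatives remain $\parallel_A$-terms, so the candidate relation is closed under them.

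For part (1) with $\odot=\wedge$ the honest difficulty is that conjunction can turn two consistent conjuncts into an inconsistent state, so the naive relation does not obviously satisfy RS2. The clean route sidesteps this by first proving a projection lemma: for all stable $u,v$, $u\wedge v\underset{\thicksim}{\sqsubset}_{RS}u$ (and symmetrically $\underset{\thicksim}{\sqsubset}_{RS}v$), witnessed by $\{(u\wedge v,u):u,v\ \text{stable}\}$. The case $u\wedge v\in F$ is vacuous; when $u\wedge v\notin F$ the key point is that rules $Rp_{10},Rp_{11}$ force $\mathcal I(u)=\mathcal I(v)$, whence $\mathcal I(u\wedge v)=\mathcal I(u)\cap\mathcal I(v)=\mathcal I(u)$, giving RS4, while RS2 is Lemma~\ref{L:F_NORMAL}(4) and RS3 projects an $a$-move of $u\wedge v$ onto $u$, again using Lemma~\ref{L:F_NORMAL}(4) to keep the projected path consistent. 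Granting the projection lemma and that $\underset{\thicksim}{\sqsubset}_{RS}$ is a preorder, from $p\underset{\thicksim}{\sqsubset}_{RS}q$ and $s\underset{\thicksim}{\sqsubset}_{RS}r$ we obtain $p\wedge s\underset{\thicksim}{\sqsubset}_{RS}p\underset{\thicksim}{\sqsubset}_{RS}q$ and $p\wedge s\underset{\thicksim}{\sqsubset}_{RS}s\underset{\thicksim}{\sqsubset}_{RS}r$, so $p\wedge s\underset{\thicksim}{\sqsubset}_{RS}q$ and $p\wedge s\underset{\thicksim}{\sqsubset}_{RS}r$; Lemma~\ref{L:WEDGE_PRECONGRUENCE} then yields $p\wedge s\underset{\thicksim}{\sqsubset}_{RS}q\wedge r$ at once. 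This is exactly where Lemma~\ref{L:RS_CON}, which is hidden inside Lemma~\ref{L:WEDGE_PRECONGRUENCE}, does the real work of guaranteeing $q\wedge r\notin F$.

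Finally, part (2). For $\vee$ I argue directly: a path $p\vee s\stackrel{\epsilon}{\Longrightarrow}_F|w$ must first fire $Ra_9$ or $Ra_{10}$, say to $p$, so $p\stackrel{\epsilon}{\Longrightarrow}_F|w$; then $p\sqsubseteq_{RS}q$ gives $q\stackrel{\epsilon}{\Longrightarrow}_F|w'$ with $w\underset{\thicksim}{\sqsubset}_{RS}w'$, and since $q\notin F$ forces $q\vee r\notin F$ by Lemma~\ref{L:F_NORMAL}(1), we get $q\vee r\stackrel{\tau}{\longrightarrow}q\stackrel{\epsilon}{\Longrightarrow}_F|w'$. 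For $\odot\in\{\Box,\parallel_A,\wedge\}$ I reduce to part (1): given $p\odot s\stackrel{\epsilon}{\Longrightarrow}_F|w$, decompose $w\equiv p'\odot s'$ and project the $\tau$-path (using Lemma~\ref{L:F_NORMAL}(3)/(4) to keep the projections consistent) to obtain $p\stackrel{\epsilon}{\Longrightarrow}_F|p'$ and $s\stackrel{\epsilon}{\Longrightarrow}_F|s'$; $\sqsubseteq_{RS}$ then supplies $q\stackrel{\epsilon}{\Longrightarrow}_F|q'$, $r\stackrel{\epsilon}{\Longrightarrow}_F|r'$ with $p'\underset{\thicksim}{\sqsubset}_{RS}q'$, $s'\underset{\thicksim}{\sqsubset}_{RS}r'$, and part (1) gives $p'\odot s'\underset{\thicksim}{\sqsubset}_{RS}q'\odot r'$. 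The last step, reassembling $q\odot r\stackrel{\epsilon}{\Longrightarrow}_F|q'\odot r'$, is the one place needing care for $\wedge$: the interleaved $\tau$-path exists, its endpoint $q'\wedge r'$ is consistent (apply RS2 to $p'\wedge s'\underset{\thicksim}{\sqsubset}_{RS}q'\wedge r'$ with $p'\wedge s'\equiv w\notin F$), and consistency of every intermediate state is then recovered by propagating inconsistency backward from the endpoint through the contrapositive of Lemma~\ref{L:FAILURE_TAU_I}. I expect this conjunction-with-inconsistency bookkeeping, namely the projection lemma in part (1) and the backward consistency propagation in part (2), to be the main obstacle, whereas the $\Box$ and $\parallel_A$ cases are routine coinduction.
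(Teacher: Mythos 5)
Your proposal is correct and follows essentially the same route as the paper: part (2) is reduced to part (1), part (1) is handled by explicit witnessing stable ready simulation relations for $\Box$ and $\parallel_A$, and the $\wedge$ case is discharged via Lemma~\ref{L:WEDGE_PRECONGRUENCE} (the paper delegates all remaining details to \cite{Zhang11}). Your projection lemma ($u \wedge v \underset{\thicksim}{\sqsubset}_{RS} u$ for stable $u,v$, needed to feed Lemma~\ref{L:WEDGE_PRECONGRUENCE} through transitivity) and the backward consistency propagation via Lemma~\ref{L:FAILURE_TAU_I} when reassembling $q\wedge r \stackrel{\epsilon}{\Longrightarrow}_F| q'\wedge r'$ in part (2) are sound reconstructions of exactly the bookkeeping that the paper's citation hides.
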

\begin{proof}
The item (2) follows from item (1).
For item (1), the proofs are not much different from ones given in \cite{Zhang11}. In particular, Lemma~\ref{L:WEDGE_PRECONGRUENCE} is applied in the proof for the case $\odot = \wedge$.
\end{proof}


\section{Basic properties of unfolding, context and transitions}

This section will provide a number of useful results that will be used in the following sections.
Subsection~5.1 will recall the notion of unfolding and give some elementary properties of it.
In subsection~5.2, we will be concerned with capturing one-step transitions in terms of contexts and substitutions.
A treatment of a more general case involving multi $\tau$-transitions will be considered in subsection 5.3.

\subsection{Unfolding}

The notion of unfolding plays an important role when dealing with recursive operators.
This subsection will give a few results concerning it.
We begin with recalling the notion of unfolding.

\begin{mydefn}
  Let $X$ be a free variable in a given term $t$. An occurrence of $X$  in $t$ is unfolded, if this occurrence does not occur  in the scope of any recursive operation $\langle Y|E \rangle$.
Moreover, $X$ is unfolded if all occurrences of $X$ in $t$ are unfolded.
\end{mydefn}

\removebrackets
\begin{mydefn}[\cite{Baeten08}]\label{D:UNFOLDING}
  A series of binary relations $\Rrightarrow_k$ over terms with $k < \omega$ is defined inductively as:
\begin{itemize}
  \item $t \Rrightarrow_0 s$ if $t \equiv s$;
  \item $t \Rrightarrow_1 s$ if $t$ has a subterm $\langle Y|E \rangle$ with $Y =t_Y \in E$ which is not in the scope of any recursive operation, and $s$ is obtained from $t$ by replacing this subterm by $\langle t_Y|E \rangle$;
  \item  $t \Rrightarrow_{k+1} s$ if  $t \Rrightarrow_k t'$ and  $t' \Rrightarrow_1 s$ for some term $t'$.
\end{itemize}
Moreover,  $ \Rrightarrow \triangleq \underset{0 \leq k <\omega}{\bigcup} \Rrightarrow_k$. For any $t$ and $s$, $s$ is a multi-step unfolding of $t$ if $t \Rrightarrow s$.
\end{mydefn}

For instance, consider $t \equiv (\langle X | X = a.X \Box b.\langle Y|Y = c.Y\rangle \rangle \Box d.0)\Box Z$, we have
\[t \Rrightarrow_1 ((a.\langle X | X = a.X \Box b.\langle Y|Y = c.Y\rangle \rangle  \Box b.\langle Y|Y=c.Y\rangle ) \Box d.0) \Box Z,\]
but it does not hold that $t \Rrightarrow_1 (\langle X | X = a.X \Box b.c.\langle Y|Y = c.Y\rangle \rangle \Box d.0) \Box Z$ because the subterm $\langle Y|Y = c.Y\rangle$ is in the scope of the recursive operation $\langle X | X = a.X \Box b.\langle Y|Y = c.Y\rangle \rangle$.
The simple result below provides an equivalent formulation of the binary relation $\Rrightarrow_1$.

\begin{lemma}\label{L:ONE_STEP_UNFOLDING_SYNTAX}
  For any term $t_1$ and $t_2$, $t_1  \Rrightarrow_1 t_2$ iff there exists a  term $s$ and variable $X$ such that
  \begin{enumerate}[(1$_\Rrightarrow$)]
 \renewcommand{\theenumi}{(\arabic{enumi}$_\Rrightarrow$)}
    \item $X$ is a unfolded variable in $s$,
    \item $X$ occurs in $s$ exactly once, and
    \item $t_1 \equiv s \{\langle Y|E \rangle/ X\}$ and $t_2 \equiv s\{\langle t_Y|E \rangle/ X\}$ for some $Y,E$ with $Y = t_Y \in E$.
  \end{enumerate}
\end{lemma}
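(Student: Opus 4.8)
The plan is to read conditions $(1_\Rrightarrow)$--$(3_\Rrightarrow)$ as saying that the fresh variable $X$ is a \emph{placeholder} marking the exact position at which the one-step unfolding $\Rrightarrow_1$ acts: condition $(2_\Rrightarrow)$ records that a single occurrence of a subterm is rewritten, while condition $(1_\Rrightarrow)$ records that this occurrence lies outside the scope of every recursive operation, which is precisely the side condition in the defining clause of $\Rrightarrow_1$ in Definition~\ref{D:UNFOLDING}. Under this reading the two implications become routine translations between ``an unfolding move'' and ``a context with a single hole'', and I would prove the biconditional by establishing each direction separately.

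For the direction from left to right, suppose $t_1 \Rrightarrow_1 t_2$. Unwinding Definition~\ref{D:UNFOLDING}, $t_1$ contains an occurrence of a subterm $\langle Y|E \rangle$ with $Y = t_Y \in E$ that is not in the scope of any recursive operation, and $t_2$ is obtained from $t_1$ by replacing this one occurrence by $\langle t_Y|E \rangle$. I would then pick a variable $X$ that is fresh --- not occurring in $t_1$ at all, and in particular distinct from every recursive variable, which is permitted by Convention~\ref{C:REC_VAR} --- and let $s$ be the term obtained from $t_1$ by replacing that single occurrence of $\langle Y|E \rangle$ by $X$. By construction $X$ occurs in $s$ exactly once, giving $(2_\Rrightarrow)$; this occurrence sits at the position that was not in the scope of any recursive operation, so $X$ is free and unfolded in $s$, giving $(1_\Rrightarrow)$; and plugging $\langle Y|E \rangle$ (respectively $\langle t_Y|E \rangle$) back into the hole yields $t_1 \equiv s\{\langle Y|E \rangle/X\}$ and $t_2 \equiv s\{\langle t_Y|E \rangle/X\}$, giving $(3_\Rrightarrow)$.

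Conversely, assume $s$, $X$, $Y$, $E$ are as in $(1_\Rrightarrow)$--$(3_\Rrightarrow)$. Because $X$ is unfolded and occurs exactly once in $s$, forming $t_1 \equiv s\{\langle Y|E \rangle/X\}$ inserts a single occurrence of $\langle Y|E \rangle$ at the former position of $X$, and this occurrence is not in the scope of any recursive operation: the hole inherits the scoping of $X$, which was unfolded, and the substitution introduces no enclosing recursive operation and no capture, since $X$ is free. Replacing exactly this occurrence by $\langle t_Y|E \rangle$ then produces $s\{\langle t_Y|E \rangle/X\} \equiv t_2$, which is the defining move of $\Rrightarrow_1$; hence $t_1 \Rrightarrow_1 t_2$. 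The one point that deserves care --- and which I expect to be the main obstacle in making the argument fully rigorous rather than merely routine --- is the bookkeeping of occurrences and positions, together with the verification that ``$X$ is an unfolded occurrence in $s$'' corresponds exactly to ``$\langle Y|E \rangle$ occurs outside the scope of every recursive operation in $t_1$''; that is, that the substitution $\{\langle Y|E \rangle/X\}$ preserves the scoping around the hole and causes no variable capture. This can be discharged either by making the notion of position explicit or by a short structural induction on $s$, both resting on the freshness of $X$ and on Convention~\ref{C:REC_VAR}.
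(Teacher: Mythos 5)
Your proposal is correct and follows the same route as the paper, which simply states that the lemma ``immediately follows from Def.~\ref{D:UNFOLDING}'': both directions are the routine translation between a one-step unfolding move and a single-hole context, exactly as you carry out. You merely spell out the bookkeeping (freshness of $X$, preservation of scoping under the substitution, absence of capture via Convention~\ref{C:REC_VAR}) that the paper leaves implicit.
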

\begin{proof}
Immediately follows from Def.~\ref{D:UNFOLDING}.
\end{proof}

A few trivial but useful properties concerning $\Rrightarrow_n$ are listed in the next lemma.

\begin{lemma}\label{L:ONE_STEP_UNFOLDING_VARIABLE}
  For any term $t,s$ and $X \in FV(t)$, if $t \Rrightarrow_n s$ then
 \begin{enumerate}[(1)]
 \renewcommand{\theenumi}{(\arabic{enumi})}
   \item  
         if $X$ is unfolded in $t$ then so it is in $s$ and the number of occurrences of $X$ in $s$ is equal to that in $t$;
   \item  the number of unguarded occurrences of $X$ in $s$ is not more than that in $t$;
   \item  if $X$ is (strongly) guarded in $t$ then so it is in $s$;
   \item $FV(s)\subseteq FV(t)$;
   \item  if $X$  occurs in the scope of conjunction in $s$ (that is, there exists a subterm $t_1 \wedge t_2$ of $s$ such that $X$ occurs in either $t_1$ or $t_2$) then so does it in $t$.
 \end{enumerate}
\end{lemma}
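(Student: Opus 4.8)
The plan is to prove the five items simultaneously by induction on $n$, reducing the whole argument to the single-step relation $\Rrightarrow_1$. The base case $n=0$ is immediate, since then $t\equiv s$. For the inductive step I would use the defining clause $t\Rrightarrow_n t'\Rrightarrow_1 s$: apply the induction hypothesis to $t\Rrightarrow_n t'$ and then the single-step versions of (1)--(5) to $t'\Rrightarrow_1 s$, composing the resulting (in)equalities. One must be slightly careful here, because by item (4) a free variable can disappear along the chain; so if $X\notin FV(t')$ then $X$ has no free occurrence in $t'$ and all of (1)--(3),(5) hold vacuously for $t'\Rrightarrow_1 s$, while if $X\in FV(t')$ the single-step results apply directly. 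Thus everything hinges on the case $n=1$.

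For $t\Rrightarrow_1 s$ I would invoke the syntactic characterisation of Lemma~\ref{L:ONE_STEP_UNFOLDING_SYNTAX}: there are a term $c$ and a variable $W$, occurring exactly once and unfolded in $c$, together with some $Y=t_Y\in E$, such that $t\equiv c\{\langle Y|E\rangle/W\}$ and $s\equiv c\{\langle t_Y|E\rangle/W\}$. Since $W$ does not occur in $t$ whereas $X\in FV(t)$, the variables $X$ and $W$ are distinct, and every free occurrence of $X$ in $t$ (resp. $s$) is either a \emph{context occurrence}, lying in $c$ away from the hole $W$, or a \emph{substituted occurrence}, lying inside the replaced subterm $\langle Y|E\rangle$ (resp. $\langle t_Y|E\rangle$). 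Filling the unique hole $W$ with a different term does not alter the operator path above any position outside the hole, so context occurrences sit under exactly the same operators in $t$ and in $s$; hence their number, their guardedness, and their being within the scope of a conjunction all coincide. The entire content of the lemma therefore reduces to comparing the occurrences of $X$ in $\langle Y|E\rangle$ with those in $\langle t_Y|E\rangle$.

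Here the guardedness of $E$ (Convention~\ref{C:REC_SPEC}) is the decisive ingredient. Writing $\langle t_Y|E\rangle\equiv t_Y\{\langle Z|E\rangle/Z:Z\in V\}$, the newly introduced recursion terms $\langle Z|E\rangle$ appear only at positions where a recursive variable $Z\in V$ occurred in $t_Y$, and each such position is guarded in $t_Y$. Consequently every occurrence of $X$ inside a substituted $\langle Z|E\rangle$ lies under a guard, so it is automatically (strongly) guarded and contributes nothing to the count of unguarded occurrences; the remaining occurrences of $X$ in $\langle t_Y|E\rangle$ are precisely the free occurrences of $X$ in $t_Y$, whose surrounding guards are untouched by the substitution. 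Since $t_Y$ is one of the bodies of $\langle Y|E\rangle$, each such occurrence is matched with the corresponding occurrence of $X$ in the body $t_Y$ inside $\langle Y|E\rangle$; the net effect is a guard-reflecting map from the occurrences of $X$ in $\langle t_Y|E\rangle$ to those in $\langle Y|E\rangle$ whose image always lies under a subset of the guards surrounding the original. From this items (1)--(3) follow: unfoldedness of $X$ forces all its occurrences into the context and so is preserved with an exact count (1); no unguarded occurrence can be created (2); and strong/weak guardedness of every occurrence is inherited (3). Item (4) is the easy computation $FV(\langle t_Y|E\rangle)\subseteq FV(\langle Y|E\rangle)$, whence $FV(s)\subseteq FV(t)$.

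I expect item (5) to be the main obstacle and the part demanding the most care, because unfolding can move a free variable relative to the conjunctions above it: a conjunction occurring in $t_Y$ above a guarded recursive variable $Z$ will, after unfolding, bring the free variables of $\langle Z|E\rangle$—among them possibly $X$—into its scope, even though in $t$ the variable $X$ may not visibly sit under that conjunction. The point to exploit is that this conjunction, together with the position of $Z$, already lives inside the body $t_Y$ of $\langle Y|E\rangle$ in $t$; hence whenever an occurrence of $X$ lands in the scope of a conjunction of $s$, that conjunction is either a context conjunction of $c$ (already handled) or a conjunction of $t_Y$, and in the latter case the witnessing sub-occurrence of $X$ can be traced back, through the recursion body in which it sits, to an occurrence within the scope of the same conjunction in $t$. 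Making this tracing precise—i.e. formulating the \emph{scope-of-conjunction} property so that it is stable under replacing a recursive variable by its defining recursion term—is the technical crux; once it is set up, (5) follows by the same context/substitution split used for (1)--(4).
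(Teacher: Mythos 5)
Your overall strategy --- induction on $n$, reduction to the single-step case via Lemma~\ref{L:ONE_STEP_UNFOLDING_SYNTAX}, and Convention~\ref{C:REC_SPEC} as the decisive ingredient --- is exactly the paper's proof (the paper's entire argument is that one sentence), and your treatment of items (1)--(4) is correct and complete: the context/substitution split is sound, occurrences inside substituted recursion terms only ever acquire \emph{additional} guards, and guardedness of $E$ ensures no unguarded occurrence is created.

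The gap is item (5), and it sits precisely where you stopped. The tracing-back claim you state --- that an occurrence of $X$ landing under a conjunction of $t_Y$ in $s$ ``can be traced back \dots{} to an occurrence within the scope of the same conjunction in $t$'' --- fails under the literal reading given in the statement's parenthetical. Take $E=\{Y = a.Z\wedge b.0,\; Z = c.X\}$, which is guarded, and $t\equiv\langle Y|E \rangle$; then $t\Rrightarrow_1 s$ with $s\equiv a.\langle Z|E \rangle\wedge b.0$. In $s$, the variable $X$ occurs in the scope of the conjunction, since $X$ occurs in $a.\langle Z|E \rangle$ through the body of $\langle Z|E \rangle$. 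But in $t$ the only conjunction is $a.Z\wedge b.0$ inside the body $t_Y$, and $X$ does \emph{not} occur in $a.Z$ or $b.0$: the occurrence of $X$ lives in the other body $t_Z$, and the bound variable $Z$ is not a recursion operator, so the paper's ``occurs in'' clause does not see through it. Hence your deferred reformulation is not optional bookkeeping: to make (5) true one must count an occurrence of a bound recursive variable $Z$ with $X\in FV(\langle Z|E \rangle)$ as an occurrence of $X$ (equivalently, prove (5) for the scope-of-conjunction property closed under substituting defining bodies for recursive variables). Under that strengthened reading the induction does go through, and that is also the reading the paper tacitly needs in its later applications (e.g.\ in Lemmas~\ref{L:ONE_ACTION_TAU} and~\ref{L:UNIQUE_SOLUTION_EXISTENCE}); but as your proposal stands, the item you yourself single out as the crux remains unproven, and as literally stated it is refuted by the example above.
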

\begin{proof}
By Lemma~\ref{L:ONE_STEP_UNFOLDING_SYNTAX} and Convention~\ref{C:REC_SPEC}, it is straightforward by induction on $n$.
\end{proof}


Notice that the clause (2) in the above lemma does not always hold for guarded occurrences.
For example, consider $t\equiv \langle X|X=a.X \wedge b.Y \rangle$, we have $t \Rrightarrow_1 a.\langle X|X=a.X \wedge b.Y \rangle \wedge b.Y$, and $Y$ guardedly occurs in the latter twice but occurs in $t$ only once.
Clearly, the clause (2) strongly depends on Convention~\ref{C:REC_SPEC}.
Moreover, the clause (4) cannot be strengthened to ``$FV(s)= FV(t)$''.
Consider $t \equiv \langle X_1|\{X_1=a.0,X_2=b.X_1\Box Y\}\rangle$ and $t\Rrightarrow_1 a.0$, then we have $FV(t)=\{Y\}$ and $FV(a.0)=\emptyset$.

Given a variable $X$ and term $t$, the folding number of $X$ in $t$, in symbols $FN(t,X)$, is defined as  the sum of depths of nested recursive operations surrounding all \emph{unguarded} occurrences of $X$ in $t$.
Formally:

\begin{mydefn}[Folding number]
 Given a term $t$ and $X \in FV(t)$, the folding number of $X$ in $t$, denoted by $FN(t,X)$, is defined recursively below, where $UFV(t)$ is the set of all free variables which have unguarded occurrence in $t$.
 \begin{equation*}
    \begin{aligned}
      -\mspace{-3mu}-\;  & FN(0,X)  =  FN(\bot,X)  = FN(Y,X)=FN(t_1 \vee t_2,X)=  FN(\alpha.t,X) \triangleq 0;\\
      -\mspace{-3mu}-\; & FN(t_1 \odot t_2,X) \triangleq FN(t_1,X) + FN(t_2,X) \;\text{with}\;  \odot \in \{\Box, \parallel_A, \wedge\};\\
      -\mspace{-3mu}-\;  & FN(\langle Y|E \rangle,X)  \triangleq
     \left\{
          \begin{array}{ll}
           1 + \underset{Z =t_Z \in E}{\sum} FN(t_Z,X), &\text{if}\; X \in UFV(\langle Y|E \rangle); \\
             & \\
           0,  & \text{otherwise.}
          \end{array}
        \right.
    \end{aligned}
 \end{equation*}
\end{mydefn}
For instance, consider $t \equiv \langle X| X= a.X \vee Y_1 \rangle \Box \langle Z | Z = c.Z \Box Y_2 \rangle$, then $FN(t,Y_1)=0$ and $FN(t,Y_2)= 1$.

\begin{lemma}\label{L:STABILIZATION_PRE}
   For any  term $t$, there exits a term $s$ such that $t  \Rrightarrow s$ and each unguarded occurrence of any free variable in $s$ is unfolded.
\end{lemma}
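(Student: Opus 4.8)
The plan is to treat the folding number as a termination measure and to repeatedly unfold the ``outermost offending'' recursive operation until that measure reaches zero. The first thing I would record is that the target property is exactly the vanishing of the folding number: for any term $u$ and any $X\in FV(u)$, one has $FN(u,X)=0$ if and only if every unguarded occurrence of $X$ in $u$ is unfolded, because an unguarded occurrence lying in the scope of some recursive operations contributes a positive amount to $FN(u,X)$ (namely the number of recursive operations in whose scope it lies), while all other unguarded occurrences contribute $0$. Accordingly I would introduce the measure $N(u)\triangleq\sum_{X\in FV(u)}FN(u,X)\in\mathbb{N}$ and reduce the lemma to the statement that whenever $N(t)>0$ there is a one-step unfolding $t\Rrightarrow_1 t'$ with $N(t')<N(t)$; the conclusion then follows from well-foundedness of $<$ on $\mathbb{N}$ together with $\Rrightarrow=\bigcup_k\Rrightarrow_k$ (Def.~\ref{D:UNFOLDING}), since $FN(s,X)=0$ for all $X\in FV(s)$ is precisely $N(s)=0$.

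Next I would single out which occurrence to unfold. If $N(t)>0$, some free variable $W$ has an unguarded occurrence $o$ lying in the scope of at least one recursive operation. Let $\langle Y|E\rangle$ be the \emph{outermost} recursive operation whose scope contains $o$. Being outermost, this occurrence of $\langle Y|E\rangle$ is itself not in the scope of any recursive operation, so it is a legal site for a $\Rrightarrow_1$ step. Moreover the path from the root of $t$ down to $o$ contains no prefix $\alpha.$ and no $\vee$ (that is what ``$o$ is unguarded'' means) and passes through $\langle Y|E\rangle$; hence $\langle Y|E\rangle$ sits at an unguarded position and $o$ remains unguarded inside it. Since recursive variables are kept disjoint from free ones (Convention~\ref{C:REC_VAR}), $W\notin V$ where $E=E(V)$, so $W\in UFV(\langle Y|E\rangle)$ and $FN(\langle Y|E\rangle,W)\geq 1$. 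Because $\langle Y|E\rangle$ is both unfolded and unguarded, every operator on the path from the root to it is one of $\Box,\wedge,\parallel_A$, for which $FN$ is additive; consequently $N$ decomposes as $N(t)=C+\sum_{X}FN(\langle Y|E\rangle,X)$, where $C$ collects the (unchanged) contributions of the siblings along that path.

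The crux is a substitution-invariance property of the folding number, and this is the step I expect to be the main obstacle. The $\Rrightarrow_1$ step replaces $\langle Y|E\rangle$ by $\langle t_Y|E\rangle=t_Y\{\langle Z|E\rangle/Z:Z\in V\}$, and I would prove that for every $X\notin V$, $FN(\langle t_Y|E\rangle,X)=FN(t_Y,X)$. I would argue this positionally rather than by induction on the clauses of $FN$: every occurrence of a recursive variable $Z\in V$ in the body $t_Y$ is guarded (Convention~\ref{C:REC_SPEC}), so each reinserted copy of $\langle Z|E\rangle$ sits at a guarded position, whence all $X$-occurrences introduced by the substitution are guarded and contribute $0$ to $FN(\langle t_Y|E\rangle,X)$; the remaining $X$-occurrences are exactly those already present in $t_Y$, with unchanged guardedness and unchanged surrounding recursive nesting, hence contributing precisely $FN(t_Y,X)$. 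Granting this, for $X\in UFV(\langle Y|E\rangle)$ I have $FN(\langle Y|E\rangle,X)=1+\sum_{Z=t_Z\in E}FN(t_Z,X)\geq 1+FN(t_Y,X)>FN(t_Y,X)=FN(\langle t_Y|E\rangle,X)$, while for $X\notin UFV(\langle Y|E\rangle)$ both sides are $0$. Summing over $X$ and using the decomposition together with the witness $W\in UFV(\langle Y|E\rangle)$ gives $N(t')\leq N(t)-1<N(t)$.

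Finally I would iterate: each step strictly decreases $N$, so after finitely many $\Rrightarrow_1$ steps one reaches a term $s$ with $t\Rrightarrow s$ and $N(s)=0$, that is, every unguarded occurrence of every free variable of $s$ is unfolded, as required. The only genuinely delicate point is the invariance $FN(\langle t_Y|E\rangle,X)=FN(t_Y,X)$, whose correctness rests entirely on guardedness of $E$ (Convention~\ref{C:REC_SPEC}): without it the reinserted copies $\langle Z|E\rangle$ could drag unguarded occurrences of $X$ back into the scope of a recursive operation, and the measure would not decrease.
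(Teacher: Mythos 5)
Your proof is correct and takes essentially the same route as the paper: both arguments induct on the total folding number $\sum_{X}FN(t,X)$ and rest on the same key fact that, by guardedness of recursive specifications (Convention~\ref{C:REC_SPEC}), a one-step unfolding $\langle Y|E\rangle \Rrightarrow_1 \langle t_Y|E\rangle$ strictly decreases $FN(\cdot,X)$ for every variable in $UFV(\langle Y|E\rangle)$ while never increasing it. The only organizational difference is that you localize the unfolding at an outermost offending recursion using additivity of $FN$ along guard-free, recursion-free paths, where the paper instead nests a structural induction inside the induction on the measure; your invariance equation $FN(\langle t_Y|E\rangle,X)=FN(t_Y,X)$ is a sharpened form of the inequality the paper invokes at the same point.
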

\begin{proof}
It proceeds by induction on $n = \underset{X\in UFV(t)}{\sum}FN(t,X)$.
For the induction base $n =0$, it is easy to see that for each $X \in FV(t)$, any unguarded occurrence of $X$ in $t$ must be unfolded.
Thus $t$ itself meets our requirement because of $t \Rrightarrow t$.
For the inductive step $n=k+1$,
 due to $n = k +1 > 0$, $t$ is of the format either $t_1 \odot t_2$ with $\odot \in \{\wedge, \parallel_A \Box\}$ or $\langle Y |E \rangle$ .
 In the following, we shall proceed by induction on the structure of $t$.
 In case $t \equiv t_1 \odot t_2$ with $\odot \in \{\wedge, \parallel_A, \Box\}$, it is straightforward by applying IH on $t_1$ and $t_2$.
Next we consider the case $t \equiv \langle Y|E \rangle$ with $Y = t_Y \in E$.

 Clearly, $UFV(\langle Y|E \rangle) \neq \emptyset$ because of $n >0$.
  Since $\langle Y|E \rangle \Rrightarrow_1 \langle t_Y|E \rangle$, by Lemma~\ref{L:ONE_STEP_UNFOLDING_VARIABLE}(2)(4), we have
  \[UFV(\langle t_Y|E \rangle ) \subseteq UFV(\langle Y|E \rangle).\]
  Moreover, by Convention~\ref{C:REC_SPEC} and the definition of $\langle t_Y|E \rangle$,  it is not difficult to get
  $FN(\langle Y|E\rangle,X) > FN(\langle t_Y |E \rangle,X)$ for each $ X \in UFV(\langle t_Y|E \rangle )$.
  Hence
  \[\underset{X\in UFV(\langle t_Y|E \rangle)}{\sum}FN(\langle t_Y|E \rangle ,X) < \underset{X\in UFV(\langle Y|E \rangle)}{\sum}FN(\langle Y|E \rangle ,X).\]
  Then, by IH on $n$, there exists $s$ such that $\langle t_Y |E \rangle \Rrightarrow s$ and each unguarded occurrence of any free variable is unfolded in $s$.
  Moreover, $\langle Y|E \rangle \Rrightarrow s$ due to $\langle Y|E \rangle \Rrightarrow_1 \langle t_Y|E \rangle$.
\end{proof}


\subsection{Contexts and transitions}
Due to Rules $Rp_{12}$, $Rp_{13}$ and $Rp_{15}$, in order to obtain further properties of the inconsistency predicate $F$, we often need to capture the connection between the formats of $p$ and $q$ for a given transition $p \stackrel{\alpha}{\longrightarrow}q$.
Clearly, if $p$ involves recursive operations, $q$ is not always a subterm of $p$ and its format often depends on some unfolding of $p$.
This subsection intends to explore this issue.

%
%
%

\begin{mydefn}[Context]
 A context $C_{\widetilde{X}}$ is a term whose free variables are among a $n$-tuple distinct variables $\widetilde{X}=(X_1,...,X_n)$ ($n \geq 0$).
 Given a $n$-tuple processes $\widetilde{p}=(p_1,\dots,p_n)$, the term $C_{\widetilde{X}}\{p_1/X_1,...,p_n/X_n\}$ ($C_{\widetilde{X}}\{\widetilde{p}/\widetilde{X}\}$, for short) is obtained from $C_{\widetilde{X}}$ by replacing $X_i$ by $p_i$ for each $i < n$ simultaneously.
 In particular, we use $C_{\widetilde{X}}\{p/\widetilde{X}\}$ to denote the result of replacing all variables in $\widetilde{X}$ by $p$.
  A context $C_{\widetilde{X}}$ is said to be stable if $C_{\widetilde{X}}\{0/\widetilde{X}\} \not\stackrel{\tau}{\longrightarrow} $.
\end{mydefn}

In the remainder of this paper, whenever the expression $C_{\widetilde{X}}\{\widetilde{p}/\widetilde{X}\}$ occurs, we always assume that $|\widetilde{p}|= |\widetilde{X}|$ and $C_{\widetilde{X}}\{\widetilde{p}/\widetilde{X}\}$ is subject to Convention~\ref{C:REC_VAR} (recursive variables occurring in $\widetilde{p}$ may be renamed if it is necessary), where $|\widetilde{X}|$ is the length of the tuple $\widetilde{X}$.


\begin{mydefn}[Active]
An occurrence of a free variable $X$ in term $t$ is \emph{active} if such occurrence is unguarded and unfolded.
A free variable $X$ in term $t$ is active if all its occurrences are active.
A free variable $X$ in term $t$ is \emph{1-active} if $X$ occurs in $t$ exactly once and such occurrence is active.
\end{mydefn}

For example, $X$ is 1-active in $\langle Y|Y=a.Y \rangle \Box X$.
Moreover, it is evident that, for any context $C_{\widetilde{X}}$, if there exists an active occurrence of some variable within $C_{\widetilde{X}}$, then $C_{\widetilde{X}}$ is not of the form   $\alpha.B_{\widetilde{X}}$, $B_{\widetilde{X}} \vee D_{\widetilde{X}}$ and $\langle Y|E \rangle$.
This fact is used in demonstrating the next two lemmas, which  give some properties of 1-active place-holder.
Before presenting them, for simplicity of notation, we introduce the notation below.\\

\noindent \textbf{Notation} Given $n$-tuple processes $\widetilde{p}=(p_1, \dots, p_n)$ and $p'$, we use  $\widetilde{p}\,[p'/p_i]$  to denote $(p_1, \dots, p_{i-1},p',p_{i+1}, \dots, p_n)$.


\begin{lemma}\label{L:ONE_ACTION_TAU_GF}
  For any $C_{\widetilde{X}}$ with 1-active variable $X_{i_0}$ and $\widetilde{p}$ with $p_{i_0} \stackrel{\tau}{\longrightarrow} p'$,
  $C_{\widetilde{X}}\{\widetilde{p}/\widetilde{X}\}\stackrel{\tau}{\longrightarrow}  C_{\widetilde{X}}\{\widetilde{p}\,[p'/p_{i_0}]/\widetilde{X}\}$.
\end{lemma}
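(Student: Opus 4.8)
The plan is to argue by structural induction on the context $C_{\widetilde{X}}$. Since $X_{i_0}$ is 1-active in $C_{\widetilde{X}}$, its unique occurrence is both unguarded and unfolded; by the observation recorded just before this lemma, $C_{\widetilde{X}}$ can therefore be neither of the shape $\alpha.B_{\widetilde{X}}$, nor $B_{\widetilde{X}}\vee D_{\widetilde{X}}$, nor $\langle Y|E\rangle$, and it evidently cannot be $0$ or $\bot$ (these contain no variable at all). Hence only four shapes survive: $C_{\widetilde{X}}\equiv X_{i_0}$, and $C_{\widetilde{X}}\equiv B_{\widetilde{X}}\odot D_{\widetilde{X}}$ with $\odot\in\{\Box,\wedge,\parallel_A\}$. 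This preliminary case reduction is where 1-activeness does its real work, and I would state it explicitly before entering the induction.

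For the base case $C_{\widetilde{X}}\equiv X_{i_0}$ we have $C_{\widetilde{X}}\{\widetilde{p}/\widetilde{X}\}\equiv p_{i_0}$ and $C_{\widetilde{X}}\{\widetilde{p}\,[p'/p_{i_0}]/\widetilde{X}\}\equiv p'$, so the desired transition is exactly the hypothesis $p_{i_0}\stackrel{\tau}{\longrightarrow}p'$.

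For the inductive step I would first exploit the fact that $X_{i_0}$ occurs in $C_{\widetilde{X}}$ \emph{exactly once}: in $B_{\widetilde{X}}\odot D_{\widetilde{X}}$ this single occurrence lies in precisely one of the two operands, say $B_{\widetilde{X}}$ (the other subcase being symmetric), whence $X_{i_0}\notin FV(D_{\widetilde{X}})$. The occurrence remains unguarded and unfolded, and still the only occurrence, inside $B_{\widetilde{X}}$, so $X_{i_0}$ is 1-active in $B_{\widetilde{X}}$ and the induction hypothesis yields $B_{\widetilde{X}}\{\widetilde{p}/\widetilde{X}\}\stackrel{\tau}{\longrightarrow}B_{\widetilde{X}}\{\widetilde{p}\,[p'/p_{i_0}]/\widetilde{X}\}$. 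Applying the appropriate $\tau$-rule of Table~\ref{Ta:OPERATIONAL_RULES} — namely $Ra_4$ for $\Box$, $Ra_7$ for $\wedge$ and $Ra_{11}$ for $\parallel_A$ (and $Ra_5$, $Ra_8$, $Ra_{12}$ in the symmetric subcase) — lifts this $\tau$-move through the operator. Finally, since $X_{i_0}\notin FV(D_{\widetilde{X}})$ we have $D_{\widetilde{X}}\{\widetilde{p}/\widetilde{X}\}\equiv D_{\widetilde{X}}\{\widetilde{p}\,[p'/p_{i_0}]/\widetilde{X}\}$, so the conclusion of the rule is syntactically $C_{\widetilde{X}}\{\widetilde{p}\,[p'/p_{i_0}]/\widetilde{X}\}$, as required.

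I do not expect a genuine obstacle here; the argument is essentially routine. The two points deserving care are (i) the case reduction above, which relies on the pre-lemma observation to discard the prefix, disjunction and recursion shapes, and (ii) the use of the ``occurs exactly once'' clause to guarantee that the unchosen operand is untouched by the change of substitution. Note also that the $\tau$-rules $Ra_4$, $Ra_5$, $Ra_7$, $Ra_8$, $Ra_{11}$, $Ra_{12}$ carry no negative premises, so $\tau$-purity introduces no side conditions that could obstruct the lifting step.
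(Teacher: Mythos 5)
Your proof is correct and takes essentially the same approach as the paper: the paper's own proof is the one-line ``proceed by induction on the structure of $C_{\widetilde{X}}$'', relying on the observation stated just before the lemma to rule out the prefix, disjunction and recursion shapes. Your write-up is exactly that argument made explicit --- base case $C_{\widetilde{X}}\equiv X_{i_0}$, single occurrence confined to one operand of $\Box$, $\wedge$ or $\parallel_A$, and lifting via the negative-premise-free $\tau$-rules $Ra_4$--$Ra_5$, $Ra_7$--$Ra_8$, $Ra_{11}$--$Ra_{12}$.
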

\begin{proof}
  Proceed by induction on the structure of $C_{\widetilde{X}}$.
\end{proof}

This result does not always hold for visible transitions.
For instance, consider $C_X \equiv X \Box \tau.r$ and $p \equiv a.q$, although $p \stackrel{a}{\longrightarrow}q$ and $X$ is 1-active in $C_X$, it is false that $C_X\{p/X\} \stackrel{a}{\longrightarrow}$.

\begin{lemma}\label{L:FAILURE_GF}
  For any $p$ and  $C_X$ with 1-active variable $X$, if $p \in F $ then $C_X\{p/X\} \in F $.
\end{lemma}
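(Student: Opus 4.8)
The plan is to prove Lemma~\ref{L:FAILURE_GF} by structural induction on the context $C_X$, exploiting the fact that $X$ is 1-active, i.e.\ it occurs exactly once and this occurrence is both unguarded and unfolded. The key observation, already noted in the excerpt just before the statement, is that a context with an active occurrence of $X$ cannot be of the form $\alpha.B_X$, $B_X \vee D_X$, or $\langle Y|E \rangle$. This immediately rules out prefixing, disjunction, and recursion as the outermost operator, so the induction need only treat the cases $C_X \equiv X$, $C_X \equiv C^1_X \odot C^2_X$ with $\odot \in \{\Box, \parallel_A, \wedge\}$, where by 1-activeness $X$ occurs in exactly one of the two sub-contexts.

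First I would handle the base case $C_X \equiv X$, where $C_X\{p/X\} \equiv p \in F$ holds by hypothesis. For the inductive step, suppose $C_X \equiv C^1_X \odot C^2_X$ and, without loss of generality, $X$ is 1-active in $C^1_X$ while $X$ does not occur in $C^2_X$ (the symmetric subcase is identical). By the induction hypothesis, $C^1_X\{p/X\} \in F$. Now I would invoke Lemma~\ref{L:F_NORMAL} to propagate this inconsistency upward through $\odot$: for $\odot \in \{\Box, \parallel_A\}$ this is Lemma~\ref{L:F_NORMAL}(3), which says $s \odot t \in F$ whenever $s \in F$ or $t \in F$; for $\odot = \wedge$ this is Lemma~\ref{L:F_NORMAL}(4), which gives the same backward-propagation direction. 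In each case we conclude $C_X\{p/X\} \equiv (C^1_X\{p/X\}) \odot (C^2_X\{p/X\}) \in F$, completing the induction.

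The step I expect to require the most care is confirming that the syntactic restrictions genuinely eliminate the problematic cases. Specifically, the argument hinges on the claim that if $X$ occurs actively (unguarded and unfolded) in $C_X$, then the outermost operator of $C_X$ cannot be a prefix, a disjunction, or a recursion $\langle Y|E \rangle$. For prefixing this is because any occurrence inside $\alpha.B_X$ is guarded; for disjunction, the paper explicitly treats occurrences inside $t_1 \vee t_2$ as weakly guarded (hence not active in the strong sense required here — I should double-check that ``active = unguarded and unfolded'' indeed excludes the disjunction case, since an unguarded-but-weakly-guarded occurrence would still be unguarded); and for $\langle Y|E \rangle$ the occurrence is not unfolded because it lies in the scope of a recursive operation. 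I would verify that ``active'' as defined (unguarded \emph{and} unfolded) is strong enough: the disjunction case is excluded precisely if ``unguarded'' is read as ``not weakly guarded'', so the cleanest route is to note that an active occurrence is by definition not guarded in any sense, and the SOS treatment makes every occurrence inside $\vee$ weakly guarded, hence non-active. Once this structural reduction is secured, the rest is a routine appeal to the backward-propagation clauses of Lemma~\ref{L:F_NORMAL}, and the 1-activeness is what guarantees the single occurrence lives entirely within one branch so that the induction hypothesis applies cleanly to that branch.
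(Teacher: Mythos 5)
Your proof is correct and is essentially the paper's own argument: the paper proves Lemma~\ref{L:FAILURE_GF} with the one-line ``By a straightforward induction on $C_X$,'' and your case analysis (base case $C_X \equiv X$; binary cases $\Box$, $\parallel_A$, $\wedge$ closed off via Lemma~\ref{L:F_NORMAL}(3)(4); prefix, disjunction and recursion excluded because an active occurrence is neither guarded — weakly or strongly — nor inside a recursion) is exactly the induction the paper leaves implicit. Your resolution of the ``unguarded'' worry is also the paper's reading: occurrences under $\vee$ are weakly guarded, hence never active.
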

\begin{proof}
  By a straightforward induction on $C_X$.
\end{proof}

In order to prove that $\sqsubseteq_{RS}$ still is precongruent in the presence of recursive operations, it is necessary to formally describe the contribution of $C_{\widetilde{X}}$ and $\widetilde{p}$ for a given transition $C_{\widetilde{X}}\{\widetilde{p}/\widetilde{X}\} \stackrel{\alpha}{\longrightarrow} r$.
In the following, we shall provide a few of results concerning this.
We begin with considering $\tau$-labelled transitions.
Before giving the next lemma formally, we illustrate the intuition behind it by means of an example.
Consider $C_X \equiv (a.0 \vee X) \Box X$, $B_X \equiv \langle Y|Y= X \Box b.Y \rangle \Box X$, $p \equiv c.0\vee e.0$ and $q \equiv d.0$, then we have two $\tau$-labelled transitions
\[C_{X}\{q/X\}   \stackrel{\tau}{\longrightarrow} a.0 \Box d.0 \]
and
\[B_{X}\{p/X\}  \stackrel{\tau}{\longrightarrow} (e.0 \Box b.\langle Y|Y = (c.0 \vee e.0) \Box b.Y \rangle) \Box (c.0 \vee e.0).\]
It is not difficult to see that these two $\tau$-labelled transitions depend on the capability of context $C_X$ and substitution $p$ respectively.
For the former, no matter what $q$ is, the corresponding $\tau$-transition still exists for $C_{X}\{q/X\}$.
Moreover, the target has the same pattern.
Set $C_X'\equiv a.0 \Box X$.
Clearly, $C_{X}\{q/X\}   \stackrel{\tau}{\longrightarrow} C_X'\{q/X\}$ for any $q$.
The latter is much more trick.
Intuitively, one instance of $p$ first exposes itself and then performs a $\tau$-transition.
Since there are multi instances of $p$ and some of them are nested by recursive operations, we should identify the real performer of the $\tau$-transition and this identification is very helpful when we deal with multi $\tau$-transitions.
As long as $p$ can perform $\tau$-transition, so can $B_{X}\{p/X\}$.
Similarly, the target also has a pattern.
Set $B_{X,Z}' \equiv (Z \Box b.\langle Y|Y = X \Box b.Y \rangle) \Box X$.
It is easy to see that $B_{X}\{p/X\}  \stackrel{\tau}{\longrightarrow} B_{X,Z}'\{p/X,p'/Z\}$ for any $p \stackrel{\tau}{\longrightarrow}p'$.
We summarize this observation formally as follows, where two clauses capture $\tau$-transitions exited by  contexts and substitutions respectively; moreover, some simple properties on contexts are also listed in (C-$\tau$-3) which will be used in the sequel.




\begin{lemma}\label{L:ONE_ACTION_TAU}
  For any $C_{\widetilde{X}}$ and $\widetilde{p}$, if $C_{\widetilde{X}}\{\widetilde{p}/\widetilde{X}\} \stackrel{\tau}{\longrightarrow} r$ then one of conclusions below holds.
  \begin{enumerate}[(1)]
\renewcommand{\theenumi}{(\arabic{enumi})}
    \item  There exists $C_{\widetilde{X}}'$ such that

        \textbf{(C-$\tau$-1)} $r \equiv C_{\widetilde{X}}'\{\widetilde{p}/\widetilde{X}\}$;

        \textbf{(C-$\tau$-2)} for any processes $\widetilde{q}$, $C_{\widetilde{X}}\{\widetilde{q}/\widetilde{X}\} \stackrel{\tau}{\longrightarrow} C_{\widetilde{X}}'\{\widetilde{q}/\widetilde{X}\}$;

        \textbf{(C-$\tau$-3)} for each $X \in \widetilde{X}$,

          \quad \textbf{(C-$\tau$-3-i)} if $X$ is active in $C_{\widetilde{X}}$  then so it is in $C_{\widetilde{X}}'$ and the number of occurrences of $X$ in $C_{\widetilde{X}}'$ is equal to that in $C_{\widetilde{X}}$;

          \quad \textbf{(C-$\tau$-3-ii)}   if $X$ is unfolded in $C_{\widetilde{X}}$  then so it is in $C_{\widetilde{X}}'$ and the number of occurrences of $X$ in $C_{\widetilde{X}}'$ is not more than that in $C_{\widetilde{X}}$;

          \quad \textbf{(C-$\tau$-3-iii)}    if $X$ is strongly guarded in $C_{\widetilde{X}}$ then so it is in $C_{\widetilde{X}}'$;

          \quad \textbf{(C-$\tau$-3-iv)}   if $X$ does not occur in the scope of any conjunction in $C_{\widetilde{X}}$ then neither does it in $C_{\widetilde{X}}'$.

    \item  There exist $C_{\widetilde{X}}'$, $C_{\widetilde{X},Z}''$ with $Z \notin \widetilde{X}$ and $i\leq |\widetilde{X}|$ such that

        \textbf{(P-$\tau$-1)} $C_{\widetilde{X}}  \Rrightarrow  C_{\widetilde{X}}'$, in particular, if $X_i$ is active in $C_{\widetilde{X}}$ then $C_{\widetilde{X}}' \equiv C_{\widetilde{X}}$;

        \textbf{(P-$\tau$-2)} $p_i \stackrel{\tau}{\longrightarrow} p'$ and $r \equiv C_{\widetilde{X},Z}''\{\widetilde{p}/\widetilde{X},p'/Z\}$ for some $p'$;

        \textbf{(P-$\tau$-3)} $C_{\widetilde{X},Z}''\{X_i/Z\} \equiv C_{\widetilde{X}}'$ and $Z$ is 1-active in  $C_{\widetilde{X},Z}''$;

        \textbf{(P-$\tau$-4)} for any processes $\widetilde{q}$ with $q_i \stackrel{\tau}{\longrightarrow} q'$, $C_{\widetilde{X}}\{\widetilde{q}/\widetilde{X}\} \stackrel{\tau}{\longrightarrow} C_{\widetilde{X},Z}''\{\widetilde{q}/\widetilde{X},q'/Z\}$.
  \end{enumerate}
\end{lemma}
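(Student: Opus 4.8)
The plan is to prove Lemma~\ref{L:ONE_ACTION_TAU} by induction on the structure of the context $C_{\widetilde{X}}$, following the same case analysis on the last SOS rule used to derive $C_{\widetilde{X}}\{\widetilde{p}/\widetilde{X}\} \stackrel{\tau}{\longrightarrow} r$. The key dichotomy to keep in mind throughout is exactly the one the statement encodes: either the $\tau$-transition is ``owned'' by the context itself (conclusion (1), where the transition survives replacing $\widetilde{p}$ by arbitrary $\widetilde{q}$ and the residual is again an instance of a context), or it is ``owned'' by one of the substituted processes $p_i$ sitting at an active place-holder (conclusion (2), where the transition only requires $p_i \stackrel{\tau}{\longrightarrow} p'$ and the residual exposes $p'$ through a fresh 1-active variable $Z$). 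Since $LTS(\text{CLL}_R)$ is $\tau$-pure (Theorem~\ref{L:TAU_PURE}) and the context is stable or not accordingly, the induction is driven by which operator heads $C_{\widetilde{X}}$.

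First I would dispose of the trivial base cases. If $C_{\widetilde{X}} \equiv 0$, $\bot$, or $\alpha.B_{\widetilde{X}}$ with $\alpha \in Act$, there is no $\tau$-transition and the hypothesis is vacuous; if $C_{\widetilde{X}} \equiv \tau.B_{\widetilde{X}}$ then the transition is purely contextual and gives conclusion (1) with $C_{\widetilde{X}}' \equiv B_{\widetilde{X}}$. The genuinely interesting base case is $C_{\widetilde{X}} \equiv X_i$, a bare place-holder: here the variable is active, $C_{\widetilde{X}}\{\widetilde{p}/\widetilde{X}\} \equiv p_i \stackrel{\tau}{\longrightarrow} r$, and I would land in conclusion (2) by taking $C_{\widetilde{X}}' \equiv C_{\widetilde{X}} \equiv X_i$, $C_{\widetilde{X},Z}'' \equiv Z$, and $p' \equiv r$; clauses (P-$\tau$-1)--(P-$\tau$-4) are then immediate. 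For the inductive step I would go operator by operator. The cases $\Box$, $\parallel_A$, $\wedge$ split into subcases according to which of the two component contexts fires (rules $Ra_4$, $Ra_5$, $Ra_7$, $Ra_8$, $Ra_{11}$, $Ra_{12}$); each subcase applies the induction hypothesis to the firing component and then reassembles the witnessing contexts by putting the unchanged sibling back in place. The disjunction $C_{\widetilde{X}} \equiv B_{\widetilde{X}} \vee D_{\widetilde{X}}$ is handled by rules $Ra_9$, $Ra_{10}$: the $\tau$-transition is always contextual, landing in conclusion (1) with $C_{\widetilde{X}}'$ equal to one of the disjuncts. The recursion case $C_{\widetilde{X}} \equiv \langle Y|E \rangle$ uses rule $Ra_{16}$, so that $C_{\widetilde{X}}\{\widetilde{p}/\widetilde{X}\} \stackrel{\tau}{\longrightarrow} r$ comes from an unfolding step; here I would apply the induction hypothesis (or rather the structural recursion) to $\langle t_Y|E \rangle$ and use the $\Rrightarrow$-relation, together with Lemma~\ref{L:ONE_STEP_UNFOLDING_VARIABLE}, to transport the property-preservation clauses (C-$\tau$-3) across the unfolding.

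The substantive work, and the step I expect to be the main obstacle, is verifying clauses (C-$\tau$-3-i)--(C-$\tau$-3-iv) and (P-$\tau$-3) with sufficient care, rather than the existence of $C_{\widetilde{X}}'$ alone. The preservation of activity, unfoldedness, strong-guardedness, and the ``not in the scope of any conjunction'' property has to be tracked through every reassembly, and the recursion case is where these can degrade: unfolding $\langle Y|E \rangle$ to $\langle t_Y|E \rangle$ can duplicate occurrences of a variable and nest them inside recursion operators, so I must invoke Lemma~\ref{L:ONE_STEP_UNFOLDING_VARIABLE} precisely to bound occurrence counts and to guarantee that strong-guardedness and conjunction-freeness survive. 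The other delicate point is the freshness and 1-activity of the auxiliary variable $Z$ in conclusion (2): when the induction hypothesis already returns a context $C_{\widetilde{X},Z}''$, I must ensure $Z \notin \widetilde{X}$ and that, after embedding into the surrounding operator (a sibling of $\Box$, $\parallel_A$, or $\wedge$, or the body of a recursion), $Z$ still occurs exactly once and remains active, so that clause (P-$\tau$-3) and the genericity clause (P-$\tau$-4) continue to hold. Establishing (P-$\tau$-4) amounts to rerunning the same SOS derivation with $\widetilde{q}$ in place of $\widetilde{p}$, which is mechanical once the witnessing contexts are correctly built. I would organize the writeup so that the contextual subcases (conclusion (1)) and the substitution subcases (conclusion (2)) are treated uniformly across operators, isolating the bookkeeping of (C-$\tau$-3) into the application of Lemma~\ref{L:ONE_STEP_UNFOLDING_VARIABLE} wherever unfolding is involved.
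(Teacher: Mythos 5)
Your overall plan matches the paper's proof closely: the same dichotomy between context-owned and process-owned transitions, the same case analysis on the head operator of $C_{\widetilde{X}}$ (with the same witnesses in the base cases and the same reassembly in the binary-operator cases), and the same use of Lemma~\ref{L:ONE_STEP_UNFOLDING_VARIABLE} to transport the clauses (C-$\tau$-3) and (P-$\tau$-1,3) across unfoldings. The gap is your induction measure. You propose induction on the structure of $C_{\widetilde{X}}$, but this is not well-founded in precisely the case you yourself flag as delicate: when $C_{\widetilde{X}} \equiv \langle Y|E \rangle$, the premise of rule $Ra_{16}$ concerns a transition of $\langle t_Y|E \rangle\{\widetilde{p}/\widetilde{X}\}$, and $\langle t_Y|E \rangle \equiv t_Y\{\langle X|E \rangle/X : X \in V\}$ is not a structural subterm of $\langle Y|E \rangle$ --- in general it is strictly larger and may contain several copies of $\langle Y|E \rangle$ itself (e.g.\ $E=\{Y = a.Y \,\Box\, b.Y\}$ gives $\langle t_Y|E\rangle \equiv a.\langle Y|E\rangle \,\Box\, b.\langle Y|E\rangle$). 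So the step ``apply the induction hypothesis (or rather the structural recursion) to $\langle t_Y|E \rangle$'' has no structural descent behind it, and the parenthetical hedge does not supply one.

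The paper repairs exactly this by inducting on the depth of the proof tree of $Strip(\mathcal{P}_{\text{CLL}_R},M_{\text{CLL}_R}) \vdash C_{\widetilde{X}}\{\widetilde{p}/\widetilde{X}\} \stackrel{\tau}{\longrightarrow} r$ rather than on the term: a transition holds in the stable model iff it is provable in the stripped positive TSS, every transition rule has finitely many premises (so proofs have finite depth), and the proof of the conclusion of $Ra_{16}$ properly contains a proof of its premise, so the recursion case becomes a legitimate inductive step. The case distinction on the form of $C_{\widetilde{X}}$ is then carried out inside that induction, and everything else in your outline goes through unchanged. With this single change of measure your argument coincides with the paper's; without it, the recursion case of your proof does not terminate as an induction.
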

\begin{proof}
  It proceeds by induction on the depth of the inference of $Strip(\mathcal{P}_{\text{CLL}_R} ,M_{\text{CLL}_R} ) \vdash C_{\widetilde{X}}\{\widetilde{p}/\widetilde{X}\} \stackrel{\tau}{\longrightarrow}r$.
  We distinguish six cases based on the form of $C_{\widetilde{X}}$ as follows.\\

  \noindent Case 1 $C_{\widetilde{X}}$ is  closed.

  Set $C_{\widetilde{X}}' \triangleq r$.
  Then (C-$\tau$-1,2,3) hold trivially. \\

   \noindent Case 2 $C_{\widetilde{X}} \equiv X$ with $X \in \widetilde{X}$.

  Put $C_{\widetilde{X}}' \triangleq X$ and $C_{\widetilde{X},Z}'' \triangleq Z$ with $Z \notin \widetilde{X}$.
   Then it is easy to check that (P-$\tau$-1) -- (P-$\tau$-4) hold.\\

\noindent Case 3 $C_{\widetilde{X}} \equiv \alpha.B_{\widetilde{X}}$.

   Thus $\alpha = \tau$ and $r \equiv B_{\widetilde{X}}\{\widetilde{p}/\widetilde{X}\}$.
   Then it is not difficult to see that (C-$\tau$-1,2,3) hold by taking $C_{\widetilde{X}}' \triangleq B_{\widetilde{X}}$.\\

\noindent Case 4 $C_{\widetilde{X}} \equiv B_{\widetilde{X}}\vee D_{\widetilde{X}}$.

   Obviously, $r\equiv B_{\widetilde{X}}\{\widetilde{p}/\widetilde{X}\} $ or $r\equiv D_{\widetilde{X}}\{\widetilde{p}/\widetilde{X}\} $.
   W.l.o.g, assume that $r\equiv B_{\widetilde{X}}\{\widetilde{p}/\widetilde{X}\} $.
   We set $C_{\widetilde{X}}' \triangleq B_{\widetilde{X}}$.
   Then it is straightforward that 
   (C-$\tau$-1,2) and (C-$\tau$-3-ii,iii,iv) hold.
   Moreover, since $C_{\widetilde{X}} \equiv B_{\widetilde{X}}\vee D_{\widetilde{X}}$, for each $X \in \widetilde{X}$, each occurrence of $X$
   is weakly guarded.
   Hence (C-$\tau$-3-i)  holds trivially.\\

\noindent Case 5 $C_{\widetilde{X}} \equiv B_{\widetilde{X}} \odot D_{\widetilde{X}}$ with $\odot \in \{\Box,\wedge,\parallel_A\}$.

   We consider the case $\odot = \Box$, others may be handled similarly and omitted.
   W.l.o.g, assume the last rule applied in the inference is
   \[\frac{B_{\widetilde{X}}\{\widetilde{p}/\widetilde{X}\} \stackrel{\tau}{\longrightarrow} r'}{B_{\widetilde{X}}\{\widetilde{p}/\widetilde{X}\} \Box D_{\widetilde{X}}\{\widetilde{p}/\widetilde{X}\} \stackrel{\tau}{\longrightarrow} r' \Box D_{\widetilde{X}}\{\widetilde{p}/\widetilde{X}\}}.\]
   Then $r \equiv r' \Box D_{\widetilde{X}}\{\widetilde{p}/\widetilde{X}\}$.
   For the $\tau$-labelled transition $B_{\widetilde{X}}\{\widetilde{p}/\widetilde{X}\} \stackrel{\tau}{\longrightarrow} r'$, by IH, either the clause (1) or (2) holds.

   For the former case, there exists $B_{\widetilde{X}}'$ that satisfies (C-$\tau$-1,2,3).
   Put $C_{\widetilde{X}}'\triangleq B_{\widetilde{X}}'\Box D_{\widetilde{X}}$.
   It immediately follows that $C_{\widetilde{X}}'$  satisfies (C-$\tau$-1,2,3).

    Next we consider the latter case. In this situation, there exist $B_{\widetilde{X}}'$, $B_{\widetilde{X},Z}''$ with $Z \notin \widetilde{X}$ and $i_0 \leq |\widetilde{X}|$ that satisfy (P-$\tau$-1) -- (P-$\tau$-4).
    Set
    \[C_{\widetilde{X}}' \triangleq B_{\widetilde{X}}' \Box D_{\widetilde{X}}\;\text{and}\;C_{\widetilde{X},Z}'' \triangleq B_{\widetilde{X},Z}'' \Box D_{\widetilde{X}}.\]
    We shall show that, for the $\tau$-labelled transition $C_{\widetilde{X}}\{\widetilde{p}/\widetilde{X}\} \stackrel{\tau}{\longrightarrow}r$, $C_{\widetilde{X}}'$, $C_{\widetilde{X},Z}''$ and $i_0$ realize (P-$\tau$-1) -- (P-$\tau$-4).

    \textbf{(P-$\tau$-1)} It follows from $B_{\widetilde{X}} \Rrightarrow  B_{\widetilde{X}}'$ that  $C_{\widetilde{X}} \equiv B_{\widetilde{X}} \Box D_{\widetilde{X}}  \Rrightarrow B_{\widetilde{X}}' \Box D_{\widetilde{X}} \equiv C_{\widetilde{X}}'$.
    If $X_{i_0}$ is active in $C_{\widetilde{X}}$ then so it is in $B_{\widetilde{X}}$, and hence $C_{\widetilde{X}}' \equiv C_{\widetilde{X}}$ due to $B_{\widetilde{X}}' \equiv B_{\widetilde{X}}$.

    \textbf{(P-$\tau$-2)} Since $B_{\widetilde{X}}'$ satisfies (P-$\tau$-2),  $p_{i_0} \stackrel{\tau}{\longrightarrow} p'$ and $r' \equiv B_{\widetilde{X},Z}''\{\widetilde{p}/\widetilde{X},p'/Z\}$ for some $p'$.
    Due to $Z \notin \widetilde{X}$, we have $r \equiv B_{\widetilde{X},Z}''\{\widetilde{p}/\widetilde{X},p'/Z\} \Box D_{\widetilde{X}}\{\widetilde{p}/\widetilde{X}\} \equiv C_{\widetilde{X},Z}''\{\widetilde{p}/\widetilde{X},p'/Z\}$.

    \textbf{(P-$\tau$-3)} It follows from $B_{\widetilde{X},Z}''\{X_{i_0}/Z\} \equiv B_{\widetilde{X}}'$ and $Z \notin \widetilde{X}$ that $C_{\widetilde{X},Z}''\{X_{i_0}/Z\} \equiv B_{\widetilde{X},Z}''\{X_{i_0}/Z\}\Box D_{\widetilde{X}} \equiv C_{\widetilde{X}}'$.
    Moreover, since $Z$ is 1-active in $B_{\widetilde{X},Z}''$, so it is in $C_{\widetilde{X},Z}''$.

    \textbf{(P-$\tau$-4)} Let $\widetilde{q}$ be any tuple with $|\widetilde{q}|=|\widetilde{p}|$ and $q_{i_0} \stackrel{\tau}{\longrightarrow} q'$.
        It follows from $B_{\widetilde{X}}\{\widetilde{q}/\widetilde{X}\} \stackrel{\tau}{\longrightarrow} B_{\widetilde{X},Z}''\{\widetilde{q}/\widetilde{X},q'/Z\}$ and $Z \notin \widetilde{X}$ that $C_{\widetilde{X}}\{\widetilde{q}/\widetilde{X}\} \stackrel{\tau}{\longrightarrow} C_{\widetilde{X},Z}''\{\widetilde{q}/\widetilde{X},q'/Z\}$.\\

\noindent Case 6 $C_{\widetilde{X}} \equiv \langle Y|E \rangle$.

Clearly, the last rule applied in the inference is
 \[\frac{\langle t_Y|E \rangle \{\widetilde{p}/\widetilde{X}\}\stackrel{\tau}{\longrightarrow}r}{\langle Y|E \rangle \{\widetilde{p}/\widetilde{X}\} \stackrel{\tau}{\longrightarrow}r}\;\text{with}\; Y=t_Y \in E.\]

For the $\tau$-labelled transition $\langle t_Y|E \rangle \{\widetilde{p}/\widetilde{X}\}\stackrel{\tau}{\longrightarrow}r$, by IH, either the clause (1) or (2) holds.

For the first alternative, there exists $C_{\widetilde{X}}'$ satisfying (C-$\tau$-1,2,3).
Then it is not difficult to check that, for the transition $\langle Y|E \rangle \{\widetilde{p}/\widetilde{X}\} \stackrel{\tau}{\longrightarrow}r$, $C_{\widetilde{X}}'$ also realizes the conditions (C-$\tau$-1,2,3).
Here $\langle Y|E \rangle \{\widetilde{p}/\widetilde{X}\} \Rrightarrow_1 \langle t_Y|E \rangle \{\widetilde{p}/\widetilde{X}\} $ and Lemma~\ref{L:ONE_STEP_UNFOLDING_VARIABLE}(3)(5) are used to assert (C-$\tau$-3-iii,iv) to be true.

For the second alternative, there exist $C_{\widetilde{X}}'$, $C_{\widetilde{X},Z}''$ with  $Z \notin \widetilde{X}$ and $i_0\leq |\widetilde{X}|$ that satisfy (P-$\tau$-1,2,3,4).
Clearly, $C_{\widetilde{X}}'$, $C_{\widetilde{X},Z}''$ and $i_0$ also realize (P-$\tau$-1,2,3,4) for the transition $\langle Y|E \rangle \{\widetilde{p}/\widetilde{X}\} \stackrel{\tau}{\longrightarrow}r$.
In particular, $\langle Y|E \rangle  \Rrightarrow C_{\widetilde{X}}' $ follows from  $\langle Y|E \rangle \Rrightarrow_1 \langle t_Y|E \rangle  \Rrightarrow C_{\widetilde{X}}' $.
\end{proof}

As an immediate consequence of Lemma~\ref{L:ONE_ACTION_TAU}, we have

\begin{lemma}\label{L:STABLE_CONTEXT_I}
For any context $C_{\widetilde{X}}$, $C_{\widetilde{X}}$ is stable iff $C_{\widetilde{X}}\{\widetilde{p}/\widetilde{X}\} \not\stackrel{\tau}{\longrightarrow} $ for some $\widetilde{p}$.
\end{lemma}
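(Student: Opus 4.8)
The plan is to prove both directions of the biconditional, with the left-to-right direction being immediate and the right-to-left direction reducing, via contraposition, to Lemma~\ref{L:ONE_ACTION_TAU}. Recall that by definition $C_{\widetilde{X}}$ is stable exactly when $C_{\widetilde{X}}\{0/\widetilde{X}\} \not\stackrel{\tau}{\longrightarrow}$, so the left-to-right direction is trivial: if $C_{\widetilde{X}}$ is stable then the all-$0$ tuple $\widetilde{p}$ (of the appropriate length) already witnesses $C_{\widetilde{X}}\{\widetilde{p}/\widetilde{X}\} \not\stackrel{\tau}{\longrightarrow}$.

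For the converse I would argue the contrapositive, namely that if $C_{\widetilde{X}}$ is \emph{not} stable then $C_{\widetilde{X}}\{\widetilde{p}/\widetilde{X}\} \stackrel{\tau}{\longrightarrow}$ for \emph{every} tuple $\widetilde{p}$. So assume $C_{\widetilde{X}}\{0/\widetilde{X}\} \stackrel{\tau}{\longrightarrow} r$ for some $r$ and apply Lemma~\ref{L:ONE_ACTION_TAU} to this transition, taking the substitution there to be the all-$0$ tuple. One of its two clauses must hold. The key observation is that clause (2) is impossible in this instance: condition (P-$\tau$-2) would demand that the $i$-th substituted process perform a $\tau$-step, but that process is $0$, which has no outgoing transitions whatsoever. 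Hence clause (1) holds and supplies a context $C_{\widetilde{X}}'$ satisfying (C-$\tau$-2). Since (C-$\tau$-2) asserts that $C_{\widetilde{X}}\{\widetilde{q}/\widetilde{X}\} \stackrel{\tau}{\longrightarrow} C_{\widetilde{X}}'\{\widetilde{q}/\widetilde{X}\}$ for arbitrary $\widetilde{q}$, we obtain $C_{\widetilde{X}}\{\widetilde{p}/\widetilde{X}\} \stackrel{\tau}{\longrightarrow}$ for every $\widetilde{p}$, as required.

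The argument is mechanical once Lemma~\ref{L:ONE_ACTION_TAU} is available, so I do not expect a genuine obstacle. The only step requiring attention is the recognition that substituting the deadlock process $0$ eliminates the process-driven alternative (clause (2)), forcing any $\tau$-move of $C_{\widetilde{X}}\{0/\widetilde{X}\}$ to be context-driven; the persistence of context-driven moves under arbitrary substitution, guaranteed by (C-$\tau$-2), then does the rest. In fact the contrapositive argument establishes a little more than the stated lemma: a single stable instantiation of $C_{\widetilde{X}}$ forces \emph{every} instantiation to be stable, so stability, ``for some $\widetilde{p}$'' and ``for all $\widetilde{p}$'' all coincide.
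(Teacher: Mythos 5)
Your proof of the lemma itself is correct and is exactly the argument the paper intends: the paper's own proof is just ``Straightforward by Lemma~\ref{L:ONE_ACTION_TAU}'', and spelling that out amounts precisely to your two observations --- the all-$0$ instantiation witnesses the easy direction by definition of a stable context, and for the converse a $\tau$-transition of $C_{\widetilde{X}}\{0/\widetilde{X}\}$ cannot fall under clause (2) of Lemma~\ref{L:ONE_ACTION_TAU} (since $0$ has no transitions at all), so it falls under clause (1), and (C-$\tau$-2) then propagates the transition to every instantiation.

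However, your closing remark is false, and it is worth correcting because you present it as something the argument ``establishes''. It is not true that one stable instantiation forces every instantiation to be stable: take $C_X \equiv X$. Then $C_X\{0/X\} \equiv 0 \not\stackrel{\tau}{\longrightarrow}$, so $C_X$ is a stable context and the lemma's ``for some $\widetilde{p}$'' condition holds, yet $C_X\{\tau.0/X\} \equiv \tau.0 \stackrel{\tau}{\longrightarrow} 0$. The logical slip is in taking contrapositives: what you proved is ``$C_{\widetilde{X}}$ not stable $\Rightarrow$ every instantiation performs $\tau$'', whose contrapositive is ``some instantiation is $\tau$-free $\Rightarrow$ $C_{\widetilde{X}}$ is stable'', \emph{not} ``$\Rightarrow$ every instantiation is $\tau$-free''. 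Only context-driven $\tau$-moves persist under all substitutions; process-driven ones --- exactly clause (2) of Lemma~\ref{L:ONE_ACTION_TAU} --- are created by plugging unstable processes into active positions of a perfectly stable context. This is why the paper, when it needs stability of an instantiation (e.g.\ in the proof of Lemma~\ref{L:FAILURE_NS_IMPLIES_S}), always assumes that \emph{both} the context and the substituted processes are stable.
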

\begin{proof}
  Straightforward by Lemma~\ref{L:ONE_ACTION_TAU}.
\end{proof}

In the following, we intend to provide an analogue of Lemma~\ref{L:ONE_ACTION_TAU} for transitions labelled with visible actions.
To explain intuition behind the next result clearly, it is best to work with an example.
Consider $C_{X_1,X_2} \equiv ((X_1 \wedge \langle Y|Y = a.Y \rangle ) \Box a.b.0)\parallel_{\{b\}} (X_1 \wedge X_2) $, $p_1 \equiv a.0$ and $p_2 \equiv a.c.0$, we have three $a$-labelled transitions
\[C_{X_1,X_2}\{p_1/X_1,p_2/X_2\}  \stackrel{a}{\longrightarrow} (0 \wedge \langle  Y|Y=a.Y \rangle) \parallel_{\{b\}}(a.0\wedge a.c.0),\]
\[C_{X_1,X_2}\{p_1/X_1,p_2/X_2\}  \stackrel{a}{\longrightarrow} b.0 \parallel_{\{b\}}(a.0\wedge a.c.0),\]
and
\[C_{X_1,X_2}\{p_1/X_1,p_2/X_2\} \stackrel{a}{\longrightarrow} ((a.0 \wedge \langle  Y|Y=a.Y \rangle) \Box a.b.0) \parallel_{\{b\}}(0\wedge c.0).\]
These visible transitions starting from $C_{X_1,X_2}\{p_1/X_1,p_2/X_2\}$ are activated by three distinct events.
Clearly, both the context $C_{X_1,X_2}$ and the substitution $p_1$ contribute to the first transition, while two latter transitions depend merely on the capability of $C_{X_1,X_2}$ and $\widetilde{p_{1,2}}$ respectively.
These three situations may be described uniformly in the lemma below.
Here some additional properties on contexts are also listed in (CP-$a$-4), which will be useful in the sequel.

\begin{lemma}\label{L:ONE_ACTION_VISIBLE}
  For any $a \in Act$, $C_{\widetilde{X}}$ and $\widetilde{p}$, if $C_{\widetilde{X}}\{\widetilde{p}/\widetilde{X}\} \stackrel{a}{\longrightarrow} r$ then there exist $C_{\widetilde{X}}'$, $C_{\widetilde{X},\widetilde{Y}}'$ and $C_{\widetilde{X},\widetilde{Y}}''$ with $\widetilde{X} \cap \widetilde{Y} = \emptyset$  satisfying the conditions:

  \noindent \textbf{(CP-$a$-1)}  $C_{\widetilde{X}}  \Rrightarrow  C_{\widetilde{X}}' $;

  \noindent \textbf{(CP-$a$-2)}  for each $Y \in \widetilde{Y}$, $Y$ is 1-active in $C_{\widetilde{X},\widetilde{Y}}'$ and $C_{\widetilde{X},\widetilde{Y}}''$;

    \noindent \textbf{(CP-$a$-3)} there exist $i_Y \leq |\widetilde{X}|$ for each $Y \in \widetilde{Y}$ such that

        \textbf{(CP-$a$-3-i)} $C_{\widetilde{X},\widetilde{Y}}'\{\widetilde{X_{i_Y}}/\widetilde{Y}\} \equiv C_{\widetilde{X}}'$;

        \textbf{(CP-$a$-3-ii)} there exist $p_{Y}'$ such that $p_{i_Y} \stackrel{a}{\longrightarrow}  p_{Y}'$ for each $Y \in \widetilde{Y}$ and $r \equiv  C_{\widetilde{X},\widetilde{Y}}''\{\widetilde{p}/\widetilde{X},\widetilde{p_{Y}'}/\widetilde{Y}\}$;

        \textbf{(CP-$a$-3-iii)}  for any $\widetilde{q}$ with $|\widetilde{q}|=|\widetilde{X}|$ and $\widetilde{q'}$ such that $|\widetilde{q'}| = |\widetilde{Y}|$ and $q_{i_Y} \stackrel{a}{\longrightarrow} q_{Y}'$ for each $Y \in \widetilde{Y}$,
        if  $C_{\widetilde{X}}\{\widetilde{q}/\widetilde{X}\}$ is stable then $C_{\widetilde{X}}\{\widetilde{q}/\widetilde{X}\} \stackrel{a}{\longrightarrow} C_{\widetilde{X},\widetilde{Y}}''\{\widetilde{q}/\widetilde{X},\widetilde{q_{Y}'}/\widetilde{Y}\}$;

  \noindent \textbf{(CP-$a$-4)}  for each $X \in \widetilde{X}$,

        \textbf{(CP-$a$-4-i)} the number of occurrences of $X$ in $C_{\widetilde{X},\widetilde{Y}}''$ is not more than that in $C_{\widetilde{X},\widetilde{Y}}'$;

        \textbf{(CP-$a$-4-ii)} if $X$ is active in $C_{\widetilde{X},\widetilde{Y}}'$ then so it is in $C_{\widetilde{X},\widetilde{Y}}''$;

        \textbf{(CP-$a$-4-iii)} if $X$ does not occur in the scope of any conjunction in $C_{\widetilde{X}}$ then neither does it in $C_{\widetilde{X},\widetilde{Y}}''$.
\end{lemma}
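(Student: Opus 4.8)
The plan is to follow the strategy of Lemma~\ref{L:ONE_ACTION_TAU} and argue by induction on the depth of the inference of $Strip(\mathcal{P}_{\text{CLL}_R},M_{\text{CLL}_R}) \vdash C_{\widetilde{X}}\{\widetilde{p}/\widetilde{X}\} \stackrel{a}{\longrightarrow} r$, distinguishing cases according to the outermost form of $C_{\widetilde{X}}$. By Theorem~\ref{L:TAU_PURE} the source $C_{\widetilde{X}}\{\widetilde{p}/\widetilde{X}\}$ of an $a$-transition is stable, which both restricts the applicable rules and will be exploited repeatedly. The unifying idea is that the tuple $\widetilde{Y}$ collects exactly those place-holders at which an instance of the substitution fires the action $a$: it is empty when the transition is supplied solely by the context (cases $C_{\widetilde{X}}$ closed, $C_{\widetilde{X}} \equiv a.B_{\widetilde{X}}$, and the context-side of $\Box$ and $\parallel_A$), a singleton when a lone variable fires it (case $C_{\widetilde{X}} \equiv X$), and strictly larger than a singleton only through synchronisation.

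The routine cases are dispatched directly. If $C_{\widetilde{X}}$ is closed I take $\widetilde{Y} = \emptyset$, $C_{\widetilde{X}}' \equiv C_{\widetilde{X}}$ and $C_{\widetilde{X},\widetilde{Y}}'' \equiv r$; if $C_{\widetilde{X}} \equiv X_i$ I take a single fresh place-holder $\widetilde{Y} = (Y)$ with $i_Y = i$ and $C_{\widetilde{X}}' \equiv X_i$, $C_{\widetilde{X},Y}' \equiv C_{\widetilde{X},Y}'' \equiv Y$; if $C_{\widetilde{X}} \equiv a.B_{\widetilde{X}}$ the only applicable rule is $Ra_1$, and $\widetilde{Y} = \emptyset$, $C_{\widetilde{X}}' \equiv a.B_{\widetilde{X}}$, $C_{\widetilde{X}}'' \equiv B_{\widetilde{X}}$ work. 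The case $C_{\widetilde{X}} \equiv B_{\widetilde{X}} \vee D_{\widetilde{X}}$ is vacuous, since a disjunction has only $\tau$-transitions. For $C_{\widetilde{X}} \equiv \langle Y|E \rangle$ the last rule is $Ra_{16}$, so I apply the induction hypothesis to $\langle t_Y|E \rangle\{\widetilde{p}/\widetilde{X}\} \stackrel{a}{\longrightarrow} r$ and keep the returned contexts unchanged; (CP-$a$-1) then follows from $\langle Y|E \rangle \Rrightarrow_1 \langle t_Y|E \rangle$ together with Definition~\ref{D:UNFOLDING}, while (CP-$a$-4-iii) is lifted from $\langle t_Y|E \rangle$ to $\langle Y|E \rangle$ by the contrapositive of Lemma~\ref{L:ONE_STEP_UNFOLDING_VARIABLE}(5).

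The heart of the argument is the binary case $C_{\widetilde{X}} \equiv B_{\widetilde{X}} \odot D_{\widetilde{X}}$ with $\odot \in \{\Box,\wedge,\parallel_A\}$. For the interleaving rules ($Ra_2$, $Ra_3$ for $\Box$ and $Ra_{13}$, $Ra_{14}$ for $\parallel_A$ with $a \notin A$) only one side supplies the action while the other carries a negative premise $\not\stackrel{\tau}{\longrightarrow}$; I apply the induction hypothesis to the active side, keep the passive side fixed, and assemble the three contexts in the same $\odot$-shape, where for $\Box$ the passive summand survives in $C_{\widetilde{X}}'$ and $C_{\widetilde{X},\widetilde{Y}}'$ but is dropped in the target context $C_{\widetilde{X},\widetilde{Y}}''$. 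For the synchronising rules ($Ra_6$ for $\wedge$ and $Ra_{15}$ for $\parallel_A$ with $a \in A$) both sides fire $a$, so I apply the induction hypothesis to each premise, obtaining place-holder tuples $\widetilde{Y_B}$ and $\widetilde{Y_D}$, rename so that $\widetilde{Y_B}$, $\widetilde{Y_D}$ and $\widetilde{X}$ are pairwise disjoint, and set $\widetilde{Y} = \widetilde{Y_B} \cup \widetilde{Y_D}$ with $C_{\widetilde{X}}' \equiv B_{\widetilde{X}}' \odot D_{\widetilde{X}}'$, $C_{\widetilde{X},\widetilde{Y}}' \equiv B_{\widetilde{X},\widetilde{Y_B}}' \odot D_{\widetilde{X},\widetilde{Y_D}}'$ and $C_{\widetilde{X},\widetilde{Y}}'' \equiv B_{\widetilde{X},\widetilde{Y_B}}'' \odot D_{\widetilde{X},\widetilde{Y_D}}''$. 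Disjointness guarantees that each $Y$ still occurs exactly once, so (CP-$a$-2) and (CP-$a$-3-i,ii) follow componentwise, while (CP-$a$-1) uses that $\Rrightarrow$ is a congruence for $\odot$.

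The main obstacle is the replicability clause (CP-$a$-3-iii) and the stability bookkeeping it demands. Here the hypothesis that $C_{\widetilde{X}}\{\widetilde{q}/\widetilde{X}\}$ is stable is indispensable and cannot be weakened: as the term $X \Box \tau.r$ shows, a stable context may become unstable under a substitution, so the reconstructed $a$-transition genuinely requires the stability proviso. In every binary case I first observe, via the operational $\tau$-rules, that stability of $C_{\widetilde{X}}\{\widetilde{q}/\widetilde{X}\}$ forces both $B_{\widetilde{X}}\{\widetilde{q}/\widetilde{X}\}$ and $D_{\widetilde{X}}\{\widetilde{q}/\widetilde{X}\}$ to be stable; this simultaneously licenses the sub-induction hypotheses (CP-$a$-3-iii) on the active side(s) and discharges the negative premises $\not\stackrel{\tau}{\longrightarrow}$ on the passive side, after which the appropriate rule among $Ra_2$, $Ra_3$, $Ra_6$, $Ra_{13}$, $Ra_{14}$, $Ra_{15}$ reassembles the transition; Lemma~\ref{L:STABLE_CONTEXT_I} is invoked to transfer stability from $\widetilde{p}$ to arbitrary substitutions when a side condition depends only on the context. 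The remaining clauses (CP-$a$-4) are pure occurrence-counting and conjunction-scope bookkeeping, carried through each case exactly as in Lemma~\ref{L:ONE_ACTION_TAU} and invoking Lemma~\ref{L:ONE_STEP_UNFOLDING_VARIABLE} in the recursion case; I expect these to be routine once the place-holder tuples are set up correctly.
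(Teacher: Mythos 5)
Your proposal is correct and follows essentially the same route as the paper's proof: induction on the depth of the inference, appeal to $\tau$-purity to exclude the disjunction case, the same choices of contexts in the base cases (empty $\widetilde{Y}$ for closed/prefix contexts, a singleton fresh place-holder for $C_{\widetilde{X}} \equiv X_i$), componentwise assembly with disjoint place-holder tuples for the synchronising rules, dropping the passive summand of $\Box$ (but keeping it for $\parallel_A$) in the target context, and the same stability bookkeeping for (CP-$a$-3-iii). No gaps worth flagging.
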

\begin{proof}
  It proceeds by induction on the depth of the inference of $Strip(\mathcal{P}_{\text{CLL}_R} ,M_{\text{CLL}_R} ) \vdash C_{\widetilde{X}}\{\widetilde{p}/\widetilde{X}\} \stackrel{a}{\longrightarrow} r$.
  Due to $C_{\widetilde{X}}\{\widetilde{p}/\widetilde{X}\} \not\stackrel{\tau}{\longrightarrow} $,  it is impossible that $C_{\widetilde{X}} \equiv B_{\widetilde{X}} \vee D_{\widetilde{X}}$.
  Thus we can distinguish seven cases depending on the form of $C_{\widetilde{X}}$.\\

  \noindent Case 1 $C_{\widetilde{X}}$  is  closed.

    Set $C_{\widetilde{X}}' \equiv C_{\widetilde{X},\widetilde{Y}}' \triangleq C_{\widetilde{X}}$ and $C_{\widetilde{X},\widetilde{Y}}''\triangleq r$ with $\widetilde{Y} = \emptyset$.
    Clearly, these contexts realize conditions (CP-$a$-$i$) ($1 \leq i \leq 4$) trivially.\\

  \noindent Case 2 $C_{\widetilde{X}} \equiv X_{i_0}$ with $i_0 \leq |\widetilde{X}|$.

     Put $C_{\widetilde{X}}' \triangleq X_{i_0}$ and $C_{\widetilde{X},\widetilde{Y}}' \equiv C_{\widetilde{X},\widetilde{Y}}'' \triangleq Y $ with  $Y  \notin \widetilde{X}$.
     Then (CP-$a$-$i$) ($1 \leq i \leq 4$) follow immediately, in particular, for (CP-$a$-3), we take $i_Y \triangleq i_0$.\\

  \noindent Case 3 $C_{\widetilde{X}} \equiv \alpha.B_{\widetilde{X}}$.

     Then  $\alpha = a$ and $r \equiv B_{\widetilde{X}}\{\widetilde{p}/\widetilde{X}\}$.
     Put $C_{\widetilde{X}}' \equiv C_{\widetilde{X},\widetilde{Y}}'  \triangleq \alpha.B_{\widetilde{X}}$ and $ C_{\widetilde{X},\widetilde{Y}}'' \triangleq B_{\widetilde{X}}$ with $\widetilde{Y} = \emptyset$.
     Obviously, these contexts are what we seek.\\

  \noindent Case  4 $C_{\widetilde{X}} \equiv B_{\widetilde{X}} \Box D_{\widetilde{X}}$.

     W.l.o.g, suppose that the last rule applied in the inference is
     \[\frac{B_{\widetilde{X}}\{\widetilde{p}/\widetilde{X}\} \stackrel{a}{\longrightarrow} r, \; D_{\widetilde{X}}\{\widetilde{p}/\widetilde{X}\} \not\stackrel{\tau}{\longrightarrow}}{B_{\widetilde{X}}\{\widetilde{p}/\widetilde{X}\} \Box  D_{\widetilde{X}}\{\widetilde{p}/\widetilde{X}\} \stackrel{a}{\longrightarrow} r}.\]
     By IH, for the $a$-labelled transition $B_{\widetilde{X}}\{\widetilde{p}/\widetilde{X}\} \stackrel{a}{\longrightarrow} r$, there exist $B_{\widetilde{X}}'$, $B_{\widetilde{X},\widetilde{Y}}'$ and $B_{\widetilde{X},\widetilde{Y}}''$ with $\widetilde{X} \cap \widetilde{Y} = \emptyset$ that satisfy (CP-$a$-1) -- (CP-$a$-4).
     Set
     \[C_{\widetilde{X}}' \triangleq B_{\widetilde{X}}' \Box D_{\widetilde{X}}, C_{\widetilde{X},\widetilde{Y}}' \triangleq B_{\widetilde{X},\widetilde{Y}}' \Box D_{\widetilde{X}}\;\text{and}\; C_{\widetilde{X},\widetilde{Y}}'' \triangleq B_{\widetilde{X},\widetilde{Y}}''.\]
     Then it is not difficult to check that, for the $a$-labelled transition $C_{\widetilde{X}}\{\widetilde{p}/\widetilde{X}\}   \stackrel{a}{\longrightarrow} r$, these contexts above realizes (CP-$a$-1) -- (CP-$a$-4), as desired.\\

  \noindent Case 5 $C_{\widetilde{X}} \equiv B_{\widetilde{X}} \wedge D_{\widetilde{X}}$.

      In this situation, the last rule applied in the inference is
      \[\frac{B_{\widetilde{X}}\{\widetilde{p}/\widetilde{X}\} \stackrel{a}{\longrightarrow} r_1, D_{\widetilde{X}}\{\widetilde{p}/\widetilde{X}\} \stackrel{a}{\longrightarrow} r_2}{B_{\widetilde{X}}\{\widetilde{p}/\widetilde{X}\} \wedge  D_{\widetilde{X}}\{\widetilde{p}/\widetilde{X}\} \stackrel{a}{\longrightarrow} r_1 \wedge r_2}\]
      and $r \equiv r_1 \wedge r_2$.
       Then by IH,  there exist $B_{\widetilde{X}}'$, $B_{\widetilde{X},\widetilde{Y}}'$ and $B_{\widetilde{X},\widetilde{Y}}''$ with $\widetilde{X} \cap \widetilde{Y} = \emptyset$ and, $D_{\widetilde{X}}'$, $D_{\widetilde{X},\widetilde{Z}}'$ and $D_{\widetilde{X},\widetilde{Z}}''$ with $\widetilde{X} \cap \widetilde{Z} = \emptyset$  that realize (CP-$a$-1,2,3,4) for two $a$-labelled transitions involving in premises respectively.
    W.l.o.g, we may assume $\widetilde{Y} \cap \widetilde{Z} = \emptyset$.
     Then it is straightforward to verify that, for the $a$-labelled transition $C_{\widetilde{X}}\{\widetilde{p}/\widetilde{X}\} \stackrel{a}{\longrightarrow} r$, the contexts
      $C_{\widetilde{X}}' \triangleq B_{\widetilde{X}}' \wedge D_{\widetilde{X}}'$, $C_{\widetilde{X},\widetilde{V}}' \triangleq B_{\widetilde{X},\widetilde{Y}}' \wedge D_{\widetilde{X},\widetilde{Z}}'$ and $C_{\widetilde{X},\widetilde{V}}'' \triangleq B_{\widetilde{X},\widetilde{Y}}'' \wedge D_{\widetilde{X},\widetilde{Z}}''$ with $\widetilde{V} = \widetilde{Y} \cup \widetilde{Z}$ realize (CP-$a$-1) -- (CP-$a$-4), as desired.\\

  \noindent Case 6 $C_{\widetilde{X}} \equiv B_{\widetilde{X}} \parallel_A D_{\widetilde{X}}$.

   Then the last rule applied in the proof tree is one of the following:
   \begin{enumerate}
     \item [(6.1)]\  $\;\;\frac{B_{\widetilde{X}}\{\widetilde{p}/\widetilde{X}\} \stackrel{a}{\longrightarrow} r_1, D_{\widetilde{X}}\{\widetilde{p}/\widetilde{X}\} \stackrel{a}{\longrightarrow} r_2}{B_{\widetilde{X}}\{\widetilde{p}/\widetilde{X}\} \parallel_A  D_{\widetilde{X}}\{\widetilde{p}/\widetilde{X}\} \stackrel{a}{\longrightarrow} r_1 \parallel_A r_2}\;\text{with}\;a \in A$;
     \item [(6.2)] \  $\;\;\frac{B_{\widetilde{X}}\{\widetilde{p}/\widetilde{X}\} \stackrel{a}{\longrightarrow} r',\;D_{\widetilde{X}}\{\widetilde{p}/\widetilde{X}\} \not\stackrel{\tau}{\longrightarrow} }{B_{\widetilde{X}}\{\widetilde{p}/\widetilde{X}\} \parallel_A  D_{\widetilde{X}}\{\widetilde{p}/\widetilde{X}\} \stackrel{a}{\longrightarrow} r' \parallel_A D_{\widetilde{X}}\{\widetilde{p}/\widetilde{X}\}}\;\text{with}\;a \notin A$;
     \item [(6.3)] \  $\;\;\frac{D_{\widetilde{X}}\{\widetilde{p}/\widetilde{X}\} \stackrel{a}{\longrightarrow} r',\; B_{\widetilde{X}}\{\widetilde{p}/\widetilde{X}\} \not\stackrel{\tau}{\longrightarrow} }{B_{\widetilde{X}}\{\widetilde{p}/\widetilde{X}\} \parallel_A  D_{\widetilde{X}}\{\widetilde{p}/\widetilde{X}\} \stackrel{a}{\longrightarrow} B_{\widetilde{X}}\{\widetilde{p}/\widetilde{X}\} \parallel_A r'}\;\text{with}\;a \notin A$.
   \end{enumerate}
   Among them, the argument for (6.1) is similar to one for Case~5.
   We shall consider the case (6.2), and (6.3) may be handled similarly.
   In this situation, $r \equiv r' \parallel_A D_{\widetilde{X}}\{\widetilde{p}/\widetilde{X}\}$.
   Moreover, for the $a$-labelled transition $B_{\widetilde{X}}\{\widetilde{p}/\widetilde{X}\} \stackrel{a}{\longrightarrow} r'$, by IH, there exist $B_{\widetilde{X}}'$, $B_{\widetilde{X},\widetilde{Y}}'$ and $B_{\widetilde{X},\widetilde{Y}}''$ with $\widetilde{X} \cap \widetilde{Y} = \emptyset$ that satisfy (CP-$a$-1) -- (CP-$a$-4).
   Put
   \[C_{\widetilde{X}}' \triangleq B_{\widetilde{X}}' \parallel_A D_{\widetilde{X}}, C_{\widetilde{X},\widetilde{Y}}' \triangleq B_{\widetilde{X},\widetilde{Y}}' \parallel_A D_{\widetilde{X}}\;\text{and}\; C_{\widetilde{X},\widetilde{Y}}'' \triangleq B_{\widetilde{X},\widetilde{Y}}'' \parallel_A D_{\widetilde{X}}.\]
   Next we want to show that these contexts realize (CP-$a$-1) -- (CP-$a$-4).

    \textbf{(CP-$a$-1)} It is obvious because of $B_{\widetilde{X}}  \Rrightarrow B_{\widetilde{X}}' $.

    \textbf{(CP-$a$-2)} For each $Y \in \widetilde{Y}$, since $Y$ is 1-active in  $B_{\widetilde{X},\widetilde{Y}}''$ and $B_{\widetilde{X},\widetilde{Y}}'$, so it is in  $C_{\widetilde{X},\widetilde{Y}}''$ and $C_{\widetilde{X},\widetilde{Y}}'$ because of $\widetilde{X} \cap \widetilde{Y} = \emptyset$.

    \textbf{(CP-$a$-3)} By IH, there exist $i_Y \leq |\widetilde{X}|(Y\in \widetilde{Y})$ which realize subclauses (i)(ii)(iii) in (CP-$a$-3).
    In the following, we will verify that these $i_Y$ also work well for the induction step.
    Clearly, it follows from $B_{\widetilde{X},\widetilde{Y}}'\{\widetilde{X_{i_Y}}/\widetilde{Y}\}\equiv B_{\widetilde{X}}'$ and $\widetilde{X} \cap \widetilde{Y} = \emptyset$ that $C_{\widetilde{X},\widetilde{Y}}'\{\widetilde{X_{i_Y}}/\widetilde{Y}\}\equiv C_{\widetilde{X}}'$.
    Hence these $i_Y$ satisfy the subclause (CP-$a$-3-i) for the induction step.
    Moreover, due to $r' \equiv  B_{\widetilde{X},\widetilde{Y}}''\{\widetilde{p}/\widetilde{X},\widetilde{p_{Y}'}/\widetilde{Y}\}$ for some $p_Y'(Y\in \widetilde{Y})$ with $p_{i_Y} \stackrel{a}{\longrightarrow}p_Y'$ for each $Y \in \widetilde{Y}$ and $\widetilde{X} \cap \widetilde{Y} = \emptyset$, we have $r \equiv  r' \parallel_A D_{\widetilde{X}}\{\widetilde{p}/\widetilde{X}\} \equiv C_{\widetilde{X},\widetilde{Y}}''\{\widetilde{p}/\widetilde{X},\widetilde{p_{Y}'}/\widetilde{Y}\}$, that is, they realize (CP-$a$-3-ii) for the induction step.
    Finally, to verify that
     these $i_Y$ also meet the challenge of (CP-$a$-3-iii), we assume that $\widetilde{q}$ and $\widetilde{q'}$ be any tuple such that $|\widetilde{q}|=|\widetilde{X}|$, $q_{i_Y} \stackrel{a}{\longrightarrow} q_{Y}'$ for each $Y \in  \widetilde{Y}$ and $C_{\widetilde{X}}\{\widetilde{q}/\widetilde{X}\}$ is stable.
    So, $B_{\widetilde{X}}\{\widetilde{q}/\widetilde{X}\}$ and $D_{\widetilde{X}}\{\widetilde{q}/\widetilde{X}\}$ are stable.
    Further, since $B_{\widetilde{X},\widetilde{Y}}''$ satisfies (CP-$a$-3-iii) and $a \notin A$, it is easy to obtain that $C_{\widetilde{X}}\{\widetilde{q}/\widetilde{X}\} \stackrel{a}{\longrightarrow} C_{\widetilde{X},\widetilde{Y}}''\{\widetilde{q}/\widetilde{X},\widetilde{q_{Y}'}/\widetilde{Y}\}$.

    \textbf{(CP-$a$-4)} All subclauses immediately follow from IH and constructions of $C_{\widetilde{X}}'$, $C_{\widetilde{X},\widetilde{Y}}'$ and $C_{\widetilde{X},\widetilde{Y}}''$.\\

  \noindent Case 7 $C_{\widetilde{X}} \equiv \langle Y|E \rangle$.

    Clearly,  the last rule applied in the inference is
   \[\frac{\langle t_Y|E \rangle \{\widetilde{p}/\widetilde{X}\} \stackrel{a}{\longrightarrow} r}{\langle Y|E \rangle\{\widetilde{p}/\widetilde{X}\} \stackrel{a}{\longrightarrow} r} \;\text{with}\; Y =t_Y \in E.\]
   For the transition $\langle t_Y|E \rangle \{\widetilde{p}/\widetilde{X}\} \stackrel{a}{\longrightarrow} r$, by IH, there exist $C_{\widetilde{X}}'$, $C_{\widetilde{X},\widetilde{Y}}'$ and $C_{\widetilde{X},\widetilde{Y}}''$ with $\widetilde{X} \cap \widetilde{Y} = \emptyset$ that satisfy (CP-$a$-1) -- (CP-$a$-4).
    It is trivial to check that these contexts are what we need.
\end{proof}

Clearly, whenever all free variables occurring in $C_{\widetilde{X}}$ are guarded, any action labelled transition starting from $C_{\widetilde{X}}\{\widetilde{p}/\widetilde{X}\}$  must be performed by $C_{\widetilde{X}}$ itself.

\begin{lemma}\label{L:ONE_ACTION_VISIBLE_GUARDED}
Let $C_{\widetilde{X}}$ be a context such that $X$ is guarded for each $X \in \widetilde{X}$.
 If $C_{\widetilde{X}}\{\widetilde{p}/\widetilde{X}\} \stackrel{\alpha}{\longrightarrow} r$ then there exists $B_{\widetilde{X}}$ such that $r \equiv B_{\widetilde{X}}\{\widetilde{p}/\widetilde{X}\}$ and $C_{\widetilde{X}}\{\widetilde{q}/\widetilde{X}\} \stackrel{\alpha}{\longrightarrow}  B_{\widetilde{X}}\{\widetilde{q}/\widetilde{X}\}$ for any $\widetilde{q}$.
\end{lemma}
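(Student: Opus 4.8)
The plan is to case-split on whether $\alpha = \tau$ or $\alpha \in Act$ and, in each case, to invoke the structural analysis of one-step transitions already available---Lemma~\ref{L:ONE_ACTION_TAU} for $\tau$-moves and Lemma~\ref{L:ONE_ACTION_VISIBLE} for visible moves---using the guardedness hypothesis to discard precisely those branches in which the substitution $\widetilde{p}$ (rather than the context) drives the transition. The recurring key observation is that an \emph{active} occurrence is by definition unguarded, whereas guardedness is preserved under multi-step unfolding by Lemma~\ref{L:ONE_STEP_UNFOLDING_VARIABLE}(3); so a guarded variable of $C_{\widetilde{X}}$ can never become active in any $C_{\widetilde{X}}'$ with $C_{\widetilde{X}} \Rrightarrow C_{\widetilde{X}}'$.

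For $\alpha = \tau$ I would apply Lemma~\ref{L:ONE_ACTION_TAU}. If its clause (1) holds, then $B_{\widetilde{X}} \triangleq C_{\widetilde{X}}'$ is the desired context, since (C-$\tau$-1) gives $r \equiv C_{\widetilde{X}}'\{\widetilde{p}/\widetilde{X}\}$ and (C-$\tau$-2) gives the uniform move for every $\widetilde{q}$. To exclude clause (2): there (P-$\tau$-3) supplies $C_{\widetilde{X},Z}''$ in which $Z$ is 1-active and $C_{\widetilde{X},Z}''\{X_i/Z\} \equiv C_{\widetilde{X}}'$, so substituting $X_i$ for the unique active occurrence of $Z$ yields an active, hence unguarded, occurrence of $X_i$ in $C_{\widetilde{X}}'$; since $C_{\widetilde{X}} \Rrightarrow C_{\widetilde{X}}'$ by (P-$\tau$-1) and $X_i$ is guarded in $C_{\widetilde{X}}$, this contradicts Lemma~\ref{L:ONE_STEP_UNFOLDING_VARIABLE}(3). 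Thus only clause (1) can occur.

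For $\alpha = a \in Act$, $\tau$-purity (Theorem~\ref{L:TAU_PURE}) forces $C_{\widetilde{X}}\{\widetilde{p}/\widetilde{X}\}$ to be stable, whence $C_{\widetilde{X}}$ is a stable context by Lemma~\ref{L:STABLE_CONTEXT_I}. Applying Lemma~\ref{L:ONE_ACTION_VISIBLE} gives $C_{\widetilde{X}}'$, $C_{\widetilde{X},\widetilde{Y}}'$ and $C_{\widetilde{X},\widetilde{Y}}''$, and I would show $\widetilde{Y} = \emptyset$ by the same device: any $Y \in \widetilde{Y}$ is 1-active in $C_{\widetilde{X},\widetilde{Y}}'$ by (CP-$a$-2) and $C_{\widetilde{X},\widetilde{Y}}'\{\widetilde{X_{i_Y}}/\widetilde{Y}\} \equiv C_{\widetilde{X}}'$ by (CP-$a$-3-i), so $X_{i_Y}$ would acquire an active occurrence in $C_{\widetilde{X}}'$, contradicting its guardedness along $C_{\widetilde{X}} \Rrightarrow C_{\widetilde{X}}'$ from (CP-$a$-1). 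With $\widetilde{Y} = \emptyset$, I take $B_{\widetilde{X}} \triangleq C_{\widetilde{X},\widetilde{Y}}''$, and (CP-$a$-3-ii) yields $r \equiv B_{\widetilde{X}}\{\widetilde{p}/\widetilde{X}\}$.

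The step I expect to be the main obstacle is the uniform clause $C_{\widetilde{X}}\{\widetilde{q}/\widetilde{X}\} \stackrel{a}{\longrightarrow} B_{\widetilde{X}}\{\widetilde{q}/\widetilde{X}\}$ for \emph{every} $\widetilde{q}$: (CP-$a$-3-iii) delivers this only when $C_{\widetilde{X}}\{\widetilde{q}/\widetilde{X}\}$ is itself stable, so I must first promote the stability of the context to all substitutions. I would record this as an auxiliary fact---if $C_{\widetilde{X}}$ is stable and every variable is guarded, then $C_{\widetilde{X}}\{\widetilde{q}/\widetilde{X}\}$ is stable for all $\widetilde{q}$---proved by the same pattern: a hypothetical $C_{\widetilde{X}}\{\widetilde{q}/\widetilde{X}\} \stackrel{\tau}{\longrightarrow}$ must fall under clause (1) of Lemma~\ref{L:ONE_ACTION_TAU} (clause (2) being excluded as above), and then (C-$\tau$-2) would force $C_{\widetilde{X}}\{0/\widetilde{X}\} \stackrel{\tau}{\longrightarrow}$, contradicting stability of $C_{\widetilde{X}}$.
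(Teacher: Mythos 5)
Your proposal is correct and follows essentially the same route as the paper's own proof: the same case split on $\alpha$, the same use of Lemma~\ref{L:ONE_ACTION_TAU} with clause (2) excluded by combining guardedness preservation under $\Rrightarrow$ (Lemma~\ref{L:ONE_STEP_UNFOLDING_VARIABLE}(3), via (P-$\tau$-1)) with the 1-activeness of $Z$ in (P-$\tau$-3), and the same argument that $\widetilde{Y}=\emptyset$ in the visible case from (CP-$a$-1), (CP-$a$-2) and (CP-$a$-3-i). The one point where you go beyond the paper is the explicit auxiliary fact that $C_{\widetilde{X}}\{\widetilde{q}/\widetilde{X}\}$ is stable for \emph{every} $\widetilde{q}$ (needed to fire (CP-$a$-3-iii)); the paper dismisses this with ``Clearly, \dots\ exactly one that we need,'' whereas your argument --- clause (2) excluded by guardedness, and clause (1) together with (C-$\tau$-2) forcing $C_{\widetilde{X}}\{0/\widetilde{X}\}\stackrel{\tau}{\longrightarrow}$ against stability of $C_{\widetilde{X}}$ --- is sound and fills that gap.
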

\begin{proof}
Firstly, we handle the case $\alpha =\tau$.
For the transition $C_{\widetilde{X}}\{\widetilde{p}/\widetilde{X}\} \stackrel{\tau}{\longrightarrow} r$, either the clause (1) or (2)  in Lemma~\ref{L:ONE_ACTION_TAU} holds.
It is a simple matter to see that the clause (1) implies what we desire.
The task is now to show that the clause (2) does not hold for such transition.
On the contrary, assume that the clause (2) holds.
Then there exist $C_{\widetilde{X}}'$, $C_{\widetilde{X},Z}''$ and $i_0 \leq |\widetilde{X}|$ satisfying (P-$\tau$-1,2,3,4).
For each $X \in \widetilde{X}$, since it is guarded in $C_{\widetilde{X}}$, by Lemma~\ref{L:ONE_STEP_UNFOLDING_VARIABLE}(3) and (P-$\tau$-1), so it is in $C_{\widetilde{X}}'$.
Hence a contradiction arises due to (P-$\tau$-3), as desired.

Next we treat the other case $\alpha \in Act$.
By Lemma~\ref{L:ONE_ACTION_VISIBLE}, for the transition $C_{\widetilde{X}}\{\widetilde{p}/\widetilde{X}\} \stackrel{\alpha}{\longrightarrow} r$, there exist $C_{\widetilde{X}}'$, $C_{\widetilde{X},\widetilde{Y}}'$ and $C_{\widetilde{X},\widetilde{Y}}''$  realizing (CP-$a$-1) -- (CP-$a$-4).
Clearly, if $\widetilde{Y} = \emptyset$ then $C_{\widetilde{X},\widetilde{Y}}''$ is exactly one that we need.
Thus, to complete the proof, it suffices to show that $\widetilde{Y}$ is indeed empty.
Since $X$ is guarded in $C_{\widetilde{X}}$ for each $X \in \widetilde{X}$ and $C_{\widetilde{X}} \Rrightarrow C_{\widetilde{X}}'$ (i.e., (CP-$a$-1)), by Lemma~\ref{L:ONE_STEP_UNFOLDING_VARIABLE}(3), all occurrences of free variables in $C_{\widetilde{X}}'$ are guarded.
Moreover, since  $C_{\widetilde{X},\widetilde{Y}}'$  satisfies (CP-$a$-2) and (CP-$a$-3-i), we get $\widetilde{Y} =\emptyset$, as desired.
\end{proof}

%
%

\begin{lemma}\label{L:ONE_STEP_UNFOLDING_TAU_ACTION}
 For any $Y,E$  with $Y = t_Y \in E$ and context $C_{X}$ with at most one occurrence of the unfolded variable $X$, we have

\noindent   (1) if $C_{X}\{\langle Y|E \rangle/ X\} \stackrel{\alpha}{\longrightarrow}  q$ then there exists $B_{X}$ such that
      \begin{enumerate}[({1.}1)]
      \renewcommand{\theenumi}{(1.\arabic{enumi})}
      \item \  $q \equiv B_{X}\{\langle Y|E \rangle/ X\}$,
       \item \  $C_{X}\{\langle t_Y|E \rangle/ X\} \stackrel{\alpha}{\longrightarrow} B_{X}\{\langle t_Y|E \rangle/ X\}$,  and
       \item \   $X$ occurs in $B_{X}$ at most once; moreover, such occurrence is unfolded;
       \end{enumerate}

\noindent   (2) if $C_{X}\{\langle t_Y|E \rangle/ X\} \stackrel{\alpha}{\longrightarrow}  q$ then there exists $B_{X}$ such that
        \begin{enumerate}[({2.}1)]
        \renewcommand{\theenumi}{(2.\arabic{enumi})}
      \item \  $q \equiv B_{X}\{\langle t_Y|E \rangle/ X\}$,
       \item \  $C_{X}\{\langle Y|E \rangle/ X\} \stackrel{\alpha}{\longrightarrow} B_{X}\{\langle Y|E \rangle/ X\}$, and
       \item \  $X$ occurs in $B_{X}$ at most once; moreover, such occurrence is unfolded.
       \end{enumerate}
\end{lemma}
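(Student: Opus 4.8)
The plan is to prove (1) and (2) simultaneously by establishing the following symmetric statement: whenever $u$ and $u'$ are closed terms with \emph{identical} outgoing transitions (i.e. $u \stackrel{\beta}{\longrightarrow} s$ iff $u' \stackrel{\beta}{\longrightarrow} s$ for all $\beta,s$), and $C_X$ is a context in which $X$ occurs at most once and, if it occurs, that occurrence is unfolded, then $C_X\{u/X\} \stackrel{\alpha}{\longrightarrow} q$ implies the existence of $B_X$ with $q \equiv B_X\{u/X\}$, $C_X\{u'/X\} \stackrel{\alpha}{\longrightarrow} B_X\{u'/X\}$, and $X$ occurring at most once (unfolded) in $B_X$. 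Part (1) is the instance $u \equiv \langle Y|E \rangle$, $u' \equiv \langle t_Y|E \rangle$, and part (2) the instance with $u,u'$ swapped; in both cases the hypothesis on $u,u'$ is furnished by Rule $Ra_{16}$, which is the sole rule deriving a transition of $\langle Y|E \rangle$ and whose premise is exactly a transition of $\langle t_Y|E \rangle$. Thus $\langle Y|E \rangle$ and $\langle t_Y|E \rangle$ are transition-equivalent, and in particular one is stable iff the other is.

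I would prove the symmetric statement by induction on the depth of the inference of $C_X\{u/X\} \stackrel{\alpha}{\longrightarrow} q$, distinguishing cases on the shape of $C_X$. If $C_X$ is closed, or $C_X \equiv \langle Z|E' \rangle$ (in which case $X$, being unfolded, cannot occur inside the recursion and hence does not occur at all), then $C_X\{u/X\} \equiv C_X \equiv C_X\{u'/X\}$ and $B_X \triangleq q$ works, $q$ being closed. If $C_X \equiv X$ then $C_X\{u/X\} \equiv u \stackrel{\alpha}{\longrightarrow} q$, so $u' \stackrel{\alpha}{\longrightarrow} q$ by transition-equivalence and $B_X \triangleq q$ (closed) works: the single occurrence of $X$ has been \emph{consumed}. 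For $C_X \equiv \alpha'.B_X'$ the only move is $\alpha = \alpha'$ to $B_X'\{u/X\}$, and $B_X \triangleq B_X'$ suffices; crucially the occurrence of $X$ passes from a possibly guarded position in $C_X$ to $B_X'$ while remaining unfolded. For $C_X \equiv B_X' \vee D_X'$ one of Rules $Ra_9, Ra_{10}$ fires and $B_X$ is the chosen disjunct.

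For a composite context $C_X \equiv B_X' \odot D_X'$ with $\odot \in \{\Box,\wedge,\parallel_A\}$, the unique occurrence of $X$ lies in exactly one operand, say $B_X'$, so $D_X'$ is $X$-free and $D_X'\{u/X\} \equiv D_X' \equiv D_X'\{u'/X\}$. For a move in which $B_X'$ acts (Rules $Ra_4, Ra_7, Ra_{11}$, or the positive premise of $Ra_2, Ra_6, Ra_{13}, Ra_{15}$) I apply the induction hypothesis to the premise transition of $B_X'\{u/X\}$ to obtain $B_X''$, and reassemble $B_X$ from $B_X''$ and $D_X'$; a move in which only $D_X'$ acts needs no hypothesis on $u$. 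Any negative (stability) premise on the non-acting operand, e.g. $D_X'\{u/X\} \not\stackrel{\tau}{\longrightarrow}$ in $Ra_2$, is transferred to $D_X'\{u'/X\} \not\stackrel{\tau}{\longrightarrow}$ by applying the induction hypothesis at label $\tau$ to the strictly smaller context $D_X'$ contrapositively: this is precisely why (1) and (2) must be established at once. Theorem~\ref{L:TAU_PURE} supplies that a visible move forces stability of its source. The occurrence count is maintained because $D_X'$ is $X$-free and, by the induction hypothesis, $X$ occurs at most once (unfolded) in $B_X''$.

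The main obstacle is the bookkeeping of the clause ``at most one, unfolded occurrence'' through the induction, rather than the existence of the matching transition, which is almost immediate from transition-equivalence and the rule-by-rule transport above. Two points deserve care: first, that no rule ever duplicates the tracked occurrence or buries it inside a recursion --- the synchronising rules $Ra_6$ and $Ra_{15}$ require both operands to move, but only the $X$-free operand is copied into the target while the $X$-bearing operand is handled by the induction hypothesis, and $Ra_{16}$ only unfolds recursions that (since $X$ is unfolded) cannot contain $X$; second, that ``unfolded'' genuinely survives the prefix case, where a guarded occurrence of $X$ becomes exposed but never enters a recursion scope. One could alternatively route the existence of $B_X$ through Lemmas~\ref{L:ONE_ACTION_TAU} and~\ref{L:ONE_ACTION_VISIBLE} (using clauses (C-$\tau$-2), (P-$\tau$-4) and (CP-$a$-3-iii) to transport the transition, and (P-$\tau$-3) together with Lemma~\ref{L:ONE_STEP_UNFOLDING_VARIABLE}(1) to control occurrences), but those lemmas do not directly record preservation of ``unfolded'', so the direct structural induction is cleaner for conclusions (1.3) and (2.3).
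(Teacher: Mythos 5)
Your proposal is correct, but it takes a genuinely different route from the paper's. The paper proves only item (1) (declaring (2) symmetric), splits on whether $\alpha=\tau$ or $\alpha\in Act$, and transports the transition through the previously established decomposition machinery: Lemma~\ref{L:ONE_ACTION_TAU} (clauses (C-$\tau$-1,2,3) and (P-$\tau$-1,2,3,4)) for the $\tau$ case and Lemma~\ref{L:ONE_ACTION_VISIBLE} (clauses (CP-$a$-1,2,3,4)) for the visible case, with occurrence control via (C-$\tau$-3-ii), (CP-$a$-4-i) and Lemma~\ref{L:ONE_STEP_UNFOLDING_VARIABLE}(1); in the substitution-driven subcases it observes that $X$ disappears from the residual context, so $B_X$ can be taken to be the closed term $q$ itself --- exactly your ``consumed'' case. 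Notably, the paper's visible case also invokes item (2) at $\tau$ to transfer stability from $C_X\{\langle Y|E\rangle/X\}$ to $C_X\{\langle t_Y|E\rangle/X\}$, so the mutual dependency you isolate is present there too, just less explicitly; your formulation via a single symmetric statement about transition-equivalent $u,u'$ makes its well-foundedness transparent. Your self-contained structural induction costs a rule-by-rule case analysis that overlaps with what Lemmas~\ref{L:ONE_ACTION_TAU} and~\ref{L:ONE_ACTION_VISIBLE} already encode, but buys explicit bookkeeping of ``at most one, unfolded''; your criticism of the citation route is apt for the visible case (the (CP-$a$-4) clauses track occurrence counts and activeness rather than unfoldedness as such), though (C-$\tau$-3-ii) does record unfoldedness, so the paper's $\tau$ case is fully explicit on that point. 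One repair to your write-up: state the induction as structural induction on $C_X$, not induction on the depth of the inference. Your contrapositive transfer of a negative premise, e.g.\ from $D_X'\{u/X\}\not\stackrel{\tau}{\longrightarrow}$ to $D_X'\{u'/X\}\not\stackrel{\tau}{\longrightarrow}$, applies the claim to a hypothetical transition of $D_X'\{u'/X\}$ whose derivation depth is unrelated to that of the transition currently being analysed, so a depth-based hypothesis does not license it; what does license it is that $D_X'$ is a proper subterm, and structural induction is available precisely because your recursion case is vacuous ($X$ being unfolded forces $X$ not to occur in $\langle Z|E'\rangle$ at all), so Rule $Ra_{16}$ never demands a recursive call on a non-smaller context.
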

\begin{proof}
    We prove only item (1), and the same reasoning applies to item (2).
    For (1), the argument is splitted into two parts based on $\alpha$.\\

\noindent Case 1 $\alpha = \tau$.

    Assume $C_{X}\{\langle Y|E \rangle/ X\} \stackrel{\tau}{\longrightarrow}  q$.
    Then, for such transition, by Lemma~\ref{L:ONE_ACTION_TAU}, either there exists $C_X'$ that satisfies (C-$\tau$-1,2,3) or there exist $C_X'$, $C_{X,Z}''$ with $Z \neq X$ that satisfy (P-$\tau$-1,2,3,4).

        For the first alternative, it follows from $C_X'$ satisfies (C-$\tau$-1,2) that $q \equiv C_{X}'\{\langle Y|E \rangle/ X\}$ and $C_X\{\langle t_Y|E \rangle/ X\} \stackrel{\tau}{\longrightarrow} C_X'\{\langle t_Y|E \rangle/ X\}$.
        Moreover, due to (C-$\tau$-3-ii), there is at most one occurrence of the unfolded variable $X$ in $C_X'$.
        Consequently, the context $C_X'$ is exactly one that we seek.

        For the second alternative, by (P-$\tau$-2), there exists $q'$ such that
        \[\langle Y|E \rangle \stackrel{\tau}{\longrightarrow}  q'\;\text{and}\;q \equiv C_{X,Z}''\{\langle Y|E \rangle/ X,q'/Z\}.\]
        Hence $\langle t_Y|E \rangle \stackrel{\tau}{\longrightarrow}  q'$.
        Then it follows from (P-$\tau$-4) that
        \[C_X\{\langle t_Y|E \rangle/ X\} \stackrel{\tau}{\longrightarrow} C_{X,Z}''\{\langle t_Y|E \rangle/ X,q'/Z\}.\]
        In addition, due to (P-$\tau$-1), by Lemma~\ref{L:ONE_STEP_UNFOLDING_VARIABLE}(1), there is at most one occurrence of the unfolded variable $X$ in $C_X'$.
        Moreover, since $C_X'$ and $C_{X,Z}''$ satisfy (P-$\tau$-3), we obtain $X \notin FV(C_{X,Z}'')$.
        Hence $q \equiv C_{X,Z}''\{\langle Y|E \rangle/ X,q'/Z\} \equiv C_{X,Z}''\{\langle t_Y|E \rangle/ X,q'/Z\}$.
        Then it is easy to see that $B_X \triangleq q$ is  what we need.\\
        
\noindent Case 2 $\alpha \in Act$.
        
  Let $C_{X}\{\langle Y|E \rangle/ X\} \stackrel{\alpha}{\longrightarrow}  q$.
Then, by Lemma~\ref{L:ONE_ACTION_VISIBLE}, there exist $C_X'$, $C_{X,\widetilde{Z}}'$ and $C_{X,\widetilde{Z}}''$ with $X \notin \widetilde{Z}$ that satisfy (CP-$a$-1) -- (CP-$a$-4).
Since $C_X\{\langle Y|E \rangle/X\}$ is stable, by item (2),
so is $C_X\{\langle t_Y|E \rangle/X\}$.

 If $\widetilde{Z} = \emptyset$, it follows trivially by (CP-$a$-3-iii) that $C_X\{\langle t_Y|E \rangle/ X\} \stackrel{\alpha}{\longrightarrow} C_{X,\widetilde{Z}}''\{\langle t_Y|E \rangle/ X \}$;
 moreover, by (CP-$a$-1), (CP-$a$-3-i), (CP-$a$-4-i) and Lemma~\ref{L:ONE_STEP_UNFOLDING_VARIABLE}(1), there is at most one occurrence of the unfolded variable $X$ in $C_{X,\widetilde{Z}}''$.
 Therefore, $C_{X,\widetilde{Z}}''$ is exactly the context that we need.

        We next deal with the other case $\widetilde{Z} \neq \emptyset$.
        Since $C_X'$ satisfies (CP-$a$-1), by Lemma~\ref{L:ONE_STEP_UNFOLDING_VARIABLE}(1), there is at most one occurrence of the unfolded variable $X$ in $C_{X}'$.
        Then it follows from  (CP-$a$-2), (CP-$a$-3-i) and (CP-$a$-4-i) that $|\widetilde{Z}|=1$ and $X \notin FV(C_{X,\widetilde{Z}}'')$.
        So, due to (CP-$a$-3-ii), there exists $q'$ such that
        \[\langle Y|E \rangle \stackrel{\alpha}{\longrightarrow}q'\;\text{and}\;q \equiv C_{X,\widetilde{Z}}''\{\langle Y|E \rangle /X,q'/\widetilde{Z}\}.\]
        Hence $\langle t_Y|E \rangle \stackrel{\alpha}{\longrightarrow}q'$.
        Then  $C_X\{\langle t_Y|E\rangle /X\} \stackrel{\alpha}{\longrightarrow} C_{X,\widetilde{Z}}''\{\langle t_Y|E \rangle /X,q'/\widetilde{Z}\}$  by (CP-$a$-3-iii) and $C_X\{\langle t_Y|E\rangle /X\} \not\stackrel{\tau}{\longrightarrow}$.
        Thus $q \equiv C_{X,\widetilde{Z}}''\{\langle t_Y|E \rangle /X,q'/\widetilde{Z}\}$ because of $X \notin FV(C_{X,\widetilde{Z}}'')$.
        Then it is easy to check that  $B_X \triangleq q$ is exactly what we seek.
\end{proof}

\subsection{Multi-$\tau$ transitions and more on unfolding}

Based on the result obtained in the preceding subsections, we shall give a few further properties of unfolding.
We first want to indicate some simple properties.

\begin{lemma}\label{L:MULTI_STEP_UNFOLDING_ACTION}
  The relation $ \Rrightarrow$ satisfies the forward and backward conditions, that is, for any $\alpha \in Act_{\tau}$ and $p,q$ such that $p \Rrightarrow  q$, we have
\begin{enumerate}[(1)]
\renewcommand{\theenumi}{(\arabic{enumi})}
  \item if $p \stackrel{\alpha}{\longrightarrow}  p'$ then $q \stackrel{\alpha}{\longrightarrow}  q'$ and $p' \Rrightarrow  q'$ for some $q' $;
  \item if $q \stackrel{\alpha}{\longrightarrow}  q'$ then $p \stackrel{\alpha}{\longrightarrow}  p'$ and $p' \Rrightarrow  q'$ for some $p' $.
\end{enumerate}
\end{lemma}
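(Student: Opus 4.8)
The plan is to reduce everything to the single-step relation $\Rrightarrow_1$ and then lift to $\Rrightarrow$ by a routine induction. First I would observe that $\Rrightarrow$ is reflexive and transitive: since $\Rrightarrow_0$ is the identity and $\Rrightarrow_{k+1} = \Rrightarrow_k \circ \Rrightarrow_1$ by Definition~\ref{D:UNFOLDING}, a straightforward induction gives $\Rrightarrow_m \circ \Rrightarrow_n \subseteq \Rrightarrow_{m+n}$, so the composition of two multi-step unfoldings is again a multi-step unfolding. Granting the forward and backward conditions for $\Rrightarrow_1$, I would then prove the conditions for $\Rrightarrow_k$ by induction on $k$. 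For the forward direction at step $k+1$, write $p \Rrightarrow_k t' \Rrightarrow_1 q$; given $p \stackrel{\alpha}{\longrightarrow} p'$, the induction hypothesis yields $t' \stackrel{\alpha}{\longrightarrow} t''$ with $p' \Rrightarrow t''$, and the single-step forward condition applied to $t' \Rrightarrow_1 q$ yields $q' $ with $q \stackrel{\alpha}{\longrightarrow} q'$ and $t'' \Rrightarrow q'$; transitivity of $\Rrightarrow$ then gives $p' \Rrightarrow t'' \Rrightarrow q'$. The backward direction is entirely symmetric.

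The core is therefore the single-step case, and here the heavy lifting is already done by Lemma~\ref{L:ONE_STEP_UNFOLDING_TAU_ACTION}. By Lemma~\ref{L:ONE_STEP_UNFOLDING_SYNTAX}, $p \Rrightarrow_1 q$ means there are a context $s \equiv C_X$ and variable $X$, with $X$ unfolded in $C_X$ and occurring there exactly once, such that $p \equiv C_X\{\langle Y|E \rangle/X\}$ and $q \equiv C_X\{\langle t_Y|E \rangle/X\}$ for some $Y, E$ with $Y = t_Y \in E$. In particular $C_X$ has at most one occurrence of the unfolded variable $X$, which is precisely the hypothesis of Lemma~\ref{L:ONE_STEP_UNFOLDING_TAU_ACTION}. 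For the forward condition, suppose $p \equiv C_X\{\langle Y|E \rangle/X\} \stackrel{\alpha}{\longrightarrow} p'$; then Lemma~\ref{L:ONE_STEP_UNFOLDING_TAU_ACTION}(1) supplies a $B_X$ with $p' \equiv B_X\{\langle Y|E \rangle/X\}$, $q \equiv C_X\{\langle t_Y|E \rangle/X\} \stackrel{\alpha}{\longrightarrow} B_X\{\langle t_Y|E \rangle/X\} =: q'$, and $X$ occurring in $B_X$ at most once and, if present, unfolded. The backward condition uses Lemma~\ref{L:ONE_STEP_UNFOLDING_TAU_ACTION}(2) in exactly the same way.

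It remains to check that the target terms are related. Since $p' \equiv B_X\{\langle Y|E \rangle/X\}$ and $q' \equiv B_X\{\langle t_Y|E \rangle/X\}$, I would split on whether $X$ occurs in $B_X$: if $X$ occurs once (unfolded), then $B_X$ witnesses $p' \Rrightarrow_1 q'$ via Lemma~\ref{L:ONE_STEP_UNFOLDING_SYNTAX}; if $X$ does not occur in $B_X$, then $p' \equiv q'$, so $p' \Rrightarrow_0 q'$. In either case $p' \Rrightarrow q'$, as required. I do not expect a genuine obstacle here, since all the delicate analysis of how contexts and substitutions jointly contribute to a transition has been absorbed into Lemma~\ref{L:ONE_ACTION_TAU}, Lemma~\ref{L:ONE_ACTION_VISIBLE} and their corollary Lemma~\ref{L:ONE_STEP_UNFOLDING_TAU_ACTION}; the only points requiring care are the bookkeeping of the transitivity composition in the multi-step lifting and the degenerate case in which the place-holder $X$ is consumed by the transition, making the step a $\Rrightarrow_0$ rather than a $\Rrightarrow_1$ unfolding.
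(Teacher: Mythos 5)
Your proposal is correct and follows essentially the same route as the paper's own proof: induction on $n$ for $p \Rrightarrow_n q$, with the inductive step decomposed as $p \Rrightarrow_k r \Rrightarrow_1 q$ and the single-step case discharged by Lemma~\ref{L:ONE_STEP_UNFOLDING_SYNTAX} together with Lemma~\ref{L:ONE_STEP_UNFOLDING_TAU_ACTION} (which covers all $\alpha \in Act_\tau$). You merely spell out two details the paper leaves implicit — the transitivity of $\Rrightarrow$ and the degenerate case where $X$ no longer occurs in $B_X$, so the residual step is $\Rrightarrow_0$ rather than $\Rrightarrow_1$ — both of which are correct and welcome bookkeeping.
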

\begin{proof}
\noindent \textbf{(1)} Assume $p \Rrightarrow  q$ and $p \stackrel{\alpha}{\longrightarrow}  p'$.
Clearly, $p \Rrightarrow_n  q$ for some $n$.
It proceeds by induction on $n$.
For the induction base $n = 0$, it holds trivially.
For the induction step $n = k+1$, we have $p \Rrightarrow_{k}  r \Rrightarrow_{1}  q$ for some $r$.
By IH, $r \stackrel{\alpha}{\longrightarrow}  r'$ and $p' \Rrightarrow  r'$ for some $r'$.
Moreover, for $r \Rrightarrow_{1}  q$, by Lemma~\ref{L:ONE_STEP_UNFOLDING_SYNTAX}, \ref{L:ONE_STEP_UNFOLDING_TAU_ACTION}(1) and \ref{L:ONE_STEP_UNFOLDING_VISIBLE_ACTION}(1), there exists $q'$ such that $q \stackrel{\alpha}{\longrightarrow}  q'$ and $r'  \Rrightarrow   q'$.
Obviously, we also have $p'  \Rrightarrow   q'$. \\

\noindent \textbf{(2)} Similar to item (1), but applying Lemmas~\ref{L:ONE_STEP_UNFOLDING_TAU_ACTION}(2) and \ref{L:ONE_STEP_UNFOLDING_VISIBLE_ACTION}(2) instead of Lemmas~\ref{L:ONE_STEP_UNFOLDING_TAU_ACTION}(1) and \ref{L:ONE_STEP_UNFOLDING_VISIBLE_ACTION}(1).
\end{proof}

A similar result also holds w.r.t $\stackrel{\epsilon}{\Longrightarrow}|$, that is

\begin{lemma}\label{L:MULTI_STEP_UNFOLDING_ACTION_II}
   For any $p,q$ such that $p \Rrightarrow  q$, we have
   \begin{enumerate}[(1)]
\renewcommand{\theenumi}{(\arabic{enumi})}
  \item if $p \stackrel{\epsilon}{\Longrightarrow}  |p'$ then $q \stackrel{\epsilon}{\Longrightarrow} | q'$ and $p' \Rrightarrow  q'$ for some $q' $;
  \item if $q \stackrel{\epsilon}{\Longrightarrow} | q'$ then $p \stackrel{\epsilon}{\Longrightarrow} | q'$ and $p' \Rrightarrow  q'$ for some $p' $.
\end{enumerate}
\end{lemma}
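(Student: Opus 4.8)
The plan is to reduce this weak-transition statement to the single-step simulation already established in Lemma~\ref{L:MULTI_STEP_UNFOLDING_ACTION}, by iterating that lemma along the finite $\tau$-path witnessing $\stackrel{\epsilon}{\Longrightarrow}|$, and then to discharge the stability requirement at the endpoint as a separate, easy observation.

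First I would isolate the key auxiliary fact that $\Rrightarrow$ preserves stability in both directions: \emph{if $s \Rrightarrow t$ then $s$ is stable iff $t$ is stable}. This is an immediate consequence of Lemma~\ref{L:MULTI_STEP_UNFOLDING_ACTION}. Indeed, if $s$ is stable yet $t \stackrel{\tau}{\longrightarrow} t'$ for some $t'$, then clause (2) of that lemma (with $\alpha = \tau$) would yield $s \stackrel{\tau}{\longrightarrow} s'$ for some $s'$, contradicting the stability of $s$; symmetrically, if $t$ is stable then clause (1) forbids any $\tau$-transition from $s$. Hence $\Rrightarrow$ relates stable states only to stable states and vice versa.

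For item (1), recall that $p \stackrel{\epsilon}{\Longrightarrow}|p'$ means there is a finite chain $p \equiv r_0 \stackrel{\tau}{\longrightarrow} r_1 \stackrel{\tau}{\longrightarrow} \cdots \stackrel{\tau}{\longrightarrow} r_n \equiv p'$ with $p'$ stable. I would proceed by induction on $n$, repeatedly applying Lemma~\ref{L:MULTI_STEP_UNFOLDING_ACTION}(1) with $\alpha = \tau$: from $p \Rrightarrow q$ and $r_0 \stackrel{\tau}{\longrightarrow} r_1$ obtain $q \stackrel{\tau}{\longrightarrow} s_1$ with $r_1 \Rrightarrow s_1$, and continuing in this fashion build a chain $q \equiv s_0 \stackrel{\tau}{\longrightarrow} s_1 \stackrel{\tau}{\longrightarrow} \cdots \stackrel{\tau}{\longrightarrow} s_n$ with $r_i \Rrightarrow s_i$ for every $i \leq n$. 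Setting $q' \triangleq s_n$ we obtain $q \stackrel{\epsilon}{\Longrightarrow} q'$ and $p' \equiv r_n \Rrightarrow s_n \equiv q'$. Finally, since $p'$ is stable and $p' \Rrightarrow q'$, the auxiliary fact gives that $q'$ is stable, so in fact $q \stackrel{\epsilon}{\Longrightarrow}|q'$, as required (the base case $n = 0$ is covered since then $q' \triangleq q$ is stable by the same fact). Item (2) is entirely symmetric: one lifts the chain witnessing $q \stackrel{\epsilon}{\Longrightarrow}|q'$ back to a $\tau$-chain from $p$ using clause (2) of Lemma~\ref{L:MULTI_STEP_UNFOLDING_ACTION}, obtaining $p \stackrel{\epsilon}{\Longrightarrow} p'$ with $p' \Rrightarrow q'$, and then invokes the stability observation in the converse direction to guarantee that this endpoint $p'$ is stable, whence $p \stackrel{\epsilon}{\Longrightarrow}|p'$.

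The only step requiring attention—and the conceptual crux—is the stability argument: Lemma~\ref{L:MULTI_STEP_UNFOLDING_ACTION} tracks only individual transitions, so one must explicitly verify that the stable endpoint of the given derivation corresponds to a stable endpoint on the other side. Since stability is precisely the absence of $\tau$-transitions, this drops out directly from the contrapositive of the single-step simulation, so no genuine difficulty arises; the remaining bookkeeping, namely the induction assembling the matching $\tau$-chain, is entirely routine.
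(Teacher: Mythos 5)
Your proof is correct and follows essentially the same route as the paper, whose entire argument is \textquotedblleft by applying Lemma~\ref{L:MULTI_STEP_UNFOLDING_ACTION} finitely often\textquotedblright; you simply make explicit the iteration along the $\tau$-chain and the (implicitly needed) observation that $\Rrightarrow$ preserves stability in both directions, which is exactly the contrapositive use of that lemma the paper leaves to the reader.
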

\begin{proof}
 By applying Lemma~\ref{L:MULTI_STEP_UNFOLDING_ACTION} finitely often.
\end{proof}

In fact, for any $p,q$ such that $p \Rrightarrow q$, it is to be expected that $p =_{RS} q$.
To verify it, we need to prove that $p \in F$ if and only if $q \in F$.
The next lemma will serve as a stepping stone in proving this.

\begin{convention}
  The arguments in the remainder of this paper often proceed by distinguishing some cases based on the last rule applied in an inference. For such argument, since rules about operations $\wedge$, $\vee$, $\parallel_A$ and $\Box$ are symmetric w.r.t their two operands (for instance, Rules $Rp_{11}$ and $Rp_{12}$, $Ra_4$ and $Ra_5$, and so on), we shall consider only one of two symmetric rules and omit another one.
\end{convention}

\begin{lemma}\label{L:ONE_STEP_UNFOLDING_FAILURE}
 For any $Y,E$  with $Y = t_Y \in E$ and context $C_{X}$ with at most one occurrence of the unfolded variable $X$,
    $C_{X}\{\langle Y|E \rangle/ X\} \in F  $ iff $C_{X}\{\langle t_Y|E \rangle/ X\} \in F  $.
\end{lemma}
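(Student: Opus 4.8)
The plan is to prove both implications by a single induction on the height of the proof tree that witnesses inconsistency in $Strip(\mathcal{P}_{\text{CLL}_R},M_{\text{CLL}_R})$, keeping the context universally quantified in the induction hypothesis so that it can be reapplied to the targets of transitions. I will present the direction $C_{X}\{\langle Y|E\rangle/X\}\in F \Rightarrow C_{X}\{\langle t_Y|E\rangle/X\}\in F$; the reverse implication is symmetric, obtained by exchanging $\langle Y|E\rangle$ with $\langle t_Y|E\rangle$ and reading Lemma~\ref{L:ONE_STEP_UNFOLDING_TAU_ACTION} and Lemma~\ref{L:F_NORMAL}(8) in the opposite direction. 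The argument splits on the shape of $C_{X}$. If $X\notin FV(C_{X})$ --- in particular if $C_{X}$ is closed, or $0$, or $\bot$, or a top-level recursion $\langle W|E'\rangle$ (an unfolded occurrence of $X$ can never lie inside a recursion scope, so this last case forces $X$ not to occur) --- then the two instances are the identical term and there is nothing to prove. If $C_{X}\equiv X$ the statement is precisely Lemma~\ref{L:F_NORMAL}(8).

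For $C_{X}\equiv\alpha.B_{X}$, $C_{X}\equiv B_{X}\vee D_{X}$, $C_{X}\equiv B_{X}\Box D_{X}$ and $C_{X}\equiv B_{X}\parallel_A D_{X}$, inconsistency is governed locally by Lemma~\ref{L:F_NORMAL}(1)--(3), which characterises $C_{X}\{\langle Y|E\rangle/X\}\in F$ through the $F$-status of its immediate constituents. Because $X$ has at most one unfolded occurrence, a binary operator places that occurrence in a single operand, leaving the other operand $X$-free (hence unchanged by the unfolding), while the first operand is of the form $B_{X}\{\cdot/X\}$ with $B_{X}$ again carrying at most one unfolded occurrence of $X$. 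Its subderivation is strictly shorter, so the induction hypothesis yields inconsistency of the unfolded constituent, and the same rule ($Rp_{2}$, $Rp_{3}$, $Rp_{4}$/$Rp_{5}$ or $Rp_{6}$/$Rp_{7}$) re-derives $C_{X}\{\langle t_Y|E\rangle/X\}\in F$.

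The substantial case is $C_{X}\equiv B_{X}\wedge D_{X}$. By the single-occurrence hypothesis I may assume $X\notin FV(D_{X})$, so the right conjunct is fixed; write $p:=B_{X}\{\langle Y|E\rangle/X\}$, $p':=B_{X}\{\langle t_Y|E\rangle/X\}$ and $d:=D_{X}\{\langle Y|E\rangle/X\}\equiv D_{X}\{\langle t_Y|E\rangle/X\}$. The engine is Lemma~\ref{L:ONE_STEP_UNFOLDING_TAU_ACTION} applied to the whole context $B_{X}\wedge D_{X}$, whose two items give a two-way, label-preserving correspondence between the transitions of $C_{X}\{\langle Y|E\rangle/X\}$ and those of $C_{X}\{\langle t_Y|E\rangle/X\}$: each $\alpha$-derivative of the former is $B\{\langle Y|E\rangle/X\}$ exactly when the latter has the matching derivative $B\{\langle t_Y|E\rangle/X\}$, for a context $B$ again carrying at most one unfolded $X$. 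Two consequences drop out: the two terms have equal ready sets, and one is stable iff the other is. I then case on the last predicate rule. Rules $Rp_{8}$/$Rp_{9}$ reduce, via a shorter subderivation and the induction hypothesis (or via the unchanged conjunct $d$), to inconsistency of a conjunct. Rules $Rp_{10}$/$Rp_{11}$ are reproduced verbatim from the ready-set equality and the preservation of stability. Rule $Rp_{12}$ is reproduced by transporting every $\alpha$-derivative $B\{\langle t_Y|E\rangle/X\}$ of $C_{X}\{\langle t_Y|E\rangle/X\}$ back to the derivative $B\{\langle Y|E\rangle/X\}$ of $C_{X}\{\langle Y|E\rangle/X\}$, which the rule's premise places in $F$ and whose subderivation is shorter, so the induction hypothesis returns $B\{\langle t_Y|E\rangle/X\}\in F$. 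Rule $Rp_{13}$ is analogous, after iterating Lemma~\ref{L:ONE_STEP_UNFOLDING_TAU_ACTION} step by step along a $\tau$-path to lift each stable $\tau$-descendant $B\{\langle t_Y|E\rangle/X\}$ of $C_{X}\{\langle t_Y|E\rangle/X\}$ to a stable $\tau$-descendant $B\{\langle Y|E\rangle/X\}$ of $C_{X}\{\langle Y|E\rangle/X\}$ (stability being transported by the same correspondence), which the premise of $Rp_{13}$ forces into $F$.

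I expect the crux to be exactly the $Rp_{12}$ and $Rp_{13}$ subcases, where inconsistency is certified globally --- by the whole outgoing-transition family, respectively by the whole set of stable $\tau$-descendants --- rather than by a single subterm. The delicate point is to transport these entire families using Lemma~\ref{L:ONE_STEP_UNFOLDING_TAU_ACTION} while preserving the invariant ``at most one unfolded occurrence of $X$'' on every target context, so that the single-occurrence induction hypothesis remains applicable. This is also precisely why the iterated single-step correspondence must be used for $Rp_{13}$ rather than the multi-step relation $\Rrightarrow$: a general multi-step unfolding need not factor through single-occurrence contexts, and the $F$-preservation for $\Rrightarrow$ is exactly what this lemma is meant to bootstrap, so it is not yet available. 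Finally one must check that the induction measure --- the height of the $F$-proof tree, with the context quantified away --- strictly decreases on each transported target, which it does because in every rule the relevant $F$-premises sit properly above the root.
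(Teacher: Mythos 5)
Your proposal is correct and follows essentially the same route as the paper's own proof: the same induction on the depth of the inconsistency proof tree with the context quantified in the hypothesis, the case $C_X\equiv X$ discharged by Lemma~\ref{L:F_NORMAL}(8), and the conjunction case handled by using Lemma~\ref{L:ONE_STEP_UNFOLDING_TAU_ACTION} to transport single transitions for rules $Rp_{10}$--$Rp_{12}$ and by iterating it along $\tau$-paths for $Rp_{13}$, exactly as the paper does. Your closing remarks on preserving the single-unfolded-occurrence invariant and on why the multi-step relation $\Rrightarrow$ cannot be invoked (its $F$-preservation is what this lemma bootstraps) correctly identify the same delicate points the paper's argument relies on.
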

\begin{proof}
We give proof only for the implication from left to right, the converse implication may be proved similarly and omitted.
  Assume  $C_{X}\{\langle Y|E \rangle/ X\} \in F $.
  It proceeds by induction on the depth of the inference  $Strip(\mathcal{P}_{\text{CLL}_R} ,M_{\text{CLL}_R} ) \vdash C_{X}\{\langle Y|E \rangle/ X\}F$, which is a routine case analysis on the form of $C_X$.
  We give the proof only for the case  $C_X \equiv B_X \wedge D_X$, the other cases are left to the reader.
 In this situation, the last rule applied in the inference is one of the following.\\

\noindent Case 1 $\frac{B_X \{\langle Y|E \rangle/X\}F}{B_X\{\langle Y|E \rangle/X\} \wedge D_X\{\langle Y|E \rangle/X\}F}$.

By IH, we get $B_X \{\langle t_Y|E \rangle/X\} \in F $.
Hence $C_X \{\langle t_Y|E \rangle/X\} \in  F $.\\

\noindent  Case 2 $\frac{B_X \{\langle Y|E \rangle/X\} \stackrel{a}{\longrightarrow} r,D_X \{\langle Y|E \rangle/X\} \not\stackrel{a}{\longrightarrow}, C_X \{\langle Y|E \rangle/X\} \not\stackrel{\tau}{\longrightarrow}}{B_X\{\langle Y|E \rangle/X\}\wedge D_X\{\langle Y|E \rangle/X\}F}$.


 By Lemma~\ref{L:ONE_STEP_UNFOLDING_TAU_ACTION}  and \ref{L:ONE_STEP_UNFOLDING_VISIBLE_ACTION}, we have $B_X \{\langle t_Y|E \rangle/X\} \stackrel{a}{\longrightarrow}  $, $D_X \{\langle t_Y|E \rangle/X\} \not\stackrel{a}{\longrightarrow} $ and $C_X \{\langle t_Y|E \rangle/X\} \not\stackrel{\tau}{\longrightarrow} $. So, $C_X\{\langle t_Y|E \rangle /X\} \in F $.\\

\noindent Case 3 $\frac{C_X \{\langle Y|E \rangle/X\} \stackrel{\alpha}{\longrightarrow}s, \{rF: C_X \{\langle Y|E \rangle/X\} \stackrel{\alpha}{\longrightarrow} r \}}{C_X\{\langle Y|E \rangle/X\}F}$.

Then $C_X \{\langle t_Y|E \rangle/X\} \stackrel{\alpha}{\longrightarrow} $ by Lemma~\ref{L:ONE_STEP_UNFOLDING_TAU_ACTION}(1) and \ref{L:ONE_STEP_UNFOLDING_VISIBLE_ACTION}(1).
Assume  $C_X \{\langle t_Y|E \rangle/X\} \stackrel{\alpha}{\longrightarrow} q$.
Thus it follows from Lemma~\ref{L:ONE_STEP_UNFOLDING_TAU_ACTION}(2) and \ref{L:ONE_STEP_UNFOLDING_VISIBLE_ACTION}(2) that there exists $C_X'$ with at most one occurrence of the unfolded variable $X$ such that
\[C_X \{\langle Y|E \rangle/X\} \stackrel{\alpha}{\longrightarrow} C_X'\{\langle Y|E \rangle/X\}\;\text{and}\; q \equiv C_X'\{\langle t_Y|E \rangle/X\}.\]
Then, by IH, $q \equiv C_X'\{\langle t_Y|E \rangle/X\} \in F $.
Hence $C_X\{\langle t_Y|E \rangle/X\} \in F $ by Theorem~\ref{L:LLTS}.\\

\noindent Case 4 $\frac{\{rF: C_X \{\langle Y|E \rangle/X\} \stackrel{\epsilon}{\Longrightarrow}| r \}}{C_X\{\langle Y|E \rangle/X\}F}$.

Assume $C_X \{\langle t_Y|E \rangle/X\} \stackrel{\epsilon}{\Longrightarrow}  |t$.
Repeated application of Lemma~\ref{L:ONE_STEP_UNFOLDING_TAU_ACTION}(2) enables us to get $C_X \{\langle Y|E \rangle/X\} \stackrel{\epsilon}{\Longrightarrow} | r$, $r \equiv C_X'\{\langle Y|E \rangle/X\}$ and $t \equiv C_X'\{\langle t_Y|E \rangle/X\}$ for some $r$ and context $C_X'$ with at most one occurrence of the unfolded free variable $X$.
Since $r  \equiv C_X'\{\langle Y|E \rangle/X\} \in F  $, we have $t \in F  $ by IH.
Then $C_X \{\langle t_Y|E \rangle/X\} \in F  $ by Theorem~\ref{L:LLTS}.
%
\end{proof}

Next we can show that the relation $\Rrightarrow$ preserves and respects the inconsistency.

\begin{lemma}\label{L:MULTI_STEP_UNFOLDING_FAILURE}
 For any $p,q$, if $p  \Rrightarrow q$, then $p \in F  $ iff $q \in F  $.
\end{lemma}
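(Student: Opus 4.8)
The plan is to reduce the multi-step statement to the single-step result already in hand, namely Lemma~\ref{L:ONE_STEP_UNFOLDING_FAILURE}, by a straightforward induction. Since $p \Rrightarrow q$ means $p \Rrightarrow_n q$ for some $n < \omega$, I would argue by induction on $n$. The base case $n=0$ is immediate because $p \equiv q$. For the inductive step $n = k+1$, unfolding the definition of $\Rrightarrow_{k+1}$ yields a term $r$ with $p \Rrightarrow_k r \Rrightarrow_1 q$; the induction hypothesis gives $p \in F$ iff $r \in F$, so the whole burden is shifted onto the single-step transition $r \Rrightarrow_1 q$.

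For that single step I would appeal to Lemma~\ref{L:ONE_STEP_UNFOLDING_SYNTAX}, which provides a term $s$ and a variable $X$ such that $X$ is unfolded in $s$, occurs there exactly once, and $r \equiv s\{\langle Y|E \rangle/X\}$, $q \equiv s\{\langle t_Y|E \rangle/X\}$ for some $Y,E$ with $Y = t_Y \in E$. The key observation is that, because $r$ (equivalently $q$) is a process and hence closed, $s$ must in fact be a genuine one-hole context $C_X$ with $FV(C_X) \subseteq \{X\}$ and $\langle Y|E \rangle$ must itself be closed: any other free variable of $s$ would survive the substitution of the single $X$ and leave $r$ open, while an open $\langle Y|E \rangle$ would do likewise. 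Thus $C_X$ contains exactly one occurrence of the unfolded variable $X$, which meets the hypothesis of Lemma~\ref{L:ONE_STEP_UNFOLDING_FAILURE}.

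Applying Lemma~\ref{L:ONE_STEP_UNFOLDING_FAILURE} to this $C_X$ then yields $r \equiv C_X\{\langle Y|E \rangle/X\} \in F$ iff $q \equiv C_X\{\langle t_Y|E \rangle/X\} \in F$, and combining this equivalence with the induction hypothesis completes the step and hence the induction.

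Because all the genuine difficulty---the interplay between unfolding and the rules $Rp_{12}$, $Rp_{13}$, $Rp_{15}$ capturing backward propagation of inconsistency and divergence---has already been absorbed into Lemma~\ref{L:ONE_STEP_UNFOLDING_FAILURE} and the transition-matching Lemma~\ref{L:ONE_STEP_UNFOLDING_TAU_ACTION}, this proof is essentially bookkeeping. The only point that requires a moment's care, and the one I would flag as the main potential pitfall, is the closedness argument ensuring that the syntactic decomposition supplied by Lemma~\ref{L:ONE_STEP_UNFOLDING_SYNTAX} really yields a legitimate single-variable context to which Lemma~\ref{L:ONE_STEP_UNFOLDING_FAILURE} is applicable.
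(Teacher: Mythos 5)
Your proof is correct and follows exactly the paper's own route: induction on $n$ with the base case trivial, reducing the inductive step to Lemma~\ref{L:ONE_STEP_UNFOLDING_SYNTAX} and Lemma~\ref{L:ONE_STEP_UNFOLDING_FAILURE} (the paper states this in one line and leaves the details, including your closedness check, as the ``straightforward'' part). Your observation that the unfolded occurrence of $X$ prevents variable capture, so $\langle Y|E\rangle$ and the context are genuinely closed, is precisely the bookkeeping the paper omits.
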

\begin{proof}
Suppose $p \Rrightarrow  q$.
Hence $p \Rrightarrow_n  q$ for some $n$.
Then, using Lemma~\ref{L:ONE_STEP_UNFOLDING_SYNTAX}  and \ref{L:ONE_STEP_UNFOLDING_FAILURE}, the proof is straightforward by induction on $n$.
\end{proof}

We now have the assertion below of the equivalence of $p$ and $q$ modulo $=_{RS}$ whenever $p \Rrightarrow q$.

\begin{lemma}\label{L:MULTI_STEP_UNFOLDING_IMPLIES_RS}
  If $p_1 \Rrightarrow p_2$ then $p_1 =_{RS} p_2$, in particular, $p_1 \approx_{RS} p_2$ whenever $p_1 \not\stackrel{\tau}{\longrightarrow}$.
\end{lemma}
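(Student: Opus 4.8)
The plan is to prove the ``in particular'' claim first and then bootstrap the general statement from it. For the stable case I would exhibit a single symmetric relation and verify that it is a stable ready simulation. Concretely, set
\[
\mathcal{R} \triangleq \{(p,q): p \Rrightarrow q \text{ and } p \not\stackrel{\tau}{\longrightarrow}\} \cup \{(q,p): p \Rrightarrow q \text{ and } p \not\stackrel{\tau}{\longrightarrow}\}.
\]
The first observation is that $\Rrightarrow$ preserves stability in both directions: if $p \Rrightarrow q$ and $p \not\stackrel{\tau}{\longrightarrow}$, then any $\tau$-move of $q$ would, by Lemma~\ref{L:MULTI_STEP_UNFOLDING_ACTION}(2), force one of $p$, so $q \not\stackrel{\tau}{\longrightarrow}$; the converse uses Lemma~\ref{L:MULTI_STEP_UNFOLDING_ACTION}(1). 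Hence every pair of $\mathcal{R}$ consists of two stable processes, which gives (RS1). Condition (RS2) is immediate from Lemma~\ref{L:MULTI_STEP_UNFOLDING_FAILURE}, which yields $p \in F$ iff $q \in F$ for related processes, so both orientations are fine. Condition (RS4) follows from Lemma~\ref{L:MULTI_STEP_UNFOLDING_ACTION}(1),(2): for $a \in Act$ we have $p \stackrel{a}{\longrightarrow}$ iff $q \stackrel{a}{\longrightarrow}$, and both states are stable, so $\mathcal{I}(p) = \mathcal{I}(q)$.

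The substantive clause is (RS3), and here lies the main obstacle: the transfer lemmas Lemma~\ref{L:MULTI_STEP_UNFOLDING_ACTION} and Lemma~\ref{L:MULTI_STEP_UNFOLDING_ACTION_II} speak only about the undecorated relations $\stackrel{\alpha}{\longrightarrow}$ and $\stackrel{\epsilon}{\Longrightarrow}|$, whereas (RS3) demands the consistency-respecting $\stackrel{a}{\Longrightarrow}_F|$, in which \emph{every} state of the witnessing path (endpoints included) must lie outside $F$. I would bridge this by noting that the image path produced by Lemma~\ref{L:MULTI_STEP_UNFOLDING_ACTION_II} is built stepwise via Lemma~\ref{L:MULTI_STEP_UNFOLDING_ACTION}, so that each intermediate source state $r_i$ is matched by an image $r_i'$ with $r_i \Rrightarrow r_i'$; Lemma~\ref{L:MULTI_STEP_UNFOLDING_FAILURE} then transports $r_i \notin F$ to $r_i' \notin F$ state by state. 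Thus an $F$-free source path maps to an $F$-free image path, i.e., the forward/backward conditions hold for $\stackrel{\epsilon}{\Longrightarrow}_F|$ as well (and, together with Lemma~\ref{L:MULTI_STEP_UNFOLDING_ACTION}(1) for the single visible step, for $\stackrel{a}{\Longrightarrow}_F|$). Concretely, given $(p,q)\in\mathcal{R}$ and $p \stackrel{a}{\Longrightarrow}_F| p'$: since $p$ is stable this is $p \stackrel{a}{\longrightarrow}_F s \stackrel{\epsilon}{\Longrightarrow}_F| p'$; Lemma~\ref{L:MULTI_STEP_UNFOLDING_ACTION}(1) gives $q \stackrel{a}{\longrightarrow} s'$ with $s \Rrightarrow s'$ and, by Lemma~\ref{L:MULTI_STEP_UNFOLDING_FAILURE}, $q, s' \notin F$, while the $F$-respecting tail transfer gives $s' \stackrel{\epsilon}{\Longrightarrow}_F| q'$ with $p' \Rrightarrow q'$. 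Hence $q \stackrel{a}{\Longrightarrow}_F| q'$ and, since $p'$ is stable, $(p',q') \in \mathcal{R}$. The reversed orientation is symmetric, using Lemma~\ref{L:MULTI_STEP_UNFOLDING_ACTION}(2). Therefore $\mathcal{R}$ is a stable ready simulation relation; being symmetric, it witnesses $p \approx_{RS} q$ for every related pair, which proves the ``in particular'' claim.

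Finally I would lift this to $p_1 =_{RS} p_2$ using Definition~\ref{D:RS}. For $p_1 \sqsubseteq_{RS} p_2$, take any $p_1'$ with $p_1 \stackrel{\epsilon}{\Longrightarrow}_F| p_1'$; the $F$-respecting forward transfer above gives $p_2 \stackrel{\epsilon}{\Longrightarrow}_F| p_2'$ with $p_1' \Rrightarrow p_2'$. Since $p_1'$ is stable, $(p_1',p_2') \in \mathcal{R}$, so $p_1' \underset{\thicksim}{\sqsubset}_{RS} p_2'$, which is exactly what the $\sqsubseteq_{RS}$ clause requires. The direction $p_2 \sqsubseteq_{RS} p_1$ is obtained symmetrically from the backward transfer, using that $\mathcal{R}$ is symmetric to read off $p_2' \underset{\thicksim}{\sqsubset}_{RS} p_1'$. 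Combining the two inclusions yields $p_1 =_{RS} p_2$. The only delicate point throughout is the consistency bookkeeping in the second paragraph; everything else reads off directly from the already-established transfer and failure lemmas.
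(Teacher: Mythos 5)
Your proof is correct and follows essentially the paper's own route: the same candidate relation $\{(p,q): p \Rrightarrow q \text{ and } p \not\stackrel{\tau}{\longrightarrow}\}$ (which the paper does not symmetrize, proving only $p_1 \underset{\thicksim}{\sqsubset}_{RS} p_2$ for stable $p_1$ and declaring the remaining directions straightforward, whereas you spell out the symmetric closure and the lifting to $=_{RS}$), the same verification of (RS1), (RS2), (RS4) from Lemmas~\ref{L:MULTI_STEP_UNFOLDING_ACTION} and \ref{L:MULTI_STEP_UNFOLDING_FAILURE}, and the same decomposition of (RS3) into a visible step followed by a $\tau$-tail. The one point where you genuinely diverge is the consistency bookkeeping you flag as the main obstacle: the paper closes this gap with Lemma~\ref{L:FAILURE_TAU_I} --- since inconsistency propagates forward along $\tau$-transitions, the consistency of the endpoint $q'$ (obtained from $p' \Rrightarrow q'$ via Lemma~\ref{L:MULTI_STEP_UNFOLDING_FAILURE}) already forces every state on the undecorated run $q'' \stackrel{\epsilon}{\Longrightarrow}|q'$ delivered by Lemma~\ref{L:MULTI_STEP_UNFOLDING_ACTION_II} to be consistent --- whereas you re-walk the matched paths and transport consistency pointwise, in effect proving an $F$-decorated strengthening of Lemma~\ref{L:MULTI_STEP_UNFOLDING_ACTION_II}. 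Both arguments are sound; the paper's is a one-liner given the forward-propagation property (which, as its own remark notes, is specific to this TSS rather than a general LLTS fact), while yours relies only on the transfer and failure lemmas but must open up the stepwise construction behind Lemma~\ref{L:MULTI_STEP_UNFOLDING_ACTION_II} instead of using it as a black box.
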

\begin{proof}
  We only prove $p_1 \underset{\thicksim}{\sqsubset}_{RS} p_2$ whenever $p_1 \not\stackrel{\tau}{\longrightarrow}$, other proofs are straightforward and omitted.
  Set
  \[
  {\mathcal R} = \{(p,q): p \Rrightarrow q\;\text{and}\;p \not\stackrel{\tau}{\longrightarrow}\}.
  \]
  It suffices to prove that $\mathcal R$ is a stable ready simulation relation.
  Suppose $(p,q) \in {\mathcal R}$.
  Then, by Lemma~\ref{L:MULTI_STEP_UNFOLDING_ACTION} and \ref{L:MULTI_STEP_UNFOLDING_FAILURE}, it is evident that such pair satisfies (RS1), (RS2) and (RS4).
  For (RS3),
%
   suppose $p \stackrel{a}{\Longrightarrow}_F| p'$. Then $p \stackrel{a}{\longrightarrow}_F p'' \stackrel{\epsilon}{\Longrightarrow}_F| p'$ for some $p''$.
  By Lemma~\ref{L:MULTI_STEP_UNFOLDING_ACTION} and \ref{L:MULTI_STEP_UNFOLDING_FAILURE}, there exists $q''$ such that $q \stackrel{a}{\longrightarrow}_F q''$ and $p'' \Rrightarrow q''$.
  Further, by Lemma~\ref{L:FAILURE_TAU_I}, \ref{L:MULTI_STEP_UNFOLDING_ACTION_II} and \ref{L:MULTI_STEP_UNFOLDING_FAILURE}, $p' \Rrightarrow q'$ and $q'' \stackrel{\epsilon}{\Longrightarrow}_F|q'$ for some $q'$.
  Moreover, $(p',q') \in {\mathcal R}$, as desired.
%
%
\end{proof}

In the following, we shall generalize Lemma~\ref{L:ONE_ACTION_TAU} to the situation involving a sequence of $\tau$-labelled transitions.
Given a process $C_{\widetilde{X} }\{\widetilde{p}/\widetilde{X} \}$, by Lemma~\ref{L:ONE_ACTION_TAU}, any $\tau$-transition starting from $C_{\widetilde{X} }\{\widetilde{p}/\widetilde{X} \}$ may be caused by $C_{\widetilde{X}}$ itself or some $p_i$.
Thus, for a sequence of $\tau$-transitions, these two situations may occur alternately.
Based on Lemma~\ref{L:ONE_ACTION_TAU}, we can capture this as follows.

\begin{lemma}\label{L:MULTI_TAU_GF_STABLE}
  For any $C_{\widetilde{X} }$ and $\widetilde{p}$,
  if $C_{\widetilde{X} }\{\widetilde{p}/\widetilde{X} \} \stackrel{\epsilon}{\Longrightarrow} r$ then
  there exist $C_{\widetilde{X},\widetilde{Y}}'$ and $i_Y\leq |\widetilde{X}|,p_Y'(Y \in \widetilde{Y})$ such that

\noindent \textbf{(MS-$\tau$-1)} $\widetilde{X} \cap \widetilde{Y} = \emptyset$ and $Y$ is 1-active in $C_{\widetilde{X},\widetilde{Y} }'$ for each $Y \in \widetilde{Y}$;

\noindent \textbf{(MS-$\tau$-2)}   $p_{i_Y}\stackrel{\tau}{\Longrightarrow}  p_{Y}'$ for each $Y \in \widetilde{Y}$ and $r \equiv C_{\widetilde{X},\widetilde{Y} }'\{\widetilde{p}/\widetilde{X},\widetilde{p_{Y}'}/\widetilde{Y} \}$;

\noindent \textbf{(MS-$\tau$-3)}  for any $\widetilde{q} $ and $\widetilde{q_Y'}$ with $|\widetilde{q}| = |\widetilde{X}|$ and $Y \in \widetilde{Y}$,

  \textbf{(MS-$\tau$-3-i)} if $q_{i_Y} \stackrel{\epsilon}{\Longrightarrow} q_{Y}'$ for each $Y \in \widetilde{Y}$ then
    $C_{\widetilde{X} }\{\widetilde{q}/\widetilde{X} \} \stackrel{\epsilon}{\Longrightarrow}
      \Rrightarrow  C_{\widetilde{X},\widetilde{Y} }'\{\widetilde{q}/\widetilde{X},\widetilde{q_{Y}'}/\widetilde{Y} \}$;

    \textbf{(MS-$\tau$-3-ii)}  if $q_{i_Y} \stackrel{\tau}{\Longrightarrow}q_{Y}'$  for each $Y \in \widetilde{Y}$ then $C_{\widetilde{X} }\{\widetilde{q}/\widetilde{X} \} \stackrel{\epsilon}{\Longrightarrow}  C_{\widetilde{X},\widetilde{Y} }'\{\widetilde{q}/\widetilde{X},\widetilde{q_{Y}'}/\widetilde{Y} \}$;

\noindent \textbf{(MS-$\tau$-4)}  if $C_{\widetilde{X} }$ is stable then so is $C_{\widetilde{X},\widetilde{Y} }'$   and
      $C_{\widetilde{X}}\{\widetilde{q}/\widetilde{X} \}  \Rrightarrow C_{\widetilde{X},\widetilde{Y} }'\{\widetilde{q}/\widetilde{X},\widetilde{q_{i_Y}}/\widetilde{Y}\}$ for any $\widetilde{q} $;

\noindent \textbf{(MS-$\tau$-5)}  for each $ X \in \widetilde{X}$, if $X$ is strongly guarded in $C_{\widetilde{X}}$ then so it is in $C_{\widetilde{X},\widetilde{Y} }'$ and $X \not\equiv X_{i_Y}$ for each $Y \in \widetilde{Y}$;

\noindent \textbf{(MS-$\tau$-6)}  for each $X\in \widetilde{X}$ (or, $Y\in \widetilde{Y}$), if $X$ ($X_{i_Y}$ respectively) does not occur in the scope of any conjunction in $C_{\widetilde{X}}$ then neither does $X$ ($Y$ respectively) in $C_{\widetilde{X},\widetilde{Y} }'$;

\noindent \textbf{(MS-$\tau$-7)}  if $r$ is stable then so are $C_{\widetilde{X},\widetilde{Y}}'$ and $p_{Y}'$ for each $Y \in \widetilde{Y}$.
\end{lemma}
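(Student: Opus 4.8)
The plan is to induct on the length $m$ of the given $\tau$-sequence $C_{\widetilde{X}}\{\widetilde{p}/\widetilde{X}\} \stackrel{\epsilon}{\Longrightarrow} r$, i.e. on the number of $\stackrel{\tau}{\longrightarrow}$-steps. For $m=0$ we have $r \equiv C_{\widetilde{X}}\{\widetilde{p}/\widetilde{X}\}$, and taking $\widetilde{Y} = \emptyset$ and $C_{\widetilde{X},\widetilde{Y}}' \triangleq C_{\widetilde{X}}$ makes all of (MS-$\tau$-1)--(MS-$\tau$-7) hold trivially. For the inductive step I would split off the last transition, writing $C_{\widetilde{X}}\{\widetilde{p}/\widetilde{X}\} \stackrel{\epsilon}{\Longrightarrow} s \stackrel{\tau}{\longrightarrow} r$ with the prefix of length $m-1$, apply the induction hypothesis to the prefix to obtain a context $C_{\widetilde{X},\widetilde{Y}}'$ with $s \equiv C_{\widetilde{X},\widetilde{Y}}'\{\widetilde{p}/\widetilde{X}, \widetilde{p_Y'}/\widetilde{Y}\}$ satisfying all clauses, and then feed the single transition $C_{\widetilde{X},\widetilde{Y}}'\{\widetilde{p}/\widetilde{X},\widetilde{p_Y'}/\widetilde{Y}\} \stackrel{\tau}{\longrightarrow} r$ into Lemma~\ref{L:ONE_ACTION_TAU}, now viewing $(\widetilde{X},\widetilde{Y})$ as the variable tuple.

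Lemma~\ref{L:ONE_ACTION_TAU} yields two cases. If the last step is a context move (its clause (1)), I keep the same family $\widetilde{Y}$ and merely replace $C_{\widetilde{X},\widetilde{Y}}'$ by the new context; the essential point is (C-$\tau$-3-i), which guarantees that each $Y \in \widetilde{Y}$, being $1$-active hence active, stays active with an unchanged single occurrence, so $1$-activeness (MS-$\tau$-1) is preserved. If the last step is a substitution move (clause (2)), a fresh $1$-active variable $Z$ is produced for the exposed occurrence of some $V \in (\widetilde{X},\widetilde{Y})$; I adjoin $Z$ to the family of new variables, setting $i_Z$ to be the original $\widetilde{X}$-index that $V$ tracks (if $V = X_j$ then $i_Z = j$; if $V = Y_0 \in \widetilde{Y}$ then $i_Z = i_{Y_0}$ and $Y_0$ is removed), and composing the witnessing transitions so that $p_{i_Z} \stackrel{\tau}{\Longrightarrow} p_Z'$ still holds. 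Here (P-$\tau$-1) together with Lemma~\ref{L:ONE_STEP_UNFOLDING_VARIABLE}(1), and the fact (P-$\tau$-3) that the move only renames the single exposed occurrence, are what keep the surviving members of $\widetilde{Y}$ $1$-active.

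The remaining structural clauses are bookkeeping along the same case split. (MS-$\tau$-5) uses that a strongly guarded variable can never be exposed, since after the unfolding of (P-$\tau$-1) it is still strongly guarded by Lemma~\ref{L:ONE_STEP_UNFOLDING_VARIABLE}(3) and so has no unguarded occurrence to move, together with (C-$\tau$-3-iii); (MS-$\tau$-6) is similar, invoking Lemma~\ref{L:ONE_STEP_UNFOLDING_VARIABLE}(5) and (C-$\tau$-3-iv); (MS-$\tau$-4) observes that a stable $C_{\widetilde{X}}$ forbids context moves entirely, since by (C-$\tau$-2) such a move would give $C_{\widetilde{X}}\{0/\widetilde{X}\}\stackrel{\tau}{\longrightarrow}$, contradicting stability, while substitution moves preserve context-stability; and (MS-$\tau$-7) does not even need the induction, for from the final form $r \equiv C_{\widetilde{X},\widetilde{Y}}'\{\widetilde{p}/\widetilde{X},\widetilde{p_Y'}/\widetilde{Y}\}$ it follows by Lemma~\ref{L:STABLE_CONTEXT_I} that $C_{\widetilde{X},\widetilde{Y}}'$ is stable, and by Lemma~\ref{L:ONE_ACTION_TAU_GF} applied at each $1$-active $Y$ that each $p_Y'$ is stable.

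The main obstacle is the parametric reproduction clause (MS-$\tau$-3), where the whole sequence must be rebuilt for an arbitrary $\widetilde{q}$. For the appended last step I would use (C-$\tau$-2) in the context-move case and, in the substitution-move case, split the hypothesis $q_{i_Z} \stackrel{\tau}{\Longrightarrow} q_Z'$ as $q_{i_Z} \stackrel{\tau}{\longrightarrow} q_1 \stackrel{\epsilon}{\Longrightarrow} q_Z'$ (legitimate because $\stackrel{\tau}{\Longrightarrow}$ carries at least one real step), feed the single real step to (P-$\tau$-4), and then drive the tail $q_1 \stackrel{\epsilon}{\Longrightarrow} q_Z'$ through the $1$-active slot $Z$ by iterating Lemma~\ref{L:ONE_ACTION_TAU_GF}. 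The delicate point is the gap between (MS-$\tau$-3-i) and (MS-$\tau$-3-ii): when a substitution makes no move, the exposing step degenerates from a genuine $\stackrel{\tau}{\longrightarrow}$ into an unfolding $\Rrightarrow$ (via (P-$\tau$-1) and (P-$\tau$-3)), which is exactly why the weaker hypothesis $q_{i_Y} \stackrel{\epsilon}{\Longrightarrow} q_Y'$ forces the extra $\Rrightarrow$ in the conclusion. Normalising the resulting interleaving of $\stackrel{\epsilon}{\Longrightarrow}$- and $\Rrightarrow$-segments into the required shape $\stackrel{\epsilon}{\Longrightarrow}\Rrightarrow$ (respectively $\stackrel{\epsilon}{\Longrightarrow}$) is then achieved by repeatedly commuting $\tau$-steps past unfoldings with the backward condition of Lemma~\ref{L:MULTI_STEP_UNFOLDING_ACTION}(2).
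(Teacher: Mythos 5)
Your proposal is correct and follows essentially the same route as the paper's own proof: induction on the number of $\tau$-steps, splitting off the last transition, applying the induction hypothesis to the prefix, and dispatching the final step through Lemma~\ref{L:ONE_ACTION_TAU} with the identical case split (context move versus substitution move, the latter subdivided according to whether the exposed variable lies in $\widetilde{X}$ or in $\widetilde{Y}$), backed by the same supporting lemmas (Lemmas~\ref{L:ONE_STEP_UNFOLDING_VARIABLE}, \ref{L:ONE_ACTION_TAU_GF}, \ref{L:STABLE_CONTEXT_I}, \ref{L:MULTI_STEP_UNFOLDING_ACTION} and \ref{L:MULTI_STEP_UNFOLDING_ACTION_II}). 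Your handling of the delicate points---the $\Rrightarrow$-commutation giving the extra unfolding in (MS-$\tau$-3-i), the vacuity/preservation dichotomy for (MS-$\tau$-4), and the non-inductive derivation of (MS-$\tau$-7)---matches the paper's argument, so this is the same proof in compressed form.
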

\begin{proof}
  Suppose $C_{\widetilde{X}}\{\widetilde{p}/\widetilde{X}\} (\stackrel{\tau}{\longrightarrow} )^n r(n \geq 0)$.
  We proceed by induction on $n$.
  For the inductive base $n=0$, the conclusion holds trivially by taking $C_{\widetilde{X},\widetilde{Y}}' \triangleq C_{\widetilde{X}}$ with $\widetilde{Y} = \emptyset$.

  For the inductive step, assume 
    $C_{\widetilde{X}}\{\widetilde{p}/\widetilde{X}\}  (\stackrel{\tau}{\longrightarrow} )^{k} s \stackrel{\tau}{\longrightarrow}  r$ for some $s$.
    For the transition $C_{\widetilde{X}}\{\widetilde{p}/\widetilde{X}\}  (\stackrel{\tau}{\longrightarrow} )^{k} s$, by IH, there exist $C_{\widetilde{X},\widetilde{Y}}'$ and $i_Y \leq |\widetilde{X}|,p_Y'(Y \in \widetilde{Y})$ that realize (MS-$\tau$-$l$) ($1 \leq l \leq 7$).
    In particular, we have $s \equiv C_{\widetilde{X},\widetilde{Y}}'\{\widetilde{p}/\widetilde{X}, \widetilde{p_{Y}'}/\widetilde{Y}\}$ due to (MS-$\tau$-2).
    Then, for the transition $s \stackrel{\tau}{\longrightarrow}r$, either the clause (1) or (2) in Lemma~\ref{L:ONE_ACTION_TAU} holds. The argument splits into two cases.\\

\noindent Case 1
    For the transition $s \stackrel{\tau}{\longrightarrow}r$, the clause (1) in Lemma~\ref{L:ONE_ACTION_TAU} holds.

    That is, for the transition $C_{\widetilde{X},\widetilde{Y}}'\{\widetilde{p}/\widetilde{X}, \widetilde{p_{Y}'}/\widetilde{Y}\} \equiv s \stackrel{\tau}{\longrightarrow}r$, there exists $C_{\widetilde{X},\widetilde{Y}}''$ satisfying (C-$\tau$-1,2,3) in Lemma~\ref{L:ONE_ACTION_TAU}.
    We shall check that $C_{\widetilde{X},\widetilde{Y}}''$, $\widetilde{i_Y}$ and $\widetilde{p_Y'}$ realize (MS-$\tau$-1) - (MS-$\tau$-7) w.r.t $C_{\widetilde{X}}\{\widetilde{p}/\widetilde{X}\} (\stackrel{\tau}{\longrightarrow} )^{k+1} r$.

    Since $C_{\widetilde{X},\widetilde{Y}}'$ satisfies (MS-$\tau$-1,5,6), it follows that $C_{\widetilde{X},\widetilde{Y}}''$ and $\widetilde{i_Y}$ realize (MS-$\tau$-1), (MS-$\tau$-5) and (MS-$\tau$-6) due to (C-$\tau$-3-i), (C-$\tau$-3-iii) and (C-$\tau$-3-iv) respectively.
    Moreover, as $C_{\widetilde{X},\widetilde{Y}}''$ satisfies (C-$\tau$-1) it follows immediately that (MS-$\tau$-2) holds.
    Since $C_{\widetilde{X},\widetilde{Y}}''$ satisfies (C-$\tau$-2), by Lemma~\ref{L:STABLE_CONTEXT_I}, $C_{\widetilde{X},\widetilde{Y}}'$ is not stable.
    Then neither is $C_{\widetilde{X}}$ because $C_{\widetilde{X},\widetilde{Y}}'$ satisfies (MS-$\tau$-4).
    Thus, $C_{\widetilde{X},\widetilde{Y}}''$ satisfies (MS-$\tau$-4) trivially.

    Next we verify (MS-$\tau$-3). Let $\widetilde{q}$ be any processes with $|\widetilde{q}|=|\widetilde{X}|$ and $q_{i_Y} \stackrel{\epsilon}{\Longrightarrow}q_{Y}'$  for each  $Y \in \widetilde{Y}$.

    \textbf{(MS-$\tau$-3-i)} Since $C_{\widetilde{X},\widetilde{Y}}'$ satisfies (MS-$\tau$-3-i), we have
    \[C_{\widetilde{X} }\{\widetilde{q}/\widetilde{X} \} \stackrel{\epsilon}{\Longrightarrow}
    t  \Rrightarrow  C_{\widetilde{X},\widetilde{Y} }'\{\widetilde{q}/\widetilde{X},\widetilde{q_{Y}'}/\widetilde{Y} \}\;\text{for some}\; t.\]
    Moreover, we have $C_{\widetilde{X},\widetilde{Y} }'\{\widetilde{q}/\widetilde{X},\widetilde{q_{Y}'}/\widetilde{Y} \} \stackrel{\tau}{\longrightarrow} C_{\widetilde{X},\widetilde{Y} }''\{\widetilde{q}/\widetilde{X},\widetilde{q_{Y}'}/\widetilde{Y} \}$ due to (C-$\tau$-2).
    Then it follows from Lemma~\ref{L:MULTI_STEP_UNFOLDING_ACTION} that
    \[t \stackrel{\tau}{\longrightarrow}t' \Rrightarrow C_{\widetilde{X},\widetilde{Y} }''\{\widetilde{q}/\widetilde{X},\widetilde{q_{Y}'}/\widetilde{Y} \}\;\text{for some}\;t'.\]
    Therefore, $C_{\widetilde{X} }\{\widetilde{q}/\widetilde{X} \} \stackrel{\epsilon}{\Longrightarrow}
    t  \stackrel{\tau}{\longrightarrow}t' \Rrightarrow C_{\widetilde{X},\widetilde{Y} }''\{\widetilde{q}/\widetilde{X},\widetilde{q_{Y}'}/\widetilde{Y} \}$, as desired.

     \textbf{(MS-$\tau$-3-ii)} It is straightforward as  $C_{\widetilde{X},\widetilde{Y}}'$  satisfies (MS-$\tau$-3-ii) and $C_{\widetilde{X},\widetilde{Y}}''$ satisfies (C-$\tau$-2).

     \textbf{(MS-$\tau$-7)} Suppose $r\equiv C_{\widetilde{X},\widetilde{Y}}''\{\widetilde{p}/\widetilde{X},\widetilde{p_Y'}/\widetilde{Y}\}\not\stackrel{\tau}{\longrightarrow}$.
     Then, since $C_{\widetilde{X},\widetilde{Y}}''$ satisfies (MS-$\tau$-1), by Lemmas~\ref{L:ONE_ACTION_TAU_GF} and \ref{L:STABLE_CONTEXT_I}, it is easy to see that both $C_{\widetilde{X},\widetilde{Y}}''$ and $\widetilde{p_Y'}$ 
     are stable.\\


\noindent Case 2
    For the transition $s \stackrel{\tau}{\longrightarrow}r$, the clause (2) in Lemma~\ref{L:ONE_ACTION_TAU} holds.

    Then there exist $i_0 \leq |\widetilde{X}|+|\widetilde{Y}|$, $C_{\widetilde{X},\widetilde{Y} }''$($\equiv C_{X_1,\dots,X_{|\widetilde{X}|},Y_{|\widetilde{X}|+1},\dots,Y_{|\widetilde{X}|+|\widetilde{Y}|}}''$) and $C_{\widetilde{X},\widetilde{Y},Z}'''$($\equiv C_{X_1,\dots,X_{|\widetilde{X}|},Y_{|\widetilde{X}|+1},\dots,Y_{|\widetilde{X}|+|\widetilde{Y}|},Z}'''$) with $Z \notin \widetilde{X}\cup \widetilde{Y}$ satisfying (P-$\tau$-1) - (P-$\tau$-4).
   In particular, by (P-$\tau$-3),
   \[C_{\widetilde{X},\widetilde{Y}}'' \equiv
   \left\{
     \begin{array}{ll}
       C_{\widetilde{X},\widetilde{Y},Z}'''\{X_{i_0}/Z\}, &  \text{if}\;1 \leq i_0 \leq |\widetilde{X}|; \\
       &\\
       C_{\widetilde{X},\widetilde{Y},Z}'''\{Y_{i_0}/Z\}, &  \text{if}\;|\widetilde{X}|+1 \leq i_0 \leq |\widetilde{X}|+|\widetilde{Y}|.
     \end{array}
   \right.\]

In case $|\widetilde{X}|+1 \leq i_0 \leq |\widetilde{X}|+|\widetilde{Y}|$, by (P-$\tau$-2), there exists $p'$ such that
\[p_{Y_{i_0}}' \stackrel{\tau}{\longrightarrow}p'\;\text{and}\;r \equiv C_{\widetilde{X},\widetilde{Y},Z}'''\{\widetilde{p}/\widetilde{X},\widetilde{p_Y'}/\widetilde{Y},p'/Z\}.\]
    Moreover, since $Y_{i_0}$ is 1-active in $C_{\widetilde{X},\widetilde{Y}}'$, by (P-$\tau$-1), we have $C_{\widetilde{X},\widetilde{Y}}' \equiv C_{\widetilde{X},\widetilde{Y}}''$.
    Further, since $Z$ is 1-active in $C_{\widetilde{X},\widetilde{Y},Z}'''$ and $C_{\widetilde{X},\widetilde{Y}}'' \equiv C_{\widetilde{X},\widetilde{Y},Z}'''\{Y_{i_0}/Z\}$, it is easy to see that $Y_{i_0}$ does not occur in $C_{\widetilde{X},\widetilde{Y},Z}'''$.
    Hence
\[r             \equiv  C_{\widetilde{X},\widetilde{Y}}'' \{\widetilde{p}/\widetilde{X},\widetilde{p_Y'}[p'/p_{Y_{i_0}}']/\widetilde{Y}\}
            \equiv  C_{\widetilde{X},\widetilde{Y}}' \{\widetilde{p}/\widetilde{X},\widetilde{p_Y'}[p'/p_{Y_{i_0}}']/\widetilde{Y}\}.\]
Then  it is not difficult to check that $C_{\widetilde{X},\widetilde{Y}}'$, $\widetilde{p_Y'}\,[p'/p_{Y_{i_0}}']$ and $\widetilde{i_Y}$ 
realize (MS-$\tau$-$l$)($1 \leq l \leq 7$) w.r.t the transition $C_{\widetilde{X}}\{\widetilde{p}/\widetilde{X}\} (\stackrel{\tau}{\longrightarrow} )^{k+1} r$, as desired.

    In case $1 \leq i_0 \leq |\widetilde{X}|$, by (P-$\tau$-2), there exists $p''$ such that  $p_{i_0}\stackrel{\tau}{\longrightarrow}p''$  and $r \equiv C_{\widetilde{X},\widetilde{Y},Z}'''\{\widetilde{p}/\widetilde{X},\widetilde{p_Y'}/\widetilde{Y},p''/Z\}$.
    Set
    \[i_Z \triangleq i_0\;\text{and}\; p_Z' \triangleq p''.\]
    In the following, we intend to verify that $C_{\widetilde{X},\widetilde{Y},Z }'''$, $i_U$ ($U \in \widetilde{Y} \cup \{Z\}$) and $|\widetilde{Y}|+1$-tuple $\widetilde{p_U'}$ with $U \in \widetilde{Y} \cup \{Z\}$ realize (MS-$\tau$-1) - (MS-$\tau$-7) w.r.t $C_{\widetilde{X}}\{\widetilde{p}/\widetilde{X}\} (\stackrel{\tau}{\longrightarrow} )^{k+1} r$.

    \textbf{(MS-$\tau$-1)} By (P-$\tau$-1), we have $C_{\widetilde{X},\widetilde{Y} }' \Rrightarrow C_{\widetilde{X},\widetilde{Y} }''$.
    Moreover, since $C_{\widetilde{X},\widetilde{Y}}'$ satisfy (MS-$\tau$-1), by Lemma~\ref{L:ONE_STEP_UNFOLDING_VARIABLE}(1), $Y$ is 1-active in $C_{\widetilde{X},\widetilde{Y} }''$ for each $Y \in \widetilde{Y}$.
    Further, by (P-$\tau$-3), each $Y\in \widetilde{Y}$ and $Z$ are 1-active in $C_{\widetilde{X},\widetilde{Y},Z }'''$.

    \textbf{(MS-$\tau$-2)} It is straightforward.

    \textbf{(MS-$\tau$-3)} Let $\widetilde{q}$ be any processes with $|\widetilde{q}| = |\widetilde{X}|$.

    \textbf{(MS-$\tau$-3-i)} Suppose $q_{i_U} \stackrel{\epsilon}{\Longrightarrow}q_{U}'$ for each $U \in \widetilde{Y} \cup \{Z\}$.
    Since $C_{\widetilde{X},\widetilde{Y}}'$ satisfies (MS-$\tau$-3-i), we have
    \[C_{\widetilde{X} }\{\widetilde{q}/\widetilde{X} \} \stackrel{\epsilon}{\Longrightarrow}
    t  \Rrightarrow  C_{\widetilde{X},\widetilde{Y} }'\{\widetilde{q}/\widetilde{X},\widetilde{q_{Y}'}/\widetilde{Y} \}\; \text{for some}\; t.\]
    It follows from  $q_{i_Z} \stackrel{\epsilon}{\Longrightarrow} q_{Z}'$ that $q_{i_Z} (\stackrel{\tau}{\longrightarrow})^m q_{Z}'$ for some $m \geq 0$.
    We shall distinguish two cases based on  $m$.

    In case $m=0$, we get $q_{i_Z} \equiv q_{Z}'$.
    Since $C_{\widetilde{X},\widetilde{Y},Z}'''$ satisfies (P-$\tau$-1) and (P-$\tau$-3), we have
    \[C_{\widetilde{X},\widetilde{Y}}'\{\widetilde{q}/\widetilde{X},\widetilde{q_{Y}'}/\widetilde{Y} \} \Rrightarrow  C_{\widetilde{X},\widetilde{Y} }''\{\widetilde{q}/\widetilde{X},\widetilde{q_{Y}'}/\widetilde{Y}\} \equiv C_{\widetilde{X},\widetilde{Y},Z }'''\{\widetilde{q}/\widetilde{X},\widetilde{q_{Y}'}/\widetilde{Y}, q_{Z}' /Z\}.
    \]
    Therefore $C_{\widetilde{X}}\{\widetilde{q}/\widetilde{X}\} \stackrel{\epsilon}{\Longrightarrow} t \Rrightarrow C_{\widetilde{X},\widetilde{Y} }''\{\widetilde{q}/\widetilde{X},\widetilde{q_{Y}'}/\widetilde{Y}\} \equiv C_{\widetilde{X},\widetilde{Y},Z }'''\{\widetilde{q}/\widetilde{X},\widetilde{q_{Y}'}/\widetilde{Y}, q_{Z}' /Z\}$, as desired.

    In case $m >0$, i.e., $q_{i_Z} \stackrel{\tau}{\longrightarrow}q'' \stackrel{
    \epsilon}{\Longrightarrow}q_{Z}'$ for some $q''$, by (P-$\tau$-4), we obtain
    \[C_{\widetilde{X},\widetilde{Y} }'\{\widetilde{q}/\widetilde{X},\widetilde{q_{Y}'}/\widetilde{Y} \} \stackrel{\tau}{\longrightarrow} C_{\widetilde{X},\widetilde{Y},Z }'''\{\widetilde{q}/\widetilde{X},\widetilde{q_{Y}'}/\widetilde{Y}, q'' /Z\}.
    \]
    Moreover, since $Z$ is 1-active, by Lemma~\ref{L:ONE_ACTION_TAU_GF}, we get
   \[C_{\widetilde{X},\widetilde{Y},Z}'''\{\widetilde{q}/\widetilde{X},\widetilde{q_{Y}'}/\widetilde{Y}, q'' /Z\} \stackrel{\epsilon}{\Longrightarrow} C_{\widetilde{X},\widetilde{Y},Z }'''\{\widetilde{q}/\widetilde{X},\widetilde{q_{Y}'}/\widetilde{Y}, q_{Z}' /Z\}.\]
   Then, by Lemma~\ref{L:MULTI_STEP_UNFOLDING_ACTION_II}, it follows from $t  \Rrightarrow  C_{\widetilde{X},\widetilde{Y} }'\{\widetilde{q}/\widetilde{X},\widetilde{q_{Y}'}/\widetilde{Y} \}$ that there exist $t'$ such that
   \[
   t \stackrel{\epsilon}{\Longrightarrow}t' \Rrightarrow C_{\widetilde{X},\widetilde{Y},Z }'''\{\widetilde{q}/\widetilde{X},\widetilde{q_{Y}'}/\widetilde{Y}, q_{Z}' /Z\}.
   \]
   Consequently, $C_{\widetilde{X}}\{\widetilde{q}/\widetilde{X}\} \stackrel{\epsilon}{\Longrightarrow} t \stackrel{\epsilon}{\Longrightarrow}t' \Rrightarrow C_{\widetilde{X},\widetilde{Y},Z }'''\{\widetilde{q}/\widetilde{X},\widetilde{q_{Y}'}/\widetilde{Y}, q_{Z}' /Z\}$.

   \textbf{(MS-$\tau$-3-ii)} Suppose $q_{i_U} \stackrel{\tau}{\Longrightarrow}q_{U}'$ for each $U \in \widetilde{Y} \cup \{Z\}$.
      Since $C_{\widetilde{X},\widetilde{Y}}'$ satisfies (MS-$\tau$-3-ii), we have
    \[C_{\widetilde{X} }\{\widetilde{q}/\widetilde{X} \} \stackrel{\epsilon}{\Longrightarrow}
      C_{\widetilde{X},\widetilde{Y} }'\{\widetilde{q}/\widetilde{X},\widetilde{q_{Y}'}/\widetilde{Y} \}.\]
      Moreover,  $q_{i_Z} \stackrel{\tau}{\longrightarrow}q'' \stackrel{
    \epsilon}{\Longrightarrow}q_{Z}'$ for some $q''$ because of $q_{i_Z} \stackrel{\tau}{\Longrightarrow} q_{Z}'$.
    Hence by (P-$\tau$-4)
    \[C_{\widetilde{X},\widetilde{Y} }'\{\widetilde{q}/\widetilde{X},\widetilde{q_{Y}'}/\widetilde{Y} \} \stackrel{\tau}{\longrightarrow} C_{\widetilde{X},\widetilde{Y},Z }'''\{\widetilde{q}/\widetilde{X},\widetilde{q_{Y}'}/\widetilde{Y}, q'' /Z\}.\]
    Further, since $Z$ is 1-active, it follows from Lemma~\ref{L:ONE_ACTION_TAU_GF} that
   \[C_{\widetilde{X},\widetilde{Y},Z}'''\{\widetilde{q}/\widetilde{X},\widetilde{q_{Y}'}/\widetilde{Y}, q'' /Z\} \stackrel{\epsilon}{\Longrightarrow} C_{\widetilde{X},\widetilde{Y},Z }'''\{\widetilde{q}/\widetilde{X},\widetilde{q_{Y}'}/\widetilde{Y}, q_{Z}' /Z\}.\]
   Consequently, $C_{\widetilde{X} }\{\widetilde{q}/\widetilde{X} \} \stackrel{\epsilon}{\Longrightarrow}C_{\widetilde{X},\widetilde{Y},Z }'''\{\widetilde{q}/\widetilde{X},\widetilde{q_{Y}'}/\widetilde{Y}, q_{Z}' /Z\}$, as desired.

   \textbf{(MS-$\tau$-4)} Assume $C_{\widetilde{X}}$ is stable.
   By (MS-$\tau$-4), $C_{\widetilde{X},\widetilde{Y}}'$ is stable and for any $\widetilde{q} $,
      $C_{\widetilde{X}}\{\widetilde{q}/\widetilde{X} \}  \Rrightarrow C_{\widetilde{X},\widetilde{Y} }'\{\widetilde{q}/\widetilde{X},\widetilde{q_{i_Y}}/\widetilde{Y}\}$.
      Moreover, by Lemma~\ref{L:MULTI_STEP_UNFOLDING_ACTION}, it follows from
       $C_{\widetilde{X},\widetilde{Y}}' \Rrightarrow C_{\widetilde{X},\widetilde{Y}}''$ (i.e., (P-$\tau$-1)) and $C_{\widetilde{X},\widetilde{Y},Z}'''\{X_{i_Z}/Z\} \equiv C_{\widetilde{X},\widetilde{Y}}'' $ (i.e., (P-$\tau$-3)) that $C_{\widetilde{X},\widetilde{Y},Z}'''$ is stable and
      \[C_{\widetilde{X},\widetilde{Y}}'\{\widetilde{q}/\widetilde{X},\widetilde{q_{i_Y}}/\widetilde{Y}\} \Rrightarrow C_{\widetilde{X},\widetilde{Y},Z}'''\{\widetilde{q}/\widetilde{X},\widetilde{q_{i_Y}}/\widetilde{Y},q_{i_Z}/Z\}.\]

   \textbf{(MS-$\tau$-5,6)} By Lemma~\ref{L:ONE_STEP_UNFOLDING_VARIABLE}(3)(5), they immediately follow from the fact that $C_{\widetilde{X},\widetilde{Y}}'$ satisfies (MS-$\tau$-5,6) and $C_{\widetilde{X},\widetilde{Y},Z}'''$ satisfies (P-$\tau$-1,3).

   \textbf{(MS-$\tau$-7)} Immediately follows from (MS-$\tau$-1), (MS-$\tau$-2) and Lemmas~\ref{L:ONE_ACTION_TAU_GF} and \ref{L:STABLE_CONTEXT_I}.
\end{proof}

\begin{lemma}\label{L:STABILIZATION}
  For any $\widetilde{p}$ and stable context $C_{\widetilde{X}}$, if,  for each $i \leq |\widetilde{X}|$, $p_i \stackrel{\epsilon}{\Longrightarrow} |p_i'$ then $C_{\widetilde{X}}\{\widetilde{p}/\widetilde{X}\} \stackrel{\epsilon}{\Longrightarrow} |q$ for some $q$.
\end{lemma}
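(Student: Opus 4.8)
The plan is to split the argument into two phases: a syntactic normalisation that makes every unguarded free-variable occurrence of the context \emph{active}, and then a structural induction. Concretely, I would first apply Lemma~\ref{L:STABILIZATION_PRE} to obtain a context $C'_{\widetilde X}$ with $C_{\widetilde X}\Rrightarrow C'_{\widetilde X}$ in which each unguarded occurrence of any free variable is unfolded, i.e.\ active. Since (by Convention~\ref{C:REC_VAR} recursion variables are distinct from the $\widetilde X$) the unfolding of a recursion subterm commutes with the substitution of free variables, this lifts to $C_{\widetilde X}\{\widetilde p/\widetilde X\}\Rrightarrow C'_{\widetilde X}\{\widetilde p/\widetilde X\}$. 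By the backward direction of Lemma~\ref{L:MULTI_STEP_UNFOLDING_ACTION}, stability is reflected along $\Rrightarrow$, so $C'_{\widetilde X}$ is still a stable context; and by Lemma~\ref{L:MULTI_STEP_UNFOLDING_ACTION_II}(2) any $\tau$-path of $C'_{\widetilde X}\{\widetilde p/\widetilde X\}$ to a stable state yields one for $C_{\widetilde X}\{\widetilde p/\widetilde X\}$. Hence it suffices to prove the statement under the extra assumption that every unguarded free-variable occurrence of the stable context is active.

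Under this assumption I would argue by induction on the structure of $C_{\widetilde X}$. Because $C_{\widetilde X}$ is stable it is neither of the form $\tau.B_{\widetilde X}$ nor $B_{\widetilde X}\vee D_{\widetilde X}$. If $C_{\widetilde X}$ is a closed stable term or of the form $a.B_{\widetilde X}$ with $a\in Act$, then $C_{\widetilde X}\{\widetilde p/\widetilde X\}$ is already stable and we take $q$ to be itself. If $C_{\widetilde X}\equiv X_i$, then $C_{\widetilde X}\{\widetilde p/\widetilde X\}\equiv p_i$ and the hypothesis $p_i\stackrel{\epsilon}{\Longrightarrow}|p_i'$ finishes this case. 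The crucial base case is $C_{\widetilde X}\equiv\langle Y|E\rangle$: here no free variable can occur unfolded, so by the standing assumption all free variables of $C_{\widetilde X}$ are guarded; then Lemma~\ref{L:ONE_ACTION_VISIBLE_GUARDED} shows that the initial transitions of $\langle Y|E\rangle\{\widetilde p/\widetilde X\}$ do not depend on $\widetilde p$, so $\langle Y|E\rangle\{\widetilde p/\widetilde X\}\not\stackrel{\tau}{\longrightarrow}$ follows from $C_{\widetilde X}\{0/\widetilde X\}\not\stackrel{\tau}{\longrightarrow}$ and the term is again stable.

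For the inductive step, let $C_{\widetilde X}\equiv B_{\widetilde X}\odot D_{\widetilde X}$ with $\odot\in\{\Box,\wedge,\parallel_A\}$. Stability of $C_{\widetilde X}$ forces $B_{\widetilde X}$ and $D_{\widetilde X}$ to be stable contexts, since none of these operators can by itself fire a $\tau$-transition, and both inherit the "unguarded implies unfolded" invariant. By the induction hypothesis there are stable $q_B,q_D$ with $B_{\widetilde X}\{\widetilde p/\widetilde X\}\stackrel{\epsilon}{\Longrightarrow}|q_B$ and $D_{\widetilde X}\{\widetilde p/\widetilde X\}\stackrel{\epsilon}{\Longrightarrow}|q_D$. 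Using the interleaving rules for $\tau$-transitions of $\Box$, $\wedge$ and $\parallel_A$ in Table~\ref{Ta:OPERATIONAL_RULES} (e.g.\ $Ra_4,Ra_5$ for $\Box$), these two $\tau$-sequences may be performed one after the other inside the composition, giving $B_{\widetilde X}\{\widetilde p/\widetilde X\}\odot D_{\widetilde X}\{\widetilde p/\widetilde X\}\stackrel{\epsilon}{\Longrightarrow}q_B\odot q_D$; and $q_B\odot q_D$ is stable because its only possible $\tau$-moves would originate from $q_B$ or $q_D$, both of which are stable. Thus $q\equiv q_B\odot q_D$ witnesses the claim.

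I expect the main obstacle to lie at the interface between the two phases, where recursion is handled: one must justify rigorously that $\Rrightarrow$ commutes with substitution of free variables, that the reduction supplied by Lemma~\ref{L:STABILIZATION_PRE} preserves stability of the context, and that Lemma~\ref{L:MULTI_STEP_UNFOLDING_ACTION_II}(2) genuinely transports the stabilizing $\tau$-path from $C'_{\widetilde X}\{\widetilde p/\widetilde X\}$ back to $C_{\widetilde X}\{\widetilde p/\widetilde X\}$. Everything else—the enumeration of the possible shapes of a stable context and the assembly of the two inner $\tau$-sequences via the interleaving rules—is routine, the only subtle point being the consistent use of the "unguarded implies unfolded" invariant to exclude a $\tau$-move hidden inside an un-unfolded recursion in the $\langle Y|E\rangle$ case.
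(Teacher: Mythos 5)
Your proposal is correct, and its first phase coincides exactly with the paper's own proof: the paper also begins by invoking Lemma~\ref{L:STABILIZATION_PRE} (together with Lemma~\ref{L:ONE_STEP_UNFOLDING_VARIABLE}) to pass to $C_{\widetilde{X}}'$ with $C_{\widetilde{X}} \Rrightarrow C_{\widetilde{X}}'$ in which every unguarded free-variable occurrence is unfolded, derives stability of $C_{\widetilde{X}}'$ from $C_{\widetilde{X}}\{0/\widetilde{X}\} \Rrightarrow C_{\widetilde{X}}'\{0/\widetilde{X}\}$ and Lemma~\ref{L:MULTI_STEP_UNFOLDING_ACTION}, and transports the stabilizing path back through $\Rrightarrow$ via Lemma~\ref{L:MULTI_STEP_UNFOLDING_ACTION_II} at the end. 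Where you diverge is the second phase. The paper avoids any induction: it replaces the (now unfolded) unguarded occurrences of the $X_i$ in $C_{\widetilde{X}}'$ by \emph{distinct fresh} variables $\widetilde{Y}$, so that each becomes 1-active, then uses Lemma~\ref{L:ONE_ACTION_TAU_GF} to run, one position at a time, the stabilizing $\tau$-sequence $p_{i_Y} \stackrel{\epsilon}{\Longrightarrow}|p_{i_Y}'$ inside the composite, and finally concludes stability of $C_{\widetilde{X},\widetilde{Y}}'\{\widetilde{p}/\widetilde{X},\widetilde{p_{i_Y}'}/\widetilde{Y}\}$ from Lemma~\ref{L:ONE_ACTION_TAU}, since the context is stable and every process sitting at an unguarded position is stable. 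You instead perform a structural induction on the normalized context, handling $\langle Y|E \rangle$ via Lemma~\ref{L:ONE_ACTION_VISIBLE_GUARDED} (correctly observing that the invariant forces all free variables there to be guarded) and the composite cases by interleaving the two stabilizing sequences through rules $Ra_4/Ra_5$, $Ra_7/Ra_8$, $Ra_{11}/Ra_{12}$. Both routes are sound and rest on the same implicit commutation of $\Rrightarrow$ with closed substitution; the paper's renaming trick dispatches all occurrences simultaneously and reuses its 1-active machinery, while your induction is more elementary (it needs only inspection of the $\tau$-rules) at the cost of threading the \textquotedblleft unguarded implies unfolded\textquotedblright\ invariant and the stability of subcontexts through every case, which you do correctly.
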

\begin{proof}
    By Lemma~\ref{L:STABILIZATION_PRE} and  \ref{L:ONE_STEP_UNFOLDING_VARIABLE}(4), $C_{\widetilde{X}} \Rrightarrow  C_{\widetilde{X}}'$ for some $C_{\widetilde{X}}'$ such that each unguarded occurrence of any free variable in $C_{\widetilde{X}}'$ is unfolded.
    Moreover, since $C_{\widetilde{X}}$ is stable,  so is $C_{\widetilde{X}}'$ by $C_{\widetilde{X}}\{0/\widetilde{X}\} \Rrightarrow  C_{\widetilde{X}}'\{0/\widetilde{X}\}$ and Lemma~\ref{L:MULTI_STEP_UNFOLDING_ACTION}.

  Let $C_{\widetilde{X},\widetilde{Y}}'$ be the context obtained from $C_{\widetilde{X}}'$ by replacing simultaneously all unguarded and unfolded occurrences of free variables in $\widetilde{X}$ by distinct and fresh variables $\widetilde{Y}$.
  Here distinct occurrences are replaced by distinct variables. Clearly, we have
  \begin{enumerate}[(1)]
  \renewcommand{\theenumi}{(\arabic{enumi})}
    \item \  for each $Y \in \widetilde{Y}$, there exists exactly one $i_Y \leq |\widetilde{X}|$ such that $C_{\widetilde{X}}' \equiv C_{\widetilde{X},\widetilde{Y}}'\{\widetilde{X_{i_Y}}/\widetilde{Y}\}$,
    \item \ all variables in $\widetilde{Y}$ are 1-active in $C_{\widetilde{X},\widetilde{Y}}'$, and
    \item \ $C_{\widetilde{X},\widetilde{Y}}'$ is stable.
  \end{enumerate}
  Then $C_{\widetilde{X}}\{\widetilde{p}/\widetilde{X}\} \Rrightarrow  C_{\widetilde{X}}'\{\widetilde{p}/\widetilde{X}\}\equiv C_{\widetilde{X},\widetilde{Y}}'\{\widetilde{p}/\widetilde{X},\widetilde{p_{i_Y}}/\widetilde{Y}\} $, and by Lemma~\ref{L:ONE_ACTION_TAU_GF},  we obtain $C_{\widetilde{X},\widetilde{Y}}'\{\widetilde{p}/\widetilde{X},\widetilde{p_{i_Y}}/\widetilde{Y}\} \stackrel{\epsilon}{\Longrightarrow} C_{\widetilde{X},\widetilde{Y}}'\{\widetilde{p}/\widetilde{X},\widetilde{p_{i_Y}'}/\widetilde{Y}\}$.
  Further, since $C_{\widetilde{X},\widetilde{Y}}'$ and $\widetilde{p_{i_Y}'}$ are stable and $\widetilde{Y}$ contains all unguarded occurrences of variables in $C_{\widetilde{X},\widetilde{Y}}'$, we get $C_{\widetilde{X},\widetilde{Y}}'\{\widetilde{p}/\widetilde{X},\widetilde{p_{i_Y}'}/\widetilde{Y}\}\not\stackrel{\tau}{\longrightarrow}$ by Lemma~\ref{L:ONE_ACTION_TAU}.
  Hence, by Lemma~\ref{L:MULTI_STEP_UNFOLDING_ACTION_II}, $C_{\widetilde{X}}\{\widetilde{p}/\widetilde{X}\} \stackrel{\epsilon}{\Longrightarrow} |q \Rrightarrow C_{\widetilde{X},\widetilde{Y}}'\{\widetilde{p}/\widetilde{X},\widetilde{p_{i_Y}'}/\widetilde{Y}\}$ for some $q $.
\end{proof}

Given a process $C_{\widetilde{X}}\{\widetilde{p}/\widetilde{X}\}$ and its stable $\tau$-descendant $r$ (i.e., $C_{\widetilde{X}}\{\widetilde{p}/\widetilde{X}\} \stackrel{\epsilon}{\Longrightarrow} |r$), in general there exist more than one evolution paths from $C_{\widetilde{X}}\{\widetilde{p}/\widetilde{X}\}$ to $r$.
Since each $\tau$-labelled transition in $\text{CLL}_R$ activated by a single process, a natural conjecture arises at this point that there exist some \textquotedblleft canonical" evolution paths from $C_{\widetilde{X}}\{\widetilde{p}/\widetilde{X}\}$ to $r$ in which the context $C_{\widetilde{X}}$ first evolves itself into a stable context then $p_i$ evolves.
A weak version of this conjecture will be verified in Lemma~\ref{L:TAU_ACTION_NORMALIZATION}.
To this end, a preliminary result is given:

\begin{lemma}\label{L:PLACE_HOLDER}
  Let $t_1,t_2$ be two terms and $\widetilde{X}$ a tuple of variables such that any recursive variable occurring in $t_i$(with $i=1,2$) is not in $\widetilde{X}$,  and let $\widetilde{a_X.0}$ be a tuple of processes with fresh visible action $a_X$ for each $X \in \widetilde{X}$. Then
  \begin{enumerate}[(1)]
    \renewcommand{\theenumi}{(\arabic{enumi})}
    \item if $t_1\{\widetilde{a_X.0}/\widetilde{X}\} \equiv t_2\{\widetilde{a_X.0}/\widetilde{X}\} $ then $t_1 \equiv t_2$;
    \item if $t_1\{\widetilde{a_X.0}/\widetilde{X}\} \Rrightarrow_1 t_2\{\widetilde{a_X.0}/\widetilde{X}\} $ then $t_1\{\widetilde{r}/\widetilde{X}\} \Rrightarrow_1 t_2\{\widetilde{r}/\widetilde{X}\} $ for any $\widetilde{r}$.
  \end{enumerate}
\end{lemma}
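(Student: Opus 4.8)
The plan is to prove part (1) by structural induction on $t_1$, and then to derive part (2) from part (1) together with the syntactic characterisation of one-step unfolding supplied by Lemma~\ref{L:ONE_STEP_UNFOLDING_SYNTAX}. Write $\sigma \triangleq \{\widetilde{a_X.0}/\widetilde{X}\}$. The guiding intuition is that, because the actions $a_X$ are fresh and pairwise distinct and because each $a_X.0$ contains neither a variable of $\widetilde{X}$ nor any recursive operation, the substitution $\sigma$ is injective and leaves the recursive skeleton of a term intact: every occurrence of $a_X.0$ in $\sigma(t)$ marks precisely an occurrence of $X$, and every recursive operation of $\sigma(t)$ already occurs in $t$.

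For part (1) I would show that $\sigma$ is injective on terms whose recursive variables avoid $\widetilde{X}$, proceeding by induction on the structure of $t_1$, with the bodies $t_Z$ of a recursive specification counted as strict subterms so that the induction is well founded. The crucial observation is that the outermost symbol of $\sigma(t)$ determines that of $t$: if $t$ is not a variable of $\widetilde{X}$ then $\sigma$ does not alter its head operator, while if $t \equiv X$ with $X \in \widetilde{X}$ then $\sigma(t) \equiv a_X.0$, and since the $a_X$ are distinct and fresh no other term can have this image. Matching heads in $\sigma(t_1) \equiv \sigma(t_2)$ then forces the two terms to have the same shape, and the induction hypothesis applied to the immediate subterms (respectively to each body $t_Z$ in the recursive case, using Conventions~\ref{C:REC_VAR} and \ref{C:REC_SPEC} to keep $V$ disjoint from $\widetilde{X}$) yields $t_1 \equiv t_2$.

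For part (2), Lemma~\ref{L:ONE_STEP_UNFOLDING_SYNTAX} turns the hypothesis $\sigma(t_1) \Rrightarrow_1 \sigma(t_2)$ into a context $s$, a fresh hole variable $W \notin \widetilde{X}$ occurring exactly once and unfolded in $s$, and an equation $Y = t_Y \in E$ with $\sigma(t_1) \equiv s\{\langle Y|E\rangle/W\}$ and $\sigma(t_2) \equiv s\{\langle t_Y|E\rangle/W\}$. I would first lift this decomposition back across $\sigma$: since $a_X.0$ carries no recursive operation, the unfolded occurrence of $\langle Y|E\rangle$ lies in the part of $\sigma(t_1)$ inherited from $t_1$, so there is a subterm $\langle Y|E_0\rangle$ of $t_1$ with $\sigma(E_0) \equiv E$ and a context $s'$ (with $W$ unfolded, occurring once) such that $t_1 \equiv s'\{\langle Y|E_0\rangle/W\}$ and $\sigma(s') \equiv s$. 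Setting $t_2' \triangleq s'\{\langle t_{Y,0}|E_0\rangle/W\}$ where $Y = t_{Y,0} \in E_0$, Lemma~\ref{L:ONE_STEP_UNFOLDING_SYNTAX} gives $t_1 \Rrightarrow_1 t_2'$; and because $V \cap \widetilde{X} = \emptyset$ the substitution $\sigma$ commutes with the unfolding substitution, so $\sigma(t_2') \equiv s\{\langle t_Y|E\rangle/W\} \equiv \sigma(t_2)$, whence $t_2' \equiv t_2$ by part (1). Finally, for an arbitrary $\widetilde{r}$ put $\rho \triangleq \{\widetilde{r}/\widetilde{X}\}$: since $W \notin \widetilde{X}$ is fresh it still occurs exactly once and unfolded in $\rho(s')$ (replacing the $X$'s alters only subtrees disjoint from the position of $W$ and cannot enlarge its scope), so a further appeal to Lemma~\ref{L:ONE_STEP_UNFOLDING_SYNTAX}, together with the commutation of $\rho$ with the unfolding substitution, yields $t_1\{\widetilde{r}/\widetilde{X}\} \equiv \rho(s')\{\langle Y|\rho(E_0)\rangle/W\} \Rrightarrow_1 \rho(s')\{\rho(\langle t_{Y,0}|E_0\rangle)/W\} \equiv t_2'\{\widetilde{r}/\widetilde{X}\} \equiv t_2\{\widetilde{r}/\widetilde{X}\}$.

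The hard part will be the descent in part (2): making precise that the single unfolded occurrence selected in $\sigma(t_1)$ corresponds to a genuine subterm of $t_1$ and that the surrounding context $s$ equals $\sigma(s')$. This rests on two facts that must be argued carefully, namely that $\sigma$ introduces no recursive operator (so the hole cannot fall inside a substituted $a_X.0$) and that $\sigma$ is position-preserving on the inherited skeleton (so the hole sits at the same address), after which part (1) supplies exactly the injectivity needed to pin down $t_2' \equiv t_2$. Everything else is routine bookkeeping about scopes and occurrence counts, all of which is insensitive to replacing the $X$'s precisely because $W$ and the recursive variables of $t_1$ are kept disjoint from $\widetilde{X}$.
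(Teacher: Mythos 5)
Your proposal is correct. Part (1) is essentially the paper's own argument: a structural induction on $t_1$ in which freshness and distinctness of the actions $a_X$ force the head symbols of $t_1$ and $t_2$ to match (variable, prefix, binary operator, or recursive operation), after which the induction hypothesis applied to immediate subterms (and to the bodies $t_Z$ in the recursive case) finishes the job. Part (2), however, takes a genuinely different route. The paper again proceeds by structural induction on $t_1$, splitting off the trivial case $FV(t_1)\cap\widetilde{X}=\emptyset$ and then treating the one interesting case $t_1 \equiv \langle Y|E\rangle$ directly: there the one-step unfolding of $\langle Y|E\rangle\{\widetilde{a_X.0}/\widetilde{X}\}$ is \emph{unique}, namely $\langle t_Y|E\rangle\{\widetilde{a_X.0}/\widetilde{X}\}$, so item (1) pins down $t_2 \equiv \langle t_Y|E\rangle$ and the conclusion is immediate for any $\widetilde{r}$; the remaining cases are routine propagation. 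You instead invoke Lemma~\ref{L:ONE_STEP_UNFOLDING_SYNTAX} to decompose $\sigma(t_1)\Rrightarrow_1\sigma(t_2)$ into a one-hole unfolded context, lift that decomposition back across $\sigma$ (using that $a_X.0$ contains no recursive operation, so the redex sits in the skeleton inherited from $t_1$), and then use item (1) plus commutation of substitutions to conclude. Both arguments are sound. Your route treats all shapes of $t_1$ uniformly and makes the substitution/unfolding commutation explicit, but it hinges on the positional lifting step ($\sigma(s')\equiv s$, hole not inside a substituted $a_X.0$, hole still unfolded after substituting $\widetilde{r}$), which you rightly flag as the part needing care; the paper's induction avoids any reasoning about positions entirely, letting freshness do its work locally in each case, at the price of a (mostly omitted) case analysis.
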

\begin{proof}
\noindent \textbf{(1)} If $FV(t_1) \cap \widetilde{X} = \emptyset$ then $t_1\{\widetilde{a_X.0}/\widetilde{X}\} \equiv t_1 \equiv t_2\{\widetilde{a_X.0}/\widetilde{X}\} $.
  Further, since $a_X$ is fresh for each $X \in \widetilde{X}$, we have $FV(t_2) \cap \widetilde{X} = \emptyset$.
    Hence $t_1 \equiv t_2$.
    In the following, we consider the other case $FV(t_1) \cap \widetilde{X} \neq \emptyset$.
    We proceed by induction on $t_1$.\\

\noindent Case  1 $t_1 \equiv X_i$.

    Then $t_1\{\widetilde{a_X.0}/\widetilde{X}\} \equiv a_{X_i}.0 \equiv t_2\{\widetilde{a_X.0}/\widetilde{X}\} $.
    Hence $t_2 \equiv X_i$ due to the freshness of $a_{X_i}$.\\

 \noindent Case  2 $t_1 \equiv \alpha.s$.

   So  $t_1\{\widetilde{a_X.0}/\widetilde{X}\} \equiv \alpha.s\{\widetilde{a_X.0}/\widetilde{X}\} \equiv t_2\{\widetilde{a_X.0}/\widetilde{X}\} $.
   Since $\alpha \neq a_X$ for each $X \in \widetilde{X}$, there exists $s'$ such that $t_2 \equiv \alpha.s' $ and $s\{\widetilde{a_X.0}/\widetilde{X}\} \equiv  s'\{\widetilde{a_X.0}/\widetilde{X}\} $.
   By IH, we have $s \equiv  s'$.
   Hence $t_1 \equiv t_2$.\\

 \noindent Case  3 $t_1 \equiv s_1\odot s_2$ with $\odot \in \{\vee, \Box, \parallel_A, \wedge\}$.

 Then  $t_1\{\widetilde{a_X.0}/\widetilde{X}\} \equiv s_1\{\widetilde{a_X.0}/\widetilde{X}\} \odot s_2\{\widetilde{a_X.0}/\widetilde{X}\}\equiv t_2\{\widetilde{a_X.0}/\widetilde{X}\} $.
 Since $\widetilde{a_X.0}$ do not contain $\odot$, there exist $s_1',s_2'$ such that $t_2 \equiv s_1' \odot s_2'$, $s_1\{\widetilde{a_X.0}/\widetilde{X}\} \equiv  s_1'\{\widetilde{a_X.0}/\widetilde{X}\} $ and $s_2\{\widetilde{a_X.0}/\widetilde{X}\} \equiv  s_2'\{\widetilde{a_X.0}/\widetilde{X}\} $.
 Hence $s_1 \equiv  s_1'$ and $s_2 \equiv  s_2'$ by applying IH.\\

  \noindent Case  4 $t_1 \equiv \langle Y|E \rangle$ for some $E(V)$ with $Y \in V$.

  Then  $t_1\{\widetilde{a_X.0}/\widetilde{X}\} \equiv  \langle Y|E\{\widetilde{a_X.0}/\widetilde{X}\}\rangle \equiv t_2\{\widetilde{a_X.0}/\widetilde{X}\} $.
  So, $t_2 \equiv \langle Y|E' \rangle$ for some $E'(V)$ such that for each $Z \in V$, $t_Z\{\widetilde{a_X.0}/\widetilde{X}\} \equiv t_Z'\{\widetilde{a_X.0}/\widetilde{X}\}$ where $Z =t_Z \in E$ and $Z=t_Z' \in E'$.
  By IH, $t_Z\equiv  t_Z'$ for each $Z \in V$.
  Thus $t_1 \equiv \langle Y|E \rangle \equiv \langle Y|E' \rangle \equiv t_2$.\\

\noindent \textbf{(2)}
In case   $FV(t_1) \cap \widetilde{X} = \emptyset$, since one-step unfolding does not introduce fresh actions, we have $FV(t_2) \cap \widetilde{X} =\emptyset$.
Thus, $t_1 \equiv t_1\{\widetilde{a_X.0}/\widetilde{X}\} \Rrightarrow_1 t_2\{\widetilde{a_X.0}/\widetilde{X}\} \equiv t_2$, and hence $t_1 \equiv t_1\{\widetilde{r}/\widetilde{X}\} \Rrightarrow_1 t_2\{\widetilde{r}/\widetilde{X}\} \equiv t_2$ for any $\widetilde{r}$.
Next we consider the other case $FV(t_1) \cap \widetilde{X} \neq \emptyset$.
     It proceeds by induction on $t_1$.
     This is a routine case analysis on the format of $t_1$, we  handle only the case  $t_1 \equiv \langle Y|E \rangle$.

  In this case, $t_1\{\widetilde{a_X.0}/\widetilde{X}\}  \equiv \langle Y|E \rangle \{\widetilde{a_X.0}/\widetilde{X}\}$.
  By Def.~\ref{D:UNFOLDING}, the unique result of one-step unfolding of  $\langle Y|E \rangle \{\widetilde{a_X.0}/\widetilde{X}\}$ is $\langle t_Y|E \rangle \{\widetilde{a_X.0}/\widetilde{X}\}$ where $Y =t_Y \in E$.
  Thus, we get $ \langle t_Y|E \rangle \{\widetilde{a_X.0}/\widetilde{X}\}\equiv t_2\{\widetilde{a_X.0}/\widetilde{X}\} $.
  By item (1), we have $\langle t_Y|E \rangle   \equiv t_2 $, and hence $t_1\{\widetilde{r}/\widetilde{X}\} \Rrightarrow_1  \langle t_Y |E \rangle \{\widetilde{r}/\widetilde{X}\} \equiv t_2\{\widetilde{r}/\widetilde{X}\} $ for any $\widetilde{r}$.
\end{proof}

Having disposed of this preliminary step, we can now   verify a weak version of the conjecture mentioned above, which is sufficient for the aim of this paper.
At present, we do not know whether this result still holds if the requirement (1) in the next lemma is strengthened to $C_{\widetilde{X}}\{\widetilde{p}/\widetilde{X}\} \stackrel{\epsilon}{\Longrightarrow}  D_{\widetilde{X}}\{\widetilde{p}/\widetilde{X}\} \stackrel{\epsilon}{\Longrightarrow}  | r$.

\begin{lemma}\label{L:TAU_ACTION_NORMALIZATION}
 For any $C_{\widetilde{X}}$ and $\widetilde{p}$, if $C_{\widetilde{X}}\{\widetilde{p}/\widetilde{X}\} \stackrel{\epsilon}{\Longrightarrow} |r$ then there exists a stable context $D_{\widetilde{X}}$ such that
 \begin{enumerate}[(1)]
\renewcommand{\theenumi}{(\arabic{enumi})}
   \item $C_{\widetilde{X}}\{\widetilde{p}/\widetilde{X}\} \stackrel{\epsilon}{\Longrightarrow}  D_{\widetilde{X}}\{\widetilde{p}/\widetilde{X}\} \stackrel{\epsilon}{\Longrightarrow}  | r'  \Rrightarrow  r$ for some  $r'$, and
   \item  $C_{\widetilde{X}}\{\widetilde{q}/\widetilde{X}\} \stackrel{\epsilon}{\Longrightarrow}  D_{\widetilde{X}}\{\widetilde{q}/\widetilde{X}\}$ for any $\widetilde{q}$ with $|\widetilde{q}| = |\widetilde{X}|$.
 \end{enumerate}
\end{lemma}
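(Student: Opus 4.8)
The plan is to reduce the given derivation to the canonical shape in two stages: first invoke Lemma~\ref{L:MULTI_TAU_GF_STABLE} to locate the stable context into which the substitutions must evolve, and then use fresh actions together with Lemma~\ref{L:PLACE_HOLDER} to peel off a \emph{clean} context-only prefix $C_{\widetilde{X}}\{\widetilde{p}/\widetilde{X}\}\stackrel{\epsilon}{\Longrightarrow} D_{\widetilde{X}}\{\widetilde{p}/\widetilde{X}\}$.

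First I would apply Lemma~\ref{L:MULTI_TAU_GF_STABLE} to $C_{\widetilde{X}}\{\widetilde{p}/\widetilde{X}\}\stackrel{\epsilon}{\Longrightarrow}r$ (which holds since $C_{\widetilde{X}}\{\widetilde{p}/\widetilde{X}\}\stackrel{\epsilon}{\Longrightarrow}|r$), obtaining a context $C'_{\widetilde{X},\widetilde{Y}}$, indices $i_Y$ and processes $p_Y'$ with $r\equiv C'_{\widetilde{X},\widetilde{Y}}\{\widetilde{p}/\widetilde{X},\widetilde{p_Y'}/\widetilde{Y}\}$, $p_{i_Y}\stackrel{\tau}{\Longrightarrow}p_Y'$, and each $Y$ 1-active (MS-$\tau$-1,2). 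Since $r$ is stable, MS-$\tau$-7 gives that $C'_{\widetilde{X},\widetilde{Y}}$ and every $p_Y'$ are stable.

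The crux is that MS-$\tau$-3-i delivers only $C_{\widetilde{X}}\{\widetilde{q}/\widetilde{X}\}\stackrel{\epsilon}{\Longrightarrow}\Rrightarrow\cdots$, an \emph{unfolding-contaminated} prefix, whereas conditions (1) and (2) demand a pure $\stackrel{\epsilon}{\Longrightarrow}$. To remove this contamination I would substitute fresh-action prefixes $\widetilde{a_X.0}$ for $\widetilde{X}$. Instantiating MS-$\tau$-3-i with $\widetilde{q}=\widetilde{a_X.0}$ and $q_Y'=q_{i_Y}=a_{X_{i_Y}}.0$ (a reflexive $\stackrel{\epsilon}{\Longrightarrow}$) gives $C_{\widetilde{X}}\{\widetilde{a_X.0}/\widetilde{X}\}\stackrel{\epsilon}{\Longrightarrow}t\Rrightarrow C'_{\widetilde{X},\widetilde{Y}}\{\widetilde{X_{i_Y}}/\widetilde{Y}\}\{\widetilde{a_X.0}/\widetilde{X}\}$. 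Because each $a_X.0$ is stable, clause (2) of Lemma~\ref{L:ONE_ACTION_TAU} can never fire along $C_{\widetilde{X}}\{\widetilde{a_X.0}/\widetilde{X}\}\stackrel{\epsilon}{\Longrightarrow}t$, so iterating clause (1) shows every step is a context move; this produces a context $D_{\widetilde{X}}$ with $t\equiv D_{\widetilde{X}}\{\widetilde{a_X.0}/\widetilde{X}\}$ and, crucially, $C_{\widetilde{X}}\{\widetilde{q}/\widetilde{X}\}\stackrel{\epsilon}{\Longrightarrow}D_{\widetilde{X}}\{\widetilde{q}/\widetilde{X}\}$ for every $\widetilde{q}$, which is exactly condition (2). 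Lifting the unfolding $D_{\widetilde{X}}\{\widetilde{a_X.0}/\widetilde{X}\}\Rrightarrow C'_{\widetilde{X},\widetilde{Y}}\{\widetilde{X_{i_Y}}/\widetilde{Y}\}\{\widetilde{a_X.0}/\widetilde{X}\}$ from fresh actions to arbitrary substitutions via Lemma~\ref{L:PLACE_HOLDER} yields $D_{\widetilde{X}}\Rrightarrow C'_{\widetilde{X},\widetilde{Y}}\{\widetilde{X_{i_Y}}/\widetilde{Y}\}$ as contexts; since the right-hand side is stable and unfolding preserves $\tau$-enabledness (Lemma~\ref{L:MULTI_STEP_UNFOLDING_ACTION}), $t$ is stable, whence $D_{\widetilde{X}}$ is a stable context by Lemma~\ref{L:STABLE_CONTEXT_I}.

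Finally, for condition (1): the first leg $C_{\widetilde{X}}\{\widetilde{p}/\widetilde{X}\}\stackrel{\epsilon}{\Longrightarrow}D_{\widetilde{X}}\{\widetilde{p}/\widetilde{X}\}$ is the instance $\widetilde{q}=\widetilde{p}$ of condition (2). Lifting gives $D_{\widetilde{X}}\{\widetilde{p}/\widetilde{X}\}\Rrightarrow C'_{\widetilde{X},\widetilde{Y}}\{\widetilde{p}/\widetilde{X},\widetilde{p_{i_Y}}/\widetilde{Y}\}$; driving each 1-active hole through the $\tau$-steps of $p_{i_Y}\stackrel{\tau}{\Longrightarrow}p_Y'$ by repeated use of Lemma~\ref{L:ONE_ACTION_TAU_GF} (the distinct holes stay 1-active as the others are filled) shows $C'_{\widetilde{X},\widetilde{Y}}\{\widetilde{p}/\widetilde{X},\widetilde{p_{i_Y}}/\widetilde{Y}\}\stackrel{\epsilon}{\Longrightarrow}|r$, and pulling this stable derivation back across the unfolding with Lemma~\ref{L:MULTI_STEP_UNFOLDING_ACTION_II}(2) yields an $r'$ with $D_{\widetilde{X}}\{\widetilde{p}/\widetilde{X}\}\stackrel{\epsilon}{\Longrightarrow}|r'\Rrightarrow r$. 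I expect the main obstacle to be precisely the separation of context moves from substitution moves needed to eliminate the trailing $\Rrightarrow$ of MS-$\tau$-3-i; the fresh-action device of Lemma~\ref{L:PLACE_HOLDER} is what makes this separation rigorous, and the residual unfolding that survives is exactly the $r'\Rrightarrow r$ gap that accounts for this being only the announced \emph{weak} version.
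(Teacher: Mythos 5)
Your proof is correct, but it takes a genuinely different route from the paper's. The paper proves the lemma by induction on the length $n$ of the derivation $C_{\widetilde{X}}\{\widetilde{p}/\widetilde{X}\}(\stackrel{\tau}{\longrightarrow})^n|r$: the first $\tau$-step is analysed with Lemma~\ref{L:ONE_ACTION_TAU}, the induction hypothesis supplies a stable context for the suffix, and the hard case (a substitution move at the front) is resolved by commuting that move past the later context moves, which costs two further applications of Lemma~\ref{L:MULTI_TAU_GF_STABLE} to fresh-action instances plus Lemma~\ref{L:PLACE_HOLDER}. You avoid the top-level induction entirely: you apply Lemma~\ref{L:MULTI_TAU_GF_STABLE} once to the given derivation, and then obtain the clean prefix by instantiating (MS-$\tau$-3-i) at the fresh stable processes $\widetilde{a_X.0}$, where clause (2) of Lemma~\ref{L:ONE_ACTION_TAU} can never fire, so every step of the resulting $\stackrel{\epsilon}{\Longrightarrow}$-prefix is a context move and transfers via (C-$\tau$-2) to arbitrary substitutions --- which is exactly condition (2) --- while Lemma~\ref{L:PLACE_HOLDER} turns the trailing $\Rrightarrow$ of (MS-$\tau$-3-i) into a context-level unfolding $D_{\widetilde{X}}\Rrightarrow C'_{\widetilde{X},\widetilde{Y}}\{\widetilde{X_{i_Y}}/\widetilde{Y}\}$; condition (1) then follows by driving the 1-active holes with Lemma~\ref{L:ONE_ACTION_TAU_GF} and pulling the stable derivation back across the unfolding with Lemma~\ref{L:MULTI_STEP_UNFOLDING_ACTION_II}(2). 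Both arguments rest on the same fresh-action/Lemma~\ref{L:PLACE_HOLDER} device, but yours reuses the multi-step Lemma~\ref{L:MULTI_TAU_GF_STABLE} as a black box in which all the induction is already concentrated, giving a shorter and more modular proof, whereas the paper re-derives that bookkeeping step by step.

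Two points you leave implicit, though both are at the same level of rigor as the paper's own usage: the lifting of Lemma~\ref{L:PLACE_HOLDER} from $\Rrightarrow_1$ to multi-step $\Rrightarrow$ requires that each intermediate term of the unfolding sequence again be abstracted over the fresh-action subterms (the paper glosses this identically), and the claim that the right-hand side $C'_{\widetilde{X},\widetilde{Y}}\{\widetilde{X_{i_Y}}/\widetilde{Y}\}\{\widetilde{a_X.0}/\widetilde{X}\}$ is stable needs the small observation that a stable context filled with stable processes is stable, which follows from Lemma~\ref{L:ONE_ACTION_TAU} (clause (1) would contradict stability of the context by Lemma~\ref{L:STABLE_CONTEXT_I}, clause (2) stability of the fillers).
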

\begin{proof}
  Suppose $C_{\widetilde{X}}\{\widetilde{p}/\widetilde{X}\} (\stackrel{\tau}{\longrightarrow} )^n|r$.
  It proceeds by induction on $n$.
  For the inductive base $n=0$, it follows from $C_{\widetilde{X}}\{\widetilde{p}/\widetilde{X}\} \equiv r \not\stackrel{\tau}{\longrightarrow}$ that $C_{\widetilde{X}} $ is stable by Lemma~\ref{L:STABLE_CONTEXT_I}.
   Then it is straightforward to verify that $C_{\widetilde{X}}$ itself is exactly what we seek.
  For the inductive step, assume $C_{\widetilde{X}}\{\widetilde{p}/\widetilde{X}\} \stackrel{\tau}{\longrightarrow}  t(\stackrel{\tau}{\longrightarrow} )^k|r$ for some $t$.
  Then, for the $\tau$-labelled transition $C_{\widetilde{X}}\{\widetilde{p}/\widetilde{X}\} \stackrel{\tau}{\longrightarrow}  t$, either the clause (1) or (2) in Lemma~\ref{L:ONE_ACTION_TAU} holds.
  The first alternative is easy to handle and is thus omitted.
    Next we consider the second alternative.

    In this situation, there exist $C_{\widetilde{X}}'$, $C_{\widetilde{X},Z}''$ with $Z \notin \widetilde{X}$ and $i_0\leq |\widetilde{X}|$ that satisfy (P-$\tau$-1) -- (P-$\tau$-4).
    By (P-$\tau$-2), we have
    \[t \equiv C_{\widetilde{X},Z}''\{\widetilde{p}/\widetilde{X},p'/Z\}\;\text{for some}\; p'\;\text{with}\; p_{i_0} \stackrel{\tau}{\longrightarrow}p'.\]
    Then, for $C_{\widetilde{X},Z}''\{\widetilde{p}/\widetilde{X},p'/Z\} (\stackrel{\tau}{\longrightarrow})^k|r$, by IH, there exists a stable context $D_{\widetilde{X},Z}'$ such that
    \[C_{\widetilde{X},Z}''\{\widetilde{p}/\widetilde{X},p'/Z\} \stackrel{\epsilon}{\Longrightarrow}  D_{\widetilde{X},Z}'\{\widetilde{p}/\widetilde{X},p'/Z\} \stackrel{\epsilon}{\Longrightarrow}  |r'  \Rrightarrow r\;\text{for some}\;  r'\tag{\ref{L:TAU_ACTION_NORMALIZATION}.1}\]
    and for any $q'$ and $\widetilde{q}$, we have \[C_{\widetilde{X},Z}''\{\widetilde{q}/\widetilde{X},q'/Z\} \stackrel{\epsilon}{\Longrightarrow}  D_{\widetilde{X},Z}'\{\widetilde{q}/\widetilde{X},q'/Z\}.\tag{\ref{L:TAU_ACTION_NORMALIZATION}.2}\]
    In particular, we have $C_{\widetilde{X},Z}''\{\widetilde{a_X.0} /\widetilde{X},a_Z.0/Z\} \stackrel{\epsilon}{\Longrightarrow}  D_{\widetilde{X},Z}'\{\widetilde{a_X.0} /\widetilde{X},a_Z.0/Z\}$ where distinct visible actions $\widetilde{a_X}$ and $a_Z$ are fresh.
    For this transition, applying Lemma~\ref{L:ONE_ACTION_TAU} finitely often (notice that, in this procedure, since $\widetilde{a_X.0}$ and $a_Z.0$ are stable, the clause (2) in Lemma~\ref{L:ONE_ACTION_TAU} is always false), then by clause (1) in Lemma~\ref{L:ONE_ACTION_TAU}, we get the sequence
    \begin{multline*}
    C_{\widetilde{X},Z}''\{\widetilde{a_X.0} /\widetilde{X},a_Z.0/Z\}\equiv C_{\widetilde{X},Z}^0\{\widetilde{a_X.0} /\widetilde{X},a_Z.0/Z\} \stackrel{\tau}{\longrightarrow}  C_{\widetilde{X},Z}^1\{\widetilde{a_X.0} /\widetilde{X},a_Z.0/Z\}\stackrel{\tau}{\longrightarrow}\\ \dots \stackrel{\tau}{\longrightarrow} C_{\widetilde{X},Z}^n\{\widetilde{a_X.0} /\widetilde{X},a_Z.0/Z\} \equiv D_{\widetilde{X},Z}'\{\widetilde{a_X.0} /\widetilde{X},a_Z.0/Z\}.
    \end{multline*}
    Here $n \geq 0$ and $C_{\widetilde{X},Z}^i$ satisfies (C-$\tau$-1,2,3) for each $1 \leq i \leq n$.
    Moreover, since $Z$ is 1-active in $C_{\widetilde{X},Z}''$, by (C-$\tau$-3-i), so is $Z$ in $C_{\widetilde{X},Z}^n$.
    We also have $C_{\widetilde{X},Z}^n \equiv D_{\widetilde{X},Z}'$ by Lemma~\ref{L:PLACE_HOLDER}.
    Hence we conclude that
    \[Z \;\text{is 1-active in}\;D_{\widetilde{X},Z}'. \tag{\ref{L:TAU_ACTION_NORMALIZATION}.3}\]
    Since $C_{\widetilde{X}}'$ and $C_{\widetilde{X},Z}''$ satisfy (P-$\tau$-1) and (P-$\tau$-3), for any $\widetilde{s}$, we get
    \[  C_{\widetilde{X}}\{\widetilde{s}/\widetilde{X}\}  \Rrightarrow C_{\widetilde{X}}' \{\widetilde{s}/\widetilde{X}\} \equiv C_{\widetilde{X},Z}''\{\widetilde{s}/\widetilde{X},s_{i_0}/Z\}. \tag{\ref{L:TAU_ACTION_NORMALIZATION}.4} \]
    In order to complete the proof, it suffices to find a stable context $D_{\widetilde{X}}$ satisfying conditions (1) and (2).
    In the following, we shall use $\widetilde{a_{X}.0}$  again to obtain such context.

    Since $\widetilde{a_{X}.0}$ and $D_{\widetilde{X},Z}'$ are stable, by  (\ref{L:TAU_ACTION_NORMALIZATION}.2),
    we get
    \[C_{\widetilde{X},Z}''\{\widetilde{a_{X}.0}/\widetilde{X},a_{X_{i_0}}.0/Z\} \stackrel{\epsilon}{\Longrightarrow}  |D_{\widetilde{X},Z}'\{\widetilde{a_{X}.0}/\widetilde{X},a_{X_{i_0}}.0/Z\}.\]
    Moreover, by (\ref{L:TAU_ACTION_NORMALIZATION}.4), we have $C_{\widetilde{X}}' \{\widetilde{a_X.0}/\widetilde{X}\} \equiv C_{\widetilde{X},Z}''\{\widetilde{a_X.0}/\widetilde{X},a_{X_{i_0}}.0/Z\}$.
    Thus, it follows that
    \[C_{\widetilde{X}}' \{\widetilde{a_X.0}/\widetilde{X}\}\stackrel{\epsilon}{\Longrightarrow}| D_{\widetilde{X},Z}'\{\widetilde{a_X.0}/\widetilde{X},a_{X_{i_0}}.0/Z\}.\]
    Then, since $\widetilde{a_X.0}$ are stable, by Lemma~\ref{L:MULTI_TAU_GF_STABLE}, there exists a stable context $B_{\widetilde{X}}$ such that
    \[B_{\widetilde{X}}\{\widetilde{a_{X}.0}/\widetilde{X}\} \equiv D_{\widetilde{X},Z}'\{\widetilde{a_{X}.0}/\widetilde{X},a_{X_{i_0}}.0/Z\} \tag{\ref{L:TAU_ACTION_NORMALIZATION}.5}\]
    and
    \[C_{\widetilde{X}}'\{\widetilde{s}/\widetilde{X}\} \stackrel{\epsilon}{\Longrightarrow}  B_{\widetilde{X}}\{\widetilde{s}/\widetilde{X}\} \;\text{for any}\;\widetilde{s}. \tag{\ref{L:TAU_ACTION_NORMALIZATION}.6}\]
    In addition, by (\ref{L:TAU_ACTION_NORMALIZATION}.4) and Lemma~\ref{L:MULTI_STEP_UNFOLDING_ACTION_II}, we have $C_{\widetilde{X}}\{\widetilde{a_X.0}/\widetilde{X}\}  \Rrightarrow C_{\widetilde{X}}' \{\widetilde{a_X.0}/\widetilde{X}\}$ and $C_{\widetilde{X}}\{\widetilde{a_X.0}/\widetilde{X}\}\stackrel{\epsilon}{\Longrightarrow}|t'  \Rrightarrow D_{\widetilde{X},Z}'\{\widetilde{a_{X}.0}/\widetilde{X},a_{X_{i_0}}.0/Z\} $ for some $t'$.
     Further, since $\widetilde{a_{X}.0}$ are stable, by Lemma~\ref{L:MULTI_TAU_GF_STABLE}, there exists a stable context $D_{\widetilde{X}}$ such that
      \[t' \equiv D_{\widetilde{X}}\{\widetilde{a_{X}.0}/\widetilde{X}\}   \Rrightarrow  D_{\widetilde{X},Z}'\{\widetilde{a_{X}.0}/\widetilde{X},a_{X_{i_0}}.0/Z\}  \tag{\ref{L:TAU_ACTION_NORMALIZATION}.7}\]
      and
     \[C_{\widetilde{X}}\{\widetilde{s}/\widetilde{X}\} \stackrel{\epsilon}{\Longrightarrow}  D_{\widetilde{X}}\{\widetilde{s}/\widetilde{X}\}\; \text{for any}\; \widetilde{s}. \tag{\ref{L:TAU_ACTION_NORMALIZATION}.8}\]
     Notice that, (\ref{L:TAU_ACTION_NORMALIZATION}.8) follows from (MS-$\tau$-3-ii) with $\widetilde{Y}= \emptyset$.
    In the following, we intend to prove that $D_{\widetilde{X}}$ is what we seek.
    It immediately follows from (\ref{L:TAU_ACTION_NORMALIZATION}.8) that $D_{\widetilde{X}}$ meets the requirement (2).
    We are left with the task of verifying that $D_{\widetilde{X}}$ satisfies the condition (1).
    So far, for any $\widetilde{s}$, we have the diagram below, where the first line follows from
    (\ref{L:TAU_ACTION_NORMALIZATION}.4),
    \[      \begin{array}{ccccc}
                  C_{\widetilde{X}}\{\widetilde{s}/\widetilde{X}\}\quad\quad\quad  &  \Rrightarrow  & C_{\widetilde{X}}' \{\widetilde{s}/\widetilde{X}\}\quad\quad\quad & \equiv  & C_{\widetilde{X},Z}''\{\widetilde{s}/\widetilde{X},s_{i_0}/Z\} \\
                     &  &  &   &  \\
                  \Downarrow \mspace{-3mu}\epsilon \;\text{by}\; (\ref{L:TAU_ACTION_NORMALIZATION}.8)&  & \Downarrow\mspace{-3mu}\epsilon \;\text{by}\; (\ref{L:TAU_ACTION_NORMALIZATION}.6)&   & \Downarrow\mspace{-3mu}\epsilon \;\text{by}\; (\ref{L:TAU_ACTION_NORMALIZATION}.2)\\
                   &  &  &   &  \\
                  D_{\widetilde{X}}\{\widetilde{s}/\widetilde{X}\} \quad\quad\quad &  \Rrightarrow  & B_{\widetilde{X}} \{\widetilde{s}/\widetilde{X}\} \quad\quad\quad& \equiv  & D_{\widetilde{X},Z}'\{\widetilde{s}/\widetilde{X},s_{i_0}/Z\} \\
                \end{array}
    \]
    Here the last line in the above follows from (\ref{L:TAU_ACTION_NORMALIZATION}.7) and (\ref{L:TAU_ACTION_NORMALIZATION}.5) using Lemma~\ref{L:PLACE_HOLDER}.
    Further, by Lemma~\ref{L:ONE_ACTION_TAU_GF} and $p_{i_0} \stackrel{\tau}{\longrightarrow}p'$, it follows from (\ref{L:TAU_ACTION_NORMALIZATION}.1) and (\ref{L:TAU_ACTION_NORMALIZATION}.3) that
    \[B_{\widetilde{X}} \{\widetilde{p}/\widetilde{X}\} \equiv D_{\widetilde{X},Z}'\{\widetilde{p}/\widetilde{X},p_{i_0}/Z\} \stackrel{\tau}{\longrightarrow} D_{\widetilde{X},Z}'\{\widetilde{p}/\widetilde{X},p'/Z\} \stackrel{\epsilon}{\Longrightarrow}|r'\Rrightarrow r.\]
    Finally, since $D_{\widetilde{X}}\{\widetilde{p}/\widetilde{X}\}   \Rrightarrow B_{\widetilde{X}} \{\widetilde{p}/\widetilde{X}\} $, by  Lemma~\ref{L:MULTI_STEP_UNFOLDING_ACTION_II}, we get $D_{\widetilde{X}}\{\widetilde{p}/\widetilde{X}\} \stackrel{\epsilon}{\Longrightarrow} |r''  \Rrightarrow r'  \Rrightarrow r$ for some $r''$, which, together with $C_{\widetilde{X}}\{\widetilde{p}/\widetilde{X}\} \stackrel{\epsilon}{\Longrightarrow}D_{\widetilde{X}}\{\widetilde{p}/\widetilde{X}\}$, implies that the stable context $D_{\widetilde{X}}$ also meets the requirement (1), as desired.
\end{proof}

The result below asserts that there exist another \textquotedblleft canonical\textquotedblright  evolution paths from $C_{\widetilde{X} }\{\widetilde{p}/\widetilde{X} \}$ to a given stable $\tau$-descendant $r$.
For these paths, an unstable $p_i$ evolves first provided that such $p_i$ is located in an active position.

\begin{lemma}\label{L:MULTI_TAU_ACTIVE_STABLE}
  For any $C_{\widetilde{X} }$ and $\widetilde{p}$,
  if $C_{\widetilde{X} }\{\widetilde{p}/\widetilde{X} \} \stackrel{\epsilon}{\Longrightarrow} |q$ and $X_i$ is 1-active in $C_{\widetilde{X}}$ for some $i \leq |\widetilde{X}|$, then there exists $p'$ such that $p_i \stackrel{\epsilon}{\Longrightarrow}|p'$ and $C_{\widetilde{X} }\{\widetilde{p}/\widetilde{X} \} \stackrel{\epsilon}{\Longrightarrow} C_{\widetilde{X}}\{\widetilde{p}\,[p'/p_i]/\widetilde{X}\} \stackrel{\epsilon}{\Longrightarrow} |q$.
\end{lemma}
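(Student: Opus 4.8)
The plan is to argue by induction on the length $n$ of the $\tau$-sequence witnessing $C_{\widetilde{X}}\{\widetilde{p}/\widetilde{X}\} (\stackrel{\tau}{\longrightarrow})^n|q$, dissecting the very first $\tau$-step by means of Lemma~\ref{L:ONE_ACTION_TAU} and then invoking the induction hypothesis on the remaining $k=n-1$ steps. For the base case $n=0$ the process $C_{\widetilde{X}}\{\widetilde{p}/\widetilde{X}\}\equiv q$ is stable, so $C_{\widetilde{X}}$ is stable by Lemma~\ref{L:STABLE_CONTEXT_I}; since $X_i$ is $1$-active, the contrapositive of Lemma~\ref{L:ONE_ACTION_TAU_GF} forces $p_i\not\stackrel{\tau}{\longrightarrow}$, so $p_i\stackrel{\epsilon}{\Longrightarrow}|p_i$ and $p'\triangleq p_i$ works trivially. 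The guiding intuition throughout is that an active (unguarded, unfolded) single copy of $p_i$ can always be made to perform its own $\tau$-moves first: this is exactly the content of Lemma~\ref{L:ONE_ACTION_TAU_GF}, whose repeated application yields $C_{\widetilde{X}}\{\widetilde{p}/\widetilde{X}\}\stackrel{\epsilon}{\Longrightarrow}C_{\widetilde{X}}\{\widetilde{p}\,[p'/p_i]/\widetilde{X}\}$ from any $p_i\stackrel{\epsilon}{\Longrightarrow}p'$.

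For the inductive step write $C_{\widetilde{X}}\{\widetilde{p}/\widetilde{X}\}\stackrel{\tau}{\longrightarrow}t(\stackrel{\tau}{\longrightarrow})^k|q$ and apply Lemma~\ref{L:ONE_ACTION_TAU} to the first step. If clause~(1) holds (a context move), there is $C_{\widetilde{X}}'$ with $t\equiv C_{\widetilde{X}}'\{\widetilde{p}/\widetilde{X}\}$ satisfying (C-$\tau$-1,2,3); by (C-$\tau$-3-i) the variable $X_i$ remains $1$-active in $C_{\widetilde{X}}'$, so the induction hypothesis supplies $p'$ with $p_i\stackrel{\epsilon}{\Longrightarrow}|p'$ and $C_{\widetilde{X}}'\{\widetilde{p}/\widetilde{X}\}\stackrel{\epsilon}{\Longrightarrow}C_{\widetilde{X}}'\{\widetilde{p}\,[p'/p_i]/\widetilde{X}\}\stackrel{\epsilon}{\Longrightarrow}|q$. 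I then recover the required factorisation through $C_{\widetilde{X}}$ itself: the first leg $C_{\widetilde{X}}\{\widetilde{p}/\widetilde{X}\}\stackrel{\epsilon}{\Longrightarrow}C_{\widetilde{X}}\{\widetilde{p}\,[p'/p_i]/\widetilde{X}\}$ comes from repeated Lemma~\ref{L:ONE_ACTION_TAU_GF}, and the second leg follows because (C-$\tau$-2) gives $C_{\widetilde{X}}\{\widetilde{p}\,[p'/p_i]/\widetilde{X}\}\stackrel{\tau}{\longrightarrow}C_{\widetilde{X}}'\{\widetilde{p}\,[p'/p_i]/\widetilde{X}\}\stackrel{\epsilon}{\Longrightarrow}|q$.

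If instead clause~(2) holds there are $C_{\widetilde{X}}'$, $C_{\widetilde{X},Z}''$ and an index $i_0$ with $p_{i_0}\stackrel{\tau}{\longrightarrow}p^{*}$ and $t\equiv C_{\widetilde{X},Z}''\{\widetilde{p}/\widetilde{X},p^{*}/Z\}$, and I split on whether $i_0=i$. When $i_0=i$, since $X_i$ is active, (P-$\tau$-1) gives $C_{\widetilde{X}}'\equiv C_{\widetilde{X}}$ and (P-$\tau$-3) gives $C_{\widetilde{X},Z}''\{X_i/Z\}\equiv C_{\widetilde{X}}$ with $Z$ $1$-active; because $X_i$ occurs only once in $C_{\widetilde{X}}$ this forces $X_i\notin FV(C_{\widetilde{X},Z}'')$, whence $t\equiv C_{\widetilde{X}}\{\widetilde{p}\,[p^{*}/p_i]/\widetilde{X}\}$. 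Applying the induction hypothesis to this process yields $p'$ with $p^{*}\stackrel{\epsilon}{\Longrightarrow}|p'$ and the inner factorisation through $C_{\widetilde{X}}$, and prepending $C_{\widetilde{X}}\{\widetilde{p}/\widetilde{X}\}\stackrel{\tau}{\longrightarrow}t$ together with $p_i\stackrel{\tau}{\longrightarrow}p^{*}\stackrel{\epsilon}{\Longrightarrow}|p'$ closes the case. When $i_0\neq i$, I first note that $X_i$ stays $1$-active in $C_{\widetilde{X},Z}''$: it stays $1$-active along $C_{\widetilde{X}}\Rrightarrow C_{\widetilde{X}}'$ (its unique active occurrence is untouched by one-step unfolding, by Lemma~\ref{L:ONE_STEP_UNFOLDING_SYNTAX} and Lemma~\ref{L:ONE_STEP_UNFOLDING_VARIABLE}(1)), and substituting $X_{i_0}/Z$ with $i_0\neq i$ and $Z\notin\widetilde{X}$ does not disturb that occurrence. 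I then apply the induction hypothesis to $t\equiv C_{\widetilde{X},Z}''\{\widetilde{p}/\widetilde{X},p^{*}/Z\}$, viewed over the variables $\widetilde{X},Z$ with substitution $(\widetilde{p},p^{*})$, obtaining $p'$ with $p_i\stackrel{\epsilon}{\Longrightarrow}|p'$ and $C_{\widetilde{X},Z}''\{\widetilde{p}\,[p'/p_i]/\widetilde{X},p^{*}/Z\}\stackrel{\epsilon}{\Longrightarrow}|q$; the factorisation through $C_{\widetilde{X}}$ is recovered from repeated Lemma~\ref{L:ONE_ACTION_TAU_GF} for the first leg and from (P-$\tau$-4), instantiated at $\widetilde{p}\,[p'/p_i]$ whose $i_0$-component is still $p_{i_0}\stackrel{\tau}{\longrightarrow}p^{*}$, for the second.

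I expect the main obstacle to be the bookkeeping of $1$-activeness across the two transformations that Lemma~\ref{L:ONE_ACTION_TAU}~(2) introduces, namely the unfolding $C_{\widetilde{X}}\Rrightarrow C_{\widetilde{X}}'$ and the placeholder renaming $C_{\widetilde{X},Z}''\{X_{i_0}/Z\}\equiv C_{\widetilde{X}}'$. The coarse counting clauses of Lemma~\ref{L:ONE_STEP_UNFOLDING_VARIABLE} only bound the number of unguarded occurrences from above, so to certify that the unique occurrence of $X_i$ remains genuinely active (not merely unfolded and single) I shall argue directly from the syntactic normal form of Lemma~\ref{L:ONE_STEP_UNFOLDING_SYNTAX}: an active occurrence of $X_i$ lies outside every recursive operation and is therefore neither the subterm rewritten by $\Rrightarrow_1$ nor affected in its surrounding operators, so its guardedness status is preserved verbatim. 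Establishing this small invariant cleanly, and keeping the index $i$ separate from the auxiliary placeholder $Z$ throughout, is the delicate part; once it is in hand, every case reduces to one application of the induction hypothesis followed by a routine concatenation of $\tau$-sequences.
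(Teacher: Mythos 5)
Your proposal is correct and follows essentially the same route as the paper's proof: induction on the length of the $\tau$-sequence, dissecting the first step via Lemma~\ref{L:ONE_ACTION_TAU}, splitting clause~(2) on $i_0=i$ versus $i_0\neq i$, and recombining with Lemma~\ref{L:ONE_ACTION_TAU_GF} together with (C-$\tau$-2)/(P-$\tau$-4). The only difference is that you explicitly verify that $X_i$ remains $1$-active in $C_{\widetilde{X},Z}''$ when $i_0\neq i$ (a precondition the paper applies the induction hypothesis against without comment), which is a sound and welcome tightening rather than a deviation.
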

\begin{proof}
  Suppose $C_{\widetilde{X} }\{\widetilde{p}/\widetilde{X} \} (\stackrel{\tau}{\longrightarrow})^n |q$ for some $n \geq 0$.
  We shall prove it by induction on $n$.
  For the inductive base $n=0$, we have $p_i \not\stackrel{\tau}{\longrightarrow}$  by Lemma~\ref{L:ONE_ACTION_TAU_GF}, and hence it holds trivially by taking $p' \equiv p_i$.
  For the inductive step $n =k+1$, suppose $C_{\widetilde{X} }\{\widetilde{p}/\widetilde{X} \} \stackrel{\tau}{\longrightarrow}r(\stackrel{\tau}{\longrightarrow})^k |q$ for some $r$.
    For the transition $C_{\widetilde{X} }\{\widetilde{p}/\widetilde{X} \} \stackrel{\tau}{\longrightarrow}r$, either the clause (1) or (2) in Lemma~\ref{L:ONE_ACTION_TAU} holds.

  For the first alternative, there exists a context $C_{\widetilde{X}}'$ such that
  \begin{enumerate}[({1.}1)]
  \renewcommand{\theenumi}{(1.\arabic{enumi})}
    \item \ $X_i$ is 1-active in $C_{\widetilde{X}}'$ (by (C-$\tau$-3-i)),
    \item \ $r\equiv C_{\widetilde{X}}'\{\widetilde{p}/\widetilde{X}\}$, and
    \item \ $C_{\widetilde{X} }\{\widetilde{s}/\widetilde{X} \} \stackrel{\tau}{\longrightarrow}C_{\widetilde{X} }'\{\widetilde{s}/\widetilde{X} \}$ for any $\widetilde{s}$.
  \end{enumerate}
  By (1.1), we can apply IH for the transition $r \equiv C_{\widetilde{X}}'\{\widetilde{p}/\widetilde{X}\} (\stackrel{\tau}{\longrightarrow})^k |q$, and hence
   there exists $p'$ such that $p_i \stackrel{\epsilon}{\Longrightarrow}|p'$ and $C_{\widetilde{X} }'\{\widetilde{p}/\widetilde{X} \} \stackrel{\epsilon}{\Longrightarrow} C_{\widetilde{X}}'\{\widetilde{p}\,[p'/p_i]/\widetilde{X}\} \stackrel{\epsilon}{\Longrightarrow} |q$.
  Moreover, since $X_i$ is 1-active in $C_{\widetilde{X}}$ and $p_i \stackrel{\epsilon}{\Longrightarrow}|p'$, we have $C_{\widetilde{X} }\{\widetilde{p}/\widetilde{X} \} \stackrel{\epsilon}{\Longrightarrow} C_{\widetilde{X}}\{\widetilde{p}\,[p'/p_i]/\widetilde{X}\}$ by Lemma~\ref{L:ONE_ACTION_TAU_GF}.
  We also have $C_{\widetilde{X}}\{\widetilde{p}\,[p'/p_i]/\widetilde{X}\} \stackrel{\tau}{\longrightarrow}C_{\widetilde{X}}'\{\widetilde{p}\,[p'/p_i]/\widetilde{X}\}$  by (1.3).
  Therefore, $C_{\widetilde{X} }\{\widetilde{p}/\widetilde{X} \} \stackrel{\epsilon}{\Longrightarrow} C_{\widetilde{X}}\{\widetilde{p}\,[p'/p_i]/\widetilde{X}\} \stackrel{\tau}{\longrightarrow}C_{\widetilde{X}}'\{\widetilde{p}\,[p'/p_i]/\widetilde{X}\} \stackrel{\epsilon}{\Longrightarrow} |q$, as desired.

  For the second alternative, there exist $C_{\widetilde{X}}'$, $C_{\widetilde{X},Z}''$ and $i_0 \leq |\widetilde{X}|$ such that
  \begin{enumerate}[({2.}1)]
  \renewcommand{\theenumi}{(2.\arabic{enumi})}
    \item \ $Z$ is 1-active in $C_{\widetilde{X},Z}''$,
    \item \ $r\equiv C_{\widetilde{X},Z}''\{\widetilde{p}/\widetilde{X},p_{i_0}'/Z\}$ for some $p_{i_0}'$ with $p_{i_0}\stackrel{\tau}{\longrightarrow}p_{i_0}'$, and
    \item  \ $C_{\widetilde{X} }\{\widetilde{s}/\widetilde{X}\} \stackrel{\tau}{\longrightarrow}C_{\widetilde{X},Z}''\{\widetilde{s}/\widetilde{X},s'/Z\}$ for any $\widetilde{s}$ and $s'$ with $s_{i_0}\stackrel{\tau}{\longrightarrow}s'$.
  \end{enumerate}
  In case $i_0 = i$, we have $C_{\widetilde{X}}\equiv C_{\widetilde{X}}'$ by (P-$\tau$-1), and hence $r\equiv C_{\widetilde{X}}\{\widetilde{p}\,[p_{i_0}'/p_i]/\widetilde{X}\}$ by (2.2) and (P-$\tau$-3).
  For the transition $r \equiv C_{\widetilde{X}}\{\widetilde{p}\,[p_{i_0}'/p_i]/\widetilde{X}\} (\stackrel{\tau}{\longrightarrow})^k|q$, by IH, there exists $p''$ such that $p_{i_0}'\stackrel{\epsilon}{\Longrightarrow}|p''$ and $C_{\widetilde{X}}\{\widetilde{p}\,[p_{i_0}'/p_i]/\widetilde{X}\} \stackrel{\epsilon}{\Longrightarrow} C_{\widetilde{X}}\{\widetilde{p}\,[p''/p_i]/\widetilde{X}\} \stackrel{\epsilon}{\Longrightarrow}|q$.
  Hence $p_{i_0}\stackrel{\tau}{\longrightarrow}p_{i_0}'\stackrel{\epsilon}{\Longrightarrow}|p''$ and $C_{\widetilde{X}}\{\widetilde{p} /\widetilde{X}\}\stackrel{\tau}{\longrightarrow}C_{\widetilde{X}}\{\widetilde{p}\,[p_{i_0}'/p_i]/\widetilde{X}\} \stackrel{\epsilon}{\Longrightarrow} C_{\widetilde{X}}\{\widetilde{p}\,[p''/p_i]/\widetilde{X}\} \stackrel{\epsilon}{\Longrightarrow}|q$.

  Next we consider the other case $i_0 \neq i$.
  Then for the transition $r\equiv C_{\widetilde{X},Z}''\{\widetilde{p}/\widetilde{X},p_{i_0}'/Z\}(\stackrel{\tau}{\longrightarrow})^k|q$, by IH, there exists $p'$ such that $p_i \stackrel{\epsilon}{\Longrightarrow}|p'$ and \[C_{\widetilde{X},Z}''\{\widetilde{p}/\widetilde{X},p_{i_0}'/Z\} \stackrel{\epsilon}{\Longrightarrow} C_{\widetilde{X},Z}''\{\widetilde{p}\,[p'/p_i]/\widetilde{X},p_{i_0}'/Z\} \stackrel{\epsilon}{\Longrightarrow} |q.\]
  In addition, since $X_i$ is 1-active in $C_{\widetilde{X}}$ and $p_i \stackrel{\epsilon}{\Longrightarrow}|p'$, by Lemma~\ref{L:ONE_ACTION_TAU_GF}, we obtain $C_{\widetilde{X} }\{\widetilde{p}/\widetilde{X} \} \stackrel{\epsilon}{\Longrightarrow} C_{\widetilde{X}}\{\widetilde{p}\,[p'/p_i]/\widetilde{X}\}$.
  Moreover, $C_{\widetilde{X}}\{\widetilde{p}\,[p'/p_i]/\widetilde{X}\} \stackrel{\tau}{\longrightarrow}C_{\widetilde{X},Z}''\{\widetilde{p}\,[p'/p_i]/\widetilde{X},p_{i_0}'/Z\}$  by (2.3).
  Thus
  \[C_{\widetilde{X} }\{\widetilde{p}/\widetilde{X} \} \stackrel{\epsilon}{\Longrightarrow} C_{\widetilde{X}}\{\widetilde{p}\,[p'/p_i]/\widetilde{X}\} \stackrel{\tau}{\longrightarrow} C_{\widetilde{X},Z}''\{\widetilde{p}\,[p'/p_i]/\widetilde{X},p_{i_0}'/Z\} \stackrel{\epsilon}{\Longrightarrow} |q,\]
  as desired.
\end{proof}


\section{Precongruence}

This section intends to establish a fundamental property that $\sqsubseteq_{RS}$ is a precongruence, that is, it is  substitutive w.r.t all operations in $\text{CLL}_R$.
This constitutes one of two main results of this paper.
Its proof is far from trivial and requires a solid effort.
As mentioned in Section 1, a distinguishing feature of LLTS is that it involves consideration of inconsistencies.
It is the inconsistency predicate $F$ that make everything become quite troublesome.
A crucial part in carrying out the proof is that we need to prove that $C_X\{q/X\} \in F$ implies $C_X\{p/X\} \in F$ whenever $p \sqsubseteq_{RS} q$.
Its argument will be divided into two steps.
First, we shall show that, for any stable process $p$, $C_X\{\tau.p/X\} \in F$ iff $C_X\{p/X\} \in F$.
Second, we intend to prove that $C_X\{q/X\} \in F$ implies $C_X\{p/X\} \in F$ whenever $p$ and $q$ are uniform w.r.t stability and  $p \sqsubseteq_{RS} q$.
%

\begin{mydefn}[Uniform w.r.t stability]
Two tuples $\widetilde{p}$ and $\widetilde{q}$ with $|\widetilde{q}| = |\widetilde{p}|$ are uniform w.r.t stability, in symbols 
$\widetilde{p} \bowtie \widetilde{q}$, if they are component-wise uniform w.r.t stability, that is, $p_i$ is stable iff $q_i$ is stable for each $i\leq |\widetilde{p}|$.
\end{mydefn}

An elementary property of this notion is given:

\begin{lemma}\label{L:STABLE_CONTEXT}
 The uniformity w.r.t stability are preserved under substitutions.
        That is, for any $\widetilde{p}$, $\widetilde{q}$ and $C_{\widetilde{X}}$, if $\widetilde{p} \bowtie \widetilde{q}$ then  $C_{\widetilde{X}}\{\widetilde{p}/\widetilde{X}\}  \bowtie C_{\widetilde{X}}\{\widetilde{q}/\widetilde{X}\}$.
\end{lemma}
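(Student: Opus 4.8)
The plan is to unfold the definition of $\bowtie$ for the two single processes in the conclusion: since $C_{\widetilde{X}}\{\widetilde{p}/\widetilde{X}\}$ and $C_{\widetilde{X}}\{\widetilde{q}/\widetilde{X}\}$ are one-element tuples, the statement $C_{\widetilde{X}}\{\widetilde{p}/\widetilde{X}\} \bowtie C_{\widetilde{X}}\{\widetilde{q}/\widetilde{X}\}$ simply asserts that $C_{\widetilde{X}}\{\widetilde{p}/\widetilde{X}\}$ is stable iff $C_{\widetilde{X}}\{\widetilde{q}/\widetilde{X}\}$ is stable, i.e.\ $C_{\widetilde{X}}\{\widetilde{p}/\widetilde{X}\} \stackrel{\tau}{\longrightarrow}$ iff $C_{\widetilde{X}}\{\widetilde{q}/\widetilde{X}\} \stackrel{\tau}{\longrightarrow}$. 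Because $\bowtie$ is symmetric (the hypothesis $\widetilde{p} \bowtie \widetilde{q}$ is preserved under swapping $\widetilde{p}$ and $\widetilde{q}$), it suffices to prove just one implication, namely that $C_{\widetilde{X}}\{\widetilde{p}/\widetilde{X}\} \stackrel{\tau}{\longrightarrow}$ implies $C_{\widetilde{X}}\{\widetilde{q}/\widetilde{X}\} \stackrel{\tau}{\longrightarrow}$; the converse then follows by the same argument with the roles of $\widetilde{p}$ and $\widetilde{q}$ exchanged.

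For the chosen direction I would fix a witness $C_{\widetilde{X}}\{\widetilde{p}/\widetilde{X}\} \stackrel{\tau}{\longrightarrow} r$ and invoke Lemma~\ref{L:ONE_ACTION_TAU}, whose dichotomy classifies this $\tau$-move as either context-driven (clause (1)) or substitution-driven (clause (2)). In the first case there is a context $C_{\widetilde{X}}'$ satisfying \textbf{(C-$\tau$-2)}, which gives $C_{\widetilde{X}}\{\widetilde{q}/\widetilde{X}\} \stackrel{\tau}{\longrightarrow} C_{\widetilde{X}}'\{\widetilde{q}/\widetilde{X}\}$ for \emph{every} substitution, so $C_{\widetilde{X}}\{\widetilde{q}/\widetilde{X}\}$ is unstable regardless of $\widetilde{q}$. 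In the second case there are $C_{\widetilde{X}}'$, $C_{\widetilde{X},Z}''$ and an index $i \leq |\widetilde{X}|$ with $p_i \stackrel{\tau}{\longrightarrow} p'$ (by \textbf{(P-$\tau$-2)}), so $p_i$ is unstable; here the hypothesis $\widetilde{p} \bowtie \widetilde{q}$ enters decisively, yielding that $q_i$ is also unstable, i.e.\ $q_i \stackrel{\tau}{\longrightarrow} q'$ for some $q'$. Then \textbf{(P-$\tau$-4)} applies to $\widetilde{q}$ and gives $C_{\widetilde{X}}\{\widetilde{q}/\widetilde{X}\} \stackrel{\tau}{\longrightarrow} C_{\widetilde{X},Z}''\{\widetilde{q}/\widetilde{X},q'/Z\}$, so again $C_{\widetilde{X}}\{\widetilde{q}/\widetilde{X}\}$ is unstable, which completes the implication.

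I do not expect a genuine obstacle here: essentially all the work has already been absorbed into Lemma~\ref{L:ONE_ACTION_TAU}, which isolates exactly the two mechanisms by which a $\tau$-transition can arise from a substituted context. The only point requiring care — and the reason the notion $\bowtie$ is defined the way it is — is the substitution-driven case: transferring the move to $\widetilde{q}$ needs $q_i$ to be unstable whenever $p_i$ is, and component-wise uniformity w.r.t.\ stability is precisely this condition. An alternative, slightly more packaged route would be to argue through Lemma~\ref{L:STABLE_CONTEXT_I}, which already equates stability of the context $C_{\widetilde{X}}$ with stability of $C_{\widetilde{X}}\{\widetilde{p}/\widetilde{X}\}$ for some $\widetilde{p}$; but since that lemma itself is proved from Lemma~\ref{L:ONE_ACTION_TAU}, the direct case analysis above is cleaner and keeps the role of the uniformity hypothesis fully explicit.
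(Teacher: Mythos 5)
Your proof is correct and follows exactly the route the paper intends: the paper's proof is simply ``Immediately follows from Lemma~\ref{L:ONE_ACTION_TAU}'', and your case analysis (context-driven $\tau$-move via (C-$\tau$-2), substitution-driven $\tau$-move via (P-$\tau$-2)/(P-$\tau$-4) with the $\bowtie$ hypothesis transferring instability of $p_i$ to $q_i$) is precisely the argument being compressed there.
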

\begin{proof}
  Immediately follows from Lemma~\ref{L:ONE_ACTION_TAU}.
\end{proof}

\noindent \textbf{Notation} For convenience, given tuples $\widetilde{p}$ and $\widetilde{q}$, for $R \in \{\sqsubseteq_{RS},\underset{\thicksim}{\sqsubset}_{RS}, \stackrel{\epsilon}{\Longrightarrow}|,\equiv \}$,  the notation $\widetilde{p}R \widetilde{q}$  means that $|\widetilde{p}|=|\widetilde{q}|$ and $p_i R q_i$ for each $i \leq |\widetilde{p}|$.


\begin{lemma}\label{L:SAME_ACTIONS}
  For any $C_{\widetilde{X}}$, $\widetilde{p}$ and $\widetilde{q}$ with $\widetilde{p} \sqsubseteq_{RS} \widetilde{q}$, if $C_{\widetilde{X}}\{\widetilde{p}/\widetilde{X}\}$ and $C_{\widetilde{X}}\{\widetilde{q}/\widetilde{X}\}$ are stable and $C_{\widetilde{X}}\{\widetilde{p}/\widetilde{X}\} \notin F $, then $C_{\widetilde{X}}\{\widetilde{p}/\widetilde{X}\} \stackrel{a}{\longrightarrow} $ iff $C_{\widetilde{X}}\{\widetilde{q}/\widetilde{X}\} \stackrel{a}{\longrightarrow} $ for any $a \in Act$.
\end{lemma}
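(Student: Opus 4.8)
The plan is to prove the two implications of the biconditional separately; both run along the same lines and reduce the question about ready sets of the composites to a question about ready sets of the \emph{responsible} components, where the decisive tool is clause (RS4) of Definition~\ref{D:RS}. Throughout I would analyse a visible transition of a composite by Lemma~\ref{L:ONE_ACTION_VISIBLE}, and transfer stability and consistency from a composite to the process occupying a $1$-active position by combining Lemma~\ref{L:ONE_ACTION_TAU_GF} and Lemma~\ref{L:FAILURE_GF} with the unfolding transfer results Lemma~\ref{L:MULTI_STEP_UNFOLDING_ACTION} and Lemma~\ref{L:MULTI_STEP_UNFOLDING_FAILURE}.

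For the implication from left to right, suppose $C_{\widetilde{X}}\{\widetilde{p}/\widetilde{X}\} \stackrel{a}{\longrightarrow} r$. I would apply Lemma~\ref{L:ONE_ACTION_VISIBLE} to obtain $C_{\widetilde{X}}'$, $C_{\widetilde{X},\widetilde{Y}}'$, $C_{\widetilde{X},\widetilde{Y}}''$ together with indices $i_Y$ and processes $p_Y'$ ($Y\in\widetilde{Y}$) such that $C_{\widetilde{X}}\Rrightarrow C_{\widetilde{X}}'$, each $Y$ is $1$-active in $C_{\widetilde{X},\widetilde{Y}}'$, $p_{i_Y}\stackrel{a}{\longrightarrow}p_Y'$, $C_{\widetilde{X},\widetilde{Y}}'\{\widetilde{X_{i_Y}}/\widetilde{Y}\}\equiv C_{\widetilde{X}}'$ (CP-$a$-3-i), and (CP-$a$-3-iii) holds. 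The key intermediate claim is that every responsible component $p_{i_Y}$ is both stable and consistent. Since by Convention~\ref{C:REC_VAR} the recursion operations unfolded along $C_{\widetilde{X}}\Rrightarrow C_{\widetilde{X}}'$ are unaffected by $\{\widetilde{p}/\widetilde{X}\}$, we have $C_{\widetilde{X}}\{\widetilde{p}/\widetilde{X}\}\Rrightarrow C_{\widetilde{X}}'\{\widetilde{p}/\widetilde{X}\}\equiv C_{\widetilde{X},\widetilde{Y}}'\{\widetilde{p}/\widetilde{X},\widetilde{p_{i_Y}}/\widetilde{Y}\}$. By Lemma~\ref{L:MULTI_STEP_UNFOLDING_ACTION}(2) the right-hand side inherits stability from $C_{\widetilde{X}}\{\widetilde{p}/\widetilde{X}\}$, and treating $C_{\widetilde{X},\widetilde{Y}}'$ as a single-hole context in the placeholder $Y$ (with the remaining variables filled by the corresponding components), $p_{i_Y}$ must be stable, since otherwise Lemma~\ref{L:ONE_ACTION_TAU_GF} would force a $\tau$-transition of the composite. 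Consistency follows because $C_{\widetilde{X}}\{\widetilde{p}/\widetilde{X}\}\notin F$ gives $C_{\widetilde{X}}'\{\widetilde{p}/\widetilde{X}\}\notin F$ by Lemma~\ref{L:MULTI_STEP_UNFOLDING_FAILURE}, whence $p_{i_Y}\notin F$ by Lemma~\ref{L:FAILURE_GF}. The same stability argument, now using the hypothesis that $C_{\widetilde{X}}\{\widetilde{q}/\widetilde{X}\}$ is stable, shows each $q_{i_Y}$ is stable.

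Once these facts are in place, I would conclude $a\in\mathcal{I}(q_{i_Y})$ as follows. Since $p_{i_Y}$ is stable and consistent, $p_{i_Y}\stackrel{\epsilon}{\Longrightarrow}_F| p_{i_Y}$; as $p_{i_Y}\sqsubseteq_{RS}q_{i_Y}$, there is $q'$ with $q_{i_Y}\stackrel{\epsilon}{\Longrightarrow}_F| q'$ and $p_{i_Y}\underset{\thicksim}{\sqsubset}_{RS}q'$, and stability of $q_{i_Y}$ forces $q'\equiv q_{i_Y}$ with $q_{i_Y}\notin F$. Then (RS4) yields $\mathcal{I}(p_{i_Y})=\mathcal{I}(q_{i_Y})$, so from $p_{i_Y}\stackrel{a}{\longrightarrow}$ we obtain $q_{i_Y}\stackrel{a}{\longrightarrow}q_Y'$ for some $q_Y'$. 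Feeding the tuple $\widetilde{q}$ and the fillers $\widetilde{q_Y'}$ into (CP-$a$-3-iii), and using that $C_{\widetilde{X}}\{\widetilde{q}/\widetilde{X}\}$ is stable, gives $C_{\widetilde{X}}\{\widetilde{q}/\widetilde{X}\}\stackrel{a}{\longrightarrow}C_{\widetilde{X},\widetilde{Y}}''\{\widetilde{q}/\widetilde{X},\widetilde{q_Y'}/\widetilde{Y}\}$, as required; the degenerate case $\widetilde{Y}=\emptyset$ (a transition performed purely by the context) is handled directly by (CP-$a$-3-iii). The converse implication is entirely dual: I would decompose $C_{\widetilde{X}}\{\widetilde{q}/\widetilde{X}\}\stackrel{a}{\longrightarrow}r$, note that the responsible $q_{i_Y}$ and the corresponding $p_{i_Y}$ are stable with $p_{i_Y}\notin F$ (again invoking $C_{\widetilde{X}}\{\widetilde{p}/\widetilde{X}\}\notin F$), apply (RS4) in the direction $a\in\mathcal{I}(q_{i_Y})=\mathcal{I}(p_{i_Y})$, and lift back along (CP-$a$-3-iii) with its universally quantified tuple instantiated at $\widetilde{p}$.

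I expect the main obstacle to be the intermediate claim of the second paragraph: correctly isolating, from the purely syntactic decomposition supplied by Lemma~\ref{L:ONE_ACTION_VISIBLE}, the component $p_{i_Y}$ that actually performs the visible action, and then establishing that it must be stable and consistent. This is where the $1$-activeness of the placeholders $\widetilde{Y}$, the commutation of unfolding with substitution, and the transfer lemmas for $\Rrightarrow$ must be marshalled together, with due care for the possibility that several placeholders share an index. The subsequent appeal to (RS4) --- which collapses the weak refinement $\sqsubseteq_{RS}$ to an exact equality of ready sets precisely at stable consistent states --- is the conceptual crux, whereas the final lifting through (CP-$a$-3-iii) is routine.
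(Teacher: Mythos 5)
Your proposal is correct and takes essentially the same route as the paper's own proof: decompose the visible transition with Lemma~\ref{L:ONE_ACTION_VISIBLE}, establish stability and consistency of the responsible components $p_{i_Y}$, $q_{i_Y}$ via Lemmas~\ref{L:MULTI_STEP_UNFOLDING_ACTION}, \ref{L:MULTI_STEP_UNFOLDING_FAILURE}, \ref{L:ONE_ACTION_TAU_GF} and \ref{L:FAILURE_GF}, collapse $\sqsubseteq_{RS}$ to $\underset{\thicksim}{\sqsubset}_{RS}$ at these stable consistent states so that (RS4) gives equality of ready sets, and lift back through (CP-$a$-3-iii). The only differences are presentational: the paper writes out the right-to-left implication while you write out left-to-right (each declaring the other symmetric), and you spell out the $p_{i_Y}\stackrel{\epsilon}{\Longrightarrow}_F|p_{i_Y}$ argument for $p_{i_Y}\underset{\thicksim}{\sqsubset}_{RS}q_{i_Y}$ that the paper compresses into one clause.
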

\begin{proof}
We give the proof only for the implication from right to left, the same argument applies to the other implication.
Assume $C_{\widetilde{X}}\{\widetilde{q}/\widetilde{X}\} \stackrel{a}{\longrightarrow}  q'$.
  Then  there exist  $C_{\widetilde{X}}'$, $C_{\widetilde{X},\widetilde{Y}}'$ and $C_{\widetilde{X},\widetilde{Y}}''$ with $\widetilde{X} \cap \widetilde{Y} = \emptyset$  that satisfy (CP-$a$-1) -- (CP-$a$-4) in Lemma~\ref{L:ONE_ACTION_VISIBLE}.
        Hence, due to (CP-$a$-1) and (CP-$a$-3-i), there exist $i_Y \leq |\widetilde{X}|(Y \in \widetilde{Y})$ such that for any $\widetilde{r}$ with $|\widetilde{r}|=|\widetilde{X}|$
        \[C_{\widetilde{X}}\{\widetilde{r}/\widetilde{X}\}  \Rrightarrow C_{\widetilde{X}}' \{\widetilde{r}/\widetilde{X}\} \equiv C_{\widetilde{X},\widetilde{Y}}' \{\widetilde{r}/\widetilde{X},\widetilde{r_{i_Y}}/\widetilde{Y}\} .\tag{\ref{L:SAME_ACTIONS}.1}\]
        In particular, by Lemma~\ref{L:MULTI_STEP_UNFOLDING_ACTION}, it follows from $C_{\widetilde{X}}\{\widetilde{p}/\widetilde{X}\} \not\stackrel{\tau}{\longrightarrow}$ and  $C_{\widetilde{X}}\{\widetilde{q}/\widetilde{X}\} \not\stackrel{\tau}{\longrightarrow}$ that both  $C_{\widetilde{X},\widetilde{Y}}' \{\widetilde{p}/\widetilde{X},\widetilde{p_{i_Y}}/\widetilde{Y}\}$ and $C_{\widetilde{X},\widetilde{Y}}' \{\widetilde{q}/\widetilde{X},\widetilde{q_{i_Y}}/\widetilde{Y}\}$ are stable.
        Then, for each $Y \in \widetilde{Y}$, both $p_{i_Y}$  and $q_{i_Y}$ are stable by  Lemma~\ref{L:ONE_ACTION_TAU_GF} and (CP-$a$-2).
        Moreover, by (\ref{L:SAME_ACTIONS}.1) with $\widetilde{r} \equiv \widetilde{p}$ and Lemma~\ref{L:MULTI_STEP_UNFOLDING_FAILURE} and \ref{L:FAILURE_GF}, we have $p_{i_Y} \notin F $ for each $Y \in \widetilde{Y}$ due to $C_{\widetilde{X}}\{\widetilde{p}/\widetilde{X}\} \notin F $.
        Therefore, for each $Y \in \widetilde{Y}$, it follows from $\widetilde{p} \sqsubseteq_{RS} \widetilde{q}$ that $p_{i_Y} \underset{\thicksim}{\sqsubset}_{RS} q_{i_Y}$, and ${\mathcal I}(p_{i_Y})= {\mathcal I}(q_{i_Y})$ because of $p_{i_Y} \notin F$.
        Hence $C_{\widetilde{X}}\{\widetilde{p}/\widetilde{X}\} \stackrel{a}{\longrightarrow}$ by (CP-$a$-3-iii).
\end{proof}

In the following, we intend to show that, for any stable $p$, $C_X\{p/X\}$ and $C_X\{\tau.p/X\}$ are undifferentiated w.r.t consistency, which falls naturally into two parts: Lemmas~\ref{L:FAILURE_S_VS_NS} and \ref{L:FAILURE_NS_IMPLIES_S}.

\begin{lemma}\label{L:FAILURE_S_VS_NS}
  For any $C_X$ and stable $p$, $C_X\{p/X\}\notin F $ implies $C_X\{\tau.p/X\}\notin F $.
\end{lemma}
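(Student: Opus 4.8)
The plan is to prove the statement in a constructive form: instead of transferring inconsistency from $C_X\{\tau.p/X\}$ to $C_X\{p/X\}$, I would \emph{exhibit} a consistent stable $\tau$-descendant of $C_X\{\tau.p/X\}$. Two observations set this up. First, by Lemma~\ref{L:F_NORMAL}(7) it suffices to find a stable $q$ with $C_X\{\tau.p/X\}\stackrel{\epsilon}{\Longrightarrow}|q$ and $q\notin F$, since consistency is inherited backward along $\tau$: if $s\stackrel{\epsilon}{\Longrightarrow}q$ and $q\notin F$ then $s\notin F$ (the contrapositive of iterating Lemma~\ref{L:FAILURE_TAU_I} along the $\tau$-path, each intermediate state being unstable). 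Second, from $C_X\{p/X\}\notin F$ and Lemma~\ref{L:F_NORMAL}(7) I obtain a stable $r$ with $C_X\{p/X\}\stackrel{\epsilon}{\Longrightarrow}|r$ and $r\notin F$. Crucially, since $p$ is stable, clause (2) of Lemma~\ref{L:ONE_ACTION_TAU} never applies along this path (it would require $p\stackrel{\tau}{\longrightarrow}$), so \emph{every} $\tau$-step here is a pure context step.

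Next I would turn this path into a stable context. Invoking Lemma~\ref{L:TAU_ACTION_NORMALIZATION} yields a stable context $D_X$ with $C_X\{\tau.p/X\}\stackrel{\epsilon}{\Longrightarrow}D_X\{\tau.p/X\}$ and $D_X\{p/X\}\notin F$ (here $D_X\{p/X\}$ is itself stable, as both $D_X$ and $p$ are stable by the criterion of Lemma~\ref{L:STABLE_CONTEXT_I}, so it is the relevant stable descendant $\bar r$, consistent by Lemma~\ref{L:MULTI_STEP_UNFOLDING_FAILURE}). Using Lemma~\ref{L:STABILIZATION_PRE} I may further replace $D_X$ by a multi-step unfolding in which every unguarded occurrence of $X$ is unfolded, hence active, without changing the $F$-status under either substitution (Lemma~\ref{L:MULTI_STEP_UNFOLDING_FAILURE}). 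Firing the finitely many active occurrences of $\tau.p$ one at a time (Lemma~\ref{L:ONE_ACTION_TAU_GF}, or in bulk Lemma~\ref{L:MULTI_TAU_ACTIVE_STABLE}) drives $D_X\{\tau.p/X\}$ to a stable process $q^{*}$ that agrees with $\bar r\equiv D_X\{p/X\}$ except that the \emph{guarded} occurrences of $X$ now carry $\tau.p$ rather than $p$. Thus $C_X\{\tau.p/X\}\stackrel{\epsilon}{\Longrightarrow}|q^{*}$, and, baking the active positions (now holding the consistent stable subterm $p$) into the context, I may write $q^{*}\equiv E_X\{\tau.p/X\}$ and $\bar r\equiv E_X\{p/X\}$ for a common context $E_X$ in which $X$ is guarded. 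The whole problem then reduces to the guarded core: $E_X\{p/X\}\notin F$ implies $E_X\{\tau.p/X\}\notin F$.

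For this guarded core I would exploit that $E_X\{\tau.p/X\}$ and $E_X\{p/X\}$ are both stable with identical ready sets, and that each of their visible $a$-derivatives is produced by the context through a common residual $B_X$ (Lemma~\ref{L:ONE_ACTION_VISIBLE_GUARDED}), so that the two $F$-inferences are governed by the same rules; the recursion cases are reconciled through the unfolding-invariance of $F$ (Lemma~\ref{L:ONE_STEP_UNFOLDING_FAILURE}), and the descent bottoms out at a bare occurrence of $X$, where Lemma~\ref{L:F_NORMAL}(2) supplies $\tau.p\in F \Leftrightarrow p\in F$ directly.

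\textbf{Main obstacle.} The guarded core is the genuine difficulty, and it resists a naive structural or inference-depth induction: a guarded $X$ becomes \emph{active} after a single visible action, and an interleaved recursion can reproduce a context of undiminished size, so no obvious measure decreases. The technical heart is to show that the extra $\tau$-capability of $\tau.p$ (absent in the stable $p$) is always harmless, i.e.\ every stable $\tau$-descendant reached after a $\tau.p$ is exposed can be matched, via the canonical evolution paths of Lemma~\ref{L:TAU_ACTION_NORMALIZATION} and Lemma~\ref{L:MULTI_TAU_ACTIVE_STABLE}, with a corresponding stable descendant on the $p$-side of the same consistency status. It is precisely this matching that both justifies the ``baking-in'' step above and keeps the residual comparison strictly guarded, allowing the argument to close.
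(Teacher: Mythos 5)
Your opening reductions are sound: by Lemma~\ref{L:F_NORMAL}(7) together with iterated Lemma~\ref{L:FAILURE_TAU_I} it is indeed enough to exhibit one consistent stable $\tau$-descendant of $C_X\{\tau.p/X\}$, and Lemmas~\ref{L:TAU_ACTION_NORMALIZATION}, \ref{L:STABILIZATION_PRE}, \ref{L:ONE_ACTION_TAU_GF} and \ref{L:MULTI_STEP_UNFOLDING_FAILURE} do let you manufacture a candidate $q^{*}\equiv E_X\{\tau.p/X\}$ whose companion $E_X\{p/X\}$ is consistent, with $X$ guarded in $E_X$ (this part parallels what the paper does inside its recursion and conjunction cases). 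But the argument stops exactly where the lemma actually lives: the consistency of $q^{*}$ \emph{is} your ``guarded core'' claim, and that claim is no easier than the lemma itself, because — as you yourself observe — one visible action (the premises of Rule $Rp_{12}$) strips guardedness, so the restricted statement immediately regenerates the unrestricted one and no structural or depth measure decreases. The ``matching'' property you invoke in the final paragraph to close the loop is precisely a path-wise restatement of the lemma being proved; deferring to it makes the proposal circular. So there is a genuine gap: you have relocated the difficulty, not resolved it.

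The missing device is the one the paper uses: do not restrict the class of contexts at all, and instead run an infinite-descent argument on the proof trees of the predicate $F$. Concretely, the paper sets $\Omega=\{B_X\{\tau.p/X\}\,:\,B_X\{p/X\}\notin F,\ B_X\ \text{a context}\}$ — arbitrary contexts, guarded or not — and proves a single statement: every proof tree of $tF$ with $t\in\Omega$ contains a \emph{proper} subtree whose root is $sF$ for some $s\in\Omega$. Since proof trees are well-founded (Def.~\ref{D:PROOF}), no element of $\Omega$ can be inconsistent, and the lemma follows. This sidesteps your ``no obvious measure decreases'' obstacle because the decreasing quantity is the $F$-proof tree, not the context: when conjunction or recursion regenerates a context of undiminished size, it does so strictly inside the proof tree. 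Your normalization tools (Lemmas~\ref{L:TAU_ACTION_NORMALIZATION}, \ref{L:MULTI_TAU_GF_STABLE}, \ref{L:SAME_ACTIONS}, \ref{L:MULTI_STEP_UNFOLDING_FAILURE}) are exactly the ones the paper deploys, case by case on the last rule applied, to locate that proper subtree; but without the well-founded-proof-tree frame they do not assemble into a proof.
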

\begin{proof}
Let $p$ be any stable process. Set
  \[
        \Omega \triangleq \{B_{X}\{\tau.p/X\}:B_{X}\{p/X\}\notin F\;\text{and}\;B_X\;\text{is a context} \}.
  \]
  Similar to Lemma~\ref{L:RS_CON}, it suffices to prove that for any $t \in \Omega$, each proof tree of $tF$ has a proper subtree with root $sF$ for some $s \in \Omega$.
    Suppose that $C_X\{\tau.p/X\} \in \Omega$ and $\mathcal T$ is any proof tree of $Strip(\mathcal{P}_{\text{CLL}_R},M_{\text{CLL}_R}) \vdash  C_X\{\tau.p/X\}F$.
Hence $C_X\{p/X\} \notin F$.
    We distinguish six cases based on the form of $C_X$.\\

  \noindent Case 1 $C_{X}$ is   closed  or $C_{X} \equiv X$.

  In this situation, it is easy to see that  $C_{X}\{\tau.p/X\} \notin  F $.
  Hence there is no proof tree of $C_{X}\{\tau.p/X\}   F $.
  Thus the conclusion holds trivially.\\

%

  \noindent Case 2 $C_{X} \equiv \alpha.B_{X}$.

  Then the last rule applied in $\mathcal T$ is $\frac{B_{X}\{\tau.p/X\}F}{\alpha.B_{X}\{\tau.p/X\}F}$.
  Since $C_{X}\{p/X\} \notin F $, we get $B_{X}\{p/X\} \notin F $.
  Hence  $B_{X}\{\tau.p/X\} \in \Omega$; moreover, the node directly above the root of $\mathcal T$ is labelled with  $B_{X}\{\tau.p/X\}F$, as desired.\\

  \noindent Case 3 $C_{X} \equiv B_{X} \vee D_{X}$.

    Clearly, the last rule applied in $\mathcal T$ is $\frac{B_{X}\{\tau.p/X\}F,D_{X}\{\tau.p/X\}F}{B_{X}\{\tau.p/X\} \vee D_{X}\{\tau.p/X\}F}$.
     Since $C_{X}\{p/X\} \notin F $,   either $B_{X}\{p/X\} \notin F $ or $D_{X}\{p/X\} \notin F $.
     W.l.o.g, assume $B_{X}\{p/X\} \notin F $.
     Then $B_{X}\{\tau.p/X\}\in \Omega$.
     Moreover, it is obvious that $\mathcal T$ has a proper subtree with root $B_{X}\{\tau.p/X\}F$.\\

  \noindent Case 4 $C_{X} \equiv B_{X} \odot D_{X}$ with $\odot \in \{\Box, \parallel_A\}$.

  W.l.o.g, assume the last rule applied in $\mathcal T$ is $\frac{B_{X}\{\tau.p/X\}F}{B_{X}\{\tau.p/X\} \odot D_{X}\{\tau.p/X\}F}$.
  It is evident that $B_{X}\{p/X\} \notin F $ due to $C_{X}\{p/X\}  \notin F $.
  Hence $B_{X}\{\tau.p/X\} \in \Omega$, as desired.\\

  \noindent Case 5 $C_{X} \equiv \langle Y|E \rangle$.

   Then the last rule applied in $\mathcal T$ is
   \[\text{either}\; \frac{\langle t_Y|E \rangle \{\tau.p/X\}F}{\langle Y|E \rangle \{\tau.p/X\}F}\;\text{with}\; Y = t_Y \in E\;\text{or}\; \frac{\{r F:\langle Y|E \rangle \{\tau.p/X\} \stackrel{\epsilon}{\Longrightarrow}|r\}}{\langle Y|E \rangle \{\tau.p/X\} F}.\]

  For the first alternative, since $C_{X}\{p/X\} \equiv \langle Y|E \rangle \{p/X\} \notin F $, by Lemma~\ref{L:F_NORMAL}(8), we get $\langle t_Y|E \rangle \{p/X\} \notin F $.
  Hence $\langle t_Y|E \rangle \{\tau.p/X\} \in \Omega$.

  For the second alternative, since $C_{X}\{p/X\}   \notin F $, we get  $C_X \{p/X\} \stackrel{\epsilon}{\Longrightarrow}_F|q$ for some $q$.
  Moreover, by Lemma~\ref{L:MULTI_TAU_GF_STABLE}, it follows from $p \not\stackrel{\tau}{\longrightarrow}$ that there exists a stable context $C_X'$ such that
\[q \equiv C_{X}'\{p/X\}\;\text{and}\;C_X\{\tau.p/X\} \stackrel{\epsilon}{\Longrightarrow}  C_{X}'\{\tau.p/X\}. \tag{\ref{L:FAILURE_S_VS_NS}.1}\]
 Further, by Lemma~\ref{L:STABILIZATION} and $\tau.p \stackrel{\tau}{\longrightarrow} |p$, we get \[C_{X}'\{\tau.p/X\}\stackrel{\epsilon}{\Longrightarrow} |s\;\text{for some}\; s.\tag{\ref{L:FAILURE_S_VS_NS}.2}\]
  For the above transition, by Lemma~\ref{L:MULTI_TAU_GF_STABLE} again, there exists $C_{X,\widetilde{Z}}''$ with $X \notin \widetilde{Z}$  such that
\[s \equiv C_{X,\widetilde{Z}}''\{\tau.p/X,p/\widetilde{Z}\}\;\text{and}\;C_{X}'\{p/X\}  \Rrightarrow  C_{X,\widetilde{Z}}''\{p/X,p/\widetilde{Z}\}.\]
  Thus, by Lemma~\ref{L:MULTI_STEP_UNFOLDING_FAILURE}, we have $C_{X,\widetilde{Z}}''\{p/X,p/\widetilde{Z}\} \notin F $ because of $q \equiv C_{X}'\{p/X\} \notin F $.
  Set
  \[C_X''' \triangleq C_{X,\widetilde{Z}}''\{p/\widetilde{Z}\}.\]
  Then it follows from  $C_{X}'''\{p/X\} \equiv C_{X,\widetilde{Z}}''\{p/X,p/\widetilde{Z}\}\notin F  $ that $s \equiv C_{X}'''\{\tau.p/X\} \in \Omega$.
Moreover, $\mathcal T$ contains a proper subtree with root $sF$ due to (\ref{L:FAILURE_S_VS_NS}.1) and (\ref{L:FAILURE_S_VS_NS}.2).\\

  \noindent Case 6 $C_{X} \equiv B_{X} \wedge D_{X}$.

  Clearly, the last rule applied in $\mathcal T$ has one of the following formats.\\

 \noindent Case 6.1 $\frac{B_{X}\{\tau.p/X\}F}{B_{X}\{\tau.p/X\} \wedge  D_{X}\{\tau.p/X\}F}$.

  Similar to Case 4, omitted.\\

 \noindent Case 6.2 $\frac{B_{X}\{\tau.p/X\} \stackrel{a}{\longrightarrow} r,D_{X}\{\tau.p/X\} \not\stackrel{a}{\longrightarrow}, B_{X}\{\tau.p/X\}  \wedge D_{X}\{\tau.p/X\}  \not\stackrel{\tau}{\longrightarrow}}{B_{X}\{\tau.p/X\}  \wedge D_{X}\{\tau.p/X\}  F}$.

%

 In this situation,
 $B_X\{\tau.p/X\}$, $C_X  $ and $B_X$ are stable.
 Moreover, since $p$ is stable, so is $B_{X}\{p/X\}$.
 Due to $C_X\{p/X\} \notin F $, we obtain $B_X\{p/X\} \notin F $.
 Then, by Lemma~\ref{L:SAME_ACTIONS}, it follows from $p =_{RS} \tau.p$ and $B_{X}\{\tau.p/X\} \stackrel{a}{\longrightarrow} $ that
 \[B_{X}\{p/X\} \stackrel{a}{\longrightarrow}.\tag{\ref{L:FAILURE_S_VS_NS}.3}\]
 Similarly, it follows from $D_X\{\tau.p/X\} \not\stackrel{a}{\longrightarrow}$ that
 \[D_{X}\{p/X\} \not\stackrel{a}{\longrightarrow}.\tag{\ref{L:FAILURE_S_VS_NS}.4}\]
 In addition, since $B_{X} \wedge D_{X}$ and $p$ are stable, so is $B_{X}\{p/X\}  \wedge D_{X}\{p/X\}$.
 So, by (\ref{L:FAILURE_S_VS_NS}.3) and (\ref{L:FAILURE_S_VS_NS}.4), we get $C_{X}\{p/X\}\equiv B_{X}\{p/X\}  \wedge D_{X}\{p/X\} \in F $ by Rule $Rp_{10}$, which contradicts that $C_X\{\tau.p/X\} \in \Omega$.
 Hence this case is impossible. \\

 \noindent Case 6.3 $\frac{C_{X}\{\tau.p/X\}  \stackrel{\alpha}{\longrightarrow}s,\{rF:C_{X}\{\tau.p/X\}  \stackrel{\alpha}{\longrightarrow}r\}}{C_{X}\{\tau.p/X\}F}$.

 The argument splits into two cases based on $\alpha$.\\

\noindent Case 6.3.1 $\alpha = \tau$.

We distinguish two cases depending on whether $C_{X}$ is stable.\\

\noindent Case 6.3.1.1 $C_{X}$ is not stable.

 Since $C_{X}\{p/X\} \notin F $, we have $C_{X}\{p/X\} \stackrel{\epsilon}{\Longrightarrow}_F|p'$ for some $p'$.
 Moreover, by Lemma~\ref{L:TAU_ACTION_NORMALIZATION},  there exist $p''$ and stable $C_X^*$ such that
\[C_{X}\{p/X\} \stackrel{\epsilon}{\Longrightarrow}  C_{X}^*\{p/X\}  \stackrel{\epsilon}{\Longrightarrow}  |p''  \Rrightarrow  p'\]
and
\[C_{X}\{t/X\} \stackrel{\epsilon}{\Longrightarrow}  C_{X}^*\{t/X\}\;\text{for any}\; t.\]
 Further, since $C_{X}$ is not stable and $p \not \stackrel{\tau}{\longrightarrow}$, by Lemma~\ref{L:ONE_ACTION_TAU}, there exists $C_X'$ such that
 \[C_{X}\{p/X\} \stackrel{\tau}{\longrightarrow}  C_{X}'\{p/X\}\stackrel{\epsilon}{\Longrightarrow}  C_X^*\{p/X\} \;\text{and}\; C_{X}\{\tau.p/X\} \stackrel{\tau}{\longrightarrow}  C_{X}'\{\tau.p/X\}.\]
 Since $p' \notin F $ and $p'' \Rrightarrow p'$, by Lemma~\ref{L:MULTI_STEP_UNFOLDING_FAILURE}, we get $p'' \notin F $.
 Together with the transitions $C_{X}'\{p/X\} \stackrel{\epsilon}{\Longrightarrow} C_{X}^*\{p/X\}  \stackrel{\epsilon}{\Longrightarrow} |p''$, by Lemma~\ref{L:FAILURE_TAU_I}, this implies $C_{X}'\{p/X\} \notin F$.
 Hence $C_{X}'\{\tau.p/X\} \in \Omega$, and $\mathcal T$ has a proper subtree with root $C_{X}'\{\tau.p/X\}F$.\\

\noindent Case 6.3.1.2 $C_{X}$ is stable.

 Due to $C_{X}\{\tau.p/X\}\stackrel{\tau}{\longrightarrow}s$, either the clause (1) or (2) in  Lemma~\ref{L:ONE_ACTION_TAU} holds.
 Since $C_X$ is stable, by (C-$\tau$-2) in Lemma~\ref{L:ONE_ACTION_TAU}, it is easy to see that the clause (1) does not hold, and hence the clause (2) holds, that is, there exists   $C_{X,Z}'$ with $X \neq Z$  such that
 \[C_X\{\tau.p/X\} \stackrel{\tau}{\longrightarrow} C_{X,Z}'\{\tau.p/X,p/Z\}\;\text{and}\;C_X\{p/X\}  \Rrightarrow  C_{X,Z}'\{p/X,p/Z\}.\]
 Set \[C_X'' \triangleq C_{X,Z}'\{p/Z\}.\]
 Hence $\mathcal T$ has a proper subtree with root $C_{X}''\{\tau.p/X\}F$.
Moreover, by Lemma~\ref{L:MULTI_STEP_UNFOLDING_FAILURE}, it follows from $C_X\{p/X\} \notin F $ that $C_{X,Z}'\{p/X,p/Z\} \notin F $.
 Thus $C_X''\{\tau.p/X\} \equiv C_{X,Z}'\{\tau.p/X,p/Z\} \in \Omega$, as desired.\\

\noindent Case 6.3.2 $\alpha \in Act$.

    Then it is not difficult to know that both $C_X$ and $C_X\{p/X\}$ are stable.
 Moreover, since $C_X\{\tau.p/X\}\stackrel{\alpha}{\longrightarrow}$, $\tau.p =_{RS} p$ and $C_X\{p/X\} \notin F$, by Lemma~\ref{L:SAME_ACTIONS}, we get $C_X\{p/X\}\stackrel{\alpha}{\longrightarrow}$.
 Further, by Theorem~\ref{L:LLTS}, it follows from $C_X\{p/X\} \notin F $ that $C_X\{p/X\}\stackrel{\alpha}{\longrightarrow}_F q$ for some $q$.
 For such $\alpha$-labelled transition, by Lemma~\ref{L:ONE_ACTION_VISIBLE}, there exist $C_X'$, $C_{X,\widetilde{Z}}'$ and $C_{X,\widetilde{Z}}''$ with $X \notin \widetilde{Z}$  that realize (CP-$a$-1) -- (CP-$a$-4).

In order to complete the proof, we intend to prove that $\widetilde{Z} = \emptyset$.
        On the contrary, suppose $\widetilde{Z} \neq \emptyset$.
        Then, by (CP-$a$-2) and (CP-$a$-3-i), there exists an active occurrence of the variable $X$ in $C_X'$.
        So, by Lemma~\ref{L:ONE_ACTION_TAU_GF}, $C_X'\{\tau.p/X\} \stackrel{\tau}{\longrightarrow} $.
        Then, by Lemma~\ref{L:MULTI_STEP_UNFOLDING_ACTION}, it follows from $C_X\{\tau.p/X\}  \Rrightarrow C_X' \{\tau.p /X\}$ (i.e., (CP-$a$-1)) that $C_X\{\tau.p/X\} \stackrel{\tau}{\longrightarrow} $, which contradicts $C_X\{\tau.p/X\} \stackrel{\alpha}{\longrightarrow} $.

        Thus $\widetilde{Z} = \emptyset$, and hence $q \equiv C_{X,\widetilde{Z}}''\{p/X\}$ by (CP-$a$-3-ii).
        Since $C_X\{\tau.p/X\}$ is stable, by (CP-$a$-3-iii), we get $C_X\{\tau.p/X\}\stackrel{\alpha}{\longrightarrow} C_{X,\widetilde{Z}}''\{\tau.p/X\} $.
        Thus, $\mathcal T$ contains a proper subtree with root $C_{X,\widetilde{Z}}''\{\tau.p/X\}F$;
        moreover, $C_{X,\widetilde{Z}}''\{\tau.p/X\} \in \Omega$ due to $C_{X,\widetilde{Z}}''\{p/X\} \equiv q \notin F$.\\

 \noindent Case 6.4 $\frac{\{rF: B_{X}\{\tau.p/X\}  \wedge D_{X}\{\tau.p/X\} \stackrel{\epsilon}{\Longrightarrow}|r\}}{B_{X}\{\tau.p/X\}  \wedge D_{X}\{\tau.p/X\} F}$.

 Analogous to the second alternative in Case 5, omitted.
\end{proof}

In order to show the converse of the above result, the preliminary result below is given.
Here, for any finite set $S$ of processes, by virtue of the commutative and associative laws of external choice \cite{Zhang11}, we may introduce the notation of a generalized external choice (denoted by $\underset{p\in S}{\square}p$) by the standard method.

\begin{lemma}\label{L:MPLACE_HOLDER}
  Let $t_1,t_2$ be two terms and $\{X\} \cup \widetilde{Z}$ a tuple of variables such that none of recursive variable occurring in $t_i$(with $i=1,2$) is in $\{X\} \cup \widetilde{Z}$.
 Suppose that $Z$ is active in $t_1,t_2$ for each $Z \in \widetilde{Z}$ and
  \[T \triangleq
    \left\{
    \begin{array}{ll}
      \underset{Z \in \widetilde{Z}}{\square}\alpha.a_Z.0  &\text{if}\; \widetilde{Z} \neq \emptyset\\
      &\\
      a_X.0 & \text{otherwise}
    \end{array}
  \right.\]
where $a_X$ and $\widetilde{a_Z}$ are distinct fresh visible actions and $\alpha \in Act$. Then
  \begin{enumerate}[(1)]
    \renewcommand{\theenumi}{(\arabic{enumi})}
    \item if $t_1\{T/X,\widetilde{a_Z.0}/\widetilde{Z}\} \equiv t_2\{T/X,\widetilde{a_Z.0}/\widetilde{Z}\} $ then $t_1\{p/X,\widetilde{q}/\widetilde{Z}\} \equiv t_2\{p/X,\widetilde{q}/\widetilde{Z}\}$ for any $p$ and $\widetilde{q}$;
    \item if $t_1\{T/X,\widetilde{a_Z.0}/\widetilde{Z}\} \Rrightarrow_1 t_2\{T/X,\widetilde{a_Z.0}/\widetilde{Z}\} $ then $t_1\{p/X,\widetilde{q}/\widetilde{Z}\} \Rrightarrow_1 t_2\{p/X,\widetilde{q}/\widetilde{Z}\}$ for any $p$ and $\widetilde{q}$.
  \end{enumerate}
\end{lemma}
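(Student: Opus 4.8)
The plan is to mirror the proof of Lemma~\ref{L:PLACE_HOLDER}, proving item~(1) by induction on the structure of $t_1$ and then deriving item~(2) from item~(1) through the syntactic description of one-step unfolding. Two features distinguish the present situation from Lemma~\ref{L:PLACE_HOLDER}: the place-holder $T$ substituted for $X$ is now a compound term (a generalised external choice) rather than a single prefix, and the auxiliary variables $\widetilde{Z}$ are assumed \emph{active} in $t_1,t_2$. Throughout I abbreviate $\theta \triangleq \{T/X,\widetilde{a_Z.0}/\widetilde{Z}\}$ and $\sigma \triangleq \{p/X,\widetilde{q}/\widetilde{Z}\}$, and I exploit repeatedly that $a_X$ and the $\widetilde{a_Z}$ are fresh and that neither $T$ nor any $a_Z.0$ contains a recursive operation.

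For item~(1) I would argue by induction on $t_1$, in each case using $t_1\theta \equiv t_2\theta$ to pin down the top-level shape of $t_2$ and then appeal to the induction hypothesis on the immediate subterms. In the variable cases the freshness of the tagging actions does the work: $t_1\equiv X$ forces $t_2\equiv X$ (so $t_1\sigma\equiv p\equiv t_2\sigma$), $t_1\equiv Z_j$ forces $t_2\equiv Z_j$, and a variable $Y\notin\{X\}\cup\widetilde{Z}$ forces $t_2\equiv Y$. The decisive structural observation, used in the prefix, disjunction and recursion cases, is that an occurrence of any $Z\in\widetilde{Z}$ inside a prefix $\beta.s$, inside a disjunction, or inside a recursive operation would be guarded or folded; since $Z$ is active in both $t_1$ and $t_2$, no $Z$ occurs in such subterms at all, so there $\widetilde{a_Z.0}/\widetilde{Z}$ acts trivially and the induction hypothesis applies with its activeness premise vacuously satisfied (Lemma~\ref{L:ONE_STEP_UNFOLDING_VARIABLE}(4) keeps free-variable sets under control in the recursion case). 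For the non-guarding binary operators $\Box,\parallel_A,\wedge$ activeness of $Z$ passes directly to the operands, so the induction hypothesis again applies, and reassembling the component-wise equalities yields $t_1\sigma\equiv t_2\sigma$.

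I expect the genuinely delicate case to be $t_1\equiv s_1\Box s_2$ together with the possibility $t_2\equiv X$, i.e.\ when the external-choice skeleton of $t_1\theta$ might coincide with the skeleton of the place-holder $T=\square_{Z\in\widetilde{Z}}\alpha.a_Z.0$ itself. Here one must rule out that the branches $\alpha.a_Z.0$ of $T$ are manufactured by the term structure of $t_1$. This is exactly where the design of $T$ pays off: reproducing a branch $\alpha.a_Z.0$ from $t_1$ would require some $s_i$ to contain $\alpha.Z$, placing $Z$ under a prefix and hence guarding it, contradicting the activeness of $Z$; and the nested fresh action $a_Z$ cannot arise in any other way. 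Thus the only preimage of $T$ under $\theta$ is $X$ itself, forcing $t_2\equiv X$. Note that the outer $\alpha.$-prefixes are essential: had $T$ been $\square_{Z\in\widetilde{Z}} a_Z.0$, the term $\square_{Z\in\widetilde{Z}} Z$ would be an active preimage distinct from $X$ and the statement would fail. This paragraph is where the real work lies.

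For item~(2) I would follow Lemma~\ref{L:PLACE_HOLDER}(2), again inducting on $t_1$. Since $T$ and the $a_Z.0$ contain no recursive operation, every recursive subterm of $t_1\theta$ already comes from $t_1$, so the one-step unfolding witnessed by $t_1\theta\Rrightarrow_1 t_2\theta$ occurs at a position present in $t_1$; the only substantial case is $t_1\equiv\langle Y|E\rangle$, where Definition~\ref{D:UNFOLDING} makes the unfolding of $\langle Y|E\rangle\theta$ unique and equal to $\langle t_Y|E\rangle\theta$, whence $\langle t_Y|E\rangle\theta\equiv t_2\theta$. Applying item~(1) to $\langle t_Y|E\rangle$ and $t_2$ (the activeness premise for $\widetilde{Z}$ holding vacuously, as no $Z$ occurs freely inside a recursion, again by Lemma~\ref{L:ONE_STEP_UNFOLDING_VARIABLE}(4)) gives $\langle t_Y|E\rangle\sigma\equiv t_2\sigma$, and hence $t_1\sigma\equiv\langle Y|E\rangle\sigma\Rrightarrow_1\langle t_Y|E\rangle\sigma\equiv t_2\sigma$ for every $p,\widetilde{q}$, as required.
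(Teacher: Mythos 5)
Your treatment of item (1) is essentially the paper's own proof: the same structural induction on $t_1$, with freshness resolving the variable cases, activeness emptying the prefix, disjunction and recursion cases, and a componentwise appeal to the induction hypothesis for $\Box$, $\wedge$, $\parallel_A$. Your analysis of the delicate case ($t_1 \equiv s_1 \Box s_2$ with $t_2 \equiv X$, i.e.\ $t_1\{T/X,\widetilde{a_Z.0}/\widetilde{Z}\} \equiv T$) rests on the same three pillars as the paper's: an occurrence of some $Z$ manufacturing a branch $\alpha.a_Z.0$ would be guarded, contradicting activeness; a literal occurrence of $a_Z$ contradicts freshness; and a substituted copy of $T$ cannot sit inside a proper subterm of $T$ for size reasons. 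The paper packages this as ``$a_Z.0$ is guarded in $T$, hence $FV(t_1)\cap\widetilde{Z}=\emptyset$'' followed by a count of visible actions, but it is the same argument. The only organizational difference is that the paper disposes of the prefix/$\vee$/recursion cases by concluding $\widetilde{Z}=\emptyset$ and citing Lemma~\ref{L:PLACE_HOLDER}, whereas you keep the compound $T$ and invoke the induction hypothesis with vacuous activeness; both readings are workable.

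For item (2), however, you have identified the wrong case as the substantial one, and this leaves a genuine hole. Under the activeness hypothesis no $Z\in\widetilde{Z}$ can occur inside a recursive operation, so $t_1\equiv\langle Y|E\rangle$ is precisely the degenerate case in which the $\widetilde{Z}$-part of the substitution is trivial; the paper dispatches it (together with every case where $FV(t_1)\cap\widetilde{Z}=\emptyset$) by citing Lemma~\ref{L:PLACE_HOLDER}. The case the paper actually writes out---and the only place in item (2) where the compound shape of $T$ bites---is $t_1\equiv s_1\Box s_2$ with some $Z$ occurring actively in $t_1$: before one may say ``the unfolding is triggered inside $s_1$ or $s_2$ and the induction hypothesis plus item (1) apply componentwise'', one must again rule out that the topmost $\Box$ of $t_2\{T/X,\widetilde{a_Z.0}/\widetilde{Z}\}$ comes from $T$, i.e.\ that $t_1\{T/X,\widetilde{a_Z.0}/\widetilde{Z}\}\Rrightarrow_1 T$ with $t_2\equiv X$. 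This is your item (1) disambiguation, but in a new form ($\Rrightarrow_1$ in place of $\equiv$), and it needs the additional observation that unguarded, unfolded occurrences of the processes $a_Z.0$ survive one-step unfolding (the analogue of Lemma~\ref{L:ONE_STEP_UNFOLDING_VARIABLE}(1)), whereas every $a_Z.0$ in $T$ is guarded---contradiction. You clearly possess the technique, but as written your item (2) waves off exactly the step that distinguishes this lemma from Lemma~\ref{L:PLACE_HOLDER}.
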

\begin{proof}
\noindent \textbf{(1)} 
    It proceeds by induction on $t_1$. We distinguish three cases as follows.\\

\noindent Case  1 $t_1$ is closed or $t_1$ is of the format $X$ or $\beta.s$ or $ s_1\vee s_2$ or $\langle Y|E \rangle$.

Since $Z$ is active in $t_1$ for each $Z \in \widetilde{Z}$, we get $\widetilde{Z} = \emptyset$.
Then it follows by Lemma~\ref{L:PLACE_HOLDER}.\\

 \noindent Case  2 $t_1 \equiv s_1\odot s_2$ with $\odot \in \{\parallel_A, \wedge\}$.

 Then  $t_1\{T/X,\widetilde{a_Z.0}/\widetilde{Z}\} \equiv s_1\{T/X,\widetilde{a_Z.0}/\widetilde{Z}\} \odot s_2\{T/X,\widetilde{a_Z.0}/\widetilde{Z}\}\equiv t_2\{T/X,\widetilde{a_Z.0}/\widetilde{Z}\}$.
 Since neither $\widetilde{a_Z.0}$ nor $T$ contain $\odot$, there exist $s_1',s_2'$ such that $s_1\{T/X,\widetilde{a_Z.0}/\widetilde{Z}\} \equiv  s_1'\{T/X,\widetilde{a_Z.0}/\widetilde{Z}\}$, $s_2\{T/X,\widetilde{a_Z.0}/\widetilde{Z}\}\equiv  s_2'\{T/X,\widetilde{a_Z.0}/\widetilde{Z}\}$ and $t_2 \equiv s_1' \odot s_2'$.
 Hence it immediately follows that  $t_1\{p/X,\widetilde{q}/\widetilde{Z}\} \equiv t_2\{p/X,\widetilde{q}/\widetilde{Z}\}$ for any $p$ and $\widetilde{q}$ by IH.\\

  \noindent Case  3 $t_1 \equiv s_1 \Box s_2$.

 Then  $t_1\{T/X,\widetilde{a_Z.0}/\widetilde{Z}\} \equiv s_1\{T/X,\widetilde{a_Z.0}/\widetilde{Z}\} \Box s_2\{T/X,\widetilde{a_Z.0}/\widetilde{Z}\}\equiv t_2\{T/X,\widetilde{a_Z.0}/\widetilde{Z}\}$.
Hence the topmost operator of $t_2\{T/X,\widetilde{a_Z.0}/\widetilde{Z}\}$ is an external choice $\Box$.
Clearly, such operator comes from either $T$ or $t_2$.
For the former, we get $t_2 \equiv X$.
If $|\widetilde{Z}| \leq 1$ then  $t_2\{T/X,\widetilde{a_Z.0}/\widetilde{Z}\}( \equiv T)$ does not contain the operator $\Box$ at all, a contradiction.
Next we treat the other case $|\widetilde{Z}| > 1$.
Clearly, $a_Z.0$ is guarded in $T$ for each $Z \in \widetilde{Z}$.
So $FV(t_1)\cap \widetilde{Z} = \emptyset$ and the numbers of visible actions of $s_1\{T/X,\widetilde{a_Z.0}/\widetilde{Z}\}$ and $s_2\{T/X,\widetilde{a_Z.0}/\widetilde{Z}\}$ are different from ones of two operands of the topmost external choice of $T$.
Hence this case is impossible and $t_2 \equiv s_1' \Box s_2'$ for some $s_1'$ and $s_2'$.
The rest of the proof is as in Case 2. \\

%

\noindent \textbf{(2)} If   $FV(t_1) \cap \widetilde{Z} = \emptyset$, it follows by Lemma~\ref{L:PLACE_HOLDER}.
Next we consider the other case $FV(t_1) \cap \widetilde{Z} \neq \emptyset$.
     It proceeds by induction on $t_1$.
    Since $Z$ is active in $t_1$ for each $Z \in \widetilde{Z}$, we get either $t_1 \equiv Z$  or $t_1 \equiv s_1 \odot s_2$ for some $s_1$ and $s_2$, where $Z \in \widetilde{Z}$ and $\odot \in \{\wedge, \parallel_A,\Box\}$.
     We give the proof only for the case $t_1\equiv s_1 \Box s_2$, the proofs for the remaining cases are straightforward and omitted.

It follows from $t_1\equiv s_1 \Box s_2$ that
\[t_1\{T/X,\widetilde{a_Z.0}/\widetilde{Z}\} \equiv s_1\{T/X,\widetilde{a_Z.0}/\widetilde{Z}\} \Box s_2\{T/X,\widetilde{a_Z.0}/\widetilde{Z}\} \Rrightarrow_1 t_2\{T/X,\widetilde{a_Z.0}/\widetilde{Z}\}.\]
So the topmost operator of $t_2\{T/X,\widetilde{a_Z.0}/\widetilde{Z}\}$ is an external choice $\Box$ which   comes from either $T$ or $t_2$.
Similar to Case~3 in the proof for item (1), we can make the conclusion that there exist $s_1',s_2'$ such that $t_2 \equiv s_1' \Box s_2'$.
Moreover, it is easily seen that either $s_1\{T/X,\widetilde{a_Z.0}/\widetilde{Z}\}$ or $s_2\{T/X,\widetilde{a_Z.0}/\widetilde{Z}\}$ triggers the unfolding from $t_1\{T/X,\widetilde{a_Z.0}/\widetilde{Z}\}$ to $t_2\{T/X,\widetilde{a_Z.0}/\widetilde{Z}\}$.
W.l.o.g, we consider the first alternative.
Then $s_1\{T/X,\widetilde{a_Z.0}/\widetilde{Z}\} \Rrightarrow_1  s_1'\{T/X,\widetilde{a_Z.0}/\widetilde{Z}\}$ and  $s_2\{T/X,\widetilde{a_Z.0}/\widetilde{Z}\}\equiv  s_2'\{T/X,\widetilde{a_Z.0}/\widetilde{Z}\}$.
Hence, by IH and item (1), for any $p$ and $\widetilde{q}$, we have $s_1\{p/X,\widetilde{q}/\widetilde{Z}\} \Rrightarrow_1  s_1'\{p/X,\widetilde{q}/\widetilde{Z}\}$ and  $s_2\{p/X,\widetilde{q}/\widetilde{Z}\}\equiv  s_2'\{p/X,\widetilde{q}/\widetilde{Z}\}$.
Therefore, $t_1\{p/X,\widetilde{q}/\widetilde{Z}\} \equiv s_1\{p/X,\widetilde{q}/\widetilde{Z}\} \Box s_2\{p/X,\widetilde{q}/\widetilde{Z}\} \Rrightarrow_1 t_2\{p/X,\widetilde{q}/\widetilde{Z}\}$.
\end{proof}

The next lemma establishes the converse of Lemma~\ref{L:FAILURE_S_VS_NS}.

\begin{lemma}\label{L:FAILURE_NS_IMPLIES_S}
  For any $C_X$ and stable process $p$, $C_X\{\tau.p/X\}\notin F $ implies $C_X\{p/X\}\notin F $.
\end{lemma}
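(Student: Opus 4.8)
The plan is to establish the logically equivalent claim by the same non-well-foundedness argument used for Lemma~\ref{L:FAILURE_S_VS_NS}, with the roles of $p$ and $\tau.p$ interchanged. Fix the stable process $p$ and set
\[
  \Omega' \triangleq \{B_X\{p/X\} : B_X\{\tau.p/X\}\notin F \text{ and } B_X \text{ is a context}\}.
\]
It suffices to prove the Claim: for every $s\in\Omega'$, each proof tree of $sF$ has a proper subtree whose root is $s'F$ for some $s'\in\Omega'$. Granting the Claim, no $s\in\Omega'$ can have $sF$ provable, since iterating the Claim would yield an infinitely descending chain of subtrees, contradicting the well-foundedness demanded by Def.~\ref{D:PROOF}; hence $\Omega'\cap F=\emptyset$, and taking $B_X\triangleq C_X$ gives exactly the lemma.

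To prove the Claim I would fix $s=C_X\{p/X\}\in\Omega'$ (so $C_X\{\tau.p/X\}\notin F$), a proof tree $\mathcal T$ of $C_X\{p/X\}F$, and proceed by case analysis on the form of $C_X$, reading off the last rule of $\mathcal T$. The routine cases are $C_X$ closed or $C_X\equiv X$ (where $\tau.p\notin F$ forces $p\notin F$ by Lemma~\ref{L:F_NORMAL}(2), so $\mathcal T$ cannot exist), $C_X\equiv\alpha.B_X$, $B_X\vee D_X$, $B_X\Box D_X$, $B_X\parallel_A D_X$, the conjunction rules $Rp_8,Rp_9$, and $C_X\equiv\langle Y|E\rangle$ closed under $Rp_{14}$; each goes through by pushing the hypothesis $C_X\{\tau.p/X\}\notin F$ down to the relevant sub-context via Lemma~\ref{L:F_NORMAL}(1)--(4),(8), which places the sub-term labelling the node directly above the root into $\Omega'$. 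The genuinely hard situations are $C_X\equiv\langle Y|E\rangle$ closed under $Rp_{15}$ and the conjunction rules $Rp_{10}$--$Rp_{13}$; these must be handled with the stabilization machinery exactly as in Case~5 and Case~6 of Lemma~\ref{L:FAILURE_S_VS_NS}. For $Rp_{15}$ and the divergence rule $Rp_{13}$ I would start from a stable consistent $\tau$-descendant of $C_X\{\tau.p/X\}$ (supplied by (LTS2) in Theorem~\ref{L:LLTS}), use Lemmas~\ref{L:TAU_ACTION_NORMALIZATION} and \ref{L:MULTI_TAU_GF_STABLE} to present it as $C''_{X,\widetilde{Z}}\{\tau.p/X,p/\widetilde{Z}\}$ over a stable context, and then combine Lemmas~\ref{L:STABILIZATION}, \ref{L:MULTI_STEP_UNFOLDING_FAILURE}, \ref{L:FAILURE_TAU_I} with $\tau.p\stackrel{\tau}{\longrightarrow}|p$ to locate a proper subtree whose root again lies in $\Omega'$; for the $Rp_{12}$ sub-case with $\alpha\in Act$, Lemma~\ref{L:ONE_ACTION_VISIBLE} and Lemma~\ref{L:SAME_ACTIONS} reduce matters to the context itself performing the action, as in Case~6.3.2.

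The main obstacle is the ready-set rules $Rp_{10}$ and $Rp_{11}$. Because their premises are transition literals rather than $\cdot F$ literals, a proof tree ending in $Rp_{10}$ has no $F$-rooted proper subtree at all, so this case cannot be closed by descent; instead it must be shown impossible for $s\in\Omega'$ by deriving $C_X\{\tau.p/X\}\in F$, contradicting $s\in\Omega'$. When $C_X\{\tau.p/X\}$ happens to be stable this is immediate: writing $C_X\equiv B_X\wedge D_X$, applying Lemma~\ref{L:SAME_ACTIONS} to the two conjunct-contexts $B_X,D_X$ (legitimate since $\tau.p=_{RS}p$ gives $\tau.p\sqsubseteq_{RS}p$, both substitutions are stable, and $B_X\{\tau.p/X\},D_X\{\tau.p/X\}\notin F$ by Lemma~\ref{L:F_NORMAL}(4)) transfers $B_X\{p/X\}\stackrel{a}{\longrightarrow}$ and $D_X\{p/X\}\not\stackrel{a}{\longrightarrow}$ to the $\tau.p$-substitution, whence $Rp_{10}$ yields $C_X\{\tau.p/X\}\in F$. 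The delicate case is when substituting $\tau.p$ destabilises the conjunction, so that $C_X\{\tau.p/X\}$ is unstable while $C_X\{p/X\}$ is a stable inconsistent state. Here $C_X\{\tau.p/X\}\in F$ can only be obtained through the divergence rule $Rp_{13}$, which forces one to show that \emph{every} stable $\tau$-descendant of $C_X\{\tau.p/X\}$ is inconsistent. By Lemma~\ref{L:MULTI_TAU_GF_STABLE} each such descendant has the form $C'_{X,\widetilde{Y}}\{\tau.p/X,p/\widetilde{Y}\}$ with $C'_{X,\widetilde{Y}}$ stable (the copies being stabilized to $p$ by (MS-$\tau$-7)), and Lemma~\ref{L:MULTI_STEP_UNFOLDING_FAILURE} applied to $C_X\{p/X\}\Rrightarrow C'_{X,\widetilde{Y}}\{p/X,p/\widetilde{Y}\}$ gives $C'_{X,\widetilde{Y}}\{p/X,p/\widetilde{Y}\}\in F$; the remaining gap between the $\tau.p$- and $p$-substitutions in the $X$-slots is then closed by an auxiliary induction (on the structure of the context, equivalently on the number of exposed unguarded copies of $\tau.p$), whose base case is precisely the stable situation settled above. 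I expect this interplay between destabilisation and the divergence rule $Rp_{13}$ to be the crux of the whole argument.
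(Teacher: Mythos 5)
Your overall framework (the set $\Omega'$ plus the non-well-founded-proof-tree argument) is exactly the paper's, and your routine cases, your treatment of $Rp_{13}$/$Rp_{15}$, and your \emph{stable} sub-case of $Rp_{10}$ are all sound. The gap is in the case you yourself flag as the crux: $Rp_{10}$/$Rp_{11}$ when $C_X\{\tau.p/X\}$ is unstable. There you reduce to showing that every stable $\tau$-descendant $C'_{X,\widetilde{Y}}\{\tau.p/X,p/\widetilde{Y}\}$ of $C_X\{\tau.p/X\}$ is inconsistent; you correctly get $C'_{X,\widetilde{Y}}\{p/X,p/\widetilde{Y}\}\in F$ from (MS-$\tau$-4) and Lemma~\ref{L:MULTI_STEP_UNFOLDING_FAILURE}, but the remaining step --- passing inconsistency from the $p$-substitution to the $\tau.p$-substitution in the $X$-slots --- is precisely the contrapositive of the lemma being proven, applied to the context $D_X\triangleq C'_{X,\widetilde{Y}}\{p/\widetilde{Y}\}$, which is an arbitrary context (it can contain recursion operators with $X$ folded inside them). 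Neither of your proposed measures makes this a smaller instance: the ``number of exposed unguarded copies of $\tau.p$'' is zero in \emph{any} stable term, so it carries no information, and ``structure of the context'' does not decrease, because the $F$-status of a $\langle Y|E\rangle$-context is governed by its unfoldings and by states reached after visible transitions, both of which expose fresh copies of $\tau.p$ in structurally larger terms (e.g.\ $D_X\equiv a.(X\Box b.0)\wedge a.c.0$ is stable under the $\tau.p$-substitution, but after its $a$-transition the copy is exposed, the state is unstable, and nothing has shrunk). Invoking the statement under proof as an ``auxiliary induction'' inside the $\Omega'$-method is exactly the circularity that method exists to avoid.

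The paper closes this case without ever deriving $C_X\{\tau.p/X\}\in F$. From $C_X\{\tau.p/X\}\notin F$ it picks a stable \emph{consistent} descendant $q$, factors it as $q\equiv s\wedge t$ with $B_X\{\tau.p/X\}\stackrel{\epsilon}{\Longrightarrow}_F|s$ and $D_X\{\tau.p/X\}\stackrel{\epsilon}{\Longrightarrow}_F|t$, and transfers the \emph{ready-set data} (transitions, not the $F$-predicate) from $B_X\{p/X\}$ and $D_X\{p/X\}$ over to $s$ and $t$: Lemma~\ref{L:MULTI_TAU_GF_STABLE} and Lemma~\ref{L:MULTI_STEP_UNFOLDING_ACTION} move $\stackrel{a}{\longrightarrow}$ and $\not\stackrel{a}{\longrightarrow}$ along $\Rrightarrow$, and Lemma~\ref{L:SAME_ACTIONS} (with $\tau.p=_{RS}p$, stability, and $s,t\notin F$) moves them between the $p$- and $\tau.p$-substitutions. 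Then $Rp_{10}$ applied to $q\equiv s\wedge t$ itself yields $q\in F$, contradicting $q\notin F$, so the case is impossible --- uniformly, with no split on the stability of $C_X\{\tau.p/X\}$. The moral is that transitions can be shuttled back and forth cheaply, whereas shuttling the inconsistency predicate in the direction you need is the theorem itself; you should repair your $Rp_{10}$/$Rp_{11}$ case along these lines, after which the rest of your proposal stands.
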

\begin{proof}
Let $p$ be any stable process. Set
  \[
        \Omega \triangleq \{B_{X}\{p/X\}:B_{X}\{\tau.p/X\}\notin F\;\text{and}\;B_X\;\text{is a context} \}.
  \]
Assume $t \in \Omega$.
Then $t \equiv C_{X}\{p/X\}  $ for some $C_X$ such that $C_{X}\{\tau.p/X\}\notin F $.
Let $\mathcal T$ be any proof tree of $Strip(\mathcal{P}_{\text{CLL}_R} ,M_{\text{CLL}_R} ) \vdash C_{X}\{p/X\}F$.
Similar to Lemma~\ref{L:FAILURE_S_VS_NS}, it is sufficient to prove that $ \mathcal T$ has a proper subtree with root $sF$ for some $s \in \Omega$, which is a routine case analysis based on the last rule applied in $\mathcal T$.
Here we  treat only three primary cases.\\

  \noindent Case 1 $\frac{\{rF: C_X \{p/X\} \stackrel{\epsilon}{\Longrightarrow}|r\}}{C_X \{p/X\} F}$ with $C_X \equiv \langle Y|E \rangle$.

    Since $C_{X}\{\tau.p/X\} \notin F $, we get  $C_X\{\tau.p/X\} \stackrel{\epsilon}{\Longrightarrow}_F|q$ for some $q$.
    By Lemma~\ref{L:MULTI_TAU_GF_STABLE}, for this transition, there exists a stable context $C_{X,\widetilde{Z}}'$ satisfying (MS-$\tau$-1) -- (MS-$\tau$-7).
    In particular, since $p$ and $q$ are stable, by (MS-$\tau$-2,7), we have
    \[q \equiv C_{X,\widetilde{Z}}'\{\tau.p/X,p/\widetilde{Z}\}\notin F.\]
    Moreover,  since each $Z(\in \widetilde{Z})$ is 1-active in $C_{X,\widetilde{Z}}'$ (i.e., (MS-$\tau$-1)) and $\tau.p \stackrel{\tau}{\longrightarrow}p$, by Lemma~\ref{L:ONE_ACTION_TAU_GF}, we get $C_{X,\widetilde{Z}}'\{\tau.p/X,\tau.p/\widetilde{Z}\} \stackrel{\epsilon}{\Longrightarrow} C_{X,\widetilde{Z}}'\{\tau.p/X,p/\widetilde{Z}\} \equiv q \notin F$, which, by Lemma~\ref{L:FAILURE_TAU_I}, implies
  \[C_{X,\widetilde{Z}}'\{\tau.p/X,\tau.p/\widetilde{Z}\}\notin F .\tag{\ref{L:FAILURE_NS_IMPLIES_S}.1}\]
  Let $a_X$ be any fresh visible action.
  By (MS-$\tau$-3-i), it follows from $a_X.0 \stackrel{\epsilon}{\Longrightarrow} |a_X.0$ that there exists $s$ such that
  \[C_X\{a_X.0/X\} \stackrel{\epsilon}{\Longrightarrow} s  \Rrightarrow  C_{X,\widetilde{Z}}'\{ a_X.0/X, a_X.0/\widetilde{Z}\}. \tag{\ref{L:FAILURE_NS_IMPLIES_S}.2}\]
  Since $a_X.0$  and $C_{X,\widetilde{Z}}'$  are stable, so is $C_{X,\widetilde{Z}}'\{a_X.0/X,a_X.0/\widetilde{Z}\}$ by Lemma~\ref{L:ONE_ACTION_TAU}.
  Then, by Lemma~\ref{L:MULTI_STEP_UNFOLDING_ACTION}, $s$ is stable.
  Thus, for the transition in (\ref{L:FAILURE_NS_IMPLIES_S}.2), by Lemma~\ref{L:MULTI_TAU_GF_STABLE}, there exists a stable context $C_X^*$ such that
  \[s \equiv C_X^*\{a_X.0/X\}\;\text{and}\; C_X\{r/X\} \stackrel{\epsilon}{\Longrightarrow}C_X^*\{r/X\}\;\text{for any}\;r. \tag{\ref{L:FAILURE_NS_IMPLIES_S}.3}\]
  Then, by Lemma~\ref{L:PLACE_HOLDER}, it follows from  $s \equiv C_X^*\{a_X.0/X\}  \Rrightarrow  C_{X,\widetilde{Z}}'\{ a_X.0/X, a_X.0/\widetilde{Z}\}$ that
  \[C_X^*\{\tau.p/X\}  \Rrightarrow  C_{X,\widetilde{Z}}'\{ \tau.p/X, \tau.p/\widetilde{Z}\}.\]
  Hence $C_X^*\{\tau.p/X\} \notin F $ by  (\ref{L:FAILURE_NS_IMPLIES_S}.1) and Lemma~\ref{L:MULTI_STEP_UNFOLDING_FAILURE}, which implies $ C_X^*\{p/X\} \in \Omega$.
  Moreover, since $C_X^*$ and $p$ are stable, so is $C_X^*\{p/X\}$ by Lemma~\ref{L:ONE_ACTION_TAU}, which implies
  $C_X\{p/X\}\stackrel{\epsilon}{\Longrightarrow}|C_X^*\{p/X\}$ by (\ref{L:FAILURE_NS_IMPLIES_S}.3).
  Therefore, $\mathcal T$ has a proper subtree with root $C_X^*\{p/X\}  F $.\\

 \noindent Case 2 $\frac{B_{X}\{p/X\} \stackrel{a}{\longrightarrow}r, D_{X}\{p/X\} \not\stackrel{a}{\longrightarrow}, C_{X}\{p/X\}  \not\stackrel{\tau}{\longrightarrow}}{C_X\{p/X\}  F}$ with $C_X \equiv B_{X}  \wedge D_{X}$.

 Clearly, in this situation, both $B_X$ and $D_X$ are stable.
 Since $C_X\{\tau.p/X\} \notin F $, we have $C_X\{\tau.p/X\} \stackrel{\epsilon}{\Longrightarrow}_F |q$ for some $q$.
 So, there exist $s$ and $t$ such that $q\equiv s \wedge t$ and
 \[B_X\{\tau.p/X\} \stackrel{\epsilon}{\Longrightarrow}_F|s\;\text{and}\; D_X\{\tau.p/X\}  \stackrel{\epsilon}{\Longrightarrow}_F|t.\]
 Then, for these two transitions, by Lemma~\ref{L:MULTI_TAU_GF_STABLE}, there exist $B_{X,\widetilde{Y}}'$ and $D_{X,\widetilde{Z}}'$ satisfying (MS-$\tau$-1) -- (MS-$\tau$-7) respectively.
 In particular, since $p$, $B_X$ and $D_X$ are stable, by (MS-$\tau$-2,4,7), we have
 \begin{enumerate}[(1)]
 \renewcommand{\theenumi}{(\arabic{enumi})}
   \item \ $s \equiv B_{X,\widetilde{Y}}'\{\tau.p/X,p/\widetilde{Y}\}$ and $B_X\{p/X\} \Rrightarrow B_{X,\widetilde{Y}}'\{p/X,p/\widetilde{Y}\}$;
   \item \ $t \equiv D_{X,\widetilde{Z}}'\{\tau.p/X,p/\widetilde{Z}\}$ and $D_X\{p/X\} \Rrightarrow  D_{X,\widetilde{Z}}'\{p/X,p/\widetilde{Z}\}$.
 \end{enumerate}
 Hence, by Lemma~\ref{L:MULTI_STEP_UNFOLDING_ACTION}, it follows from $B_{X}\{p/X\} \stackrel{a}{\longrightarrow}$ and $D_{X}\{p/X\} \not\stackrel{a}{\longrightarrow}$ that \[B_{X,\widetilde{Y}}'\{p/X,p/\widetilde{Y}\} \stackrel{a}{\longrightarrow}\;\text{and}\; D_{X,\widetilde{Z}}'\{p/X,p/\widetilde{Z}\} \not\stackrel{a}{\longrightarrow}.\]
 Further, since $B_{X,\widetilde{Y}}'\{p/X,p/\widetilde{Y}\}$ and $B_{X,\widetilde{Y}}'\{\tau.p/X,p/\widetilde{Y}\}$ are stable, by Lemma~\ref{L:SAME_ACTIONS}, it follows from $\tau.p =_{RS}p$  and $s \equiv B_{X,\widetilde{Y}}'\{\tau.p/X,p/\widetilde{Y}\} \notin F$ that $B_{X,\widetilde{Y}}'\{\tau.p/X,p/\widetilde{Y}\} \stackrel{a}{\longrightarrow} $.
 Similarly, we also have $D_{X,\widetilde{Z}}'\{\tau.p/X,p/\widetilde{Z}\} \not\stackrel{a}{\longrightarrow} $.
 Hence $q\equiv s\wedge t \in F $ by Rule $Rp_{10}$, a contradiction.
 Thus this case is impossible.\\

 \noindent Case 3 $\frac{C_{X}\{p/X\}  \stackrel{\alpha}{\longrightarrow}r',\{r F:C_{X}\{p/X\}  \stackrel{\alpha}{\longrightarrow}r\}}{C_{X}\{p/X\}F}$ with $C_X \equiv B_{X}  \wedge D_{X}$.

Since $C_X\{\tau.p/X\} \notin F $, we have
\[C_X\{\tau.p/X\} \stackrel{\epsilon}{\Longrightarrow}_F|q \;\text{for some}\;q.\tag{\ref{L:FAILURE_NS_IMPLIES_S}.4}\]
Next we distinguish two cases based on $\alpha$.\\

\noindent Case 3.1  $\alpha = \tau$.

 By (\ref{L:FAILURE_NS_IMPLIES_S}.4) and Lemma~\ref{L:TAU_ACTION_NORMALIZATION}, there exist $t$ and stable context $C_X^*$ such that
 \[C_{X}\{\tau.p/X\} \stackrel{\epsilon}{\Longrightarrow}  C_{X}^*\{\tau.p/X\}  \stackrel{\epsilon}{\Longrightarrow}  |t  \Rrightarrow  q\notin F\]
 and
 \[C_{X}\{p/X\} \stackrel{\epsilon}{\Longrightarrow} C_{X}^*\{p/X\}.\]
 Moreover, since $p\not\stackrel{\tau}{\longrightarrow}$ and $\tau \in {\mathcal I}(C_X \{p/X\})$, by Lemma~\ref{L:ONE_ACTION_TAU}, there exists a context $C_X'$ such that
 \[C_{X}\{p/X\} \stackrel{\tau}{\longrightarrow}  C_{X}'\{p/X\} \stackrel{\epsilon}{\Longrightarrow}  C_{X}^*\{p/X\}\]
 and
 \[C_{X}\{\tau.p/X\} \stackrel{\tau}{\longrightarrow}  C_{X}'\{\tau.p/X\} \stackrel{\epsilon}{\Longrightarrow}  C_{X}^*\{\tau.p/X\}\stackrel{\epsilon}{\Longrightarrow}|t.\]
 Further, by Lemma~\ref{L:MULTI_STEP_UNFOLDING_FAILURE}, it follows from $q \notin F $ and $t  \Rrightarrow  q$ that $t \notin F $.
 Then, by Lemma~\ref{L:FAILURE_TAU_I} and the transition above, we have $C_{X}'\{\tau.p/X\} \notin F$.
 Hence $C_{X}'\{p/X\}\in \Omega$ and one of nodes directly above the root of $\mathcal T$ is labelled with $C_{X}'\{p/X\}F$, as desired.\\

\noindent Case 3.2  $\alpha \in Act$.

    In this case, $C_X$ is stable by Lemma~\ref{L:STABLE_CONTEXT_I}.
    By (\ref{L:FAILURE_NS_IMPLIES_S}.4) and Lemma~\ref{L:MULTI_TAU_GF_STABLE}, there exists a stable context $C_{X,\widetilde{Y}}'$ with $X \notin \widetilde{Y}$ that satisfies (MS-$\tau$-1) -- (MS-$\tau$-7).
    Then $q \equiv C_{X,\widetilde{Y}}'\{\tau.p/X,p/\widetilde{Y}\}$ due to $p \not\stackrel{\tau}{\longrightarrow}$ and (MS-$\tau$-2).
 Moreover, since $C_X$ is stable, by (MS-$\tau$-4), we have
 \[C_X\{r/X\} \Rrightarrow C_{X,\widetilde{Y}}'\{r/X,r/\widetilde{Y}\}\;\text{for any}\; r.\tag{\ref{L:FAILURE_NS_IMPLIES_S}.5}\]
 Then, by $C_X\{p/X\}\stackrel{\alpha}{\longrightarrow}$ and Lemma~\ref{L:MULTI_STEP_UNFOLDING_ACTION}, we get
 \[C_{X,\widetilde{Y}}'\{p/X,p/\widetilde{Y}\} \stackrel{\alpha}{\longrightarrow}.\tag{\ref{L:FAILURE_NS_IMPLIES_S}.6}\]
 Further, 
 by Lemma~\ref{L:SAME_ACTIONS}, we also have $C_{X,\widetilde{Y}}'\{\tau.p/X,p/\widetilde{Y}\} \stackrel{\alpha}{\longrightarrow} $ because of $\tau.p=_{RS}p$ and $q \equiv C_{X,\widetilde{Y}}'\{\tau.p/X,p/\widetilde{Y}\} \notin F$.
 Thus, by Theorem~\ref{L:LLTS}, we obtain
 \[ C_{X,\widetilde{Y}}'\{\tau.p/X,p/\widetilde{Y}\}  \stackrel{\alpha}{\longrightarrow}_F t\;\text{for some}\; t.\]
 For such $\alpha$-labelled transition, by Lemma~\ref{L:ONE_ACTION_VISIBLE}, there exist $C_{X,\widetilde{Y}}''$, $C_{X,\widetilde{Y},\widetilde{Z}}''$ and  $C_{X,\widetilde{Y},\widetilde{Z}}'''$ with $(\{X\} \cup \widetilde{Y}) \cap \widetilde{Z} =\emptyset$ that realize (CP-$a$-1) -- (CP-$a$-4).
   In particular, due to $\tau.p \not\stackrel{\alpha}{\longrightarrow}$ and (CP-$a$-3-ii),  there exist $p_Z'(Z \in \widetilde{Z})$ such that
\[p \stackrel{\alpha}{\longrightarrow}p_Z'\;\text{for each}\;Z \in \widetilde{Z}\;\text{and}\;t \equiv C_{X,\widetilde{Y},\widetilde{Z}}'''\{\tau.p/X,p/\widetilde{Y},\widetilde{p_Z'}/\widetilde{Z}\} \notin F. \tag{\ref{L:FAILURE_NS_IMPLIES_S}.7}\]
Moreover, by (CP-$a$-3-iii), for any $r,s$ and $s_Z'(Z \in \widetilde{Z})$ such that $s \stackrel{\alpha}{\longrightarrow} s_Z'$ for each $Z \in \widetilde{Z}$, we have
\[C_{X,\widetilde{Y}}'\{r/X,s/\widetilde{Y}\} \stackrel{\alpha}{\longrightarrow} C_{X,\widetilde{Y},\widetilde{Z}}'''\{r/X,s/\widetilde{Y},\widetilde{s_{Z}'}/\widetilde{Z}\} \;\text{whenever}\;C_{X,\widetilde{Y}}'\{r/X,s/\widetilde{Y}\}\;\text{is stable}. \tag{\ref{L:FAILURE_NS_IMPLIES_S}.8}\]
For each $Z \in \widetilde{Z} \cup \{X\}$, we fix a fresh and distinct visible action $a_Z$ and set
 \[ T \triangleq \left\{
                   \begin{array}{ll}
                     \underset{Z \in \widetilde{Z}}{\square}\alpha.a_Z.0,  &\text{if}\;\widetilde{Z} \neq \emptyset; \\
                     & \\
                     a_X.0,  & \text{otherwise}.
                   \end{array}
                 \right.
\]
 Since $T$ and $C_{X,\widetilde{Y}}'$ are stable, so is $C_{X,\widetilde{Y}}'\{T/X,T/\widetilde{Y}\}$ by Lemma~\ref{L:ONE_ACTION_TAU}.
 Then, by (\ref{L:FAILURE_NS_IMPLIES_S}.8), we have
 \[C_{X,\widetilde{Y}}'\{T/X,T/\widetilde{Y}\} \stackrel{\alpha}{\longrightarrow} C_{X,\widetilde{Y},\widetilde{Z}}'''\{T/X,T/\widetilde{Y}, \widetilde{a_Z.0}/\widetilde{Z}\}.\]
 So, by Lemma~\ref{L:MULTI_STEP_UNFOLDING_ACTION}, it follows from (\ref{L:FAILURE_NS_IMPLIES_S}.5) that there exists $t'$ such that
 \[C_X\{T/X\} \stackrel{\alpha}{\longrightarrow}t'\;\text{and}\;t'  \Rrightarrow  C_{X,\widetilde{Y},\widetilde{Z}}'''\{T/X,T/\widetilde{Y},\widetilde{a_Z.0}/\widetilde{Z}\}. \tag{\ref{L:FAILURE_NS_IMPLIES_S}.9}\]
Then, by Lemma~\ref{L:ONE_ACTION_VISIBLE}, it is not difficult to see that there exists a context $B_{X,\widetilde{Z}}$ that satisfies the conditions:
\begin{enumerate}[(a)]
\renewcommand{\theenumi}{(\alph{enumi}) }
   \item \ $t' \equiv B_{X,\widetilde{Z}}\{T/X,\widetilde{a_Z.0}/\widetilde{Z}\}$;
   \item \ none of $a_Z$ with $Z \in \widetilde{Z}$ occurs in $B_{X,\widetilde{Z}}$;
   \item \ for any $s$ and $s_Z'(Z \in \widetilde{Z})$ such that $s \stackrel{\alpha}{\longrightarrow} s_Z'$ for each $Z \in \widetilde{Z}$,
 \[C_X\{s/X\} \stackrel{\alpha}{\longrightarrow} B_{X,\widetilde{Z}}\{s/X,\widetilde{s_Z'}/\widetilde{Z}\} \;\text{whenever}\; C_X\{s/X\}\;\text{is stable}.\]
 \end{enumerate}

 Now we obtain the diagram
 \[      \begin{array}{ccc}
                    & \text{by}\; (\ref{L:FAILURE_NS_IMPLIES_S}.5) &  \\
                  C_{X}\{p/X\}  &  \Rrightarrow  & C_{X,\widetilde{Y}}'\{p/X,p/\widetilde{Y}\}   \\
                                     &  & \\
                 \quad\quad \quad\quad \downarrow \alpha \;\text{by (c)} &  & \quad\quad\quad\quad \downarrow\alpha  \;\text{by (\ref{L:FAILURE_NS_IMPLIES_S}.6)\;\text{and}\;(\ref{L:FAILURE_NS_IMPLIES_S}.8)}\\
                     & \text{by}\;(\ref{L:FAILURE_NS_IMPLIES_S}.9),\;(a)\;\text{and}\;\text{Lemma~\ref{L:MPLACE_HOLDER}}& \\
                  B_{X,\widetilde{Z}}\{p/X,\widetilde{p_Z'}/\widetilde{Z}\}  &  \Rrightarrow  &
                  C_{X,\widetilde{Y},\widetilde{Z}}''' \{p/X,p/\widetilde{Y},\widetilde{p_Z'}/\widetilde{Z}\}
                \end{array}
 \]
By Lemma~\ref{L:MPLACE_HOLDER}, we also have
\[B_{X,\widetilde{Z}}\{\tau.p/X,\widetilde{p_Z'}/\widetilde{Z}\}    \Rrightarrow
                  C_{X,\widetilde{Y},\widetilde{Z}}''' \{\tau.p/X,\tau.p/\widetilde{Y},\widetilde{p_Z'}/\widetilde{Z}\}.\tag{\ref{L:FAILURE_NS_IMPLIES_S}.10}\]
For each $Y \in \widetilde{Y}$, since $Y$ is 1-active in $C_{X,\widetilde{Y}}'$, by Lemma~\ref{L:ONE_STEP_UNFOLDING_VARIABLE}(1)(2) and $C_{X,\widetilde{Y}}'  \Rrightarrow C_{X,\widetilde{Y}}''$ (i.e., (CP-$a$-1)), so it is  in $C_{X,\widetilde{Y}}''$.
Moreover, by (CP-$a$-4-i,ii), for each $Y \in \widetilde{Y}\cap  FV(C_{X,\widetilde{Y},\widetilde{Z}}''')$, $Y$ is 1-active in $C_{X,\widetilde{Y},\widetilde{Z}}'''$.
Then, by Lemma~\ref{L:ONE_ACTION_TAU_GF}, we have
\[C_{X,\widetilde{Y},\widetilde{Z}}'''\{\tau.p/X,\tau.p/\widetilde{Y},\widetilde{p_Z'}/\widetilde{Z}\} \stackrel{\epsilon}{\Longrightarrow}C_{X,\widetilde{Y},\widetilde{Z}}'''\{\tau.p/X,p/\widetilde{Y},\widetilde{p_Z'}/\widetilde{Z}\}\]
which, together with (\ref{L:FAILURE_NS_IMPLIES_S}.7), implies $C_{X,\widetilde{Y},\widetilde{Z}}'''\{\tau.p/X,\tau.p/\widetilde{Y},\widetilde{p_Z'}/\widetilde{Z}\} \notin F $ by Lemma~\ref{L:FAILURE_TAU_I}.
Hence, by Lemma~\ref{L:MULTI_STEP_UNFOLDING_FAILURE}, it follows from (\ref{L:FAILURE_NS_IMPLIES_S}.10) that $B_{X,\widetilde{Z}}\{\tau.p/X,\widetilde{p_Z'}/\widetilde{Z}\}\notin F $.
Thus, $B_{X,\widetilde{Z}}\{p/X,\widetilde{p_Z'}/\widetilde{Z}\}\in \Omega$;
moreover, $\mathcal T$ has a proper subtree with root $B_{X,\widetilde{Z}}\{p/X,\widetilde{p_Z'}/\widetilde{Z}\}F$ due to (c) and (\ref{L:FAILURE_NS_IMPLIES_S}.7).
\end{proof}

Hitherto we have completed the first step mentioned at the beginning of this section.
Now we return to carry out the second step.
Before proving Lemma~\ref{L:FAILURE_CONGRUENCE}, a result concerning proof tree is given first.

\begin{lemma}\label{L:FAILURE_CONGRUENCE_PRE}
Let  $C_{\widetilde{X},\widetilde{Z}}$ be any context such that for each $Z \in \widetilde{Z}$, $Z$ is active and occurs at most once.
If $\widetilde{p}$, $\widetilde{q}$, $\widetilde{t}$, $\widetilde{s}$ and $\widetilde{r}$ are any processes such that
\begin{enumerate}[(a)]
\renewcommand{\theenumi}{(\alph{enumi})}
  \item \ $\widetilde{p}\sqsubseteq_{RS} \widetilde{q}$,
  \item \ $\widetilde{p} \bowtie \widetilde{q}$,
  \item \ $\widetilde{r} \stackrel{\epsilon}{\Longrightarrow} |\widetilde{t} $,
  \item \ $\widetilde{s}\underset{\thicksim}{\sqsubset}_{RS}\widetilde{t}$, and
  \item \ $C_{\widetilde{X},\widetilde{Z}}\{\widetilde{p}/\widetilde{X},\widetilde{s}/\widetilde{Z}\} \notin F $,
\end{enumerate}
 then, for any proof tree $\mathcal T$ for $Strip(\mathcal{P}_{\text{CLL}_R},M_{\text{CLL}_R}) \vdash C_{\widetilde{X},\widetilde{Z}}\{\widetilde{q}/\widetilde{X},\widetilde{r}/\widetilde{Z}\}F$, there exist  $C_{\widetilde{X},\widetilde{Z},\widetilde{Y}}^*$ and $p_Y'',q_Y''(Y\in \widetilde{Y})$  such that
  \begin{enumerate}[(1)]
  \renewcommand{\theenumi}{(\arabic{enumi})}
    \item \ $\mathcal T$ has a subtree with root $C_{\widetilde{X},\widetilde{Z},\widetilde{Y}}^*\{\widetilde{q}/\widetilde{X},\widetilde{t}/\widetilde{Z},\widetilde{q_Y''}/\widetilde{Y}\}F$,
    \item \ $C_{\widetilde{X},\widetilde{Z},\widetilde{Y}}^*\{\widetilde{p}/\widetilde{X},\widetilde{s}/\widetilde{Z},\widetilde{p_Y''}/\widetilde{Y}\} \notin F $, and
    \item \ $\widetilde{p_Y''}\underset{\thicksim}{\sqsubset}_{RS}\widetilde{q_Y''}$.
  \end{enumerate}
\end{lemma}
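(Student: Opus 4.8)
The plan is to fix the proof tree $\mathcal{T}$ and argue by induction on its depth (equivalently, on the depth of the inference of $C_{\widetilde{X},\widetilde{Z}}\{\widetilde{q}/\widetilde{X},\widetilde{r}/\widetilde{Z}\}F$ in $Strip(\mathcal{P}_{\text{CLL}_R},M_{\text{CLL}_R})$), with a case analysis on the form of the context $C_{\widetilde{X},\widetilde{Z}}$ and, within each form, on the last rule applied at the root. Hypothesis (e), consistency of the $(\widetilde{p},\widetilde{s})$-instance, is the engine: it excludes degenerate conjunction rules and, at each branching, lets me select a derivative that stays consistent, while hypotheses (a)--(d) keep that selection compatible across the two instances. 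The point of the ``active and at most once'' requirement on $\widetilde{Z}$ is that, by Lemma~\ref{L:ONE_ACTION_TAU_GF}, every $\tau$-move of an $r$ sitting in a $\widetilde{Z}$-slot lifts verbatim to the whole term, and derivatives produced at such slots become fresh $1$-active placeholders. This is precisely how the extra tuple $\widetilde{Y}$, together with the requirement $\widetilde{p_Y''}\underset{\thicksim}{\sqsubset}_{RS}\widetilde{q_Y''}$, is accumulated; it also explains why (1) only claims a \emph{subtree} rather than a proper one, since the stabilization $\widetilde{r}\stackrel{\epsilon}{\Longrightarrow}|\widetilde{t}$ and the context's own evolution may both be trivial.

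For contexts built from prefix, $\vee$, $\Box$, $\parallel_A$ and recursion I would dispatch the propagation rules $Rp_2$--$Rp_9$ and $Rp_{14}$ using the transition-tracking Lemmas~\ref{L:ONE_ACTION_TAU} and \ref{L:ONE_ACTION_VISIBLE}, together with Lemma~\ref{L:ONE_STEP_UNFOLDING_FAILURE}. When inconsistency is inherited from a sub-context (say $Rp_4$ or $Rp_6$), the induction hypothesis applied to that sub-context yields $C^*$ directly, because consistency of the $(\widetilde{p},\widetilde{s})$-instance restricts to the relevant sub-context by Lemma~\ref{L:F_NORMAL}. For $\langle Y|E\rangle$ the last rule is $Rp_{14}$ or $Rp_{15}$: the former reduces to one-step unfolding via Lemma~\ref{L:ONE_STEP_UNFOLDING_FAILURE} (recalling that $\Rrightarrow$ respects $F$ by Lemma~\ref{L:MULTI_STEP_UNFOLDING_FAILURE}), giving a proper subtree, while the latter is treated as in the stabilization case below. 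No new placeholders arise in these cases, so $\widetilde{Y}$ stays empty.

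The crux is $C_{\widetilde{X},\widetilde{Z}}\equiv B\wedge D$. If the root uses $Rp_8$/$Rp_9$, I recurse into the inconsistent conjunct. If it uses $Rp_{10}$/$Rp_{11}$ (a stable conjunction whose conjuncts disagree on some visible action), I would show this is impossible: the conjunction being stable forces the active $\widetilde{Z}$-slots to already carry stable processes, whence $r\equiv t$ there and $\widetilde{s}\underset{\thicksim}{\sqsubset}_{RS}\widetilde{t}$ upgrades to $\widetilde{s}\sqsubseteq_{RS}\widetilde{r}$ on those slots; Lemma~\ref{L:SAME_ACTIONS} then makes the ready sets of the $(\widetilde{q},\widetilde{r})$- and $(\widetilde{p},\widetilde{s})$-instances coincide on the relevant stable sub-contexts, so the same mismatch would yield $C_{\widetilde{X},\widetilde{Z}}\{\widetilde{p}/\widetilde{X},\widetilde{s}/\widetilde{Z}\}\in F$, contradicting (e). The genuinely hard subcase is $Rp_{12}$/$Rp_{13}$, where inconsistency rests on \emph{all} $\alpha$-derivatives (resp. all stable $\tau$-descendants) being inconsistent. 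Here I must first absorb the stabilization $\widetilde{r}\stackrel{\epsilon}{\Longrightarrow}|\widetilde{t}$: using the normalization Lemma~\ref{L:TAU_ACTION_NORMALIZATION} together with Lemmas~\ref{L:MULTI_TAU_GF_STABLE} and \ref{L:MULTI_TAU_ACTIVE_STABLE}, I reorganize the $\tau$-evolution so the context drives itself into a stable shape while the active slots are advanced exactly to $\widetilde{t}$, exhibiting a child of the root of the form $C^*\{\widetilde{q}/\widetilde{X},\widetilde{t}/\widetilde{Z},\widetilde{q_Y''}/\widetilde{Y}\}F$; the matching consistent $(\widetilde{p},\widetilde{s})$-derivatives and the relation $\widetilde{p_Y''}\underset{\thicksim}{\sqsubset}_{RS}\widetilde{q_Y''}$ are then supplied by clauses (RS3)/(RS4) of $\underset{\thicksim}{\sqsubset}_{RS}$ and by Lemma~\ref{L:LLTS}, exactly in the style of Lemma~\ref{L:RS_CON}. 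For the visible-action variant of $Rp_{12}$ the placeholder bookkeeping is governed by Lemma~\ref{L:MPLACE_HOLDER}, which preserves context structure across substitutions when one visible action spawns several active slots.

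I expect the main obstacle to be this last conjunctive subcase interacting with unstable $\widetilde{r}$, where one must simultaneously (i) choose a stabilization path on the $(\widetilde{q},\widetilde{r})$-side that lies inside $\mathcal{T}$ and reaches the $\widetilde{t}$-form, (ii) track which slots spawn new $1$-active placeholders so that the invariant ``active and at most once'' is maintained for the next induction step, and (iii) transport the choice back to a consistent, $\underset{\thicksim}{\sqsubset}_{RS}$-related derivative on the $(\widetilde{p},\widetilde{s})$-side. The normalization Lemmas~\ref{L:TAU_ACTION_NORMALIZATION} and \ref{L:MULTI_TAU_ACTIVE_STABLE}, which canonicalize arbitrary $\tau$-paths into ``context-first then slot-next'' form, are what make this bookkeeping tractable; the delicate point is aligning the resulting placeholder tuple $\widetilde{Y}$ and the substitutions $\widetilde{p_Y''},\widetilde{q_Y''}$ with the claimed subtree root so that conclusions (1)--(3) hold simultaneously.
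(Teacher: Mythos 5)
Your overall skeleton --- induction on the depth of $\mathcal T$, case analysis on the shape of $C_{\widetilde{X},\widetilde{Z}}$ and on the last rule, and the use of Lemma~\ref{L:ONE_ACTION_TAU_GF} to lift moves of $\widetilde{Z}$-slots --- is the same as the paper's, and your dispatches of the propagation rules and of $Rp_{10}$/$Rp_{11}$ are sound (for $Rp_{10}$/$Rp_{11}$ the paper is simpler: stability of the conjunction forces $r_Z \equiv t_Z$ on every occurring slot, so the root itself is already the required subtree with $\widetilde{Y}=\emptyset$; your contradiction via Lemma~\ref{L:SAME_ACTIONS} is correct but not needed). You also miss the simplification that ``each $Z$ is active'' forbids any $Z$ from occurring in a prefix, disjunction or recursion context, so in those cases $C_{\widetilde{X},\widetilde{Z}}\{\widetilde{q}/\widetilde{X},\widetilde{r}/\widetilde{Z}\} \equiv C_{\widetilde{X},\widetilde{Z}}\{\widetilde{q}/\widetilde{X},\widetilde{t}/\widetilde{Z}\}$ and the conclusion is immediate, without any appeal to Lemma~\ref{L:ONE_STEP_UNFOLDING_FAILURE}.

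The genuine gap is in your treatment of the crux, rule $Rp_{13}$: your construction runs in the wrong direction. You propose to first build a stabilization of the $(\widetilde{q},\widetilde{r})$-instance (advance the $Z$-slots to $\widetilde{t}$, let the context and the $\widetilde{q}$-slots stabilize, spawning $\widetilde{q_Y''}$), and only afterwards obtain ``matching consistent $(\widetilde{p},\widetilde{s})$-derivatives \dots supplied by (RS3)/(RS4)''. But ready simulation is one-directional: $\widetilde{p}\sqsubseteq_{RS}\widetilde{q}$ lets you push $p$-side moves forward to $q$-side moves, never pull an arbitrarily chosen $q$-side stabilization back to a consistent, $\underset{\thicksim}{\sqsubset}_{RS}$-related $p$-side one --- indeed many stable descendants of $C_{\widetilde{X},\widetilde{Z}}\{\widetilde{q}/\widetilde{X},\widetilde{r}/\widetilde{Z}\}$ may have no consistent counterpart at all, which is exactly why $Rp_{13}$ can fire. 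The paper's argument must therefore start on the $p$-side: hypothesis (e) plus (LTS2) of Theorem~\ref{L:LLTS} give $C_{\widetilde{X},\widetilde{Z}}\{\widetilde{p}/\widetilde{X},\widetilde{s}/\widetilde{Z}\}\stackrel{\epsilon}{\Longrightarrow}_F|p'$; Lemma~\ref{L:MULTI_TAU_GF_STABLE} decomposes this path (since $\widetilde{s}$ is stable the performers are $\widetilde{X}$-slots, giving $p_{i_Y}\stackrel{\tau}{\Longrightarrow}_F|p_Y'''$ and $p'\equiv C'\{\widetilde{p}/\widetilde{X},\widetilde{s}/\widetilde{Z},\widetilde{p_Y'''}/\widetilde{Y}\}\notin F$); hypotheses (a), (b) then transport these slot moves forward to $q_{i_Y}\stackrel{\tau}{\Longrightarrow}_F|q_Y'''$ with $p_Y'''\underset{\thicksim}{\sqsubset}_{RS}q_Y'''$, and (MS-$\tau$-3-ii) together with (c) and Lemma~\ref{L:ONE_ACTION_TAU_GF} reassembles the stable descendant $C'\{\widetilde{q}/\widetilde{X},\widetilde{t}/\widetilde{Z},\widetilde{q_Y'''}/\widetilde{Y}\}$ of the root, which is a premise of $Rp_{13}$; conclusions (2) and (3) then hold by construction rather than by a backward transfer. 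A second, more technical slip is your conflation of $Rp_{12}$ with $Rp_{13}$: the premises of $Rp_{12}$ with $\alpha=\tau$ are the \emph{one-step} $\tau$-derivatives, so a multi-step stable descendant is not a child of the root; the correct move there is to advance one unstable occurring $Z_0$-slot by a single $\tau$-step (a premise, by Lemma~\ref{L:ONE_ACTION_TAU_GF}), observe that (c) and (e) are preserved, and invoke the induction hypothesis on that proper subtree --- which is in fact the only place in the whole proof where the induction is actually used.
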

\begin{proof}
  It proceeds by induction on the depth of $\mathcal T$.
  We distinguish different cases depending on the form of $C_{\widetilde{X},\widetilde{Z}}$.\\

  \noindent Case 1 $C_{\widetilde{X},\widetilde{Z}}$ is closed or $C_{\widetilde{X},\widetilde{Z}} \equiv X_i$ or $C_{\widetilde{X},\widetilde{Z}} \equiv Z_j$ for some $i \leq |\widetilde{X}|$ and $j \leq |\widetilde{Z}|$.

  It is straightforward to show that this lemma holds trivially for such case.
  As a example, we consider the case $C_{\widetilde{X},\widetilde{Z}} \equiv Z_j$.
   Since $C_{\widetilde{X},\widetilde{Z}}\{\widetilde{p}/\widetilde{X},\widetilde{s}/\widetilde{Z}\}  \equiv s_j \notin F $ and $\widetilde{s} \underset{\thicksim}{\sqsubset}_{RS} \widetilde{t}$,
   we have $t_j \notin F $.
   Hence $r_j \stackrel{\epsilon}{\Longrightarrow}_F|t_j$ by Lemma~\ref{L:FAILURE_TAU_I}.
   So $C_{\widetilde{X},\widetilde{Z}}\{\widetilde{q}/\widetilde{X},\widetilde{r}/\widetilde{Z}\} \equiv r_j \notin F $.
   That is, there is no proof tree of $Strip(\mathcal{P}_{\text{CLL}_R},M_{\text{CLL}_R}) \vdash C_{\widetilde{X},\widetilde{Z}}\{\widetilde{q}/\widetilde{X},\widetilde{r}/\widetilde{Z}\}F$.
   Thus the conclusion holds trivially.\\

  \noindent Case 2 $C_{\widetilde{X},\widetilde{Z}}$ is of the format $\alpha.B_{\widetilde{X},\widetilde{Z}}$ or $ B_{\widetilde{X},\widetilde{Z}} \vee D_{\widetilde{X},\widetilde{Z}}$ or $\langle Y|E \rangle$.

  For these three formats, since each $Z(\in \widetilde{Z})$ is active in $C_{\widetilde{X},\widetilde{Z}}$, it is obvious that $\widetilde{Z} = \emptyset$.
  Thus $C_{\widetilde{X},\widetilde{Z}}\{\widetilde{q}/\widetilde{X},\widetilde{r}/\widetilde{Z}\} \equiv C_{\widetilde{X},\widetilde{Z}}\{\widetilde{q}/\widetilde{X},\widetilde{t}/\widetilde{Z}\}$.
  So, $\mathcal T$ has the root labelled with $C_{\widetilde{X},\widetilde{Z}}\{\widetilde{q}/\widetilde{X},\widetilde{t}/\widetilde{Z}\}F$.
  Therefore, the conclusion holds by setting $C_{\widetilde{X},\widetilde{Z},\widetilde{Y}}^*\triangleq C_{\widetilde{X},\widetilde{Z}}$ with $\widetilde{Y}= \emptyset$.\\

  \noindent Case 3 $C_{\widetilde{X},\widetilde{Z}} \equiv B_{\widetilde{X},\widetilde{Z}} \odot D_{\widetilde{X},\widetilde{Z}}$ with $\odot \in \{\Box, \parallel_A\}$.

  W.l.o.g, assume the last rule applied in  $\mathcal T$ is  \[\frac{B_{\widetilde{X},\widetilde{Z}}\{\widetilde{q}/\widetilde{X},\widetilde{r}/\widetilde{Z}\}F}{B_{\widetilde{X},\widetilde{Z}}\{\widetilde{q}/\widetilde{X},\widetilde{r}/\widetilde{Z}\} \odot D_{\widetilde{X},\widetilde{Z}}\{\widetilde{q}/\widetilde{X},\widetilde{r}/\widetilde{Z}\}F}.\]
  Then $\mathcal T$ has a proper subtree $\mathcal T'$ with root $B_{\widetilde{X},\widetilde{Z}}\{\widetilde{q}/\widetilde{X},\widetilde{r}/\widetilde{Z}\}F$.
  Since $C_{\widetilde{X},\widetilde{Z}}\{\widetilde{p}/\widetilde{X},\widetilde{s}/\widetilde{Z}\}  \notin F $, we get $B_{\widetilde{X},\widetilde{Z}}\{\widetilde{p}/\widetilde{X},\widetilde{s}/\widetilde{Z}\} \notin F $.
  Then the conclusion immediately follows by applying  IH on $\mathcal T'$.\\

  \noindent Case 4 $C_{\widetilde{X},\widetilde{Z}} \equiv B_{\widetilde{X},\widetilde{Z}} \wedge D_{\widetilde{X},\widetilde{Z}}$.

  The argument splits into four cases based on the last rule applied in $\mathcal T$.\\

 \noindent Case 4.1 $\frac{B_{\widetilde{X},\widetilde{Z}}\{\widetilde{q}/\widetilde{X},\widetilde{r}/\widetilde{Z}\}F}{B_{\widetilde{X},\widetilde{Z}}\{\widetilde{q}/\widetilde{X},\widetilde{r}/\widetilde{Z}\} \wedge D_{\widetilde{X},\widetilde{Z}}\{\widetilde{q}/\widetilde{X},\widetilde{r}/\widetilde{Z}\}F}$.

  Similar to Case 3, omitted.\\

 \noindent Case 4.2 $\frac{B_{\widetilde{X},\widetilde{Z}}\{\widetilde{q}/\widetilde{X},\widetilde{r}/\widetilde{Z}\} \stackrel{a}{\longrightarrow} r', \; D_{\widetilde{X},\widetilde{Z}}\{\widetilde{q}/\widetilde{X},\widetilde{r}/\widetilde{Z}\} \not\stackrel{a}{\longrightarrow}, \; C_{\widetilde{X},\widetilde{Z}}\{\widetilde{q}/\widetilde{X},\widetilde{r}/\widetilde{Z}\}   \not\stackrel{\tau}{\longrightarrow}}{B_{\widetilde{X},\widetilde{Z}}\{\widetilde{q}/\widetilde{X},\widetilde{r}/\widetilde{Z}\} \wedge D_{\widetilde{X},\widetilde{Z}}\{\widetilde{q}/\widetilde{X},\widetilde{r}/\widetilde{Z}\} F}$.



 For any $Z (\in \widetilde{Z})$ occurring in $C_{\widetilde{X},\widetilde{Z}}$, since $Z$ is active and $C_{\widetilde{X},\widetilde{Z}}\{\widetilde{q}/\widetilde{X},\widetilde{r}/\widetilde{Z}\}   \not\stackrel{\tau}{\longrightarrow}$, by Lemma~\ref{L:ONE_ACTION_TAU_GF}, we have $r_Z \not\stackrel{\tau}{\longrightarrow} $, and hence $r_Z \equiv t_Z$ because of (c).
 So, $C_{\widetilde{X},\widetilde{Z}}\{\widetilde{q}/\widetilde{X},\widetilde{r}/\widetilde{Z}\} \equiv C_{\widetilde{X},\widetilde{Z}}\{\widetilde{q}/\widetilde{X},\widetilde{t}/\widetilde{Z}\}$.
    Hence $\mathcal T$ has the root labelled with $C_{\widetilde{X},\widetilde{Z}}\{\widetilde{q}/\widetilde{X},\widetilde{t}/\widetilde{Z}\}F$.
    Clearly, the conclusion holds by setting $C_{\widetilde{X},\widetilde{Z},\widetilde{Y}}^*\triangleq C_{\widetilde{X},\widetilde{Z}}$ with $\widetilde{Y}=\emptyset$.\\

 \noindent Case 4.3 $\frac{C_{\widetilde{X},\widetilde{Z}}\{\widetilde{q}/\widetilde{X},\widetilde{r}/\widetilde{Z}\}   \stackrel{\alpha}{\longrightarrow}s',\;\{rF: C_{\widetilde{X},\widetilde{Z}}\{\widetilde{q}/\widetilde{X},\widetilde{r}/\widetilde{Z}\}  \stackrel{\alpha}{\longrightarrow}r\}}{C_{\widetilde{X},\widetilde{Z}}\{\widetilde{q}/\widetilde{X},\widetilde{r}/\widetilde{Z}\}F}$.

 If $\alpha \in Act$, the argument is similar to one of Case 4.2 and omitted.
 In the following, we handle the case $\alpha = \tau$.
 If $r_Z \not\stackrel{\tau}{\longrightarrow}$ for any $Z (\in \widetilde{Z})$ occurring in $C_{\widetilde{X},\widetilde{Z}}$,
 then the conclusion holds trivially by putting $C_{\widetilde{X},\widetilde{Z},\widetilde{Y}}^*\triangleq C_{\widetilde{X},\widetilde{Z}}$ with $\widetilde{Y} = \emptyset$.
 Next we consider the other case where $r_{Z_0} \stackrel{\tau}{\longrightarrow} $ for some $Z_0(\in \widetilde{Z})$ occurring in $C_{\widetilde{X},\widetilde{Z}}$.
 Then $r_{Z_0} \stackrel{\tau}{\longrightarrow} r'\stackrel{\epsilon}{\Longrightarrow}|t_{Z_0}$ for some $r'$ by (c);
 moreover, $Z_0$ is 1-active in $C_{\widetilde{X},\widetilde{Z}}$.
 Thus, by Lemma~\ref{L:ONE_ACTION_TAU_GF}, we get \[C_{\widetilde{X},\widetilde{Z}}\{\widetilde{q}/\widetilde{X},\widetilde{r}/\widetilde{Z}\}   \stackrel{\tau}{\longrightarrow}  C_{\widetilde{X},\widetilde{Z}}\{\widetilde{q}/\widetilde{X},\widetilde{r}\,[r'/r_{Z_0}]/\widetilde{Z}\}.\]
 So, $\mathcal T$ has a proper subtree $\mathcal T'$ with root $C_{\widetilde{X},\widetilde{Z}}\{\widetilde{q}/\widetilde{X},\widetilde{r}\,[r'/r_{Z_0}]/\widetilde{Z}\}F$.
 Since $\widetilde{r}\,[r'/r_{Z_0}] \stackrel{\epsilon}{\Longrightarrow} |\widetilde{t} $ and   $C_{\widetilde{X},\widetilde{Z}}\{\widetilde{p}/\widetilde{X},\widetilde{s}/\widetilde{Z}\} \notin F $, by IH, $\mathcal T'$ has a subtree with root $C_{\widetilde{X},\widetilde{Z},\widetilde{Y}}^*\{\widetilde{q}/\widetilde{X},\widetilde{t}/\widetilde{Z},\widetilde{q_Y''}/\widetilde{Y}\}F$ for some $C_{\widetilde{X},\widetilde{Z},\widetilde{Y}}^*$, $\widetilde{p_Y''}$ and $\widetilde{q_Y''}$ such that $C_{\widetilde{X},\widetilde{Z},\widetilde{Y}}^*\{\widetilde{p}/\widetilde{X},\widetilde{s}/\widetilde{Z},\widetilde{p_Y''}/\widetilde{Y}\} \notin F $ and $\widetilde{p_Y''}\underset{\thicksim}{\sqsubset}_{RS}\widetilde{q_Y''}$.\\

 \noindent Case 4.4 $\frac{\{rF: C_{\widetilde{X},\widetilde{Z}}\{\widetilde{q}/\widetilde{X},\widetilde{r}/\widetilde{Z}\}  \stackrel{\epsilon}{\Longrightarrow}|r\}}{C_{\widetilde{X},\widetilde{Z}}\{\widetilde{q}/\widetilde{X},\widetilde{r}/\widetilde{Z}\}F}$.

 It follows from $C_{\widetilde{X},\widetilde{Z}}\{\widetilde{p}/\widetilde{X},\widetilde{s}/\widetilde{Z}\} \notin F $ that \[C_{\widetilde{X},\widetilde{Z}}\{\widetilde{p}/\widetilde{X},\widetilde{s}/\widetilde{Z}\} \stackrel{\epsilon}{\Longrightarrow}_F| p'\;\text{for some}\; p'.\]
 Then, by Lemma~\ref{L:MULTI_TAU_GF_STABLE}, for such transition,
  there exist a stable context $C_{\widetilde{X},\widetilde{Z},\widetilde{Y}}'$ and $i_Y,p_Y'''(Y \in \widetilde{Y})$ that realize (MS-$\tau$-1) -- (MS-$\tau$-7).
   In particular, since each $s (\in \widetilde{s})$ is stable, by (MS-$\tau$-2,7), we have $i_Y \leq |\widetilde{X}|$ for each $Y \in \widetilde{Y}$ and
   \[p_{i_Y} \stackrel{\tau}{\Longrightarrow}|p_{Y}'''\;\text{for each}\;Y \in \widetilde{Y}\;\text{and}\;p'\equiv C_{\widetilde{X},\widetilde{Z},\widetilde{Y}}'\{\widetilde{p}/\widetilde{X},\widetilde{s}/\widetilde{Z},\widetilde{p_{Y}'''}/\widetilde{Y}\} \notin F.\]
   Then, by Lemma~\ref{L:FAILURE_GF}, it follows from (MS-$\tau$-1) 
   that, for each $Y \in \widetilde{Y}$,  $p_{Y}''' \notin F $ and hence $p_{i_Y}\stackrel{\tau}{\Longrightarrow}_F|p_{Y}'''$ by Lemma~\ref{L:FAILURE_TAU_I}.
   Further, since $\widetilde{p} \bowtie \widetilde{q}$ and $\widetilde{p}\sqsubseteq_{RS}\widetilde{q}$, there exist $q_{Y}'''(Y \in \widetilde{Y})$ such that
   \[q_{i_Y}\stackrel{\tau}{\Longrightarrow}_F|q_{Y}'''\;\text{and}\; p_{Y}'''\underset{\thicksim}{\sqsubset}_{RS} q_{Y}'''\;\text{for each}\;Y \in \widetilde{Y}.\]
   Then it follows from (MS-$\tau$-3-ii) 
   that
   \[C_{\widetilde{X},\widetilde{Z}}\{\widetilde{q}/\widetilde{X},\widetilde{t}/\widetilde{Z}\} \stackrel{\epsilon}{\Longrightarrow} C_{\widetilde{X},\widetilde{Z},\widetilde{Y}}'\{\widetilde{q}/\widetilde{X},\widetilde{t}/\widetilde{Z},\widetilde{q_{Y}'''}/\widetilde{Y}\}.\tag{\ref{L:FAILURE_CONGRUENCE_PRE}.1}\]   Moreover, since $Z$ is active and occurs at most once in $C_{\widetilde{X},\widetilde{Z}}$ for each $Z \in \widetilde{Z}$,  by   Lemma~\ref{L:ONE_ACTION_TAU_GF}, it follows from $\widetilde{r} \stackrel{\epsilon}{\Longrightarrow} \widetilde{t}$ that
    \[C_{\widetilde{X},\widetilde{Z}}\{\widetilde{q}/\widetilde{X},\widetilde{r}/\widetilde{Z}\} \stackrel{\epsilon}{\Longrightarrow} C_{\widetilde{X},\widetilde{Z}}\{\widetilde{q}/\widetilde{X},\widetilde{t}/\widetilde{Z}\}. \tag{\ref{L:FAILURE_CONGRUENCE_PRE}.2}\]
   Since $\widetilde{p} \bowtie \widetilde{q}$, $\widetilde{s}\underset{\thicksim}{\sqsubset}_{RS} \widetilde{t}$ and $\widetilde{p_{Y}'''}\underset{\thicksim}{\sqsubset}_{RS} \widetilde{q_{Y}'''}$, by $p'\equiv C_{\widetilde{X},\widetilde{Z},\widetilde{Y}}'\{\widetilde{p}/\widetilde{X},\widetilde{s}/\widetilde{Z},\widetilde{p_{Y}'''}/\widetilde{Y}\} \not\stackrel{\tau}{\longrightarrow}$ and Lemma~\ref{L:ONE_ACTION_TAU},  we can conclude that $C_{\widetilde{X},\widetilde{Z},\widetilde{Y}}'\{\widetilde{q}/\widetilde{X},\widetilde{t}/\widetilde{Z},\widetilde{q_{Y}'''}/\widetilde{Y}\}$ is stable.
   Hence $\mathcal T$ has a proper subtree with root $C_{\widetilde{X},\widetilde{Z},\widetilde{Y}}'\{\widetilde{q}/\widetilde{X},\widetilde{t}/\widetilde{Z},\widetilde{q_{Y}'''}/\widetilde{Y}\}F$ by  (\ref{L:FAILURE_CONGRUENCE_PRE}.1) and (\ref{L:FAILURE_CONGRUENCE_PRE}.2); moreover, we also have  $p' \equiv C_{\widetilde{X},\widetilde{Z},\widetilde{Y}}'\{\widetilde{p}/\widetilde{X},\widetilde{s}/\widetilde{Z},\widetilde{p_{Y}'''}/\widetilde{Y}\} \notin F $ and $\widetilde{p_{Y}'''}\underset{\thicksim}{\sqsubset}_{RS}\widetilde{q_{Y}'''}$.
   Consequently, $C_{\widetilde{X},\widetilde{Z},\widetilde{Y}}'$, $\widetilde{p_Y'''}$ and $\widetilde{q_Y'''}$ are what we seek.
\end{proof}

\begin{lemma}\label{L:FAILURE_CONGRUENCE}
   For any $C_{\widetilde{X}}$ and processes $\widetilde{r}$ and $\widetilde{s}$, if $\widetilde{r} \bowtie \widetilde{s}$  and $\widetilde{r}\sqsubseteq_{RS} \widetilde{s}$, then  $C_{\widetilde{X}}\{\widetilde{r}/\widetilde{X}\} \notin F $ implies $C_{\widetilde{X}}\{\widetilde{s}/\widetilde{X}\} \notin F $.
\end{lemma}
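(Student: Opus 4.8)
The plan is to argue by contradiction through the well-foundedness of proof trees (Definition~\ref{D:PROOF}), exactly in the style already used for Lemmas~\ref{L:RS_CON}, \ref{L:FAILURE_S_VS_NS} and \ref{L:FAILURE_NS_IMPLIES_S}. Fixing $\widetilde r \bowtie \widetilde s$ with $\widetilde r \sqsubseteq_{RS} \widetilde s$, I would collect into one set $\Omega$ all ``abstract'' processes that still carry a consistent ``concrete'' witness linked by ready simulation. Concretely, let $\Omega$ consist of all terms $C_{\widetilde X,\widetilde Z}\{\widetilde s/\widetilde X,\widetilde t/\widetilde Z\}$ in which each $Z\in\widetilde Z$ is active and occurs at most once, the tuple $\widetilde t$ is stable, and there exists $\widetilde{s'}$ with $\widetilde{s'}\bowtie\widetilde t$, $\widetilde{s'}\underset{\thicksim}{\sqsubset}_{RS}\widetilde t$ and $C_{\widetilde X,\widetilde Z}\{\widetilde r/\widetilde X,\widetilde{s'}/\widetilde Z\}\notin F$. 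Taking $\widetilde Z=\emptyset$ and the witness supplied by the hypothesis $C_{\widetilde X}\{\widetilde r/\widetilde X\}\notin F$ shows $C_{\widetilde X}\{\widetilde s/\widetilde X\}\in\Omega$, so the lemma reduces to proving $\Omega\cap F=\emptyset$.

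The heart of the argument is the claim that every member of $\Omega$ admits, in any proof tree of its $F$-judgement, a \emph{proper} subtree whose root again lies in $\Omega$; since no infinite descending chain of subtrees can exist, this forces $\Omega\cap F=\emptyset$. Given $w=C_{\widetilde X,\widetilde Z}\{\widetilde s/\widetilde X,\widetilde t/\widetilde Z\}\in\Omega$ and a proof tree $\mathcal T$ of $wF$, I would feed $\mathcal T$ to Lemma~\ref{L:FAILURE_CONGRUENCE_PRE} with the concrete/abstract pair $(\widetilde r,\widetilde s)$ and with the stable tuple $\widetilde t$ playing the role of both the un-stabilized and the stabilized place-holder tuples of that lemma (legitimate since $\widetilde t\stackrel{\epsilon}{\Longrightarrow}|\widetilde t$). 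The lemma returns a subtree of $\mathcal T$ rooted at a stabilized judgement $C^*_{\widetilde X,\widetilde Z,\widetilde Y}\{\widetilde s/\widetilde X,\widetilde t/\widetilde Z,\widetilde{q_Y''}/\widetilde Y\}F$ whose root again belongs to $\Omega$: its clauses (2) and (3) re-establish membership, because the freshly exposed place-holders $\widetilde Y$ carry a consistent witness $\widetilde{p_Y''}$ with $\widetilde{p_Y''}\underset{\thicksim}{\sqsubset}_{RS}\widetilde{q_Y''}$ and, being reached by $\stackrel{\epsilon}{\Longrightarrow}_F|$, are stable.

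When the returned subtree is proper we are done. What remains — and this is where care is needed — are the configurations in which Lemma~\ref{L:FAILURE_CONGRUENCE_PRE} returns $\mathcal T$ itself. These are of two kinds. When $C_{\widetilde X,\widetilde Z}$ is a prefix, a disjunction or a recursion term (so $\widetilde Z=\emptyset$), I would instead descend by one explicit step, reading off a proper child of $\mathcal T$ and using Lemma~\ref{L:F_NORMAL}(1),(2),(8) to transport consistency of the witness to that child, keeping it in $\Omega$. The dangerous configuration is a \emph{stable} conjunction $w\equiv(B\wedge D)\{\ldots\}$ whose inconsistency is derived from $Rp_{10}$/$Rp_{11}$ (mismatched ready sets) or from $Rp_{12}$ with a visible label, where no proper descent is available. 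I would rule it out by showing it is impossible for a member of $\Omega$: using Lemma~\ref{L:SAME_ACTIONS} to transfer the offending ready-set discrepancy from the abstract side $(\widetilde s,\widetilde t)$ back to the concrete witness $(\widetilde r,\widetilde{s'})$ — legitimate because the two sides share ready sets on the consistent stable components — one concludes that the witness itself satisfies the premises of $Rp_{10}$/$Rp_{11}$, hence is inconsistent, contradicting its defining property in $\Omega$. So these terminal cases never arise for $w\in\Omega$, and the descent claim holds in all remaining cases.

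The main obstacle, and the reason Lemma~\ref{L:FAILURE_CONGRUENCE_PRE} is formulated with its elaborate place-holder bookkeeping, is the divergence rule $Rp_{13}$ (and its recursion counterpart $Rp_{15}$): to produce a proper subtree in $\Omega$ I must exhibit a \emph{stable} $\tau$-descendant of $w$ that still carries a consistent witness. The witness $C_{\widetilde X,\widetilde Z}\{\widetilde r/\widetilde X,\widetilde{s'}/\widetilde Z\}\notin F$ does possess a stable consistent $\tau$-descendant by clause (LTS2) of Theorem~\ref{L:LLTS} (equivalently Lemma~\ref{L:F_NORMAL}(7)), but aligning it with a matching stable $\tau$-descendant of $w$ — while tracking which components of the concrete side were unstable and had to be advanced, and preserving $\underset{\thicksim}{\sqsubset}_{RS}$ throughout — is precisely what Lemmas~\ref{L:MULTI_TAU_GF_STABLE} and \ref{L:TAU_ACTION_NORMALIZATION}, packaged inside Lemma~\ref{L:FAILURE_CONGRUENCE_PRE}, accomplish, with Lemma~\ref{L:FAILURE_TAU_I} propagating consistency back along $\tau$-moves. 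Getting this alignment right, so that the new root is provably in $\Omega$ and strictly below the old one, is the crux of the whole argument.
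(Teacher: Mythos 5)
Your overall architecture (infinite descent on well-founded proof trees with a witness-carrying set $\Omega$, fed by Lemma~\ref{L:FAILURE_CONGRUENCE_PRE}, Lemma~\ref{L:SAME_ACTIONS} and the stabilization lemmas) is essentially the paper's, but your treatment of the conjunction cases has a genuine gap. The decisive error is the claim that a stable conjunction whose inconsistency is derived by $Rp_{12}$ with a visible label can be ``ruled out'' in the same way as $Rp_{10}$/$Rp_{11}$. The premises of $Rp_{12}$ contain no ready-set discrepancy: they say that the conjunction has an $a$-transition and that \emph{all} of its $a$-derivatives are inconsistent. Lemma~\ref{L:SAME_ACTIONS} only transfers the existence of the $a$-transition to the concrete witness; to obtain a contradiction you would need all $a$-derivatives of the \emph{witness} to be inconsistent, which does not follow --- on the contrary, since the witness is stable and consistent, Theorem~\ref{L:LLTS} gives it a consistent run $p_{i_Y}\stackrel{a}{\longrightarrow}_F\cdot\stackrel{\epsilon}{\Longrightarrow}_F|\cdot$. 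In the paper this configuration (Case~6.4) is not an impossibility but a descent step: that consistent run of the witness is matched through $\sqsubseteq_{RS}$ (condition (RS3)) to produce an $a$-derivative of the abstract side, which is one of the premises of $Rp_{12}$ and hence the root of a \emph{proper} subtree; Lemma~\ref{L:FAILURE_CONGRUENCE_PRE} is then applied to that subtree to land back in $\Omega$. Your plan, as stated, fails on exactly this rule.

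A second, related omission: with your instantiation (stable $\widetilde{t}$ in the $\widetilde{Z}$-slots), Lemma~\ref{L:FAILURE_CONGRUENCE_PRE} returns the whole tree not only for prefixes, disjunctions, recursions and stable conjunctions, but also for an \emph{unstable} conjunction whose last rule is $Rp_{12}$ with $\alpha=\tau$: its Case~4.3 makes progress only when some $\widetilde{Z}$-component is unstable, which never happens for you. This configuration --- $\tau$-steps driven by $\widetilde{s}$ or by the context itself --- is absent from your case list; the paper's Case~6.3 handles it by decomposing the witness's $\tau$-step with Lemma~\ref{L:ONE_ACTION_TAU}, matching the substitution-driven alternative through $\sqsubseteq_{RS}$, and only then invoking Lemma~\ref{L:FAILURE_CONGRUENCE_PRE} on the resulting proper subtree. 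Two lesser points: membership of the returned root in your $\Omega$ also requires every variable of $\widetilde{Z}\cup\widetilde{Y}$ to be active and occur at most once in $C^*_{\widetilde{X},\widetilde{Z},\widetilde{Y}}$, which the statement of Lemma~\ref{L:FAILURE_CONGRUENCE_PRE} does not assert; and for a recursion context the rule $Rp_{15}$ is dispatched neither by ``one explicit step'' via Lemma~\ref{L:F_NORMAL}(8) nor by Lemma~\ref{L:FAILURE_CONGRUENCE_PRE} (which treats recursion trivially, with $\widetilde{Z}=\emptyset$), so the stable-descendant alignment there must be carried out by hand, as in the second alternative of the paper's Case~5.
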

\begin{proof}
Set
\[
        \Omega = \{B_{\widetilde{X}}\{\widetilde{q}/\widetilde{X}\}:
        \widetilde{p} \bowtie \widetilde{q}, \;\widetilde{p} \sqsubseteq_{RS}\widetilde{q},\;
        B_{\widetilde{X}}\{\widetilde{p}/\widetilde{X}\}\notin F \;\text{and}\; B_X\;\text{is a context}\}.
\]
Let $C_{\widetilde{X}}\{\widetilde{q}/\widetilde{X}\} \in \Omega$ and $\mathcal T$ be any proof tree of $Strip(\mathcal{P}_{\text{CLL}_R} ,M_{\text{CLL}_R} ) \vdash C_{\widetilde{X}}\{\widetilde{q}/\widetilde{X}\}F$.
Similar to Lemma~\ref{L:FAILURE_S_VS_NS}, it suffices to show that $\mathcal T$ has a proper subtree with root $sF$ for some $s \in \Omega$.
We distinguish six cases based on the form of $C_{\widetilde{X}}$.\\

  \noindent Case 1 $C_{\widetilde{X}} $ is closed or $C_{\widetilde{X}} \equiv X_i$.

    In this situation, $C_{\widetilde{X}}\{\widetilde{q}/\widetilde{X}\} \notin F$ because of $C_{\widetilde{X}}\{\widetilde{p}/\widetilde{X}\} \notin F$ and $\widetilde{p} \sqsubseteq_{RS}\widetilde{q}$.
    Thus there is no proof tree of $C_{\widetilde{X}}\{\widetilde{q}/\widetilde{X}\}F$.
Hence the conclusion holds trivially.\\

  \noindent Case 2 $C_{\widetilde{X}} \equiv \alpha.B_{\widetilde{X}}$.

  Then the last rule applied in $\mathcal T$ is $\frac{B_{\widetilde{X}}\{\widetilde{q}/\widetilde{X}\}F}{\alpha.B_{\widetilde{X}}\{\widetilde{q}/\widetilde{X}\}F}$.
  Moreover $B_{\widetilde{X}}\{\widetilde{p}/\widetilde{X}\} \notin F $ due to $C_{\widetilde{X}}\{\widetilde{p}/\widetilde{X}\}  \notin F $.
  Hence $B_{\widetilde{X}}\{\widetilde{q}/\widetilde{X}\}\in \Omega$, as desired.\\

  \noindent Case 3 $C_{\widetilde{X}} \equiv B_{\widetilde{X}} \vee D_{\widetilde{X}}$.

     Obviously, the last rule applied in $\mathcal T$ is $\frac{B_{\widetilde{X}}\{\widetilde{q}/\widetilde{X}\}F,D_{\widetilde{X}}\{\widetilde{q}/\widetilde{X}\}F}{B_{\widetilde{X}}\{\widetilde{q}/\widetilde{X}\} \vee D_{\widetilde{X}}\{\widetilde{q}/\widetilde{X}\}F}$.
     Due to $C_{\widetilde{X}}\{\widetilde{p}/\widetilde{X}\} \notin F $,  we have either $B_{\widetilde{X}}\{\widetilde{p}/\widetilde{X}\} \notin F $ or $D_{\widetilde{X}}\{\widetilde{p}/\widetilde{X}\} \notin F $, which implies $B_{\widetilde{X}}\{\widetilde{q}/\widetilde{X}\} \in \Omega $ or $D_{\widetilde{X}}\{\widetilde{q}/\widetilde{X}\} \in \Omega $.
     Thus $\mathcal T$ contains a proper subtree with root $sF$ for some $s \in \Omega$.\\

  \noindent Case 4 $C_{\widetilde{X}} \equiv B_{\widetilde{X}} \odot D_{\widetilde{X}}$ with $\odot \in \{\Box, \parallel_A\}$.

  W.l.o.g, assume the last rule applied in $\mathcal T$ is  $\frac{B_{\widetilde{X}}\{\widetilde{q}/\widetilde{X}\}F}{B_{\widetilde{X}}\{\widetilde{q}/\widetilde{X}\} \odot D_{\widetilde{X}}\{\widetilde{q}/\widetilde{X}\}F}$.
  Since $C_{\widetilde{X}}\{\widetilde{p}/\widetilde{X}\}  \notin F $, we get $B_{\widetilde{X}}\{\widetilde{p}/\widetilde{X}\} \notin F $, which implies $B_{\widetilde{X}}\{\widetilde{q}/\widetilde{X}\} \in \Omega$, as desired.\\

  \noindent Case 5 $C_{\widetilde{X}} \equiv \langle Y|E \rangle$.

  Clearly, the last rule applied in $\mathcal T$ is
  \[\text{either}\; \frac{\langle t_Y|E \rangle \{\widetilde{q}/\widetilde{X}\}F}{\langle Y|E \rangle \{\widetilde{q}/\widetilde{X}\}F}\;\text{with}\;Y =t_Y \in E\;\text{or}\;\frac{\{r F:\langle Y|E \rangle \{\widetilde{q}/\widetilde{X}\} \stackrel{\epsilon}{\Longrightarrow}|r\}}{\langle Y|E \rangle \{\widetilde{q}/\widetilde{X}\} F}.\]

  For the first alternative, we have $\langle t_Y|E \rangle \{\widetilde{p}/\widetilde{X}\} \notin F $ because of $C_{\widetilde{X}}\{\widetilde{p}/\widetilde{X}\} \notin F $, and hence $\langle t_Y|E \rangle \{\widetilde{q}/\widetilde{X}\}  \in \Omega$.

  For the second alternative, due to $C_{\widetilde{X}}\{\widetilde{p}/\widetilde{X}\}  \notin F $, we get
  \[C_{\widetilde{X}}\{\widetilde{p}/\widetilde{X}\} \stackrel{\epsilon}{\Longrightarrow}_F|s\;\text{for some} \;s.\]
  For such transition, by Lemma~\ref{L:MULTI_TAU_GF_STABLE}, there exist $C_{\widetilde{X},\widetilde{Z}}'$ and $i_Z \leq |\widetilde{X}|,p_{Z}'(Z \in \widetilde{Z})$ that realize (MS-$\tau$-1) -- (MS-$\tau$-7).
  Amongst them, by (MS-$\tau$-2,7), we have
  \[p_{i_Z} \stackrel{\tau}{\Longrightarrow}|p_{Z}'\;\text{for each}\; Z \in \widetilde{Z} \;\text{and}\;s \equiv C_{\widetilde{X},\widetilde{Z}}'\{\widetilde{p}/\widetilde{X},\widetilde{p_{Z}'}/\widetilde{Z}\} \notin F. \tag{\ref{L:FAILURE_CONGRUENCE}.1}\]
  Thus, for each $Z \in  \widetilde{Z}$, by (MS-$\tau$-1) and Lemma~\ref{L:FAILURE_GF}, it follows that $p_{Z}'\notin F $,  and hence $p_{i_Z}\stackrel{\tau}{\Longrightarrow}_F |p_{Z}'$ by Lemma~\ref{L:FAILURE_TAU_I}.
  Further, since $\widetilde{p} \bowtie \widetilde{q}$, it follows from  $\widetilde{p} \sqsubseteq_{RS}\widetilde{q}$ that for each  $Z \in \widetilde{Z}$,
  \[q_{i_Z}\stackrel{\tau}{\Longrightarrow}_F|q_{Z}'\;\text{and}\;p_{Z}' \underset{\thicksim}{\sqsubset}_{RS} q_{Z}'\;\text{for some} \; q_{Z}'.\tag{\ref{L:FAILURE_CONGRUENCE}.2}\]
  Then $C_{\widetilde{X}}\{\widetilde{q}/\widetilde{X}\} \stackrel{\epsilon}{\Longrightarrow}   C_{\widetilde{X},\widetilde{Z}}'\{\widetilde{q}/\widetilde{X},\widetilde{q_{Z}'}/\widetilde{Z}\}$   by (MS-$\tau$-3-ii).
  In addition, since $\widetilde{p} \bowtie \widetilde{q}$ and $\widetilde{p_{Z}'} \underset{\thicksim}{\sqsubset}_{RS} \widetilde{q_{Z}'}$, by Lemma~\ref{L:ONE_ACTION_TAU},
  it follows from $s \equiv C_{\widetilde{X},\widetilde{Z}}'\{\widetilde{p}/\widetilde{X},\widetilde{p_{Z}'}/\widetilde{Z}\} \not\stackrel{\tau}{\longrightarrow}$ that $C_{\widetilde{X},\widetilde{Z}}'\{\widetilde{q}/\widetilde{X},\widetilde{q_{Z}'}/\widetilde{Z}\}$ is stable.
  Therefore
  \[C_{\widetilde{X}}\{\widetilde{q}/\widetilde{X}\} \stackrel{\epsilon}{\Longrightarrow}   | C_{\widetilde{X},\widetilde{Z}}'\{\widetilde{q}/\widetilde{X},\widetilde{q_{Z}'}/\widetilde{Z}\}.\]
  Hence $\mathcal T$ has a proper subtree with root $C_{\widetilde{X},\widetilde{Z}}'\{\widetilde{q}/\widetilde{X},\widetilde{q_{Z}'}/\widetilde{Z}\}F $;
  moreover $C_{\widetilde{X},\widetilde{Z}}'\{\widetilde{q}/\widetilde{X},\widetilde{q_{Z}'}/\widetilde{Z}\} \in \Omega$ due to (\ref{L:FAILURE_CONGRUENCE}.1) and (\ref{L:FAILURE_CONGRUENCE}.2).\\

  \noindent Case 6 $C_{\widetilde{X}} \equiv B_{\widetilde{X}} \wedge D_{\widetilde{X}}$.

  The argument  splits into five subcases depending on the last rule applied in $\mathcal T$.\\

 \noindent Case 6.1 $\frac{B_{\widetilde{X}}\{\widetilde{q}/\widetilde{X}\}F}{B_{\widetilde{X}}\{\widetilde{q}/\widetilde{X}\} \wedge  D_{\widetilde{X}}\{\widetilde{q}/\widetilde{X}\}F}$.

  Similar to Case 4, omitted.\\

 \noindent Case 6.2 $\frac{B_{\widetilde{X}}\{\widetilde{q}/\widetilde{X} \}\stackrel{a}{\longrightarrow} r, D_{\widetilde{X}}\{\widetilde{q}/\widetilde{X}\} \not\stackrel{a}{\longrightarrow}, C_{\widetilde{X}}\{\widetilde{q}/\widetilde{X}\}  \not\stackrel{\tau}{\longrightarrow}}{B_{\widetilde{X}}\{\widetilde{q}/\widetilde{X}\}  \wedge D_{\widetilde{X}}\{\widetilde{q}/\widetilde{X}\}  F}$.
%

Clearly, $B_{\widetilde{X}}\{\widetilde{p}/\widetilde{X}\} \notin F$ and $D_{\widetilde{X}}\{\widetilde{p}/\widetilde{X}\} \notin F$ due to $C_{\widetilde{X}}\{\widetilde{p}/\widetilde{X}\} \notin F$.
 Moreover, by $\widetilde{p} \bowtie \widetilde{q}$ and Lemma~\ref{L:ONE_ACTION_TAU}, we have $C_{\widetilde{X}}\{\widetilde{p}/\widetilde{X}\}   \not\stackrel{\tau}{\longrightarrow} $ 
because of $C_{\widetilde{X}}\{\widetilde{q}/\widetilde{X}\}   \not\stackrel{\tau}{\longrightarrow} $.
 Further, by Lemma~\ref{L:SAME_ACTIONS},  we also have  $B_{\widetilde{X}}\{\widetilde{p}/\widetilde{X}\} \stackrel{a}{\longrightarrow} $ and $D_{\widetilde{X}}\{\widetilde{p}/\widetilde{X}\} \not\stackrel{a}{\longrightarrow} $.
 So, $C_{\widetilde{X}}\{\widetilde{p}/\widetilde{X}\} \equiv B_{\widetilde{X}}\{\widetilde{p}/\widetilde{X}\}  \wedge D_{\widetilde{X}}\{\widetilde{p}/\widetilde{X}\} \in F $ by Rule $Rp_{10}$, which contradicts $C_{\widetilde{X}}\{\widetilde{q}/\widetilde{X}\} \in \Omega$.
 Hence such case is impossible.\\

 \noindent Case 6.3 $\frac{C_{\widetilde{X}}\{\widetilde{q}/\widetilde{X}\}   \stackrel{\tau}{\longrightarrow}r',\{rF:C_{\widetilde{X}}\{\widetilde{q}/\widetilde{X}\}  \stackrel{\tau}{\longrightarrow}r\}}{C_{\widetilde{X}}\{\widetilde{q}/\widetilde{X}\}F}$.

 It follows from $C_{\widetilde{X}}\{\widetilde{p}/\widetilde{X}\} \notin F $ that
 \[C_{\widetilde{X}}\{\widetilde{p}/\widetilde{X}\} \stackrel{\epsilon}{\Longrightarrow}_F|s\; \text{for some}\; s. \tag{\ref{L:FAILURE_CONGRUENCE}.3}\]
 Since $\widetilde{p} \bowtie \widetilde{q}$ and $C_{\widetilde{X}}\{\widetilde{q}/\widetilde{X}\} \stackrel{\tau}{\longrightarrow} $, by Lemma~\ref{L:ONE_ACTION_TAU}, we get $C_{\widetilde{X}}\{\widetilde{p}/\widetilde{X}\} \stackrel{\tau}{\longrightarrow} $.
 Then, by (\ref{L:FAILURE_CONGRUENCE}.3), we have
 \[C_{\widetilde{X}}\{\widetilde{p}/\widetilde{X}\} \stackrel{\tau}{\longrightarrow}_F t \stackrel{\epsilon}{\Longrightarrow}_F|s\;\text{for some}\;t.\]
 For the $\tau$-labelled transition leading to $t$, either the clause (1) or (2) in Lemma~\ref{L:ONE_ACTION_TAU} holds.

For the former, there exists $C_{\widetilde{X}}'$  such that $t \equiv  C_{\widetilde{X}}'\{\widetilde{p}/\widetilde{X}\}$ and $C_{\widetilde{X}}\{\widetilde{q}/\widetilde{X}\} \stackrel{\tau}{\longrightarrow}  C_{\widetilde{X}}'\{\widetilde{q}/\widetilde{X}\}$.
 Hence $C_{\widetilde{X}}'\{\widetilde{q}/\widetilde{X}\}F$  is one of premises of the last inferring step in $\mathcal T$.
 Moreover, it is evident that $C_{\widetilde{X}}'\{\widetilde{q}/\widetilde{X}\} \in \Omega$.

 For the latter, there exist  $C_{\widetilde{X}}'$, $C_{\widetilde{X},Z}''$ with $Z \notin \widetilde{X}$ and $i_0 \leq |\widetilde{X}|$ that realize (P-$\tau$-1) -- (P-$\tau$-4).
 In particular, by (P-$\tau$-2), we have
 \[t \equiv C_{\widetilde{X},Z}''\{\widetilde{p}/\widetilde{X},p'/Z\}\;\text{for some}\; p'\;\text{with}\;p_{i_0} \stackrel{\tau}{\longrightarrow} p'.\]
  Further, since $t \stackrel{\epsilon}{\Longrightarrow}_F|s$ and $Z$ is 1-active in $C_{\widetilde{X},Z}''$, by Lemma~\ref{L:MULTI_TAU_ACTIVE_STABLE} and \ref{L:FAILURE_TAU_I}, there exists $p''$ such that $p' \stackrel{\epsilon}{\Longrightarrow} |p''$ and
    \[t \equiv C_{\widetilde{X},Z}''\{\widetilde{p}/\widetilde{X},p'/Z\}\stackrel{\epsilon}{\Longrightarrow}_F C_{\widetilde{X},Z}''\{\widetilde{p}/\widetilde{X},p''/Z\} \stackrel{\epsilon}{\Longrightarrow}_F|s.\]
Moreover, $p'' \notin F $ by Lemma~\ref{L:FAILURE_GF}.
        Hence $p_{i_0} \stackrel{\tau}{\longrightarrow}_Fp'\stackrel{\epsilon}{\Longrightarrow}_F |p''$ by Lemma~\ref{L:FAILURE_TAU_I}.
        Since $\widetilde{p} \bowtie \widetilde{q}$, it follows from $\widetilde{p}\sqsubseteq_{RS} \widetilde{q}$ that
     \[q_{i_0} \stackrel{\tau}{\longrightarrow}_F q' \stackrel{\epsilon}{\Longrightarrow}_F|q''\;\text{and}\; p''\underset{\thicksim}{\sqsubset}_{RS} q''\;\text{for some}\;q'\;\text{and}\;q''.\tag{\ref{L:FAILURE_CONGRUENCE}.4}\]
        Then $C_{\widetilde{X}}\{\widetilde{q}/\widetilde{X}\} \stackrel{\tau}{\longrightarrow} C_{\widetilde{X},Z}''\{\widetilde{q}/\widetilde{X},q'/Z\}$  by (P-$\tau$-4).
        Therefore, $\mathcal T$ contains a proper subtree $\mathcal T'$ with root $C_{\widetilde{X},Z}''\{\widetilde{q}/\widetilde{X},q'/Z\}F$.
        In order to complete the proof, it is sufficient to show that $\mathcal T'$ contains a node labelled with $s'F$ for some $s' \in \Omega$.
        Since $Z$ is 1-active,  $\widetilde{p}\sqsubseteq_{RS} \widetilde{q}$, $\widetilde{p} \bowtie \widetilde{q}$, $ q' \stackrel{\epsilon}{\Longrightarrow} |q'' $,
   $p''\underset{\thicksim}{\sqsubset}_{RS}q''$ and $C_{\widetilde{X},Z}''\{\widetilde{p}/\widetilde{X},p''/Z\} \notin F $, by Lemma~\ref{L:FAILURE_CONGRUENCE_PRE}, there exist $C_{\widetilde{X},Z,\widetilde{Y}}^*$ and $q_Y''',p_Y'''(Y \in \widetilde{Y})$ such that
        \begin{enumerate}[({a.}1)]
\renewcommand{\theenumi}{(a.\arabic{enumi})}
          \item \  $\mathcal T'$ has a subtree with root  $C_{\widetilde{X},Z,\widetilde{Y}}^*\{\widetilde{q}/\widetilde{X},q''/Z,\widetilde{q_Y'''}/\widetilde{Y}\}F$,
          \item \  $C_{\widetilde{X},Z,\widetilde{Y}}^*\{\widetilde{p}/\widetilde{X},p''/Z,\widetilde{p_Y'''}/\widetilde{Y}\} \notin F$, and
          \item \  $\widetilde{p_Y'''} \underset{\thicksim}{\sqsubset}_{RS} \widetilde{q_Y'''} $.
        \end{enumerate}
        Clearly, $C_{\widetilde{X},Z,\widetilde{Y}}^*\{\widetilde{q}/\widetilde{X},q''/Z,\widetilde{q_Y'''}/\widetilde{Y}\} \in \Omega$ due to (a.2), (a.3) and (\ref{L:FAILURE_CONGRUENCE}.4), as desired. \\

\noindent Case 6.4 $ \frac{C_{\widetilde{X}}\{\widetilde{q}/\widetilde{X}\}   \stackrel{a}{\longrightarrow}r',\{rF:C_{\widetilde{X}}\{\widetilde{q}/\widetilde{X}\}  \stackrel{a}{\longrightarrow}r\}}{C_{\widetilde{X}}\{\widetilde{q}/\widetilde{X}\}F} (a \in Act)$.

Since $\widetilde{p} \bowtie \widetilde{q}$, by Lemma~\ref{L:ONE_ACTION_TAU}, it follows from $C_{\widetilde{X}}\{\widetilde{q}/\widetilde{X}\} \stackrel{a}{\longrightarrow}$ that $C_{\widetilde{X}}\{\widetilde{p}/\widetilde{X}\} $ is stable.
Further, since $\widetilde{p}\sqsubseteq_{RS} \widetilde{q}$ and $C_{\widetilde{X}}\{\widetilde{p}/\widetilde{X}\} \notin F$,  we get $C_{\widetilde{X}}\{\widetilde{p}/\widetilde{X}\} \stackrel{a}{\longrightarrow}$ by Lemma~\ref{L:SAME_ACTIONS}.
So, by Theorem~\ref{L:LLTS} and $C_{\widetilde{X}}\{\widetilde{p}/\widetilde{X}\} \notin F$, we have
\[C_{\widetilde{X}}\{\widetilde{p}/\widetilde{X}\} \stackrel{a}{\longrightarrow}_F t \stackrel{\epsilon}{\Longrightarrow}_F|s\;\text{for some}\;t\;\text{and}\;s. \tag{\ref{L:FAILURE_CONGRUENCE}.5}\]

On the one hand, for the $a$-labelled transition in (\ref{L:FAILURE_CONGRUENCE}.5), by Lemma~\ref{L:ONE_ACTION_VISIBLE}, there exist $C_{\widetilde{X}}'$, $C_{\widetilde{X},\widetilde{Y}}'$ and $C_{\widetilde{X},\widetilde{Y}}''$ that satisfy (CP-$a$-1) -- (CP-$a$-4).
In particular, by (CP-$a$-3-ii), there exist $i_Y \leq |\widetilde{X}|,p_{Y}'(Y \in \widetilde{Y})$ such that
\[p_{i_Y} \stackrel{a}{\longrightarrow}p_{Y}'\;\text{for each}\;Y \in \widetilde{Y}\;\text{and}\;t \equiv C_{\widetilde{X},\widetilde{Y}}''\{\widetilde{p}/\widetilde{X},\widetilde{p_{Y}'}/\widetilde{Y}\}.\]
Moreover, by (CP-$a$-1) and (CP-$a$-3-i), we have
\[C_{\widetilde{X}}\{\widetilde{p}/\widetilde{X}\}  \Rrightarrow C_{\widetilde{X}}'\{\widetilde{p}/\widetilde{X}\} \equiv C_{\widetilde{X},\widetilde{Y}}'\{\widetilde{p}/\widetilde{X},\widetilde{p_{i_Y}}/\widetilde{Y}\}.\]
Hence $C_{\widetilde{X},\widetilde{Y}}'\{\widetilde{p}/\widetilde{X},\widetilde{p_{i_Y}}/\widetilde{Y}\} \notin F$ by $C_{\widetilde{X}}\{\widetilde{p}/\widetilde{X}\} \notin F$ and Lemma~\ref{L:MULTI_STEP_UNFOLDING_FAILURE}.
Further, for each $Y \in \widetilde{Y}$, since $Y$ is 1-active in $C_{\widetilde{X},\widetilde{Y}}'$ (i.e., (CP-$a$-2)), by Lemma~\ref{L:FAILURE_GF}, $p_{i_Y} \notin F$.

On the other hand, for the transition $t \equiv C_{\widetilde{X},\widetilde{Y}}''\{\widetilde{p}/\widetilde{X},\widetilde{p_{Y}'}/\widetilde{Y}\} \stackrel{\epsilon}{\Longrightarrow}_F|s$ in (\ref{L:FAILURE_CONGRUENCE}.5), by Lemma~\ref{L:MULTI_TAU_ACTIVE_STABLE}, it follows from each $Y (\in \widetilde{Y})$ that is 1-active in $C_{\widetilde{X},\widetilde{Y}}''$ (i.e., (CP-$a$-2)) that there exist $p_{Y}''(Y \in \widetilde{Y})$ such that $p_{Y}'\stackrel{\epsilon}{\Longrightarrow}|p_{Y}''$ for each $Y \in \widetilde{Y}$ and
\[t \equiv C_{\widetilde{X},\widetilde{Y}}''\{\widetilde{p}/\widetilde{X},\widetilde{p_{Y}'}/\widetilde{Y}\} \stackrel{\epsilon}{\Longrightarrow}C_{\widetilde{X},\widetilde{Y}}''\{\widetilde{p}/\widetilde{X},\widetilde{p_{Y}'' }/\widetilde{Y}\} \stackrel{\epsilon}{\Longrightarrow}|s.\]
Since $s \notin F$, we obtain $C_{\widetilde{X},\widetilde{Y}}''\{\widetilde{p}/\widetilde{X},\widetilde{p_{Y}'' }/\widetilde{Y}\} \notin F$ by Lemma~\ref{L:FAILURE_TAU_I}, which implies that $p_{Y}'' \notin F$ for each $Y \in \widetilde{Y}$ due to Lemma~\ref{L:FAILURE_GF}.
Thus
\[p_{i_Y} \stackrel{a}{\longrightarrow}_F p_{Y}'  \stackrel{\epsilon}{\Longrightarrow}_F|p_{Y}'' \;\text{for each} \;Y \in \widetilde{Y}.\]
Since $\widetilde{p} \bowtie \widetilde{q}$, it follows from $\widetilde{p} \sqsubseteq_{RS} \widetilde{q}$ that, for each $Y \in \widetilde{Y}$, there exist $q_{Y}'$ and $q_{Y}'' $ such that
\[q_{i_Y} \stackrel{a}{\longrightarrow}_F q_{Y}'  \stackrel{\epsilon}{\Longrightarrow}_F|q_{Y}'' \;\text{and}\;p_{Y}''  \underset{\thicksim}{\sqsubset}_{RS} q_{Y}''. \tag{\ref{L:FAILURE_CONGRUENCE}.6}\]
Then $C_{\widetilde{X}}\{\widetilde{q}/\widetilde{X}\} \stackrel{a}{\longrightarrow} C_{\widetilde{X},\widetilde{Y}}''\{\widetilde{q}/\widetilde{X},\widetilde{q_{Y}'}/\widetilde{Y}\}$ by (CP-$a$-3-iii).
Hence $\mathcal T$ has a proper subtree $\mathcal T'$ with root $C_{\widetilde{X},\widetilde{Y}}''\{\widetilde{q}/\widetilde{X},\widetilde{q_{Y}'}/\widetilde{Y}\}F$.
    In order to complete the proof, it suffices to show that $\mathcal T'$ contains a node labelled with $s'F$ for some $s' \in \Omega$.
    Since each $Y (\in \widetilde{Y})$ is 1-active in $C_{\widetilde{X},\widetilde{Y}}''$,  $\widetilde{p}\sqsubseteq_{RS} \widetilde{q}$, $\widetilde{p} \bowtie \widetilde{q}$, $ \widetilde{q_Y'} \stackrel{\epsilon}{\Longrightarrow} |\widetilde{q_Y''} $,
   $\widetilde{p_Y''}\underset{\thicksim}{\sqsubset}_{RS}\widetilde{q_Y''}$ and
   $C_{\widetilde{X},\widetilde{Y}}''\{\widetilde{p}/\widetilde{X},\widetilde{p_Y''}/\widetilde{Y}\} \notin F $, by  Lemma~\ref{L:FAILURE_CONGRUENCE_PRE}, there exist $C_{\widetilde{X},\widetilde{Y},\widetilde{Z}}^*$ and $q_Z''',p_Z'''(Z \in \widetilde{Z})$ such that
        \begin{enumerate}[({b.}1)]
\renewcommand{\theenumi}{(b.\arabic{enumi})}
          \item \  $\mathcal T'$ has a subtree with root  $C_{\widetilde{X},\widetilde{Y},\widetilde{Z}}^*\{\widetilde{q}/\widetilde{X},\widetilde{q_{Y}''}/\widetilde{Y},\widetilde{q_Z'''}/\widetilde{Z}\}F$,
          \item \  $C_{\widetilde{X},\widetilde{Y},\widetilde{Z}}^*\{\widetilde{p}/\widetilde{X},\widetilde{p_{Y}''}/\widetilde{Y},\widetilde{p_Z'''}/\widetilde{Z}\} \notin F$, and
          \item \  $\widetilde{p_Z'''} \underset{\thicksim}{\sqsubset}_{RS} \widetilde{q_Z'''} $.
        \end{enumerate}
        Obviously, $C_{\widetilde{X},\widetilde{Y},\widetilde{Z}}^*\{\widetilde{q}/\widetilde{X},\widetilde{q_{Y}''}/\widetilde{Y},\widetilde{q_Z'''}/\widetilde{Z}\} \in \Omega$ due to (b.2), (b.3) and (\ref{L:FAILURE_CONGRUENCE}.6), as desired. \\

 \noindent Case 6.5 $\frac{\{rF: B_{\widetilde{X}}\{\widetilde{q}/\widetilde{X}\}  \wedge D_{\widetilde{X}}\{\widetilde{q}/\widetilde{X}\} \stackrel{\epsilon}{\Longrightarrow}|r\}}{B_{\widetilde{X}}\{\widetilde{q}/\widetilde{X}\}  \wedge D_{\widetilde{X}}\{\widetilde{q}/\widetilde{X}\} F}$.

 Similar to the second alternative in Case 5, omitted.
\end{proof}

In the remainder of this section, we shall prove that $\sqsubseteq_{RS}$ is indeed precongruent.
Let us first recall a distinct but equivalent formulation of $\sqsubseteq_{RS}$ due to Van Glabbeek \cite{Luttgen10}.

\begin{mydefn}\label{D:ALT_RS}
A relation ${\mathcal R} \subseteq T(\Sigma_{\text{CLL}_R})\times T(\Sigma_{\text{CLL}_R})$ is an alternative ready simulation relation, if for any $(p,q) \in {\mathcal R}$ and $a \in Act $\\
\textbf{(RSi)} $p \stackrel{\epsilon}{\Longrightarrow}_F|p'$ implies $\exists q'.q \stackrel{\epsilon}{\Longrightarrow}_F|q'\; \textrm{and}\;(p',q') \in {\mathcal R}$;\\
\textbf{(RSiii)} $p \stackrel{a}{\Longrightarrow}_F|p'$ and $p,q$ stable implies $\exists q'.q \stackrel{a}{\Longrightarrow}_F|q'\; \textrm{and}\;(p',q') \in {\mathcal R}$;\\
\textbf{(RSiv)} $p\notin F $ and $p,q$ stable implies ${\mathcal I}(p)={\mathcal I}(q)$.

We write $p \sqsubseteq_{ALT} q$ if there exists an alternative ready simulation relation $\mathcal R$ with $(p,q) \in \mathcal R$.
\end{mydefn}

The next proposition reveals that this definition agrees with the one given in Def.~\ref{D:RS}.

\begin{proposition}\label{P:COINCIDENCE}
  $\sqsubseteq_{RS} = \sqsubseteq_{ALT}$.
\end{proposition}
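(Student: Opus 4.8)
The plan is to prove the two inclusions $\sqsubseteq_{RS} \subseteq \sqsubseteq_{ALT}$ and $\sqsubseteq_{ALT} \subseteq \sqsubseteq_{RS}$ separately, in each case by exhibiting a witness relation of the appropriate kind. The key observation driving both directions is that the two definitions differ only in packaging: Definition~\ref{D:RS} first requires stability of the related pair (via the stable relation $\underset{\thicksim}{\sqsubset}_{RS}$) and then lifts to arbitrary processes through the clause $\forall p'(p\stackrel{\epsilon}{\Longrightarrow}_F|p' \Rightarrow \exists q'(q\stackrel{\epsilon}{\Longrightarrow}_F|q' \wedge p'\underset{\thicksim}{\sqsubset}_{RS}q'))$, whereas Definition~\ref{D:ALT_RS} folds the stabilization step directly into the simulation clauses (RSi)/(RSiii)/(RSiv). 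So the whole matter is reconciling these two ways of handling unstable states.

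For the inclusion $\sqsubseteq_{RS} \subseteq \sqsubseteq_{ALT}$, I would take ${\mathcal R} \triangleq {\sqsubseteq_{RS}}$ itself and verify it is an alternative ready simulation relation. Suppose $p \sqsubseteq_{RS} q$. For (RSi), if $p\stackrel{\epsilon}{\Longrightarrow}_F|p'$, the defining clause of $\sqsubseteq_{RS}$ gives $q'$ with $q\stackrel{\epsilon}{\Longrightarrow}_F|q'$ and $p'\underset{\thicksim}{\sqsubset}_{RS}q'$; since $\underset{\thicksim}{\sqsubset}_{RS}\,\subseteq\,\sqsubseteq_{RS}$ (immediate because a stable process $p'$ satisfies $p'\stackrel{\epsilon}{\Longrightarrow}_F|p'$, so one may take $p'$ itself as its stable descendant), we get $(p',q')\in{\mathcal R}$. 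For (RSiii) and (RSiv), when $p,q$ are already stable, $p\sqsubseteq_{RS}q$ collapses to $p\underset{\thicksim}{\sqsubset}_{RS}q$ (taking $p'\equiv p$, $q'\equiv q$ in the lifting clause and using that $\underset{\thicksim}{\sqsubset}_{RS}$ is itself a stable ready simulation), whence (RS3) and (RS4) yield exactly (RSiii) and (RSiv). This direction is essentially bookkeeping once the relationship $\underset{\thicksim}{\sqsubset}_{RS}\,\subseteq\,\sqsubseteq_{RS}$ and its partial converse on stable states are pinned down.

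For the converse $\sqsubseteq_{ALT} \subseteq \sqsubseteq_{RS}$, given $p \sqsubseteq_{ALT} q$ via an alternative ready simulation ${\mathcal R}$, I would first restrict ${\mathcal R}$ to stable pairs, setting ${\mathcal S} \triangleq \{(p',q')\in{\mathcal R} : p',q' \text{ both stable}\}$, and argue that ${\mathcal S}$ is a stable ready simulation relation in the sense of Definition~\ref{D:RS}: (RS1) holds by construction; (RS2) follows from (RSiv) together with (LTS1)/(LTS2) reasoning (a stable consistent $p'$ has a stable descendant, namely itself, forcing consistency of $q'$ through ready-set agreement), though care is needed to extract (RS2) cleanly — this is where (RSiv) does the work, since equal ready sets and consistency of $p'$ rule out the ready-set mismatch that would make $q'$ inconsistent; (RS3) is (RSiii); and (RS4) is (RSiv). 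Then $p'\underset{\thicksim}{\sqsubset}_{RS}q'$ for all $(p',q')\in{\mathcal S}$. Finally, to get $p\sqsubseteq_{RS}q$ I verify the lifting clause: given $p\stackrel{\epsilon}{\Longrightarrow}_F|p'$, clause (RSi) produces $q'$ with $q\stackrel{\epsilon}{\Longrightarrow}_F|q'$ and $(p',q')\in{\mathcal R}$, and since both $p'$ and $q'$ are stable we have $(p',q')\in{\mathcal S}$, hence $p'\underset{\thicksim}{\sqsubset}_{RS}q'$, as required.

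The main obstacle I anticipate is the treatment of (RS2) / consistency propagation in the converse direction, because Definition~\ref{D:ALT_RS} contains no clause that directly asserts ``$p'\notin F$ implies $q'\notin F$'' — it must be \emph{derived} from (RSiv) together with the structural LLTS properties established in Theorem~\ref{L:LLTS} and Lemma~\ref{L:F_NORMAL}. Concretely, one must show that a stable consistent $p'$ matched by $q'$ with ${\mathcal I}(p')={\mathcal I}(q')$ cannot have $q'\in F$; this requires using (LTS1) to see that inconsistency of a stable $q'$ would force, via the matched transitions and (RSiii), a corresponding inconsistency collapse on the $p'$ side, contradicting $p'\notin F$. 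Getting this argument airtight — and making sure the stable-restriction ${\mathcal S}$ genuinely inherits all four conditions rather than just the three that transcribe verbatim — is the delicate part; the remainder is routine unwinding of the two definitions.
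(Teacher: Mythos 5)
Your overall strategy is sound, and note that the paper itself offers no proof to compare against: it simply cites Prop.~13 of L\"uttgen and Vogler (2010), so a self-contained unwinding like yours is a genuinely different (and more informative) route. The first inclusion and the skeleton of the second (restrict ${\mathcal R}$ to stable pairs to get ${\mathcal S}$, check (RS1)--(RS4), then verify the lifting clause via (RSi)) are correct. The genuine problem is the step you yourself single out as the crux: your derivation of (RS2). You propose to obtain ``$p'\notin F$ implies $q'\notin F$'' from (RSiv) plus (LTS1), arguing that ready-set agreement rules out the mismatch that would make $q'$ inconsistent, and that $q'\in F$ would force an ``inconsistency collapse'' back onto the $p'$ side via (RSiii). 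This cannot work. (LTS1) and (LTS2) are only \emph{sufficient} conditions for membership in $F$: from $q'\in F$ one can conclude nothing about its ready set or its derivatives, and inconsistency of a stable state need not stem from a ready-set mismatch at all (e.g.\ $\bot\Box a.0\in F$ by Rule $Rp_4$, yet it has the consistent $a$-derivative $0$). Concretely, take $p'\equiv 0$ and $q'\equiv a.0\wedge b.0$: both are stable, $p'\notin F$, $q'\in F$ by Rule $Rp_{10}$, and ${\mathcal I}(p')={\mathcal I}(q')=\emptyset$, so the singleton relation $\{(p',q')\}$ satisfies (RSiii) (vacuously) and (RSiv) while violating (RS2). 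Hence (RS2) is simply not derivable from (RSiii), (RSiv) and the LLTS axioms, which is the route you commit to.

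The fix is one line, and it uses the clause you did not exploit here: (RSi). If $(p',q')\in{\mathcal S}$ and $p'\notin F$, then $p'\stackrel{\epsilon}{\Longrightarrow}_F|p'$ (it is stable and consistent), so (RSi) yields some $q''$ with $q'\stackrel{\epsilon}{\Longrightarrow}_F|q''$; but by the very definition of $\stackrel{\epsilon}{\Longrightarrow}_F$ every state along that path, \emph{including the source} $q'$, lies outside $F$. So $q'\notin F$ immediately, with no appeal to (RSiv), (LTS1), Theorem~\ref{L:LLTS} or Lemma~\ref{L:F_NORMAL}. (Observe that the pair $(0,\,a.0\wedge b.0)$ above indeed fails (RSi), which is exactly why it is harmless.) With (RS2) repaired this way, the rest of your argument goes through: (RS3) is (RSiii) together with the remark that targets of $\stackrel{a}{\Longrightarrow}_F|$ are stable so the matched pair stays in ${\mathcal S}$, (RS4) is (RSiv) verbatim, the lifting clause comes from (RSi), and your first inclusion (where the same consistency-of-the-source observation, via (RS2) of the stable relation, underpins the claim that stable ready simulation implies $\sqsubseteq_{RS}$) is fine.
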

\begin{proof}
  See Prop.~13 in \cite{Luttgen10}.
\end{proof}

One advantage of Def.~\ref{D:ALT_RS} is that, given $p$ and $q$, we can prove $p\sqsubseteq_{RS}q$ by means of giving an alternative ready simulation relation relating them.
It is well known that up-to technique is a tractable way for such coinduction proof.
Here we introduce the notion of an alternative ready relation up to $\underset{\thicksim}{\sqsubset}_{RS}$ as follows.

\begin{mydefn}[ALT up to $\underset{\thicksim}{\sqsubset}_{RS}$]\label{D:ALT_RS_UPTO}
A relation ${\mathcal R} \subseteq T(\Sigma_{\text{CLL}_R})\times T(\Sigma_{\text{CLL}_R})$ is an alternative ready simulation relation up to $\underset{\thicksim}{\sqsubset}_{RS}$, if for any $(p,q) \in {\mathcal R}$ and $a \in Act $\\
\textbf{(ALT-upto-1)} $p \stackrel{\epsilon}{\Longrightarrow}_F|p'$ implies $\exists q'.q \stackrel{\epsilon}{\Longrightarrow}_F|q'\; \textrm{and}\;p' \underset{\thicksim}{\sqsubset}_{RS}{\mathcal R}\underset{\thicksim}{\sqsubset}_{RS}q'$;\\
\textbf{(ALT-upto-2)} $p \stackrel{a}{\Longrightarrow}_F|p'$ and $p,q$ stable implies $\exists q'.q \stackrel{a}{\Longrightarrow}_F|q'\; \textrm{and}\;p'\underset{\thicksim}{\sqsubset}_{RS}{\mathcal R}\underset{\thicksim}{\sqsubset}_{RS}q'$;\\
\textbf{(ALT-upto-3)} $p\notin F $ and $p,q$ stable implies ${\mathcal I}(p)={\mathcal I}(q)$.
\end{mydefn}

As usual, given a relation $\mathcal R$ satisfying the conditions (ALT-upto-1,2,3), in general, $\mathcal R$ in itself is not an alternative ready simulation relation.
But simple result below ensures that up-to technique based on the above notion is sound.

\begin{lemma}\label{L:ALT_UP_TO}
  If a relation $\mathcal R$ is an alternative ready simulation relation up to $\underset{\thicksim}{\sqsubset}_{RS}$ then ${\mathcal R}  \subseteq \sqsubseteq_{RS}$.
\end{lemma}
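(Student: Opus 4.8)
The plan is to run the standard soundness argument for up-to techniques: I would turn an alternative ready simulation up to $\underset{\thicksim}{\sqsubset}_{RS}$ into a genuine one by pre- and post-composing with $\underset{\thicksim}{\sqsubset}_{RS}$. By Proposition~\ref{P:COINCIDENCE} we have $\sqsubseteq_{RS}=\sqsubseteq_{ALT}$, so it suffices to exhibit an alternative ready simulation relation (in the sense of Def.~\ref{D:ALT_RS}) that contains $\mathcal R$. Concretely, I would set
\[\mathcal S \triangleq \mathcal R \;\cup\; \{(p,q): p \underset{\thicksim}{\sqsubset}_{RS} p_1,\; (p_1,q_1)\in\mathcal R,\; q_1 \underset{\thicksim}{\sqsubset}_{RS} q \text{ for some } p_1,q_1\}.\]
Then $\mathcal R \subseteq \mathcal S$ holds by construction, and the whole task reduces to verifying that every pair of $\mathcal S$ satisfies (RSi), (RSiii) and (RSiv). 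Note that, since any pair related by $\underset{\thicksim}{\sqsubset}_{RS}$ consists of stable states (RS1), the second component of $\mathcal S$ relates only stable states.

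For a pair $(p,q)\in\mathcal R$ the three clauses follow directly from (ALT-upto-1,2,3) of Def.~\ref{D:ALT_RS_UPTO}: each clause hands back a successor pair that, by definition of $p'\underset{\thicksim}{\sqsubset}_{RS}\mathcal R\underset{\thicksim}{\sqsubset}_{RS}q'$, lies in the composition part of $\mathcal S$, and (RSiv) is literally (ALT-upto-3). For a composition pair $(p,q)$ witnessed by $p \underset{\thicksim}{\sqsubset}_{RS} p_1$, $(p_1,q_1)\in\mathcal R$, $q_1 \underset{\thicksim}{\sqsubset}_{RS} q$ (all four states stable), clause (RSi) is easy: a stable, consistent state admits only the trivial $\stackrel{\epsilon}{\Longrightarrow}_F|$, so $p\stackrel{\epsilon}{\Longrightarrow}_F|p'$ forces $p'\equiv p$ and $p\notin F$; consistency then relays along the chain via RS2 and an application of (ALT-upto-1) to $p_1\stackrel{\epsilon}{\Longrightarrow}_F|p_1$, yielding $q\notin F$, so $q\stackrel{\epsilon}{\Longrightarrow}_F|q$ and $(p,q)\in\mathcal S$. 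Clause (RSiv) is handled the same way: $\mathcal I(p)=\mathcal I(p_1)$ by RS4, $\mathcal I(p_1)=\mathcal I(q_1)$ by (ALT-upto-3), and $\mathcal I(q_1)=\mathcal I(q)$ by RS4, after establishing $p_1,q_1\notin F$ as above.

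The substantive clause is (RSiii). Given $p\stackrel{a}{\Longrightarrow}_F|p'$ with $p,q$ stable, I would relay the observation rightward in three hops: RS3 applied to $p\underset{\thicksim}{\sqsubset}_{RS}p_1$ produces $p_1\stackrel{a}{\Longrightarrow}_F|p_1'$ with $p'\underset{\thicksim}{\sqsubset}_{RS}p_1'$; (ALT-upto-2) applied to $(p_1,q_1)$ produces $q_1\stackrel{a}{\Longrightarrow}_F|q_1'$ with $p_1'\underset{\thicksim}{\sqsubset}_{RS}\mathcal R\underset{\thicksim}{\sqsubset}_{RS}q_1'$; and RS3 applied to $q_1\underset{\thicksim}{\sqsubset}_{RS}q$ produces $q\stackrel{a}{\Longrightarrow}_F|q'$ with $q_1'\underset{\thicksim}{\sqsubset}_{RS}q'$. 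At this point the successors are related by the longer chain $p'\underset{\thicksim}{\sqsubset}_{RS}p_1'\underset{\thicksim}{\sqsubset}_{RS}p_2$, $(p_2,q_2)\in\mathcal R$, $q_2\underset{\thicksim}{\sqsubset}_{RS}q_1'\underset{\thicksim}{\sqsubset}_{RS}q'$, which I must compress back into the shape $\underset{\thicksim}{\sqsubset}_{RS}\mathcal R\underset{\thicksim}{\sqsubset}_{RS}$ required for membership in $\mathcal S$.

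I expect this compression to be the only real point of substance, and it is exactly here that transitivity of $\underset{\thicksim}{\sqsubset}_{RS}$ is indispensable: because $\underset{\thicksim}{\sqsubset}_{RS}$ is a preorder (Def.~\ref{D:RS}), the two left hops collapse to $p'\underset{\thicksim}{\sqsubset}_{RS}p_2$ and the two right hops to $q_2\underset{\thicksim}{\sqsubset}_{RS}q'$, placing $(p',q')$ in the composition part of $\mathcal S$. Having checked all three clauses for both kinds of pairs, $\mathcal S$ is an alternative ready simulation relation, so $\mathcal S\subseteq\sqsubseteq_{ALT}$; since $\mathcal R\subseteq\mathcal S$, we conclude $\mathcal R\subseteq\sqsubseteq_{ALT}=\sqsubseteq_{RS}$, as desired.
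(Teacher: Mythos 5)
Your proposal is correct and follows essentially the same route as the paper: the paper's proof consists of the single remark that $\sqsubseteq_{RS} \circ \mathcal{R} \circ \sqsubseteq_{RS}$ is an alternative ready simulation (with the verification left to the reader), and your relation $\mathcal{S} = \mathcal{R} \cup \{(p,q): p \underset{\thicksim}{\sqsubset}_{RS} p_1,\, (p_1,q_1)\in\mathcal{R},\, q_1 \underset{\thicksim}{\sqsubset}_{RS} q\}$ is just the stable-simulation variant of that sandwich, with the union added because $\underset{\thicksim}{\sqsubset}_{RS}$ is reflexive only on stable states. Your case analysis for (RSi), (RSiii) and (RSiv) — in particular the use of transitivity of $\underset{\thicksim}{\sqsubset}_{RS}$ to compress the five-link chain back into the shape $\underset{\thicksim}{\sqsubset}_{RS}\mathcal{R}\underset{\thicksim}{\sqsubset}_{RS}$ — correctly supplies the details the paper omits.
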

\begin{proof}
  By Prop.~\ref{P:COINCIDENCE}, it is sufficient to prove that the relation $\sqsubseteq_{RS} \circ{\mathcal R} \circ \sqsubseteq_{RS}$ is an alternative ready simulation. We leave it to the reader.
\end{proof}

Now we are ready to prove the main result of this section: $\sqsubseteq_{RS}$ is precongruent w.r.t all operations in $\text{CLL}_R$.
We shall divide the proof into the next two lemmas.

\begin{lemma}\label{L:CONGRUENCE_PRE}
 $C_X\{p/X\} =_{RS} C_X\{\tau.p/X\}$ for any context $C_X$ and stable process $p$.
\end{lemma}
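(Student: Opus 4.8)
The statement $C_X\{p/X\}=_{RS}C_X\{\tau.p/X\}$ unfolds into the two inclusions $C_X\{p/X\}\sqsubseteq_{RS}C_X\{\tau.p/X\}$ and $C_X\{\tau.p/X\}\sqsubseteq_{RS}C_X\{p/X\}$. The plan is to establish both at once by the up-to machinery of Definition~\ref{D:ALT_RS_UPTO} and Lemma~\ref{L:ALT_UP_TO}. Fixing the stable process $p$, I would work with the symmetric relation
\[
{\mathcal R}=\{(C_X\{p/X\},C_X\{\tau.p/X\}):C_X\text{ a context}\}\cup\{(C_X\{\tau.p/X\},C_X\{p/X\}):C_X\text{ a context}\}
\]
and show that $\mathcal R$ is an alternative ready simulation relation up to $\underset{\thicksim}{\sqsubset}_{RS}$. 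By Lemma~\ref{L:ALT_UP_TO} and Proposition~\ref{P:COINCIDENCE} this yields ${\mathcal R}\subseteq{\sqsubseteq_{RS}}$ in both directions, hence the desired equivalence. Throughout I will freely use the two halves of the $F$-equivalence already established, namely Lemma~\ref{L:FAILURE_S_VS_NS} and Lemma~\ref{L:FAILURE_NS_IMPLIES_S}, which together give $C_X\{p/X\}\notin F$ iff $C_X\{\tau.p/X\}\notin F$, together with the basic fact $p=_{RS}\tau.p$ for stable $p$ (used, and hence available, in the preceding proofs).

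Condition (ALT-upto-3) is immediate: if one of the paired terms is stable, consistent, then so is its partner (Lemmas~\ref{L:FAILURE_S_VS_NS},~\ref{L:FAILURE_NS_IMPLIES_S}), and equality of ready sets follows from Lemma~\ref{L:SAME_ACTIONS} applied with the fillings $p\sqsubseteq_{RS}\tau.p$ (and $\tau.p\sqsubseteq_{RS}p$), since Lemma~\ref{L:SAME_ACTIONS} delivers an ``iff''. For (ALT-upto-2) the crucial observation is that, when \emph{both} $C_X\{p/X\}$ and $C_X\{\tau.p/X\}$ are stable, the visible move can only come from the context. Indeed, analysing $C_X\{p/X\}\stackrel{a}{\longrightarrow}$ (resp. $C_X\{\tau.p/X\}\stackrel{a}{\longrightarrow}$) by Lemma~\ref{L:ONE_ACTION_VISIBLE} gives $C_X\Rrightarrow C_X'$ and a family $\widetilde{Y}$; were $\widetilde{Y}\neq\emptyset$, by (CP-$a$-2) and (CP-$a$-3-i) there would be an active occurrence of $X$ in $C_X'$, so $C_X'\{\tau.p/X\}\stackrel{\tau}{\longrightarrow}$ by Lemma~\ref{L:ONE_ACTION_TAU_GF}, and then $C_X\{\tau.p/X\}\stackrel{\tau}{\longrightarrow}$ by the backward clause of Lemma~\ref{L:MULTI_STEP_UNFOLDING_ACTION}, contradicting stability. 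Thus $\widetilde{Y}=\emptyset$, the target is $C_X''\{p/X\}$ (resp. $C_X''\{\tau.p/X\}$), and (CP-$a$-3-iii) transfers the single $a$-step to the partner term; consistency of the target is inherited through Lemmas~\ref{L:FAILURE_S_VS_NS}/\ref{L:FAILURE_NS_IMPLIES_S}. This reduces (ALT-upto-2) to matching the trailing $\stackrel{\epsilon}{\Longrightarrow}_F|$, i.e. to the engine of (ALT-upto-1) applied to $C_X''$.

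The heart of the argument is therefore (ALT-upto-1): matching a transition $D_X\{p/X\}\stackrel{\epsilon}{\Longrightarrow}_F|r$ by a stable consistent descendant of $D_X\{\tau.p/X\}$ and conversely. For the forward direction I would first apply Lemma~\ref{L:TAU_ACTION_NORMALIZATION} to obtain a stable context $E_X$ with $D_X\{p/X\}\stackrel{\epsilon}{\Longrightarrow}E_X\{p/X\}\stackrel{\epsilon}{\Longrightarrow}|r''\Rrightarrow r$ and $D_X\{\tau.p/X\}\stackrel{\epsilon}{\Longrightarrow}E_X\{\tau.p/X\}$; since $E_X$ and $p$ are stable, $E_X\{p/X\}$ is stable (Lemma~\ref{L:ONE_ACTION_TAU}), whence $r''\equiv E_X\{p/X\}$ and $E_X\{p/X\}\Rrightarrow r$, so $E_X\{p/X\}\approx_{RS}r$ by Lemma~\ref{L:MULTI_STEP_UNFOLDING_IMPLIES_RS}. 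On the $\tau.p$-side, Lemma~\ref{L:STABILIZATION} (using $\tau.p\stackrel{\epsilon}{\Longrightarrow}|p$) yields a stable descendant of $E_X\{\tau.p/X\}$, which by Lemma~\ref{L:MULTI_TAU_GF_STABLE} has the form $E_X'\{\tau.p/X\}$ with $E_X\{p/X\}\Rrightarrow E_X'\{p/X\}$; then $r\,\underset{\thicksim}{\sqsubset}_{RS}\,E_X'\{p/X\}\,{\mathcal R}\,E_X'\{\tau.p/X\}$, as required. For the reverse direction, Lemma~\ref{L:MULTI_TAU_GF_STABLE} applied to $C_X\{\tau.p/X\}\stackrel{\epsilon}{\Longrightarrow}|r$ gives a stable context whose active place-holders are filled by stable $\tau$-descendants of $\tau.p$, necessarily $p$, producing $r\equiv E_X\{\tau.p/X\}$; (MS-$\tau$-3-ii) with the stable filling $p$ then yields the matching stable move $C_X\{p/X\}\stackrel{\epsilon}{\Longrightarrow}E_X\{p/X\}$, and $(E_X\{\tau.p/X\},E_X\{p/X\})\in{\mathcal R}$.

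In each case it remains to upgrade the matched weak move $\stackrel{\epsilon}{\Longrightarrow}|$ to $\stackrel{\epsilon}{\Longrightarrow}_F|$. The matched stable endpoint is shown consistent by combining Lemma~\ref{L:MULTI_STEP_UNFOLDING_FAILURE} (along the $\Rrightarrow$-steps) with Lemmas~\ref{L:FAILURE_S_VS_NS}/\ref{L:FAILURE_NS_IMPLIES_S}; consistency then propagates backward over the whole $\tau$-path by the contrapositive of Lemma~\ref{L:FAILURE_TAU_I} (a stable consistent descendant forces every earlier state on the path to be consistent, exactly as in the proof of (LTS2) in Theorem~\ref{L:LLTS}). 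I expect the main obstacle to be precisely this $\epsilon$-matching step, where the asymmetry of the extra leading $\tau$ of $\tau.p$ must be absorbed: in the forward direction one must consume the $\tau$-moves that the place-holders $\tau.p$ introduce once the context has been stabilised (via Lemmas~\ref{L:TAU_ACTION_NORMALIZATION} and~\ref{L:STABILIZATION}), while in the reverse direction one must recognise that the stable $\tau$-residuals of $\tau.p$ are exactly $p$; keeping the matched states in the contextual form $E_X\{p/X\}$ and $E_X\{\tau.p/X\}$ so that they land back in $\mathcal R$, while simultaneously tracking consistency, is the delicate bookkeeping that the proof will turn on.
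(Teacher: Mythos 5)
Your overall strategy is the paper's own: the same contextual up-to relation (taken as a symmetric union rather than two one-directional relations), the same reduction of (ALT-upto-2) to (ALT-upto-1) via Lemma~\ref{L:ONE_ACTION_VISIBLE} plus a stability contradiction through Lemmas~\ref{L:ONE_ACTION_TAU_GF} and \ref{L:MULTI_STEP_UNFOLDING_ACTION}, the same consistency bookkeeping through Lemmas~\ref{L:FAILURE_S_VS_NS}, \ref{L:FAILURE_NS_IMPLIES_S}, \ref{L:MULTI_STEP_UNFOLDING_FAILURE} and \ref{L:FAILURE_TAU_I}, and the same soundness apparatus (Prop.~\ref{P:COINCIDENCE}, Lemma~\ref{L:ALT_UP_TO}). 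Your forward half of (ALT-upto-1) is sound; the paper reaches the same configuration by applying Lemma~\ref{L:MULTI_TAU_GF_STABLE} first where you apply Lemma~\ref{L:TAU_ACTION_NORMALIZATION} first, and both orders work.

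There is, however, one step in your reverse half of (ALT-upto-1) that fails as written. After applying Lemma~\ref{L:MULTI_TAU_GF_STABLE} to $C_X\{\tau.p/X\}\stackrel{\epsilon}{\Longrightarrow}|r$ and obtaining $r\equiv C_{X,\widetilde{Y}}'\{\tau.p/X,p/\widetilde{Y}\}$, you invoke (MS-$\tau$-3-ii) ``with the stable filling $p$'' to get $C_X\{p/X\}\stackrel{\epsilon}{\Longrightarrow}E_X\{p/X\}$. But the hypothesis of (MS-$\tau$-3-ii) is $q_{i_Y}\stackrel{\tau}{\Longrightarrow}q_Y'$ for each $Y\in\widetilde{Y}$, i.e.\ at least one $\tau$-step from the new filling; with $q_{i_Y}\equiv p$ stable this is unsatisfiable whenever $\widetilde{Y}\neq\emptyset$ --- and $\widetilde{Y}\neq\emptyset$ is exactly the typical case, since the copies of $\tau.p$ fire their leading $\tau$ along the given path. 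The clause that does apply is (MS-$\tau$-3-i), which yields only $C_X\{p/X\}\stackrel{\epsilon}{\Longrightarrow}t\Rrightarrow C_{X,\widetilde{Y}}'\{p/X,p/\widetilde{Y}\}$ for some $t$, so the matched endpoint is $t$ rather than a term of your contextual form. The repair stays inside your own toolbox: $t$ is stable by Lemma~\ref{L:MULTI_STEP_UNFOLDING_ACTION}, it is consistent because $r\notin F$ gives $C_{X,\widetilde{Y}}'\{p/X,p/\widetilde{Y}\}\notin F$ by Lemma~\ref{L:FAILURE_NS_IMPLIES_S} and hence $t\notin F$ by Lemma~\ref{L:MULTI_STEP_UNFOLDING_FAILURE}, consistency of the whole path then follows from Lemma~\ref{L:FAILURE_TAU_I}, and the residual gap $t\Rrightarrow C_{X,\widetilde{Y}}'\{p/X,p/\widetilde{Y}\}$ is absorbed by the up-to closure via Lemma~\ref{L:MULTI_STEP_UNFOLDING_IMPLIES_RS}. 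The paper sidesteps the issue entirely by doing in the reverse direction what you do in the forward one: first normalize with Lemma~\ref{L:TAU_ACTION_NORMALIZATION}, which supplies a \emph{pure} replay $C_X\{p/X\}\stackrel{\epsilon}{\Longrightarrow}C_X^*\{p/X\}$ into a stable context $C_X^*$, and only then apply Lemma~\ref{L:MULTI_TAU_GF_STABLE}, whose clause (MS-$\tau$-4) is now available precisely because $C_X^*$ is stable.
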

\begin{proof}
Let $p$ be any stable process. First, we shall show that  $C_X\{p/X\} \sqsubseteq_{RS} C_X\{\tau.p/X\}$.
 Set
 \[
       {\mathcal R} \triangleq \{(B_X\{p/X\},B_X\{\tau.p/X\}):B_X\;\text{is a context}\}.
\]
  By Prop.~\ref{P:COINCIDENCE} and Lemma~\ref{L:ALT_UP_TO}, it is sufficient to prove that $\mathcal R$ is an alternative ready simulation relation up to $\underset{\thicksim}{\sqsubset}_{RS}$. Let $(C_X\{p/X\},C_X\{\tau.p/X\}) \in \mathcal R$.

  \textbf{(ALT-upto-1)} Assume that $C_X\{p/X\} \stackrel{\epsilon}{\Longrightarrow}_F|p'$.
  For this transition, since $p$ is stable, by Lemma~\ref{L:MULTI_TAU_GF_STABLE}, there exists a stable context $C_X'$ such that
  \[p' \equiv C_X'\{p/X\}\;\text{and}\;C_X\{\tau.p/X\} \stackrel{\epsilon}{\Longrightarrow}C_X'\{\tau.p/X\}.\tag{\ref{L:CONGRUENCE_PRE}.1}\]
  Moreover, by Lemma~\ref{L:STABILIZATION}, it follows from $\tau.p \stackrel{\tau}{\longrightarrow}|p$ that \[C_X'\{\tau.p/X\}\stackrel{\epsilon}{\Longrightarrow}|r\;\text{for some}\; r.\tag{\ref{L:CONGRUENCE_PRE}.2}\]
  For this transition, by Lemma~\ref{L:MULTI_TAU_GF_STABLE}, there exists a context $C_{X,\widetilde{Y}}''$ with $X \notin \widetilde{Y}$ such that $r\equiv C_{X,\widetilde{Y}}''\{\tau.p/X,p/\widetilde{Y}\}$ and
  \[p'\equiv C_X'\{p/X\}  \Rrightarrow C_{X,\widetilde{Y}}''\{p/X,p/\widetilde{Y}\}.\tag{\ref{L:CONGRUENCE_PRE}.3}\]
  Since $p' \notin F$, by Lemma~\ref{L:MULTI_STEP_UNFOLDING_FAILURE}, we get $C_{X,\widetilde{Y}}''\{p/X,p/\widetilde{Y}\} \notin F$.
  Further, by Lemma~\ref{L:FAILURE_S_VS_NS}, $r \equiv C_{X,\widetilde{Y}}''\{\tau.p/X,p/\widetilde{Y}\} \notin F$.
  So, by (\ref{L:CONGRUENCE_PRE}.1), (\ref{L:CONGRUENCE_PRE}.2) and Lemma~\ref{L:FAILURE_TAU_I}, we obtain
  \[C_X\{\tau.p/X\}\stackrel{\epsilon}{\Longrightarrow}_F| C_{X,\widetilde{Y}}''\{\tau.p/X,p/\widetilde{Y}\}.\]
  Moreover, by Lemma~\ref{L:MULTI_STEP_UNFOLDING_IMPLIES_RS}, it follows from (\ref{L:CONGRUENCE_PRE}.3) that \[p' \underset{\thicksim}{\sqsubset}_{RS} C_{X,\widetilde{Y}}''\{p/X,p/\widetilde{Y}\} {\mathcal R}C_{X,\widetilde{Y}}''\{\tau.p/X,p/\widetilde{Y}\}.\]

  \textbf{(ALT-upto-2)} Assume that $C_X\{p/X\}$ and $C_X\{\tau.p/X\}$ are stable and $C_X\{p/X\} \stackrel{a}{\Longrightarrow}_F|p'$.
  Hence $C_X\{p/X\} \stackrel{a}{\longrightarrow}_F r \stackrel{\epsilon}{\Longrightarrow}_F |p'$ for some $r$.
  Moreover, by Lemma~\ref{L:FAILURE_S_VS_NS} and $C_X\{p/X\} \notin F$, we have
\[C_X\{\tau.p/X\} \notin F.\tag{\ref{L:CONGRUENCE_PRE}.4}\]
  Next we intend to prove that $p$ does not involve in the transition $C_X\{p/X\} \stackrel{a}{\longrightarrow}_F r$. For this transition,
  by Lemma~\ref{L:ONE_ACTION_VISIBLE}, there exist $C_X'$, $C_{X,\widetilde{Y}}'$ and $C_{X,\widetilde{Y}}''$ that realize (CP-$a$-1) -- (CP-$a$-4).
  By (CP-$a$-1) and (CP-$a$-3-i), we have
  \[C_X\{\tau.p/X\} \Rrightarrow C_X'\{\tau.p/X\} \equiv C_{X,\widetilde{Y}}'\{\tau.p/X,\tau.p/\widetilde{Y}\}.\]
  If $\widetilde{Y} \neq \emptyset$ then, by (CP-$a$-2) and Lemma~\ref{L:ONE_ACTION_TAU_GF}, we have $C_{X,\widetilde{Y}}'\{\tau.p/X,\tau.p/\widetilde{Y}\} \stackrel{\tau}{\longrightarrow}$, and hence $C_X\{\tau.p/X\} \stackrel{\tau}{\longrightarrow}$ by Lemma~\ref{L:MULTI_STEP_UNFOLDING_ACTION}, which  contradicts that $C_X\{\tau.p/X\}$ is stable.
  Thus $\widetilde{Y} = \emptyset$, as desired.
  So, $r \equiv C_{X,\widetilde{Y}}''\{p/X\}$ by (CP-$a$-3-ii) and
\[C_X\{\tau.p/X\} \stackrel{a}{\longrightarrow}C_{X,\widetilde{Y}}''\{\tau.p/X\} \text{ by (CP-$a$-3-iii)}\;\text{and}\;C_X\{\tau.p/X\}\not\stackrel{\tau}{\longrightarrow}.\tag{\ref{L:CONGRUENCE_PRE}.5}\]
  Moreover, by (ALT-upto-1), it follows from $(C_{X,\widetilde{Y}}''\{p/X\},C_{X,\widetilde{Y}}''\{\tau.p/X\}) \in \mathcal R$ and $r \equiv C_{X,\widetilde{Y}}''\{p/X\}\stackrel{\epsilon}{\Longrightarrow}_F|p'$ that $C_{X,\widetilde{Y}}''\{\tau.p/X\} \stackrel{\epsilon}{\Longrightarrow}_F|q'$ and $p' \underset{\thicksim}{\sqsubset}_{RS} {\mathcal R} \underset{\thicksim}{\sqsubset}_{RS} q'$ for some $q'$.
Moreover, we also have $C_{X}\{\tau.p/X\} \stackrel{a}{\Longrightarrow}_F|q'$ due to (\ref{L:CONGRUENCE_PRE}.4) and (\ref{L:CONGRUENCE_PRE}.5), as desired.\\

    \textbf{(ALT-upto-3)} Immediately follows from Lemma~\ref{L:SAME_ACTIONS}.\\

Next we intend to prove $ C_X\{\tau.p/X\}  \sqsubseteq_{RS} C_X\{p/X\}$.  Set
 \[
       {\mathcal R} \triangleq \{(B_X\{\tau.p/X\},B_X\{p/X\}):B_X\;\text{is a context}\}.
\]
  Similarly, it is sufficient to prove that $\mathcal R$ is an alternative ready simulation relation up to $\underset{\thicksim}{\sqsubset}_{RS}$. Let $(C_X\{\tau.p/X\},C_X\{p/X\}) \in \mathcal R$.
  (ALT-upto-3) immediately follows from Lemma~\ref{L:SAME_ACTIONS}.
  In the following, we prove the other two conditions.

  \textbf{(ALT-upto-1)}
    Assume that $C_X\{\tau.p/X\} \stackrel{\epsilon}{\Longrightarrow}_F|p'$.
    For this transition, by Lemma~\ref{L:TAU_ACTION_NORMALIZATION}, there exist $r$ and stable context $C_X^*$ such that  $C_X\{p/X\} \stackrel{\epsilon}{\Longrightarrow}C_X^*\{p/X\}$ and
    \[C_X\{\tau.p/X\} \stackrel{\epsilon}{\Longrightarrow}C_X^*\{\tau.p/X\} \stackrel{\epsilon}{\Longrightarrow}|r  \Rrightarrow p'.\tag{\ref{L:CONGRUENCE_PRE}.6}\]
    Moreover, since $p$ is stable, so is $C_X^*\{p/X\}$ by Lemma~\ref{L:ONE_ACTION_TAU}.
    Due to $r \Rrightarrow p'$ and $p' \notin F$, by Lemma~\ref{L:MULTI_STEP_UNFOLDING_FAILURE}, we get $r \notin F$.
    Hence  $C_X^*\{\tau.p/X\} \notin F$ by (\ref{L:CONGRUENCE_PRE}.6) and Lemma~\ref{L:FAILURE_TAU_I}.
    Then $C_X^*\{p/X\} \notin F$  by Lemma~\ref{L:FAILURE_NS_IMPLIES_S}.
    Thus
    \[C_X\{p/X\} \stackrel{\epsilon}{\Longrightarrow}_F|C_X^*\{p/X\}.\]
    To complete the proof, it remains to prove that $p' \underset{\thicksim}{\sqsubset}_{RS}  {\mathcal R} \underset{\thicksim}{\sqsubset}_{RS} C_X^* \{p/X\}$.
    For the transition $C_X^*\{\tau.p/X\} \stackrel{\epsilon}{\Longrightarrow}|r$ in (\ref{L:CONGRUENCE_PRE}.6), by Lemma~\ref{L:MULTI_TAU_GF_STABLE}, there exists a stable context $C_{X,\widetilde{Y}}'^*$ such that $r \equiv C_{X,\widetilde{Y}}'^*\{\tau.p/X,p/\widetilde{Y}\} \Rrightarrow p'$ and $C_X^*\{p/X\} \Rrightarrow C_{X,\widetilde{Y}}'^*\{p/X,p/\widetilde{Y}\}$, which, by Lemma~\ref{L:MULTI_STEP_UNFOLDING_IMPLIES_RS}, implies
    \[p' \underset{\thicksim}{\sqsubset}_{RS} C_{X,\widetilde{Y}}'^*\{\tau.p/X,p/\widetilde{Y}\} {\mathcal R} C_{X,\widetilde{Y}}'^*\{p/X,p/\widetilde{Y}\} \underset{\thicksim}{\sqsubset}_{RS} C_X^* \{p/X\}.\]

  \textbf{(ALT-upto-2)} Assume that $C_X\{\tau.p/X\}$ and $C_X\{p/X\}$ are stable and $C_X\{\tau.p/X\} \stackrel{a}{\Longrightarrow}_F|p'$.
  Hence $C_X\{\tau.p/X\} \stackrel{a}{\longrightarrow}_F r \stackrel{\epsilon}{\Longrightarrow}_F |p'$ for some $r$.
  Moreover, by Lemma~\ref{L:FAILURE_NS_IMPLIES_S} and $C_X\{\tau.p/X\} \notin F$, we have
\[C_X\{p/X\} \notin F.\]
  For the $a$-labelled transition $C_X\{\tau.p/X\} \stackrel{a}{\longrightarrow}_F r$,
  by Lemma~\ref{L:ONE_ACTION_VISIBLE}, it is not difficult to see that there exists
  $C_X'$ such that
  \[C_X\{\tau.p/X\} \stackrel{a}{\longrightarrow} C_X'\{\tau.p/X\}\equiv r\;\text{and}\;C_X\{p/X\} \stackrel{a}{\longrightarrow} C_{X}'\{p/X\}.\]
  Moreover, by (ALT-upto-1), it follows from $(C_{X}'\{\tau.p/X\},C_{X}'\{p/X\}) \in \mathcal R$ and $r \equiv C_{X}'\{\tau.p/X\}\stackrel{\epsilon}{\Longrightarrow}_F|p'$ that $C_{X}'\{p/X\} \stackrel{\epsilon}{\Longrightarrow}_F|q'$ and $p' \underset{\thicksim}{\sqsubset}_{RS} {\mathcal R} \underset{\thicksim}{\sqsubset}_{RS} q'$ for some $q'$.
Clearly, we have $C_{X}\{p/X\} \stackrel{a}{\Longrightarrow}_F|q'$, as desired.
\end{proof}

\begin{lemma}\label{L:CONGRUENCE_PRE_UNIFORM}
   If $\widetilde{p} \bowtie \widetilde{q}$  and $\widetilde{p}\sqsubseteq_{RS} \widetilde{q}$ then  $C_{\widetilde{X}}\{\widetilde{p}/\widetilde{X}\}\sqsubseteq_{RS} C_{\widetilde{X}}\{\widetilde{q}/\widetilde{X}\}$ for any $C_{\widetilde{X}} $.
\end{lemma}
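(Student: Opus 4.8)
The plan is to exhibit a witnessing relation and prove it is an alternative ready simulation up to $\underset{\thicksim}{\sqsubset}_{RS}$, then appeal to Lemma~\ref{L:ALT_UP_TO} and Proposition~\ref{P:COINCIDENCE}. Concretely I would set
\[
\mathcal R = \{(C_{\widetilde X}\{\widetilde p/\widetilde X\},\, C_{\widetilde X}\{\widetilde q/\widetilde X\}) : \widetilde p \bowtie \widetilde q,\ \widetilde p \sqsubseteq_{RS} \widetilde q,\ C_{\widetilde X}\text{ a context}\},
\]
and verify (ALT-upto-1,2,3) for an arbitrary pair $(C_{\widetilde X}\{\widetilde p/\widetilde X\}, C_{\widetilde X}\{\widetilde q/\widetilde X\})\in\mathcal R$. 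Condition (ALT-upto-3) is immediate from Lemma~\ref{L:SAME_ACTIONS}: when both processes are stable and $C_{\widetilde X}\{\widetilde p/\widetilde X\}\notin F$, that lemma gives $C_{\widetilde X}\{\widetilde p/\widetilde X\}\stackrel a\longrightarrow$ iff $C_{\widetilde X}\{\widetilde q/\widetilde X\}\stackrel a\longrightarrow$ for every $a\in Act$, and stability rules out $\tau$, so $\mathcal I$ coincides. The structure closely parallels the proof of Lemma~\ref{L:CONGRUENCE_PRE}, but with tuples and with Lemma~\ref{L:FAILURE_CONGRUENCE} (the uniform consistency-preservation result) replacing Lemmas~\ref{L:FAILURE_S_VS_NS} and \ref{L:FAILURE_NS_IMPLIES_S}.

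For (ALT-upto-1), suppose $C_{\widetilde X}\{\widetilde p/\widetilde X\}\stackrel{\epsilon}{\Longrightarrow}_F|p'$. I would apply Lemma~\ref{L:MULTI_TAU_GF_STABLE} to the underlying run $C_{\widetilde X}\{\widetilde p/\widetilde X\}\stackrel{\epsilon}{\Longrightarrow}p'$, obtaining a context $C'_{\widetilde X,\widetilde Y}$ together with $i_Y,p_Y'$ ($Y\in\widetilde Y$) satisfying (MS-$\tau$-1,2,3,7); since $p'$ is stable, (MS-$\tau$-7) makes $C'_{\widetilde X,\widetilde Y}$ and each $p_Y'$ stable, while (MS-$\tau$-2) gives $p_{i_Y}\stackrel\tau\Longrightarrow p_Y'$ and $p'\equiv C'_{\widetilde X,\widetilde Y}\{\widetilde p/\widetilde X,\widetilde{p_Y'}/\widetilde Y\}$. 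From $p'\notin F$ and Lemma~\ref{L:FAILURE_GF} each $p_Y'\notin F$, and Lemma~\ref{L:FAILURE_TAU_I} upgrades the runs to $p_{i_Y}\stackrel{\epsilon}{\Longrightarrow}_F|p_Y'$; because $p_{i_Y}$ is unstable here, so is $q_{i_Y}$ by $\widetilde p\bowtie\widetilde q$. Then $\widetilde p\sqsubseteq_{RS}\widetilde q$ yields $q_{i_Y}\stackrel{\epsilon}{\Longrightarrow}_F|q_Y'$ with $p_Y'\underset{\thicksim}{\sqsubset}_{RS}q_Y'$, which (as $q_{i_Y}$ is unstable) is in fact $q_{i_Y}\stackrel\tau\Longrightarrow q_Y'$. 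Feeding these into (MS-$\tau$-3-ii) gives $C_{\widetilde X}\{\widetilde q/\widetilde X\}\stackrel{\epsilon}{\Longrightarrow}C'_{\widetilde X,\widetilde Y}\{\widetilde q/\widetilde X,\widetilde{q_Y'}/\widetilde Y\}$. The target is stable (Lemma~\ref{L:ONE_ACTION_TAU}, all pieces stable) and consistent, the latter because $C'_{\widetilde X,\widetilde Y}\{\widetilde p/\widetilde X,\widetilde{p_Y'}/\widetilde Y\}=p'\notin F$ together with $\widetilde p\,\widetilde{p_Y'}\bowtie\widetilde q\,\widetilde{q_Y'}$ and $\widetilde p\,\widetilde{p_Y'}\sqsubseteq_{RS}\widetilde q\,\widetilde{q_Y'}$ lets Lemma~\ref{L:FAILURE_CONGRUENCE} transfer consistency; Lemma~\ref{L:FAILURE_TAU_I} then makes the whole run an $\stackrel{\epsilon}{\Longrightarrow}_F|$ one. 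Finally $(p',\,C'_{\widetilde X,\widetilde Y}\{\widetilde q/\widetilde X,\widetilde{q_Y'}/\widetilde Y\})\in\mathcal R$ outright, so reflexivity of $\underset{\thicksim}{\sqsubset}_{RS}$ on the two stable endpoints supplies the required up-to chain.

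For (ALT-upto-2) I would reduce to the case already handled. Given $C_{\widetilde X}\{\widetilde p/\widetilde X\}\stackrel a\Longrightarrow_F|p'$ with both processes stable, write it as $C_{\widetilde X}\{\widetilde p/\widetilde X\}\stackrel a\longrightarrow_F r\stackrel{\epsilon}{\Longrightarrow}_F|p'$ and decompose the visible step by Lemma~\ref{L:ONE_ACTION_VISIBLE}, yielding $C'_{\widetilde X}$, $C'_{\widetilde X,\widetilde Y}$, $C''_{\widetilde X,\widetilde Y}$ and active performers $p_{i_Y}\stackrel a\longrightarrow p_Y'$ with $r\equiv C''_{\widetilde X,\widetilde Y}\{\widetilde p/\widetilde X,\widetilde{p_Y'}/\widetilde Y\}$. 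Using (CP-$a$-1,3-i), Lemmas~\ref{L:MULTI_STEP_UNFOLDING_ACTION}, \ref{L:MULTI_STEP_UNFOLDING_FAILURE}, \ref{L:ONE_ACTION_TAU_GF} and \ref{L:FAILURE_GF} I would argue each $p_{i_Y}$ is stable and consistent; peeling the subsequent $\tau$-run by repeated Lemma~\ref{L:MULTI_TAU_ACTIVE_STABLE} (each $Y$ is $1$-active in $C''$) gives $p_Y'\stackrel{\epsilon}{\Longrightarrow}|p_Y''$ and $r\stackrel{\epsilon}{\Longrightarrow}C''_{\widetilde X,\widetilde Y}\{\widetilde p/\widetilde X,\widetilde{p_Y''}/\widetilde Y\}\stackrel{\epsilon}{\Longrightarrow}|p'$, whence $p_{i_Y}\stackrel a\Longrightarrow_F|p_Y''$. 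Since $p_{i_Y},q_{i_Y}$ are stable and $p_{i_Y}\underset{\thicksim}{\sqsubset}_{RS}q_{i_Y}$, (RS3) produces $q_{i_Y}\stackrel a\longrightarrow_F q_Y'\stackrel{\epsilon}{\Longrightarrow}_F|q_Y''$ with $p_Y''\underset{\thicksim}{\sqsubset}_{RS}q_Y''$; (CP-$a$-3-iii) and Lemma~\ref{L:ONE_ACTION_TAU_GF} then assemble $C_{\widetilde X}\{\widetilde q/\widetilde X\}\stackrel a\longrightarrow C''\{\widetilde q/\widetilde X,\widetilde{q_Y'}/\widetilde Y\}\stackrel{\epsilon}{\Longrightarrow}C''\{\widetilde q/\widetilde X,\widetilde{q_Y''}/\widetilde Y\}$. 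The pair $(C''\{\widetilde p/\widetilde X,\widetilde{p_Y''}/\widetilde Y\},\,C''\{\widetilde q/\widetilde X,\widetilde{q_Y''}/\widetilde Y\})$ lies in $\mathcal R$, and its left component reaches $p'$ via $\stackrel{\epsilon}{\Longrightarrow}_F|$, so applying (ALT-upto-1) — already established for all of $\mathcal R$ — finishes the matching $q'$ and the up-to relation; Lemma~\ref{L:FAILURE_CONGRUENCE} again guarantees the reconstructed $a$-step lands consistently. The main obstacle throughout is exactly this recurring need to show that consistency of a composite term is preserved when replacing $\widetilde p$ by the larger $\widetilde q$, which is why the whole argument leans essentially on Lemma~\ref{L:FAILURE_CONGRUENCE}; the secondary difficulty is the two-layer bookkeeping in (ALT-upto-2), separating context-driven $\tau$-moves from the weak moves of the substituted processes.
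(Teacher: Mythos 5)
Your proposal is correct and follows essentially the same route as the paper's own proof: the identical relation $\mathcal R$, the same appeal to Proposition~\ref{P:COINCIDENCE} and Lemma~\ref{L:ALT_UP_TO}, (ALT-upto-3) via Lemma~\ref{L:SAME_ACTIONS}, (ALT-upto-1) via Lemmas~\ref{L:MULTI_TAU_GF_STABLE}, \ref{L:FAILURE_GF}, \ref{L:FAILURE_TAU_I}, \ref{L:ONE_ACTION_TAU} and \ref{L:FAILURE_CONGRUENCE}, and (ALT-upto-2) via Lemmas~\ref{L:ONE_ACTION_VISIBLE} and \ref{L:MULTI_TAU_ACTIVE_STABLE} followed by a reduction to the already-established (ALT-upto-1). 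The individual steps (matching performers through $\bowtie$ and $\sqsubseteq_{RS}$, reassembling with (MS-$\tau$-3-ii) and (CP-$a$-3-iii), and transferring consistency with Lemma~\ref{L:FAILURE_CONGRUENCE}) all align with the paper's argument.
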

\begin{proof}
 Set
\[
       {\mathcal R} \triangleq \{(B_{\widetilde{X}}\{\widetilde{p}/\widetilde{X}\},B_{\widetilde{X}}\{\widetilde{q}/\widetilde{X}\}): \widetilde{p} \bowtie \widetilde{q},
       \widetilde{p}\sqsubseteq_{RS} \widetilde{q} \;\text{and}\; B_{\widetilde{X}}\;\text{is a context}\}.
\]
Similarly, it suffices to prove that $\mathcal R$ is an alternative ready simulation relation up to $\underset{\thicksim}{\sqsubset}_{RS}$. Suppose $(C_{\widetilde{X}}\{\widetilde{p}/\widetilde{X}\},C_{\widetilde{X}}\{\widetilde{q}/\widetilde{X}\})\in \mathcal R$. Then, by  Lemma~\ref{L:SAME_ACTIONS}, it is obvious that such pair satisfies the condition (ALT-upto-3). In the following, we consider two remaining conditions in turn.

      \textbf{(ALT-upto-1)} Assume that $C_{\widetilde{X}}\{\widetilde{p}/\widetilde{X}\} \stackrel{\epsilon}{\Longrightarrow}_F|s$.
      For this transition, by Lemma~\ref{L:MULTI_TAU_GF_STABLE}, there exist $C_{\widetilde{X},\widetilde{Y}}'$ and $i_Y \leq |\widetilde{X}|,p_Y'(Y\in \widetilde{Y})$  that satisfy (MS-$\tau$-1) -- (MS-$\tau$-7).
      In particular, by (MS-$\tau$-2,7), we have
      \[p_{i_Y}\stackrel{\tau}{\Longrightarrow}|p_{Y}'\;\text{for each}\;Y \in \widetilde{Y}\;\text{and}\; s \equiv C_{\widetilde{X},\widetilde{Y}}'\{\widetilde{p}/\widetilde{X},\widetilde{p_{Y}'}/\widetilde{Y}\} \notin F.\]
      Then, by (MS-$\tau$-1) and Lemma~\ref{L:FAILURE_GF}, $p_{Y}'\notin F$ and hence $p_{i_Y}\stackrel{\tau}{\Longrightarrow}_F|p_{Y}'$ by Lemma~\ref{L:FAILURE_TAU_I} for each $Y \in \widetilde{Y}$.
      Since $\widetilde{p} \bowtie \widetilde{q}$, it follows from $ \widetilde{p}\sqsubseteq_{RS} \widetilde{q}$ that there exist $q_{Y}'(Y \in \widetilde{Y})$ such that
      \[q_{i_Y}\stackrel{\tau}{\Longrightarrow}_F|q_{Y}'\;\text{and}\; p_{Y}'\underset{\thicksim}{\sqsubset}_{RS} q_{Y}'\;\text{for each}\;Y \in \widetilde{Y}.\tag{\ref{L:CONGRUENCE_PRE_UNIFORM}.1}\]
      So, by (MS-$\tau$-3-ii), we get
      \[C_{\widetilde{X}}\{\widetilde{q}/\widetilde{X}\} \stackrel{\epsilon}{\Longrightarrow} C_{\widetilde{X},\widetilde{Y}}'\{\widetilde{q}/\widetilde{X},\widetilde{q_{Y}'}/\widetilde{Y}\}.\]
      Moreover, by Lemma~\ref{L:ONE_ACTION_TAU}, it follows from $s \equiv C_{\widetilde{X},\widetilde{Y}}'\{\widetilde{p}/\widetilde{X},\widetilde{p_{Y}'}/\widetilde{Y}\} \not\stackrel{\tau}{\longrightarrow}$, $\widetilde{p} \bowtie \widetilde{q}$ and  $\widetilde{p_{Y}'}\underset{\thicksim}{\sqsubset}_{RS} \widetilde{q_{Y}'}$ that \[C_{\widetilde{X},\widetilde{Y}}'\{\widetilde{q}/\widetilde{X},\widetilde{q_{Y}'}/\widetilde{Y}\} \not\stackrel{\tau}{\longrightarrow}.\]
      In addition, by Lemma~\ref{L:FAILURE_CONGRUENCE} and $C_{\widetilde{X},\widetilde{Y}}'\{\widetilde{p}/\widetilde{X},\widetilde{p_{Y}'}/\widetilde{Y}\} \notin F$, we get $C_{\widetilde{X},\widetilde{Y}}'\{\widetilde{q}/\widetilde{X},\widetilde{q_{Y}'}/\widetilde{Y}\} \notin F$.
      Hence, by Lemma~\ref{L:FAILURE_TAU_I}, we obtain
      \[C_{\widetilde{X}}\{\widetilde{q}/\widetilde{X}\} \stackrel{\epsilon}{\Longrightarrow}_F | C_{\widetilde{X},\widetilde{Y}}'\{\widetilde{q}/\widetilde{X},\widetilde{q_{Y}'}/\widetilde{Y}\}.\]
      Clearly, $(C_{\widetilde{X},\widetilde{Y}}'\{\widetilde{p}/\widetilde{X},\widetilde{p_{Y}'}/\widetilde{Y}\},C_{\widetilde{X},\widetilde{Y}}'\{\widetilde{q}/\widetilde{X},\widetilde{q_{Y}'}/\widetilde{Y}\})\in \underset{\thicksim}{\sqsubset}_{RS} {\mathcal R} \underset{\thicksim}{\sqsubset}_{RS}$ due to (\ref{L:CONGRUENCE_PRE_UNIFORM}.1) and the reflexivity of $\underset{\thicksim}{\sqsubset}_{RS}$.\\

  \textbf{(ALT-upto-2)} Let $C_{\widetilde{X}}\{\widetilde{p}/\widetilde{X}\}$ and $C_{\widetilde{X}}\{\widetilde{q}/\widetilde{X}\}$ be stable and $C_{\widetilde{X}}\{\widetilde{p}/\widetilde{X}\} \stackrel{a}{\Longrightarrow}_F|s$.
  Then
\[C_{\widetilde{X}}\{\widetilde{p}/\widetilde{X}\} \stackrel{a}{\longrightarrow}_F r \stackrel{\epsilon}{\Longrightarrow}_F|s\;\text{for some}\; r. \tag{\ref{L:CONGRUENCE_PRE_UNIFORM}.2}\]
Moreover, by Lemma~\ref{L:FAILURE_CONGRUENCE}, it follows from $\widetilde{p} \bowtie \widetilde{q}$,  $\widetilde{p}\sqsubseteq_{RS} \widetilde{q}$ and $C_{\widetilde{X}}\{\widetilde{p}/\widetilde{X}\} \notin F$ that
\[C_{\widetilde{X}}\{\widetilde{q}/\widetilde{X}\} \notin F.\tag{\ref{L:CONGRUENCE_PRE_UNIFORM}.3}\]
For the  transition $C_{\widetilde{X}}\{\widetilde{p}/\widetilde{X}\} \stackrel{a}{\longrightarrow} r$, by Lemma~\ref{L:ONE_ACTION_VISIBLE}, there exist $C_{\widetilde{X}}'$, $C_{\widetilde{X},\widetilde{Y}}'$ and $C_{\widetilde{X},\widetilde{Y}}''$ that satisfy (CP-$a$-1) -- (CP-$a$-4).
In particular, by (CP-$a$-3-ii), there exist $i_Y \leq |\widetilde{X}|,p_{Y}'(Y \in \widetilde{Y})$  such that
$p_{i_Y} \stackrel{a}{\longrightarrow}p_{Y}'$ for each $Y \in \widetilde{Y}$ and $r \equiv C_{\widetilde{X},\widetilde{Y}}''\{\widetilde{p}/\widetilde{X},\widetilde{p_{Y}'}/\widetilde{Y}\}$.
Moreover, by (CP-$a$-1) and (CP-$a$-3-i), we have
\[C_{\widetilde{X}}\{\widetilde{p}/\widetilde{X}\}  \Rrightarrow C_{\widetilde{X}}'\{\widetilde{p}/\widetilde{X}\} \equiv C_{\widetilde{X},\widetilde{Y}}'\{\widetilde{p}/\widetilde{X},\widetilde{p_{i_Y}}/\widetilde{Y}\}.\]
Hence $C_{\widetilde{X},\widetilde{Y}}'\{\widetilde{p}/\widetilde{X},\widetilde{p_{i_Y}}/\widetilde{Y}\} \notin F$ by $C_{\widetilde{X}}\{\widetilde{p}/\widetilde{X}\} \notin F$ and Lemma~\ref{L:MULTI_STEP_UNFOLDING_FAILURE}.
Further, since  each $Y (\in \widetilde{Y})$ is 1-active in $C_{\widetilde{X},\widetilde{Y}}'$, by Lemma~\ref{L:FAILURE_GF}, we get
\[p_{i_Y} \notin F\;\text{for each}\; Y \in \widetilde{Y}.\tag{\ref{L:CONGRUENCE_PRE_UNIFORM}.4}\]
For the transition $r \equiv C_{\widetilde{X},\widetilde{Y}}''\{\widetilde{p}/\widetilde{X},\widetilde{p_{Y}'}/\widetilde{Y}\}  \stackrel{\epsilon}{\Longrightarrow}|s$ in (\ref{L:CONGRUENCE_PRE_UNIFORM}.2), by Lemma~\ref{L:MULTI_TAU_ACTIVE_STABLE}, it follows that for each $Y \in \widetilde{Y}$, there exists $p_{Y}''$ such that $p_{Y}'\stackrel{\epsilon}{\Longrightarrow}|p_{Y}''$ and \[C_{\widetilde{X},\widetilde{Y}}''\{\widetilde{p}/\widetilde{X},\widetilde{p_{Y}'}/\widetilde{Y}\} \stackrel{\epsilon}{\Longrightarrow} C_{\widetilde{X},\widetilde{Y}}''\{\widetilde{p}/\widetilde{X},\widetilde{p_{Y}''}/\widetilde{Y}\} \stackrel{\epsilon}{\Longrightarrow}|s.\]
Then $C_{\widetilde{X},\widetilde{Y}}''\{\widetilde{p}/\widetilde{X},\widetilde{p_{Y}''}/\widetilde{Y}\} \notin F$ due to $s \notin F$ and Lemma~\ref{L:FAILURE_TAU_I}, and hence $p_{Y}''\notin F$ for each $Y \in \widetilde{Y}$ by Lemma~\ref{L:FAILURE_GF}. Therefore, by (\ref{L:CONGRUENCE_PRE_UNIFORM}.4) and Lemma~\ref{L:FAILURE_TAU_I}, we have
\[p_{i_Y} \stackrel{a}{\longrightarrow}_F p_{Y}'  \stackrel{\epsilon}{\Longrightarrow}_F|p_{Y}''\;\text{for each} \;Y \in \widetilde{Y}.\]
So it follows from $\widetilde{p} \bowtie \widetilde{q}$ and $\widetilde{p} \sqsubseteq_{RS} \widetilde{q}$ that for each $Y \in \widetilde{Y}$, there exist $q_{Y}'$ and $q_{Y}''$ such that $q_{i_Y} \stackrel{a}{\longrightarrow}_F q_{Y}'  \stackrel{\epsilon}{\Longrightarrow}_F|q_{Y}''$ and $p_{Y}'' \underset{\thicksim}{\sqsubset}_{RS} q_{Y}''$.
By (CP-$a$-3-iii), we get
\[C_{\widetilde{X}}\{\widetilde{q}/\widetilde{X}\} \stackrel{a}{\longrightarrow} C_{\widetilde{X},\widetilde{Y}}''\{\widetilde{q}/\widetilde{X},\widetilde{q_{Y}'}/\widetilde{Y}\}. \tag{\ref{L:CONGRUENCE_PRE_UNIFORM}.5}\]
Further, by Lemma~\ref{L:ONE_ACTION_TAU_GF} and (CP-$a$-2), we obtain \[C_{\widetilde{X},\widetilde{Y}}''\{\widetilde{q}/\widetilde{X},\widetilde{q_{Y}'}/\widetilde{Y}\} \stackrel{\epsilon}{\Longrightarrow} C_{\widetilde{X},\widetilde{Y}}''\{\widetilde{q}/\widetilde{X},\widetilde{q_{Y}''}/\widetilde{Y}\}. \tag{\ref{L:CONGRUENCE_PRE_UNIFORM}.6}\]
Clearly, $(C_{\widetilde{X},\widetilde{Y}}''\{\widetilde{p}/\widetilde{X},\widetilde{p_{Y}''}/\widetilde{Y}\},C_{\widetilde{X},\widetilde{Y}}''\{\widetilde{q}/\widetilde{X},\widetilde{q_{Y}''}/\widetilde{Y}\}) \in \mathcal R$.
So, by  $C_{\widetilde{X},\widetilde{Y}}''\{\widetilde{p}/\widetilde{X},\widetilde{p_{Y}''}/\widetilde{Y}\} \stackrel{\epsilon}{\Longrightarrow}_F |s$ and (ALT-upto-1), there exists $t$ such that $C_{\widetilde{X},\widetilde{Y}}''\{\widetilde{q}/\widetilde{X},\widetilde{q_{Y}''}/\widetilde{Y}\} \stackrel{\epsilon}{\Longrightarrow}_F|t$ and $s \underset{\thicksim}{\sqsubset}_{RS}{\mathcal R} \underset{\thicksim}{\sqsubset}_{RS} t$;
moreover, we also have $C_{\widetilde{X}}\{\widetilde{q}/\widetilde{X}\} \stackrel{a}{\Longrightarrow}_F|t$ due to (\ref{L:CONGRUENCE_PRE_UNIFORM}.3), (\ref{L:CONGRUENCE_PRE_UNIFORM}.5), (\ref{L:CONGRUENCE_PRE_UNIFORM}.6) and Lemma~\ref{L:FAILURE_TAU_I}, as desired.
\end{proof}

We are now in a position to state the main result of this section.

\begin{theorem}[Precongruence]\label{T:CONGRUENCE}
  If $p\sqsubseteq_{RS} q$ then  $C_X\{p/X\}\sqsubseteq_{RS} C_X\{q/X\}$ for any context $C_X$.
\end{theorem}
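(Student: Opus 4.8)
The plan is to reduce the general statement to the uniform case already settled in Lemma~\ref{L:CONGRUENCE_PRE_UNIFORM}. That lemma yields $C_X\{p/X\} \sqsubseteq_{RS} C_X\{q/X\}$ whenever $p \sqsubseteq_{RS} q$ \emph{and} $p \bowtie q$, i.e.\ $p$ is stable exactly when $q$ is. For an arbitrary pair $p \sqsubseteq_{RS} q$ this uniformity may fail, since $\sqsubseteq_{RS}$ constrains only the stable $F$-free descendants of $p$ and $q$ and says nothing about the stability of $p$ and $q$ themselves. Hence the single remaining task is to eliminate the stability mismatch without disturbing either side modulo $=_{RS}$.

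To this end I would $\tau$-saturate both processes. For a process $r$ set $r^{*} \triangleq \tau.r$ if $r$ is stable and $r^{*} \triangleq r$ otherwise. Then $r^{*}$ is never stable, and in either case $r =_{RS} r^{*}$: when $r$ is non-stable this is immediate, and when $r$ is stable it is exactly Lemma~\ref{L:CONGRUENCE_PRE} instantiated with the trivial context $C_X \equiv X$. Applying this to both $p$ and $q$ produces $p^{*}$ and $q^{*}$ that are both unstable, so that $p^{*} \bowtie q^{*}$; moreover $p^{*} =_{RS} p \sqsubseteq_{RS} q =_{RS} q^{*}$ forces $p^{*} \sqsubseteq_{RS} q^{*}$ by transitivity of $\sqsubseteq_{RS}$ together with $=_{RS}\,\subseteq\,\sqsubseteq_{RS}$. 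Lemma~\ref{L:CONGRUENCE_PRE_UNIFORM} now applies to the uniform pair $(p^{*},q^{*})$ and delivers $C_X\{p^{*}/X\} \sqsubseteq_{RS} C_X\{q^{*}/X\}$.

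It then remains to transport this back to $p$ and $q$. For each substitution I would argue $C_X\{r/X\} =_{RS} C_X\{r^{*}/X\}$: if $r$ is non-stable then $r^{*} \equiv r$ and there is nothing to prove, while if $r$ is stable then $r^{*} \equiv \tau.r$ and the equality is precisely Lemma~\ref{L:CONGRUENCE_PRE}. Chaining these facts gives
\[ C_X\{p/X\} =_{RS} C_X\{p^{*}/X\} \sqsubseteq_{RS} C_X\{q^{*}/X\} =_{RS} C_X\{q/X\}, \]
whence $C_X\{p/X\} \sqsubseteq_{RS} C_X\{q/X\}$, as required.

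I do not expect a genuine obstacle at this final stage: all the delicate inconsistency-predicate bookkeeping has already been absorbed into Lemmas~\ref{L:CONGRUENCE_PRE} and \ref{L:CONGRUENCE_PRE_UNIFORM}, so the theorem itself is a short combinatorial argument. The one point deserving care is precisely the stability mismatch between $p$ and $q$, because $\underset{\thicksim}{\sqsubset}_{RS}$ and $\sqsubseteq_{RS}$ treat stable and unstable states asymmetrically. The $\tau$-saturation step is what aligns the two sides so that the uniform lemma becomes applicable, and Lemma~\ref{L:CONGRUENCE_PRE} is exactly the guarantee that prefixing a stable process with $\tau$ is invisible inside every context.
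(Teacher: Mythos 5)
Your proposal is correct and follows essentially the same route as the paper: the paper's proof also reduces to Lemma~\ref{L:CONGRUENCE_PRE_UNIFORM} by passing to $\tau$-prefixed versions of $p$ and $q$ (which are automatically uniform w.r.t.\ stability) and uses Lemma~\ref{L:CONGRUENCE_PRE} to transport the result back, via the chain $\tau.p =_{RS} p \sqsubseteq_{RS} q =_{RS} \tau.q$. Your conditional saturation $r^{*}$ (prefixing $\tau$ only when $r$ is stable) is a minor cosmetic refinement that keeps Lemma~\ref{L:CONGRUENCE_PRE} strictly within its stated hypothesis, but the substance is identical.
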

\begin{proof}
By  Lemma~\ref{L:CONGRUENCE_PRE} and \ref{L:CONGRUENCE_PRE_UNIFORM}, it immediately follows from $\tau.p =_{RS} p \sqsubseteq_{RS}q =_{RS} \tau.q$.
\end{proof}

As an immediate consequence of this theorem, we also have
\begin{corollary}
  If $\widetilde{p}\sqsubseteq_{RS} \widetilde{q}$ then  $C_{\widetilde{X}}\{\widetilde{p}/\widetilde{X}\}\sqsubseteq_{RS} C_{\widetilde{X}}\{\widetilde{q}/\widetilde{X}\}$ for any context $C_{\widetilde{X}}$.
\end{corollary}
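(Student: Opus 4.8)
The plan is to reduce the multi-variable statement to the single-variable precongruence already established in Theorem~\ref{T:CONGRUENCE}, exploiting the fact that $\sqsubseteq_{RS}$ is a preorder (Def.~\ref{D:RS}). Write $\widetilde{X} = (X_1, \dots, X_n)$, $\widetilde{p} = (p_1, \dots, p_n)$ and $\widetilde{q} = (q_1, \dots, q_n)$ with $p_i \sqsubseteq_{RS} q_i$ for each $i \le n$. The idea is to replace the components of $\widetilde{p}$ by those of $\widetilde{q}$ one hole at a time, and then to accumulate the resulting comparisons by transitivity.

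First I would introduce, for each $i$ with $0 \le i \le n$, the process
\[
r_i \triangleq C_{\widetilde{X}}\{q_1/X_1, \dots, q_i/X_i, p_{i+1}/X_{i+1}, \dots, p_n/X_n\},
\]
so that $r_0 \equiv C_{\widetilde{X}}\{\widetilde{p}/\widetilde{X}\}$ and $r_n \equiv C_{\widetilde{X}}\{\widetilde{q}/\widetilde{X}\}$. To pass from $r_{i-1}$ to $r_i$ I would single out the $i$-th hole: let $D_{X_i}$ be the one-variable context obtained from $C_{\widetilde{X}}$ by substituting $q_1, \dots, q_{i-1}$ for $X_1, \dots, X_{i-1}$ and $p_{i+1}, \dots, p_n$ for $X_{i+1}, \dots, X_n$, while leaving $X_i$ free. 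By construction $D_{X_i}\{p_i/X_i\} \equiv r_{i-1}$ and $D_{X_i}\{q_i/X_i\} \equiv r_i$ (renaming recursive variables as permitted by Convention~\ref{C:REC_VAR} so that none of the substituted processes clashes with a recursive variable of $C_{\widetilde{X}}$). Since $p_i \sqsubseteq_{RS} q_i$, Theorem~\ref{T:CONGRUENCE} applied to the context $D_{X_i}$ yields $r_{i-1} \sqsubseteq_{RS} r_i$.

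Finally, chaining these $n$ comparisons through the transitivity of $\sqsubseteq_{RS}$ gives
\[
C_{\widetilde{X}}\{\widetilde{p}/\widetilde{X}\} \equiv r_0 \sqsubseteq_{RS} r_1 \sqsubseteq_{RS} \cdots \sqsubseteq_{RS} r_n \equiv C_{\widetilde{X}}\{\widetilde{q}/\widetilde{X}\},
\]
as desired. Formally this last step is an easy induction on $n$, the base case $n = 0$ being handled by reflexivity of $\sqsubseteq_{RS}$.

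Since the substantive work — the precongruence of $\sqsubseteq_{RS}$ for a single hole, together with all the handling of inconsistency packaged in Lemmas~\ref{L:CONGRUENCE_PRE} and~\ref{L:CONGRUENCE_PRE_UNIFORM} — is already complete, there is no genuine obstacle remaining. The only point demanding a little care is the formation of the intermediate one-variable contexts $D_{X_i}$: one must verify that fixing the other holes to closed processes indeed leaves a legitimate context with $X_i$ as its sole free variable, and that $D_{X_i}\{p_i/X_i\}$ and $D_{X_i}\{q_i/X_i\}$ really coincide with $r_{i-1}$ and $r_i$. This is immediate from the definition of simultaneous substitution, modulo the standard renaming convention on recursive variables.
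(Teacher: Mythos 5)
Your proof is correct and is exactly the argument the paper intends: its entire proof of this corollary reads ``Applying Theorem~\ref{T:CONGRUENCE} finitely many times,'' which is precisely your hole-by-hole substitution chained through transitivity of $\sqsubseteq_{RS}$. You have simply made explicit the intermediate processes $r_i$ and one-variable contexts $D_{X_i}$ that the paper leaves implicit.
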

\begin{proof}
  Applying Theorem~\ref{T:CONGRUENCE} finitely many times.
\end{proof}

\section{Unique solution of equations}

This section focuses on the solutions of equations.
Especially, we shall prove that the equation $X=_{RS}t_X$ has at most one consistent solution modulo $=_{RS}$ provided that $X$ is strongly guarded and does not occur in the scope of any conjunction in $t_X$;
moreover, the process $\langle X|X =t_X \rangle$ is indeed the unique consistent solution whenever such equation has a consistent solution.
We begin with giving two results on the inconsistency predicate $F$.

\begin{lemma}\label{L:FAILURE_EQUATION_UNIQUE_PRE}
 For any stable processes $p,q \notin F $ and context $C_X$ such that $X$ does not occur in the scope of any conjunction, if $C_X\{p/X\} \in F $ then $C_X\{q/X\} \in F $.
\end{lemma}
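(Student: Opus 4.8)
The plan is to argue by induction on the height of the proof tree $\mathcal{T}$ witnessing $Strip(\mathcal{P}_{\text{CLL}_R},M_{\text{CLL}_R}) \vdash C_X\{p/X\}F$, establishing the statement (for the fixed stable consistent $p,q$) uniformly for all contexts $C_X$ in which $X$ lies outside the scope of every conjunction. Since substitution touches only the $X$-occurrences, the top operator of $C_X\{p/X\}$ agrees with that of $C_X$ unless $C_X\equiv X$, so the case split on the last rule of $\mathcal{T}$ is driven by the form of $C_X$. The routine cases are: $C_X$ closed (then $C_X\{p/X\}\equiv C_X\equiv C_X\{q/X\}$, nothing to prove); $C_X\equiv X$ (vacuous, since $p\notin F$); and $C_X$ of shape $\alpha.B_X$, $B_X\vee D_X$, $B_X\Box D_X$ or $B_X\parallel_A D_X$, where the last rule is one of $Rp_2,\dots,Rp_9$ and applying the induction hypothesis to the relevant operand (trivial for operands not containing $X$, which are unchanged by the substitution) together with the same rule yields $C_X\{q/X\}\in F$.

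The hypothesis on $X$ renders the conjunction case immediate: if $C_X\equiv B_X\wedge D_X$ then $X$ would occur in the scope of this very conjunction unless $X\notin FV(B_X)\cup FV(D_X)$, so both operands are closed in $X$, whence $C_X\{p/X\}\equiv C_X\equiv C_X\{q/X\}$. This is precisely where the restriction is needed: it guarantees that $p$ and $q$ are never placed inside a conjunction, so the ready-set rules $Rp_{10}$, $Rp_{11}$ and the conjunction rules $Rp_{12}$, $Rp_{13}$ can never convert a consistent substituend into the source of inconsistency.

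For $C_X\equiv\langle Y|E\rangle$ two subcases arise. If the last rule is $Rp_{14}$, I would note that $C_X\{p/X\}\equiv\langle Y|E\{p/X\}\rangle\Rrightarrow_1\langle t_Y|E\rangle\{p/X\}$ (substitution for the non-recursive $X$ commutes with one-step unfolding), so the unique premise is $C_X'\{p/X\}F$ with $C_X'\triangleq\langle t_Y|E\rangle$; by Lemma~\ref{L:ONE_STEP_UNFOLDING_VARIABLE}(5) the variable $X$ is still outside every conjunction in $C_X'$, so the induction hypothesis gives $C_X'\{q/X\}\in F$ and $Rp_{14}$ yields $C_X\{q/X\}\in F$. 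The remaining subcase, last rule $Rp_{15}$, is the main obstacle, and I would reduce it to stable $\tau$-descendants: by Lemma~\ref{L:F_NORMAL}(7) it suffices to show $z\in F$ for every $z$ with $C_X\{q/X\}\stackrel{\epsilon}{\Longrightarrow}|z$.

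Fix such a $z$. Applying Lemma~\ref{L:MULTI_TAU_GF_STABLE} to $C_X\{q/X\}\stackrel{\epsilon}{\Longrightarrow}|z$ with the singleton tuple $\widetilde X=(X)$ produces a stable context $C_{X,\widetilde Y}'$, indices $i_Y=1$ and processes $q_Y'$ with $q\stackrel{\tau}{\Longrightarrow}q_Y'$; as $q$ is stable this forces $q_Y'\equiv q$, so by (MS-$\tau$-2) and (MS-$\tau$-7) one gets $z\equiv D_X\{q/X\}$ with $D_X\triangleq C_{X,\widetilde Y}'\{X/\widetilde Y\}$ stable, and by (MS-$\tau$-6) the variable $X$ remains outside every conjunction in $D_X$. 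Instantiating (MS-$\tau$-3-ii) with the tuple $(p)$ and witnesses $q_Y'\equiv p$ (legitimate since $p$ is stable) gives $C_X\{p/X\}\stackrel{\epsilon}{\Longrightarrow}D_X\{p/X\}$, and $D_X\{p/X\}$ is stable (a stable context filled with the stable $p$, via Lemma~\ref{L:ONE_ACTION_TAU} and Lemma~\ref{L:STABLE_CONTEXT_I}), hence $C_X\{p/X\}\stackrel{\epsilon}{\Longrightarrow}|D_X\{p/X\}$. Therefore $D_X\{p/X\}F$ is one of the premises of the $Rp_{15}$ step deriving $C_X\{p/X\}F$, so it has a strictly shorter proof tree, and the induction hypothesis applied to the context $D_X$ gives $D_X\{q/X\}\equiv z\in F$, completing the reduction and the proof. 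The crux throughout is the propagation of the single structural restriction on $C_X$ along every evolution path, secured by its closure under unfolding (Lemma~\ref{L:ONE_STEP_UNFOLDING_VARIABLE}(5)) and under $\tau$-normalisation (MS-$\tau$-6), combined with the fact that stability of $p$ and $q$ freezes the substituends during these $\tau$-sequences.
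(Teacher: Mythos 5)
Your proposal is correct and takes essentially the same route as the paper's own proof: induction on the depth of the proof tree of $C_X\{p/X\}F$ with a case analysis on the shape of $C_X$ (the restriction on $X$ ruling out a conjunction at the top), the recursion case split on $Rp_{14}$ versus $Rp_{15}$, and the $Rp_{15}$ sub-case resolved by applying Lemma~\ref{L:MULTI_TAU_GF_STABLE} to a stable $\tau$-descendant of $C_X\{q/X\}$, using (MS-$\tau$-6) to propagate the conjunction-freeness and the stability of $p,q$ to recognise the corresponding premise of the $Rp_{15}$ step, to which the induction hypothesis applies. One cosmetic correction: since $\stackrel{\tau}{\Longrightarrow}$ in this paper requires at least one $\tau$-transition, stability of $q$ forces $\widetilde{Y}=\emptyset$ in (MS-$\tau$-2) rather than $q_Y'\equiv q$, which only simplifies your argument ($D_X$ is then the produced stable context itself, and the instantiation of (MS-$\tau$-3-ii) is vacuous rather than needing $p\stackrel{\tau}{\Longrightarrow}p$).
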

\begin{proof}
Assume that $C_X\{p/X\} \in F $ and $\mathcal T$ is any proof tree of $Strip(\mathcal{P}_{\text{CLL}_R} ,M_{\text{CLL}_R} ) \vdash C_{X}\{p/ X\}F$.
We proceed by induction on the depth of $\mathcal T$.
  The argument is a routine case analysis on $C_X$.
  Moreover, since $X$ does not occur in the scope of any conjunction, in addition to that $C_X$ is closed, the form of $C_X$ is one of the following: $X$, $\alpha.B_X$, $B_X \odot D_X$ with $\odot \in \{\vee, \Box, \parallel_A\}$ and $\langle Y|E \rangle$.
  Here, we give the proof only for the case   $C_X \equiv \langle Y|E \rangle$, the other cases are straightforward and omitted.

  In case $C_{X} \equiv \langle Y|E \rangle$, the last rule applied in $\mathcal T$ is
  \[\text{either}\;\frac{\langle t_Y|E \rangle \{p/X\}F}{\langle Y|E \rangle \{p/X\}F}\;\text{with}\;Y =t_Y \in E\;\text{or}\;\frac{\{rF:\langle Y|E \rangle \{p/X\} \stackrel{\epsilon}{\Longrightarrow}|r\}}{\langle Y|E \rangle \{p/X\} F}.\]

  For the first alternative, we have $\langle t_Y|E\rangle\{q/X\} \in F $ by IH, and hence $C_{X}\{q/X\} \equiv \langle Y|E\rangle\{q/X\} \in F $.

  For the second alternative, assume $\langle Y|E \rangle \{q/X\} \stackrel{\epsilon}{\Longrightarrow} |s$.
  Since $q$ is stable, by Lemma~\ref{L:MULTI_TAU_GF_STABLE}, $s\equiv C_X'\{q/X\}$ for some stable $C_X'$ such that $X$ does not occur in the scope of any conjunction in $C_X'$ and $\langle Y|E \rangle \{p/X\} \stackrel{\epsilon}{\Longrightarrow} C_X'\{p/X\}$.
  Moreover, since $p$ is stable, so is $C_X'\{p/X\}$.
  Thus there exists a proper subtree of $\mathcal T$ with root $C_X'\{p/X\} F $.
  So, by IH, $s\equiv C_X'\{q/X\} \in F $.
  Hence $C_X\{q/X\} \in F $ by Theorem~\ref{L:LLTS}, as desired.
\end{proof}

This result is of independent interest, but its principal use is that it will serve as an important step in demonstrating the next lemma, which reveals that the above result still holds if it is deleted from the hypotheses that $q$ and $p$ are stable.

\begin{lemma}\label{L:FAILURE_EQUATION_UNIQUE}
 For any processes $p,q \notin F $ and context $C_X$ such that $X$ does not occur in the scope of any conjunction, if $C_X\{p/X\} \in F $ then $C_X\{q/X\} \in F $.
\end{lemma}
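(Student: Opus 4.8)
The plan is to reduce the general statement (Lemma~\ref{L:FAILURE_EQUATION_UNIQUE}) to the already-proved stable case (Lemma~\ref{L:FAILURE_EQUATION_UNIQUE_PRE}) by replacing the possibly-unstable processes $p$ and $q$ with stabilized versions of themselves. The key observation is that $\tau.p$ and $\tau.q$ are always stable, and by Lemma~\ref{L:CONGRUENCE_PRE} we know $C_X\{p/X\} =_{RS} C_X\{\tau.p/X\}$ and $C_X\{q/X\} =_{RS} C_X\{\tau.q/X\}$. Since $=_{RS}$ is the kernel of $\sqsubseteq_{RS}$ and (RS2) transfers consistency, membership in $F$ is invariant under $=_{RS}$; hence it suffices to prove the claim with $p,q$ replaced by $\tau.p,\tau.q$. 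However, $\tau.p$ and $\tau.q$ are not quite in the form Lemma~\ref{L:FAILURE_EQUATION_UNIQUE_PRE} wants either, because that lemma reasons about \emph{stable} arguments, and I want to feed it arguments that are genuinely the stable descendants of $p$ and $q$.

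More precisely, first I would fix stable $\tau$-descendants. Since $p \notin F$, by Theorem~\ref{L:LLTS} (specifically (LTS2)) there is a stable $p'$ with $p \stackrel{\epsilon}{\Longrightarrow}_F| p'$, so $p' \notin F$; similarly pick stable $q' \notin F$ with $q \stackrel{\epsilon}{\Longrightarrow}_F| q'$. The intended route is then to show two things: that $C_X\{p/X\} \in F$ forces $C_X\{p'/X\} \in F$, and conversely that $C_X\{q'/X\} \in F$ forces $C_X\{q/X\} \in F$; combined with Lemma~\ref{L:FAILURE_EQUATION_UNIQUE_PRE} applied to the stable consistent processes $p',q'$ (which gives $C_X\{p'/X\} \in F \Rightarrow C_X\{q'/X\} \in F$) this chains to the desired conclusion. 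The transfer between a process and its stable descendant is exactly what Lemma~\ref{L:FAILURE_S_VS_NS} and Lemma~\ref{L:FAILURE_NS_IMPLIES_S} provide once we pass through $\tau.p'$: indeed $p =_{RS} \tau.p'$ is not literally true, but $p \approx_{RS}$ and $\tau$-closure results of Section~5 relate $p$ to $p'$ up to $=_{RS}$, and then Lemma~\ref{L:CONGRUENCE_PRE} relates $C_X\{p'/X\}$ and $C_X\{\tau.p'/X\}$.

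Concretely, I would argue as follows. By Lemma~\ref{L:CONGRUENCE_PRE}, $C_X\{p'/X\} =_{RS} C_X\{\tau.p'/X\}$ and $C_X\{q'/X\} =_{RS} C_X\{\tau.q'/X\}$, so it is enough to track consistency through the stable processes $p',q'$. The cleanest formulation is: $C_X\{p/X\} \in F \iff C_X\{p'/X\} \in F$, because $p$ and $p'$ induce $=_{RS}$-equivalent instantiations of $C_X$ by Theorem~\ref{T:CONGRUENCE} (precongruence) applied to $p \sqsubseteq_{RS} p'$ and $p' \sqsubseteq_{RS} p$ — here I use that $p =_{RS} p'$ whenever $p \stackrel{\epsilon}{\Longrightarrow}_F| p'$ and $p$ is consistent with a unique stable behaviour up to ready simulation, which follows from the definition of $\sqsubseteq_{RS}$ via stable descendants. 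Then $F$-membership is $=_{RS}$-invariant by (RS2), giving the equivalence. The same holds for $q,q'$. Now Lemma~\ref{L:FAILURE_EQUATION_UNIQUE_PRE}, applicable since $p',q'$ are stable and consistent and $X$ avoids conjunction scopes, yields $C_X\{p'/X\} \in F \Rightarrow C_X\{q'/X\} \in F$, and the two equivalences bracket this to give $C_X\{p/X\} \in F \Rightarrow C_X\{q/X\} \in F$.

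The main obstacle I anticipate is justifying cleanly that $p =_{RS} p'$ (hence the $=_{RS}$-invariance of $F$-membership for the context instantiation) when $p \stackrel{\epsilon}{\Longrightarrow}_F| p'$; this is not immediate from the definitions, since $p'$ is only \emph{one} stable descendant and $\sqsubseteq_{RS}$ quantifies over all stable descendants. The safe remedy, and the route I would actually take to avoid this subtlety, is to bypass $p'$ entirely and argue directly through $\tau.p$: use Lemma~\ref{L:CONGRUENCE_PRE} to get $C_X\{p/X\} =_{RS} C_X\{\tau.p/X\}$ and $C_X\{q/X\} =_{RS} C_X\{\tau.q/X\}$, note $F$-membership is preserved by $=_{RS}$, and then relate $C_X\{\tau.p/X\}$ to a stable-argument instance via Lemmas~\ref{L:FAILURE_S_VS_NS} and~\ref{L:FAILURE_NS_IMPLIES_S} (which precisely equate consistency of $C_X\{\tau.p/X\}$ and $C_X\{p'/X\}$ for stable $p'$ when $p$ is itself stabilizable). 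Getting the bookkeeping of these stabilization lemmas exactly right — ensuring at each step that the "does not occur in the scope of any conjunction" hypothesis on $X$ is preserved under the auxiliary contexts produced — is where the real care is needed, but each individual transfer is supplied by an already-established lemma, so no new inductive argument on proof trees should be required.
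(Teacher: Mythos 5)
Your proposal rests on a false premise that undermines both of its routes: $\tau.p$ and $\tau.q$ are \emph{never} stable. By rule $Ra_1$ we have $\tau.p \stackrel{\tau}{\longrightarrow} p$, so $\tau$-prefixing is exactly the paper's device for making a stable process \emph{unstable} (this is the whole point of Lemmas~\ref{L:CONGRUENCE_PRE}, \ref{L:FAILURE_S_VS_NS} and~\ref{L:FAILURE_NS_IMPLIES_S}, all of which assume the argument $p$ is stable and use $\tau.p$ to repair the uniformity relation $\bowtie$). Consequently Lemma~\ref{L:FAILURE_EQUATION_UNIQUE_PRE} can never be applied to the pair $\tau.p,\tau.q$, and Lemmas~\ref{L:FAILURE_S_VS_NS}/\ref{L:FAILURE_NS_IMPLIES_S} say nothing about $C_X\{\tau.p/X\}$ when $p$ itself is unstable: they relate $C_X\{\tau.r/X\}$ to $C_X\{r/X\}$ only for stable $r$, not $C_X\{\tau.p/X\}$ to $C_X\{p'/X\}$ for a stable descendant $p'$ of an unstable $p$. (The identity $C_X\{p/X\}=_{RS}C_X\{\tau.p/X\}$ does hold for arbitrary $p$, but via $p=_{RS}\tau.p$ and Theorem~\ref{T:CONGRUENCE}, not via Lemma~\ref{L:CONGRUENCE_PRE}; in any case it does not produce a stable-argument instance to feed to the stable lemma.)

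The repair through genuine stable descendants also cannot be closed with the cited results, for exactly the reason you flagged but did not resolve. Fix $p\stackrel{\epsilon}{\Longrightarrow}_F|p'$ and $q\stackrel{\epsilon}{\Longrightarrow}_F|q'$. The $p$-side is fine: $p'\sqsubseteq_{RS}p$ always holds, so Theorem~\ref{T:CONGRUENCE} gives $C_X\{p'/X\}\sqsubseteq_{RS}C_X\{p/X\}$, and since consistency transfers \emph{up} a ready simulation (if the smaller process is $\notin F$ it has a stable $F$-free descendant, hence so does the larger, which is therefore $\notin F$), $C_X\{p/X\}\in F$ implies $C_X\{p'/X\}\in F$. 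But the last link of your chain needs the \emph{opposite} direction, $C_X\{q'/X\}\in F \Rightarrow C_X\{q/X\}\in F$, and nothing established supplies it: it would follow from $q\sqsubseteq_{RS}q'$, which is false in general (for $q\equiv a.0\vee b.0$ and $q'\equiv a.0$, the stable descendant $b.0$ of $q$ violates (RS4) against $a.0$), and Lemma~\ref{L:FAILURE_CONGRUENCE} is inapplicable since $q\bowtie q'$ fails ($q$ unstable, $q'$ stable) besides again requiring $q\sqsubseteq_{RS}q'$. Moving inconsistency from a stable-descendant instance back up to the original instance is essentially the content of Lemma~\ref{L:FAILURE_EQUATION_UNIQUE} itself, and the paper obtains it by a fresh induction on the depth of the proof tree of $C_X\{p/X\}F$ --- the hard cases being the recursion rules, handled with Lemmas~\ref{L:MULTI_TAU_GF_STABLE}, \ref{L:STABILIZATION} and~\ref{L:TAU_ACTION_NORMALIZATION} --- with Lemma~\ref{L:FAILURE_EQUATION_UNIQUE_PRE} used only as an ingredient inside that induction. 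So your concluding claim that ``no new inductive argument on proof trees should be required'' is precisely where the proposal breaks down.
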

\begin{proof}
Suppose that $C_X\{p/X\} \in F $.
We proceed by induction on the depth of the proof tree $\mathcal T$ of $Strip(\mathcal{P}_{\text{CLL}_R} ,M_{\text{CLL}_R} ) \vdash C_{X}\{p/ X\}F$.
Similar to the preceding lemma, we handle only the case   $C_{X} \equiv \langle Y|E \rangle$.
In this situation, the last rule applied in $\mathcal T$ is
\[\text{either}\; \frac{\langle t_Y|E \rangle \{p/X\}F}{\langle Y|E \rangle \{p/X\}F}\;\text{with}\;Y =t_Y \in E\;\text{or}\; \frac{\{rF:\langle Y|E \rangle \{p/X\} \stackrel{\epsilon}{\Longrightarrow}|r\}}{\langle Y|E \rangle \{p/X\} F}.\]
The argument for the former is the same as the one in Lemma~\ref{L:FAILURE_EQUATION_UNIQUE_PRE} and omitted.
In the following, we consider the latter and suppose $\langle Y|E \rangle \{q/X\} \stackrel{\epsilon}{\Longrightarrow} |s$.
By Theorem~\ref{L:LLTS}, it is not difficult to see that, to complete the proof, it suffices to prove that  $s \in F$.
  By Lemma~\ref{L:TAU_ACTION_NORMALIZATION},  there exist $t$ and stable context $C_X^*$ such that
  \[\langle Y|E \rangle \{q/X\} \stackrel{\epsilon}{\Longrightarrow} C_X^*\{q/X\} \stackrel{\epsilon}{\Longrightarrow} |t \Rrightarrow s\]
  and
  \[\langle Y|E \rangle \{r/X\} \stackrel{\epsilon}{\Longrightarrow} C_X^*\{r/X\}\;\text{for any}\; r. \tag{\ref{L:FAILURE_EQUATION_UNIQUE}.1}\]
  In particular, we have $\langle Y|E\rangle \{a_X.0/X\} \stackrel{\epsilon}{\Longrightarrow}  C_X^*\{a_X.0/X\}$ where $a_X$ is a fresh visible action.
    For this transition, applying Lemma~\ref{L:ONE_ACTION_TAU} finitely many times (notice that, in this procedure, since  $a_X.0$ is stable, the clause (2) in Lemma~\ref{L:ONE_ACTION_TAU} is always false), and by the clause (1) in Lemma~\ref{L:ONE_ACTION_TAU}, we get the sequence
\begin{multline*}
    \langle Y|E\rangle \{a_X.0/X\} \equiv  C_{X}^0\{a_X.0/X\}\stackrel{\tau}{\longrightarrow}  C_{X}^1\{a_X.0/X\}\stackrel{\tau}{\longrightarrow}\\
\dots   \stackrel{\tau}{\longrightarrow} C_{X}^n\{a_X.0/X\} \equiv C_{X}^*\{a_X.0/X\}.
\end{multline*}
    Here $n \geq 0$ and for each $1 \leq i \leq n$, $C_{X}^i$ satisfies (C-$\tau$-1,2,3) in Lemma~\ref{L:ONE_ACTION_TAU}.
    Since $X$ does not occur in the scope of any conjunction in $\langle Y|E \rangle$, by (C-$\tau$-3-iv), neither does $X$ in $C_X^n$.
    In addition, by Lemma~\ref{L:PLACE_HOLDER}, we have $C_{X}^n \equiv C_X^*$.
    Hence  $X$ does not occur in the scope of any conjunction in $C_X^*$.


  If $p$ is stable then so is  $C_X^*\{p/X\} $ by Lemma~\ref{L:ONE_ACTION_TAU}.
  Thus, by (\ref{L:FAILURE_EQUATION_UNIQUE}.1), $C_X^*\{p/X\}F $ is one of premises in the last inferring step in $\mathcal T$.
  Hence $C_X^*\{q/X\} \in F $ by applying IH.
  Then $t \in F $ by Lemma~\ref{L:FAILURE_TAU_I}.
  Further, by Lemma~\ref{L:MULTI_STEP_UNFOLDING_FAILURE}, it follows from $t \Rrightarrow s$ that $s \in F $, as desired.

  Next we consider the other case where $p$ is not stable.
  In this situation, due to $p \notin F $, we have
  \[p\stackrel{\tau}{\Longrightarrow}_F|p^*\;\text{for some}\;p^*.\tag{\ref{L:FAILURE_EQUATION_UNIQUE}.2}\]
  In the following, we distinguish two cases based on whether $q$ is stable.\\

\noindent Case  1  $q$ is stable.

 Then, for the transition $\langle Y|E \rangle \{q/X\} \stackrel{\epsilon}{\Longrightarrow} |s$, by Lemma~\ref{L:MULTI_TAU_GF_STABLE}, we have  $s \equiv C_X'\{q/X\}$ for some stable $C_X'$ such that $X$ does not occur in the scope of any conjunction and $C_X\{p/X\}\stackrel{\epsilon}{\Longrightarrow} C_X'\{p/X\}$.
  Moreover, by Lemma~\ref{L:STABILIZATION}, it follows from (\ref{L:FAILURE_EQUATION_UNIQUE}.2) that
  \[C_X'\{p/X\}\stackrel{\epsilon}{\Longrightarrow} |p'\;\text{for some}\;p'.\]
  For this transition, by Lemma~\ref{L:MULTI_TAU_GF_STABLE}, there exist  a stable context $C_{X,\widetilde{Y}}'^*$ and stable processes $p_Y'(Y \in \widetilde{Y})$ that realize (MS-$\tau$-1) -- (MS-$\tau$-7).
  In particular, by (MS-$\tau$-3-ii), it follows from (\ref{L:FAILURE_EQUATION_UNIQUE}.2) that
  \[C_X'\{p/X\}\stackrel{\epsilon}{\Longrightarrow} C_{X,\widetilde{Y}}'^*\{p/X,p^*/\widetilde{Y}\}.\]
Then, since 
$C_{X,\widetilde{Y}}'^*$, $p$ and $p^*$ are stable,   by Lemma~\ref{L:ONE_ACTION_TAU}, so is $C_{X,\widetilde{Y}}'^*\{p/X,p^*/\widetilde{Y}\}$.
  Thus, $C_{X,\widetilde{Y}}'^*\{p/X,p^*/\widetilde{Y}\}  F$ is one of premises of the last inferring step in $\mathcal T$.
  Moreover, by (MS-$\tau$-2), $p' \equiv C_{X,\widetilde{Y}}'^*\{p/X,\widetilde{p_Y'}/\widetilde{Y}\}$.
  Then, by (MS-$\tau$-6) and IH, we obtain
  \[C_{X,\widetilde{Y}}'^*\{q/X,p^*/\widetilde{Y}\} \in F .\]
  Further, by (MS-$\tau$-6) and Lemma~\ref{L:FAILURE_EQUATION_UNIQUE_PRE}, we get
  \[C_{X,\widetilde{Y}}'^*\{q/X,q/\widetilde{Y}\} \in F .\]
  In addition, due to the stableness of $C_X'$, by (MS-$\tau$-4), we have
  \[C_X'\{q/X\}  \Rrightarrow C_{X,\widetilde{Y}}'^*\{q/X,q/\widetilde{Y}\}.\]
  Hence $s \equiv C_X'\{q/X\} \in F $ by Lemma~\ref{L:MULTI_STEP_UNFOLDING_FAILURE}, as desired.\\

\noindent  Case 2 $q$ is not stable.

 By Lemma~\ref{L:MULTI_TAU_GF_STABLE}, for the transition $\langle Y|E \rangle \{q/X\} \stackrel{\epsilon}{\Longrightarrow}|s$, there exist a stable context $C_{X,\widetilde{Z}}'$ and $q_Z'(Z \in \widetilde{Z})$ that satisfy (MS-$\tau$-1) -- (MS-$\tau$-7).
 Amongst them, by (MS-$\tau$-2,7), 
 \[q \stackrel{\tau}{\Longrightarrow}|q_Z'\;\text{for each}\;Z \in \widetilde{Z}\;\text{and}\; s \equiv C_{X,\widetilde{Z}}'\{q/X, \widetilde{q_Z'}/\widetilde{Z}\}. \tag{\ref{L:FAILURE_EQUATION_UNIQUE}.3}\]
  If $q_Z' \in F $ for some $Z \in \widetilde{Z}$ then by Lemma~\ref{L:FAILURE_GF}, we get $s \in F $ (notice that each $Z$ in $\widetilde{Z}$ is 1-active), as desired.
  In the following, we handle the other case where
  \[q_Z'\notin F\;\text{for each}\; Z \in \widetilde{Z}.\tag{\ref{L:FAILURE_EQUATION_UNIQUE}.4}\]
  By (MS-$\tau$-3-ii), it follows from (\ref{L:FAILURE_EQUATION_UNIQUE}.2) that
  \[C_X\{p/X\}\stackrel{\epsilon}{\Longrightarrow} C_{X,\widetilde{Z}}'\{p/X,p^*/\widetilde{Z}\}.\]
 Since $p  \stackrel{\tau}{\longrightarrow}$, $q  \stackrel{\tau}{\longrightarrow}$, $p^* \not \stackrel{\tau}{\longrightarrow}$, $q_Z'  \not\stackrel{\tau}{\longrightarrow}$ for each $Z \in \widetilde{Z}$ and $s \equiv C_{X,\widetilde{Z}}'\{q/X, \widetilde{q_Z'}/\widetilde{Z}\}\not\stackrel{\tau}{\longrightarrow}$, by Lemma~\ref{L:ONE_ACTION_TAU}, $C_{X,\widetilde{Z}}'\{p/X,p^*/\widetilde{Z}\}$ is stable.
  Hence $\mathcal T$ has a proper subtree with root $C_{X,\widetilde{Z}}'\{p/X,p^*/\widetilde{Z}\}F$.
  Then $C_{X,\widetilde{Z}}'\{q/X,p^*/\widetilde{Z}\} \in F$ by (MS-$\tau$-6) and IH.
  Further, by Lemma~\ref{L:FAILURE_EQUATION_UNIQUE_PRE}, it follows from (\ref{L:FAILURE_EQUATION_UNIQUE}.3) and (\ref{L:FAILURE_EQUATION_UNIQUE}.4) that $s\equiv C_{X,\widetilde{Z}}'\{q/X,\widetilde{q_Z'}/\widetilde{Z}\} \in F$, as desired.
\end{proof}

We shall use the notation $Dep(\mathcal T)$ to denote the depth of a given proof tree $\mathcal T$.
Given $p$, $q$ and $\alpha \in Act_{\tau}$, for any proof tree $\mathcal T$ of $Strip(\mathcal{P}_{\text{CLL}_R},M_{\text{CLL}_R}) \vdash p \stackrel{\alpha}{\longrightarrow}q$, it is evident that $\mathcal T$ involves only rules in Table~\ref{Ta:OPERATIONAL_RULES}.
Moreover, since each rule in Table~\ref{Ta:OPERATIONAL_RULES} has only finitely many premises, it is not difficult to show that $Dep(\mathcal T) < \omega$ by induction on the depth of $\mathcal T$.
This makes it legitimate to use arithmetical expressions with the form like $\underset{{\mathcal T} \in \Omega}{\sum}Dep({\mathcal T})$ where $\Omega$ is a finite set and each ${\mathcal T} \in \Omega$ is a proof tree for some labelled transition $p \stackrel{\alpha}{\longrightarrow} r$.

\begin{mydefn}
  Given $p\stackrel{\epsilon}{\Longrightarrow}_Fq$ and a finite set $\Omega$ of proof trees, we say that $\Omega$ is a \emph{proof forest} for $p\stackrel{\epsilon}{\Longrightarrow}_Fq$ if there exist $p_i(0 \leq i \leq n)$ such that

\begin{enumerate}[(1)]
 \renewcommand{\theenumi}{(\arabic{enumi})}
  \item $p \equiv p_0 \stackrel{\tau}{\longrightarrow}_F p_1\stackrel{\tau}{\longrightarrow}_F \dots \stackrel{\tau}{\longrightarrow}_F p_n \equiv q$,
  \item  for each $i < n$, $\Omega$ contains exactly one proof tree for $Strip(\mathcal{P}_{\text{CLL}_R},M_{\text{CLL}_R}) \vdash p_i \stackrel{\tau}{\longrightarrow}p_{i+1}$, and
  \item  for each $\mathcal T \in \Omega$, $\mathcal T$ is a proof tree for $Strip(\mathcal{P}_{\text{CLL}_R},M_{\text{CLL}_R}) \vdash p_i \stackrel{\tau}{\longrightarrow}p_{i+1}$ for some $i<n$.
\end{enumerate}
The depth of $\Omega$ is defined as $Dep(\Omega) \triangleq \underset{{\mathcal T} \in \Omega}{\sum}Dep({\mathcal T})$.
Similarly, we may define the notion of a proof forest for $p \stackrel{a}{\Longrightarrow}_Fq$.
\end{mydefn}

 It is obvious that $p\stackrel{\epsilon}{\Longrightarrow}_Fq$ (or, $p\stackrel{a}{\Longrightarrow}_Fq$) holds if and only if there exists a proof forest for it. The following lemma will prove extremely useful in establishing the main result in this section and its proof involves induction on the depths of proof forests.

\begin{lemma}\label{L:UNIQUE_SOLUTION}
  Let $C_X$ be any context where $X$ is strongly guarded and does not occur in the scope of any conjunction.
  For any processes $p,q \notin F$ with $p \bowtie q$, if $p =_{RS} C_X\{p/X\}$ and $q =_{RS} C_X\{q/X\}$ then $p =_{RS} q$.
\end{lemma}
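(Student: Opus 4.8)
The plan is to prove the two inclusions $p \sqsubseteq_{RS} q$ and $q \sqsubseteq_{RS} p$ separately; since the hypotheses are symmetric in $p$ and $q$, it suffices to treat $p \sqsubseteq_{RS} q$. Following the pattern of Lemmas~\ref{L:CONGRUENCE_PRE} and \ref{L:CONGRUENCE_PRE_UNIFORM}, I would work with the alternative formulation $\sqsubseteq_{ALT}$ (Prop.~\ref{P:COINCIDENCE}) together with the up-to technique (Lemma~\ref{L:ALT_UP_TO}), and consider the relation
\[ {\mathcal R} \triangleq \{(D_X\{p/X\},D_X\{q/X\}) : X \text{ is strongly guarded and does not occur in the scope of any conjunction in } D_X\}. \]
Once ${\mathcal R}$ is shown to be an alternative ready simulation up to $\underset{\thicksim}{\sqsubset}_{RS}$, Lemma~\ref{L:ALT_UP_TO} gives ${\mathcal R}\subseteq \sqsubseteq_{RS}$; taking $D_X\triangleq C_X$ yields $C_X\{p/X\}\sqsubseteq_{RS}C_X\{q/X\}$, whence $p =_{RS} C_X\{p/X\}\sqsubseteq_{RS}C_X\{q/X\} =_{RS} q$, so $p\sqsubseteq_{RS}q$. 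By symmetry $q\sqsubseteq_{RS}p$, and therefore $p=_{RS}q$.

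Verifying the three clauses of Def.~\ref{D:ALT_RS_UPTO} for a pair $(D_X\{p/X\},D_X\{q/X\})\in{\mathcal R}$ is where strong guardedness does its work. Because every occurrence of $X$ in $D_X$ sits under a visible prefix, no $\tau$-move and no ready action of $D_X\{p/X\}$ can be contributed by $p$: clause (2) of Lemma~\ref{L:ONE_ACTION_TAU} is ruled out exactly as in the proof of Lemma~\ref{L:ONE_ACTION_VISIBLE_GUARDED}, so every transition is produced by the context alone. For (ALT-upto-3) this gives $\mathcal{I}(D_X\{p/X\})=\mathcal{I}(D_X\{q/X\})$ by applying Lemma~\ref{L:ONE_ACTION_VISIBLE_GUARDED} in both directions. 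For (ALT-upto-1), I would peel off the $\tau$-steps of $D_X\{p/X\}\stackrel{\epsilon}{\Longrightarrow}_F|p'$ one at a time via clause (1) of Lemma~\ref{L:ONE_ACTION_TAU}; the conditions (C-$\tau$-3-iii) and (C-$\tau$-3-iv) guarantee that strong guardedness and absence of conjunction are preserved in the residual context, so $p'\equiv D_X^{*}\{p/X\}$ with $D_X^{*}$ again admissible and $D_X\{q/X\}\stackrel{\epsilon}{\Longrightarrow}D_X^{*}\{q/X\}$. Stability of $D_X^{*}\{q/X\}$ follows from guardedness, its consistency from $p'\notin F$ by the contrapositive of Lemma~\ref{L:FAILURE_EQUATION_UNIQUE}, and the upgrade from $\stackrel{\epsilon}{\Longrightarrow}|$ to $\stackrel{\epsilon}{\Longrightarrow}_F|$ from Lemma~\ref{L:FAILURE_TAU_I}; since $(p',D_X^{*}\{q/X\})\in{\mathcal R}$, reflexivity of $\underset{\thicksim}{\sqsubset}_{RS}$ closes the clause. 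The roles of $p,q\notin F$ and $p\bowtie q$ (via Lemma~\ref{L:STABLE_CONTEXT}) enter precisely in these consistency and stability transfers.

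The hard part is (ALT-upto-2). Strong guardedness forces the leading visible step $D_X\{p/X\}\stackrel{a}{\longrightarrow}_F r$ to be performed by the context (Lemma~\ref{L:ONE_ACTION_VISIBLE}, with $\widetilde{Y}=\emptyset$ forced by guardedness as in Lemma~\ref{L:ONE_ACTION_VISIBLE_GUARDED}), so $r\equiv B_X\{p/X\}$, $D_X\{q/X\}\stackrel{a}{\longrightarrow}B_X\{q/X\}$, and $X$ still avoids conjunctions in $B_X$ by (CP-$a$-4-iii); but consuming the guarding prefix may expose $X$ unguarded in $B_X$, so $(B_X\{p/X\},B_X\{q/X\})$ need no longer lie in ${\mathcal R}$. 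The device to recover guardedness is to fold the equation back in: using $p=_{RS}C_X\{p/X\}$ and precongruence (Theorem~\ref{T:CONGRUENCE}), $B_X\{p/X\}=_{RS}(B_X\circ C_X)\{p/X\}$, where $B_X\circ C_X$ denotes the context obtained by substituting $C_X$ into each hole of $B_X$; in it every hole now passes through a copy of $C_X$, hence is strongly guarded, and it remains outside every conjunction, because the holes of $B_X$ are outside conjunctions and $X$ is outside conjunctions within $C_X$. Thus $((B_X\circ C_X)\{p/X\},(B_X\circ C_X)\{q/X\})\in{\mathcal R}$. I would then transport the tail $r\stackrel{\epsilon}{\Longrightarrow}_F|p'$ along the equalities $B_X\{p/X\}=_{RS}(B_X\circ C_X)\{p/X\}$ and $(B_X\circ C_X)\{q/X\}=_{RS}B_X\{q/X\}$, apply clause (ALT-upto-1)—established above for \emph{all} pairs and independent of (ALT-upto-2)—to the composite pair, and re-expand to obtain $D_X\{q/X\}\stackrel{a}{\Longrightarrow}_F|q'$ with $p'\underset{\thicksim}{\sqsubset}_{RS}{\mathcal R}\underset{\thicksim}{\sqsubset}_{RS}q'$ by transitivity of $\underset{\thicksim}{\sqsubset}_{RS}$. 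The main obstacle is exactly this re-guarding step together with the bookkeeping it demands: keeping the absence of conjunction stable under context composition so that Lemma~\ref{L:FAILURE_EQUATION_UNIQUE} remains applicable, and threading the witnesses through the $\sqsubseteq_{RS}$ and $=_{RS}$ equalities down to stable states related by the stable preorder $\underset{\thicksim}{\sqsubset}_{RS}$.
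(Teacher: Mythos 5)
Your proposal is correct, but it follows a genuinely different route from the paper's proof. The paper runs the same up-to coinduction (Prop.~\ref{P:COINCIDENCE} and Lemma~\ref{L:ALT_UP_TO}) over a \emph{larger} relation, namely all pairs $(B_X\{p/X\},B_X\{q/X\})$ in which $X$ merely avoids conjunctions in $B_X$, with no guardedness requirement; since a transition of $B_X\{p/X\}$ may then be contributed by $p$ itself, the paper introduces the notion of a proof forest and verifies (ALT-upto-1) and (ALT-upto-2) by induction on proof-forest depth combined with a case analysis on the shape of $B_X$ (the cases $X$, $\alpha.D_X$, $D_X\Box D_X'$, $D_X\parallel_A D_X'$, $\langle Y|E\rangle$), the equations $p=_{RS}C_X\{p/X\}$ and $q=_{RS}C_X\{q/X\}$ entering only at the hole case $B_X\equiv X$. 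You instead shrink the relation to strongly guarded, conjunction-free contexts, which makes (ALT-upto-1) and (ALT-upto-3) induction-free (every move is a context move by Lemma~\ref{L:ONE_ACTION_TAU}(1) and Lemma~\ref{L:ONE_ACTION_VISIBLE_GUARDED}, admissibility being preserved by (C-$\tau$-3-iii,iv) and (CP-$a$-4-iii)), and you pay only in (ALT-upto-2), where the visible step consumes the guard; your repair --- re-guarding via $B_X\circ C_X$ using the defining equations and Theorem~\ref{T:CONGRUENCE}, then invoking the already-established clause (ALT-upto-1) on the re-guarded pair and transporting along $=_{RS}$ --- is sound, and the layered use of clause 1 inside clause 2 is legitimate, since each clause is an ordinary universally quantified property of ${\mathcal R}$ and all the coinduction sits inside Lemma~\ref{L:ALT_UP_TO}. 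In exchange, your argument dispenses with the proof-forest machinery and the five-way structural case analysis, and in fact never uses $p\bowtie q$ (under strong guardedness the stability of $D_X\{r/X\}$ is independent of $r$), which is consistent with Theorem~\ref{T:UNIQUE_SOLUTION} dropping that hypothesis anyway; the paper's proof, by contrast, needs only the binary precongruence (Theorem~\ref{L:pre_precongruence}) rather than the full Theorem~\ref{T:CONGRUENCE}, and keeps everything inside a single induction. One detail to make explicit in a full write-up: the preservation of the no-conjunction condition by the context delivered by Lemma~\ref{L:ONE_ACTION_VISIBLE_GUARDED} is not part of that lemma's statement and must be extracted from its proof (the delivered context is the $C_{\widetilde{X},\widetilde{Y}}''$ of Lemma~\ref{L:ONE_ACTION_VISIBLE} with $\widetilde{Y}=\emptyset$, whence (CP-$a$-4-iii) applies), exactly as you use it.
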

\begin{proof}
  Suppose $p,q \notin F$ with $p \bowtie q$, $p=_{RS}C_X\{p/X\}$ and $q =_{RS}C_X\{q/X\}$.
  It is sufficient to prove that $p \sqsubseteq_{RS} q$.
  Put
  \begin{multline*}
       {\mathcal R} \triangleq \{(B_X\{p/X\},B_X\{q/X\}):  X\;\text{does not occur in the scope of any conjunction in }B_X\}.
  \end{multline*}
  By Prop.~\ref{P:COINCIDENCE} and Lemma~\ref{L:ALT_UP_TO}, it suffices to prove that $\mathcal R$ is an alternative ready simulation relation up to $\underset{\thicksim}{\sqsubset}_{RS}$.
Let $(B_X\{p/X\},B_X\{q/X\})\in \mathcal R$.

   (\textbf{ALT-upto-1}) Assume that $B_X\{p/X\}   \stackrel{\epsilon}{\Longrightarrow}_F| p'$ and $\Omega$ is any proof forest for it.
  Hence $B_X\{p/X\} \equiv p_0 \stackrel{\tau}{\longrightarrow}_F p_1 \stackrel{\tau}{\longrightarrow}\dots p_{n-1}\stackrel{\tau}{\longrightarrow}_F|p_n \equiv p' $ for some $p_i(0 \leq i \leq n)$, and $\Omega$ exactly consists of proof trees ${\mathcal T}_i(0 \leq i < n)$  for $Strip(\mathcal{P}_{\text{CLL}_R},M_{\text{CLL}_R}) \vdash p_i \stackrel{\tau}{\longrightarrow}p_{i+1}$.
  We intend to prove that there exists $q'$ such that $B_X\{q/X\} \stackrel{\epsilon}{\Longrightarrow}_F|q'$ and $p' \underset{\thicksim}{\sqsubset}_{RS}{\mathcal R}\underset{\thicksim}{\sqsubset}_{RS} q'$ by induction on $Dep(\Omega)$.
  It is a routine case analysis on $B_X$.
  We treat only three cases as examples.\\

  \noindent Case 1 $B_X \equiv X$.

  Then $B_X\{p/X\} \equiv p \stackrel{\epsilon}{\Longrightarrow}_F|p'$.
  Thus it follows from $p =_{RS} C_X\{p/X\}$ that
  \[C_X\{p/X\} \stackrel{\epsilon}{\Longrightarrow}_F|s\;\text{and}\; p'\underset{\thicksim}{\sqsubset}_{RS}s\;\text{for some}\;s.\]
  Since $X$ is strongly guarded and does not occur in the scope of any conjunction in $C_X$, by Lemma~\ref{L:MULTI_TAU_GF_STABLE}, there exists a stable context $C_X'$ such that
  \begin{enumerate}[({a.}1)]
\renewcommand{\theenumi}{(a.\arabic{enumi})}
    \item \ $s \equiv C_X'\{p/X\}$,
    \item \ $X$ is strongly guarded and does not occur in the scope of any conjunction in $C_X'$, and
    \item \ $C_X\{q/X\} \stackrel{\epsilon}{\Longrightarrow}  C_X'\{q/X\}$.
  \end{enumerate}
  Since $s \equiv C_X'\{p/X\} \not\stackrel{\tau}{\longrightarrow}$, by (a.2) and Lemma~\ref{L:ONE_ACTION_VISIBLE_GUARDED}, we have $C_X'\{q/X\} \not\stackrel{\tau}{\longrightarrow}$.
  Moreover, by Lemma~\ref{L:FAILURE_EQUATION_UNIQUE}, $C_X'\{q/X\} \notin F $ follows from $C_X'\{p/X\} \notin F $ and $p,q \notin F $.
  Hence $C_X\{q/X\} \stackrel{\epsilon}{\Longrightarrow}_F|C_X'\{q/X\}$ by (a.3) and Lemma~\ref{L:FAILURE_TAU_I}.
  Further, it follows from $q =_{RS} C_X\{q/X\}$ that
  \[q \stackrel{\epsilon}{\Longrightarrow}_F| q'\;\text{and}\;C_X'\{q/X\}\underset{\thicksim}{\sqsubset}_{RS}q'\;\text{for some}\;q'.\]
  Therefore, $B_X\{q/X\} \equiv q \stackrel{\epsilon}{\Longrightarrow}_F|q'$ and $p' \underset{\thicksim}{\sqsubset}_{RS} s \equiv C_X'\{p/X\}{\mathcal R}C_X'\{q/X\}\underset{\thicksim}{\sqsubset}_{RS}q'$.\\

  \noindent Case 2 $B_X \equiv \langle Y|E \rangle$.

  If $ \langle Y|E \rangle \{p/X\}$ is stable then so is $\langle Y|E \rangle \{q/X\}$ by $p \bowtie q$ and  Lemma~\ref{L:ONE_ACTION_TAU}.
  Moreover, by Lemma~\ref{L:FAILURE_EQUATION_UNIQUE}, we have $\langle Y|E \rangle \{q/X\} \notin F $ because of $\langle Y|E \rangle \{p/X\} \notin F $.
Hence $\langle Y|E \rangle \{q/X\} \stackrel{\epsilon}{\Longrightarrow}_F| \langle Y|E \rangle \{q/X\}$ and $(\langle Y|E \rangle \{p/X\},\langle Y|E \rangle \{q/X\}) \in \underset{\thicksim}{\sqsubset}_{RS}{\mathcal R}\underset{\thicksim}{\sqsubset}_{RS}$ due to the reflexivity of $\underset{\thicksim}{\sqsubset}_{RS}$.

  Next we handle the other case where $ \langle Y|E \rangle \{p/X\}$ is not stable.
  Clearly, the last rule applied in  $\mathcal T_0$ is
  \[\frac{\langle t_Y|E \rangle \{p/X\}\stackrel{\tau}{\longrightarrow}p_1}{\langle Y|E \rangle \{p/X\}\stackrel{\tau}{\longrightarrow}p_1}\;\text{with}\; Y = t_Y \in E.\]
  Thus, $\mathcal T_0$ contains a proper subtree, say $\mathcal T_0'$, which is a proof tree of $Strip(\mathcal{P}_{\text{CLL}_R},M_{\text{CLL}_R}) \vdash \langle t_Y|E \rangle \{p/X\} \stackrel{\tau}{\longrightarrow}p_1$  and $Dep({\mathcal T}_0')<Dep({\mathcal T}_0)$.
  Thus $\Omega' \triangleq \{{\mathcal T}_0',{\mathcal T}_i: 1\leq i \leq n-1\}$ is a proof forest for $\langle t_Y|E \rangle \{p/X\} \stackrel{\epsilon}{\Longrightarrow}_F|p'$;
  moreover
\[Dep(\Omega') < Dep(\Omega).\]
  Then, by Lemma~\ref{L:ONE_STEP_UNFOLDING_VARIABLE}(5) and IH, we have  $\langle t_Y|E \rangle \{q/X\} \stackrel{\epsilon}{\Longrightarrow}_F|q'$ and $p' \underset{\thicksim}{\sqsubset}_{RS}{\mathcal R}\underset{\thicksim}{\sqsubset}_{RS} q'$ for some $q'$.
  Moreover, we also have $B_X\{q/X\} \equiv \langle Y|E \rangle \{q/X\}\stackrel{\epsilon}{\Longrightarrow}_F|q'$, as desired.\\

  \noindent Case 3 $B_X \equiv D_X \Box D_X'$.

  If $B_X\{p/X\}$ is stable then we can proceed analogously to  Case 2 with $\langle Y|E \rangle \{p/X\} \not\stackrel{\tau}{\longrightarrow}$. In the following, we consider the case $B_X\{p/X\} \stackrel{\tau}{\longrightarrow}$.

  For the transitions $ D_X\{p/X\} \Box D_X'\{p/X\} \equiv p_0 \stackrel{\tau}{\longrightarrow}_F \dots \stackrel{\tau}{\longrightarrow}_F| p_n \equiv p'(n \geq 1)$, there exist two  sequences of processes $t_0(\equiv D_X\{p/X\}),\dots,t_n$ and $s_0(\equiv D_X'\{p/X\}),\dots,s_n$ such that $t_n,s_n$ are consistent and stable, $p_n\equiv t_n \Box  s_n$, and for each $0\leq i<n$, $p_i \equiv t_i \Box s_i$ and the last rule applied in ${\mathcal T}_i$ is
\[
\text{either}\;\frac{t_i \stackrel{\tau}{\longrightarrow} t_{i+1}}{t_i \Box  s_i \stackrel{\tau}{\longrightarrow} t_{i+1} \Box s_{i+1}}\;\text{or}\;   \frac{s_i \stackrel{\tau}{\longrightarrow} s_{i+1}}{t_i \Box s_i \stackrel{\tau}{\longrightarrow} t_{i+1} \Box s_{i+1}}.
\]
For the former, $s_{i+1} \equiv s_i$ and ${\mathcal T}_i$ contains a proper subtree ${\mathcal T}_i'$ which is a proof tree for $Strip(\mathcal{P}_{\text{CLL}_R},M_{\text{CLL}_R}) \vdash t_i \stackrel{\tau}{\longrightarrow} t_{i+1}$.
We use $\Omega_1$ to denote the (finite) set of all these proof trees ${\mathcal T}_i'$.
Similarly, for the latter, $t_{i+1} \equiv t_i$ and ${\mathcal T}_i$ contains a proper subtree ${\mathcal T}_i''$ which is a proof tree for $Strip(\mathcal{P}_{\text{CLL}_R},M_{\text{CLL}_R}) \vdash s_i \stackrel{\tau}{\longrightarrow} s_{i+1}$.
We use $\Omega_2$ to denote the (finite) set of all these proof trees ${\mathcal T}_i''$.
   It is obvious that $\Omega_1$ is a proof forest for $D_X\{p/X\} \stackrel{\epsilon}{\Longrightarrow}_F |t_n$; moreover,
\[  Dep(\Omega_1) <   Dep(\Omega).\]
   Thus, by IH, we have  $D_X\{q/X\} \stackrel{\epsilon}{\Longrightarrow}_F|q_1'$ and $t_n \underset{\thicksim}{\sqsubset}_{RS}{\mathcal R}\underset{\thicksim}{\sqsubset}_{RS} q_1'$ for some $q_1'$.
   Similarly, for the transition $D_X'\{p/X\} \stackrel{\epsilon}{\Longrightarrow}_F |s_n$, we also have $D_X'\{q/X\} \stackrel{\epsilon}{\Longrightarrow}_F|q_2'$ and $s_n \underset{\thicksim}{\sqsubset}_{RS}{\mathcal R}\underset{\thicksim}{\sqsubset}_{RS} q_2'$ for some $q_2'$.
   Then, by Theorem~\ref{L:pre_precongruence}, it is easy to check that
   $p' \equiv t_n \Box s_n \underset{\thicksim}{\sqsubset}_{RS}{\mathcal R}\underset{\thicksim}{\sqsubset}_{RS} q_1' \Box q_2'$.
   Moreover, we also have  $B_X\{q/X\} \equiv D_X\{q/X\} \Box D_X'\{q/X\} \stackrel{\epsilon}{\Longrightarrow}_F|q_1'\Box q_2'$.\\

   (\textbf{ALT-upto-2}) Suppose that $B_X\{p/X\} $ and $B_X\{q/X\} $ are stable.
 Let $B_X\{p/X\}\stackrel{a}{\Longrightarrow}_F|p'$ and $\Omega$ be its proof forest.
So, there exist $ p_0,\dots,p_n (n \geq 1)$ such that
\[B_{X}\{p/X\} \equiv p_0 \stackrel{a}{\longrightarrow}_F p_1 \stackrel{\tau}{\longrightarrow}_F \dots \stackrel{\tau}{\longrightarrow}_F| p_n \equiv p',\tag{\ref{L:UNIQUE_SOLUTION}.1}\]
 and $\Omega$ exactly consists of proof trees ${\mathcal T}_i$ for $Strip(\mathcal{P}_{\text{CLL}_R},M_{\text{CLL}_R}) \vdash p_i \stackrel{\alpha_i}{\longrightarrow}p_{i+1}$ for $i < n$, where $\alpha_0 = a$ and $\alpha_j = \tau (1 \leq j < n)$.
We want to prove that there exists $q'$ such that $B_X\{q/X\} \stackrel{a}{\Longrightarrow}_F|q'$ and $p' \underset{\thicksim}{\sqsubset}_{RS}{\mathcal R}\underset{\thicksim}{\sqsubset}_{RS} q'$ by induction on $ Dep(\Omega)$.
Since $B_X\{p/X\}$ is stable and $X$ does not occur in the scope of any conjunction in $B_X$,  the topmost operator of $B_X$ is neither disjunction nor conjunction.
Thus, we distinguish five cases based on the form of $B_X$.\\

\noindent Case 1 $B_X \equiv X$.

  Due to $B_X\{p/X\} \equiv p \stackrel{a}{\Longrightarrow}_F|p'$, we have $p \notin F$.
  Moreover, since $p(\equiv B_X\{p/X\})$ is stable, we get $p \stackrel{\epsilon}{\Longrightarrow}_F|p$.
  Hence it follows from $p =_{RS} C_X\{p/X\}$ that
  \[C_X\{p/X\} \stackrel{\epsilon}{\Longrightarrow}_F|s\;\text{and}\;p\underset{\thicksim}{\sqsubset}_{RS}s\;\text{for some}\; s.\]
  Further, since $X$ is strongly guarded and does not occur in the scope of any conjunction in $C_X$, by Lemma~\ref{L:MULTI_TAU_GF_STABLE}, there exists a stable context $C_X'$ such that
  \begin{enumerate}[{(b.}1)]
\renewcommand{\theenumi}{(b.\arabic{enumi})}
    \item \ $X$ is strongly guarded and does not occur in the scope of any conjunction in $C_X'$,
    \item \ $s \equiv C_X'\{p/X\}$, and
    \item \ $C_X\{q/X\} \stackrel{\epsilon}{\Longrightarrow} C_X'\{q/X\}$.
  \end{enumerate}
  Then it follows from $p \underset{\thicksim}{\sqsubset}_{RS} s\equiv C_X'\{p/X\}$ and $p \stackrel{a}{\Longrightarrow}_F|p'$ that
    \[C_X'\{p/X\} \stackrel{a}{\Longrightarrow}_F|s' \;\text{and}\; p'\underset{\thicksim}{\sqsubset}_{RS}s'\;\text{for some}\;s'.\]
Since $p\not\stackrel{\tau}{\longrightarrow}$, by (b.1), Lemma~\ref{L:ONE_ACTION_VISIBLE_GUARDED} and \ref{L:MULTI_TAU_GF_STABLE}, there exists a stable context $C_X''$ such that
\begin{enumerate}[({c.}1)]
\renewcommand{\theenumi}{(c.\arabic{enumi})}
  \item \ $s'\equiv C_X''\{p/X\}$,
  \item \ $X$ does not occur in the scope of any conjunction in $C_X''$, and
  \item \ $C_X'\{q/X\} \stackrel{a}{\longrightarrow}\stackrel{\epsilon}{\Longrightarrow} C_X''\{q/X\}$.
\end{enumerate}
Moreover, since  $q(\equiv B_X\{q/X\})$ is stable, so is $C_X''\{q/X\}$.
Then, by (b.3) and (c.3), we have
\[C_X\{q/X\} \stackrel{\epsilon}{\Longrightarrow} |C_X'\{q/X\}  \stackrel{a}{\Longrightarrow}|C_X''\{q/X\}.\]
Further, by Lemma~\ref{L:FAILURE_EQUATION_UNIQUE} and \ref{L:FAILURE_TAU_I}, it follows from $p,q,C_X\{p/X\},C_X'\{p/X\},C_X''\{p/X\}\notin F$ that
\[C_X\{q/X\} \stackrel{\epsilon}{\Longrightarrow}_F |C_X'\{q/X\}  \stackrel{a}{\Longrightarrow}_F|C_X''\{q/X\}.\tag{\ref{L:UNIQUE_SOLUTION}.2}\]
%
  Then, since $C_X\{q/X\}=_{RS}q$ and $q\not\stackrel{\tau}{\longrightarrow}$, we get
  \[C_X'\{q/X\}\underset{\thicksim}{\sqsubset}_{RS}q.\]
  Further, due to (\ref{L:UNIQUE_SOLUTION}.2), it follows that
  \[B_X\{q/X\}(\equiv q) \stackrel{a}{\Longrightarrow}_F|q'\;\text{and}\;C_X''\{q/X\}\underset{\thicksim}{\sqsubset}_{RS}q'\;\text{for some}\;q'.\]
  Moreover, $p' \underset{\thicksim}{\sqsubset}_{RS} s'\equiv C_X''\{p/X\}{\mathcal R}C_X''\{q/X\} \underset{\thicksim}{\sqsubset}_{RS}q'$, as desired.\\

 \noindent Case 2 $B_X \equiv \alpha.D_X$.

  So $\alpha = a$ and $D_X\{p/X\}  \stackrel{\epsilon}{\Longrightarrow}_F|p'$.
  Clearly, $(D_X\{p/X\},D_X\{q/X\})\in R$.
  By (ALT-upto-1), there exists $q'$ such that  $D_X\{q/X\}  \stackrel{\epsilon}{\Longrightarrow}_F|q'$ and $p' \underset{\thicksim}{\sqsubset}_{RS}{\mathcal R}\underset{\thicksim}{\sqsubset}_{RS} q'$.
  Moreover, it is evident that $\alpha.D_X\{q/X\}  \stackrel{a}{\Longrightarrow}_F|q'$.\\

\noindent Case 3 $B_X \equiv D_X \Box D_X'$.

   W.l.o.g, assume that the last rule applied in $\mathcal T_0$ is $\frac{D_X\{p/X\} \stackrel{a}{\longrightarrow}p_1,\;D_X'\{p/X\} \not\stackrel{\tau}{\longrightarrow}}{D_X\{p/X\} \Box D_X'\{p/X\} \stackrel{a}{\longrightarrow}p_1}$.
   Then $\mathcal T_0$ has a proper subtree, say $\mathcal T_0'$, which is a proof tree for $Strip(\mathcal{P}_{\text{CLL}_R},M_{\text{CLL}_R}) \vdash D_X\{p/X\} \stackrel{a}{\longrightarrow}p_1$.
Clearly, $\Omega' \triangleq \{{\mathcal T}_0',{\mathcal T}_i:1 \leq i \leq n-1\}$ is a proof forest for $D_X\{p/X\} \stackrel{a}{\Longrightarrow}_F|p'$ and $Dep(\Omega') <  Dep(\Omega)$.
   Moreover, since $B_X\{q/X\}$ is stable, so are $D_X\{q/X\}$ and $D_X'\{q/X\}$.
   Then, by IH, we have  $D_X\{q/X\} \stackrel{a}{\Longrightarrow}_F|q'$ and $p' \underset{\thicksim}{\sqsubset}_{RS}{\mathcal R}\underset{\thicksim}{\sqsubset}_{RS} q'$ for some $q'$.
   Moreover, $D_X'\{p/X\} \notin F$ because of $B_X\{p/X\} \notin F$, which, by Lemma~\ref{L:FAILURE_EQUATION_UNIQUE}, implies $D_X'\{q/X\} \notin F$.
   Hence $B_X\{q/X\} \equiv D_X\{q/X\} \Box D_X'\{q/X\} \notin F$, and
   $B_X\{q/X\} \equiv D_X\{q/X\} \Box D_X'\{q/X\}\stackrel{a}{\Longrightarrow}_F|q'$, as desired.\\

\noindent Case 4 $B_X \equiv D_X \parallel_A D_X'$.

    Then the last rule applied in $\mathcal T_0$ is one of the following three formats:
       \begin{enumerate}[(1)]
       \renewcommand{\theenumi}{(\arabic{enumi})}
     \item  $\frac{D_X\{p/X\} \stackrel{a}{\longrightarrow} t_1, D_X'\{p/X\} \stackrel{a}{\longrightarrow} s_1}{B_{X}\{p/X\} \parallel_A  D_{X}\{p/X\} \stackrel{a}{\longrightarrow} t_1 \parallel_A s_1}$ with $a \in A$ and $p_1 \equiv t_1 \parallel_A s_1$;
     \item  $\frac{D_{X}\{p/X\} \stackrel{a}{\longrightarrow} t_1,\; D_{X}'\{p/X\} \not\stackrel{\tau}{\longrightarrow} }{D_{X}\{p/X\} \parallel_A  D_{X}'\{p/X\} \stackrel{a}{\longrightarrow} t_1 \parallel_A D_{X}'\{p/X\}}$ with $a \notin A$ and $p_1 \equiv t_1 \parallel_A D_{X}'\{p/X\}$;
     \item   $\frac{D_{X}'\{p/X\} \stackrel{a}{\longrightarrow} s_1,\;D_{X}\{p/X\} \not\stackrel{\tau}{\longrightarrow} }{D_{X}\{p/X\} \parallel_A  D_{X}'\{p/X\} \stackrel{a}{\longrightarrow} D_{X}\{p/X\} \parallel_A s_1}$ with $a \notin A$ and $p_1 \equiv D_{X}\{p/X\} \parallel_A s_1$.
   \end{enumerate}
   We treat the first one, and the proof of the later two runs, as in Case 3.
  Clearly, $\mathcal T_0$ has two proper subtrees $\mathcal T_0'$ and $\mathcal T_0''$, which are proof trees for $D_X\{p/X\} \stackrel{a}{\longrightarrow} t_1$ and $ D_X'\{p/X\} \stackrel{a}{\longrightarrow} s_1$ respectively.
   Moreover, for the transitions $ p_1 \stackrel{\tau}{\longrightarrow} \dots \stackrel{\tau}{\longrightarrow}| p_n$, there exist two processes sequences $t_1,\dots,t_n$ and $s_1,\dots,s_n$ such that $t_n,s_n$ are stable, $p_n\equiv t_n\parallel_A s_n$, and for each $1\leq i<n$, $p_i \equiv t_i \parallel_A s_i$ and the last rule applied in ${\mathcal T}_i$ is
\[
\text{either}\;\frac{t_i \stackrel{\tau}{\longrightarrow} t_{i+1}}{t_i\parallel_A  s_i \stackrel{\tau}{\longrightarrow} t_{i+1} \parallel_A s_{i+1}}\;\text{or}\;   \frac{s_i \stackrel{\tau}{\longrightarrow} s_{i+1}}{t_i\parallel_A  s_i \stackrel{\tau}{\longrightarrow} t_{i+1} \parallel_A s_{i+1}}.
\]
For the former,  $s_{i+1} \equiv s_i$ and ${\mathcal T}_i$ contains a proper subtree ${\mathcal T}_i'$ which is a proof tree for $Strip(\mathcal{P}_{\text{CLL}_R},M_{\text{CLL}_R}) \vdash t_i \stackrel{\tau}{\longrightarrow} t_{i+1}$.
We use $\Omega_1$ to denote the (finite) set of all these proof tree ${\mathcal T}_i'$.
Similarly, for the latter,  $t_{i+1} \equiv t_i$ and ${\mathcal T}_i$ contains a proper subtree ${\mathcal T}_i''$ which is a proof tree for $Strip(\mathcal{P}_{\text{CLL}_R},M_{\text{CLL}_R}) \vdash s_i \stackrel{\tau}{\longrightarrow} s_{i+1}$.
We use $\Omega_2$ to denote the (finite) set of all these proof tree ${\mathcal T}_i''$.
   Clearly, $\Omega' \triangleq \{{\mathcal T}_0'\} \cup \Omega_1$ is a proof forest for $D_X\{p/X\} \stackrel{a}{\Longrightarrow}_F |t_n$ and $Dep(\Omega') < Dep(\Omega)$.
   Thus, by IH, we have  $D_X\{q/X\} \stackrel{a}{\Longrightarrow}_F|q_1'$ and $t_n \underset{\thicksim}{\sqsubset}_{RS}{\mathcal R}\underset{\thicksim}{\sqsubset}_{RS} q_1'$ for some $q_1'$.
   Similarly, for the transition $D_X'\{p/X\} \stackrel{a}{\Longrightarrow}_F |s_n$, we also have $D_X'\{q/X\} \stackrel{a}{\Longrightarrow}_F|q_2'$ and $s_n \underset{\thicksim}{\sqsubset}_{RS}{\mathcal R}\underset{\thicksim}{\sqsubset}_{RS} q_2'$ for some $q_2'$.
   Therefore, by Theorem~\ref{L:pre_precongruence}, we obtain
   $p' \equiv t_n \parallel_A s_n \underset{\thicksim}{\sqsubset}_{RS}{\mathcal R}\underset{\thicksim}{\sqsubset}_{RS} q_1' \parallel_A q_2'$.
   Moreover, it is not difficult to see that  $B_X\{q/X\} \equiv D_X\{q/X\} \parallel_A D_X'\{q/X\} \stackrel{a}{\Longrightarrow}_F|q_1'\parallel_A q_2'$ because of    $B_X\{q/X\}\not\stackrel{\tau}{\longrightarrow}$, $D_X\{q/X\} \stackrel{a}{\Longrightarrow}_F|q_1'$ and $D_X'\{q/X\} \stackrel{a}{\Longrightarrow}_F|q_2'$.\\

   \noindent Case 5 $B_X \equiv \langle Y|E \rangle$.

   Clearly, the last rule applied in  $\mathcal T_0$ is
   $\frac{\langle t_Y|E \rangle \{p/X\}\stackrel{a}{\longrightarrow}p_1}{\langle Y|E \rangle \{p/X\}\stackrel{a}{\longrightarrow}p_1}.$
  Hence $\mathcal T_0$ contains a proper subtree, say $\mathcal T_0'$, which is a proof tree for $Strip(\mathcal{P}_{\text{CLL}_R},M_{\text{CLL}_R}) \vdash \langle t_Y|E \rangle \{p/X\} \stackrel{a}{\longrightarrow}p_1$, and $Dep({\mathcal T}_0')<Dep({\mathcal T}_0)$.
  So, $\Omega' \triangleq \{{\mathcal T}_0',{\mathcal T}_i:1 \leq i <n\}$ is a proof forest for $\langle t_Y|E \rangle \{p/X\} \stackrel{a}{\Longrightarrow}_F|p'$ and $Dep(\Omega')< Dep(\Omega)$.
   Then, by IH, we have  $\langle t_Y|E \rangle \{q/X\} \stackrel{a}{\Longrightarrow}_F|q'$ and $p' \underset{\thicksim}{\sqsubset}_{RS}{\mathcal R}\underset{\thicksim}{\sqsubset}_{RS} q'$ for some $q'$; moreover, $B_X\{q/X\} \equiv \langle Y|E \rangle \{q/X\}\stackrel{a}{\Longrightarrow}_F|q'$, as desired.\\

   (\textbf{ALT-upto-3}) Let $B_X\{p/X\}$ and $B_X\{q/X\}$ be stable and $B_X\{p/X\} \notin F $.
  We shall prove ${\mathcal I}(B_X\{p/X\})\supseteq {\mathcal I}(B_X\{q/X\})$, the converse inclusion may be proved in a similar manner and is omitted.
  Assume that $B_X\{q/X\}\stackrel{a}{\longrightarrow} q'$.
  Then, for such $a$-labelled transition, by Lemma~\ref{L:ONE_ACTION_VISIBLE}, there exist $B_{X}'$, $B_{X,\widetilde{Y}}'$ and $B_{X,\widetilde{Y}}''$ with $X \notin \widetilde{Y}$  that satisfy (CP-$a$-1) -- (CP-$a$-4).
  In case  $\widetilde{Y} = \emptyset$, it immediately follows from (CP-$a$-3-iii) that $B_X\{p/X\}\stackrel{a}{\longrightarrow} B_{X,\widetilde{Y}}''\{p/X\}$.

  Next we handle the case $\widetilde{Y} \neq \emptyset$.
  In this situation, by (CP-$a$-3-iii), to complete the proof, it suffices to prove that ${\mathcal I}(p) = {\mathcal I}(q)$.
        By (CP-$a$-1) and (CP-$a$-3-i), we have
        \[B_{X}\{r/X\}  \Rrightarrow B_{X,\widetilde{Y}}' \{r/X,r/\widetilde{Y}\}\;\text{for any}\;r.\]
        Then, since $B_{X}\{p/X\}$ and $B_{X}\{q/X\}$ are stable, by $\widetilde{Y} \neq \emptyset$, (CP-$a$-2) and Lemmas~\ref{L:MULTI_STEP_UNFOLDING_ACTION} and \ref{L:ONE_ACTION_TAU_GF}, it follows that both $p$ and $q$ are stable.
        Hence $p \stackrel{\epsilon}{\Longrightarrow}_F|p$ by $p \notin F$.
        Then, due to  $p =_{RS} C_X\{p/X\}$, we have
        \[C_X\{p/X\} \stackrel{\epsilon}{\Longrightarrow}_F|s\;\text{and}\;p\underset{\thicksim}{\sqsubset}_{RS} s\;\text{for some}\;s.\]
        For the transition above, since $X$ is strongly guarded in $C_X$, by Lemma~\ref{L:MULTI_TAU_GF_STABLE}, there exists a stable context $D_X$ such that
        \begin{enumerate}[({d.}1)]
\renewcommand{\theenumi}{(d.\arabic{enumi})}
          \item \ $s \equiv D_X\{p/X\} \not\stackrel{\tau}{\longrightarrow}$,
          \item \ $X$ is strongly guarded and does not occur in the scope of any conjunction in $D_X$, and
          \item \ $C_X\{q/X\} \stackrel{\epsilon}{\Longrightarrow} D_X\{q/X\}$.
        \end{enumerate}
     Hence ${\mathcal I}(p)= {\mathcal I}(D_X\{p/X\})$ by (d.1), $p\underset{\thicksim}{\sqsubset}_{RS} s$ and $p \notin F$.
  Moreover, by (d.1), (d.2) and Lemma~\ref{L:ONE_ACTION_VISIBLE_GUARDED}, we have $D_X\{q/X\}\not\stackrel{\tau}{\longrightarrow}$ and
        \[{\mathcal I}(p)={\mathcal I}(D_X\{p/X\}) = {\mathcal I}(D_X\{q/X\}).\]
        We also obtain $D_X\{q/X\} \notin F$ by $p\notin F$, $q \notin F$, $s \equiv D_X\{p/X\} \notin F $ and Lemma~\ref{L:FAILURE_EQUATION_UNIQUE}.
        So, $C_X\{q/X\} \stackrel{\epsilon}{\Longrightarrow}_F|D_X\{q/X\}$ by Lemma~\ref{L:FAILURE_TAU_I}.
        Further, it follows from $q =_{RS} C_X\{q/X\}$ and $q \not\stackrel{\tau}{\longrightarrow}$ that $D_X\{q/X\} \underset{\thicksim}{\sqsubset}_{RS} q$.
        Hence ${\mathcal I}(D_X\{q/X\})={\mathcal I}(q)$ because of $D_X\{q/X\} \notin F$. Therefore, ${\mathcal I}(p) = {\mathcal I}(D_X\{p/X\}) = {\mathcal I}(D_X\{q/X\})={\mathcal I}(q) $, as desired.
\end{proof}

The next lemma is the crucial step in the demonstrating the assertion that $\langle X|X=t_X\rangle$ is a consistent solution of a given equation $X=_{RS}t_X$ whenever consistent solutions exist.

\begin{lemma}\label{L:UNIQUE_SOLUTION_EXISTENCE}
  For any term $t_X$ where $X$ is strongly guarded and does not occur in the scope of any conjunction, if $q =_{RS} t_X\{q/X\}$ for some $q \notin F$ then $\langle X | X= t_X \rangle \notin F$.
\end{lemma}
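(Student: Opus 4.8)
The plan is to prove the statement directly by a proof-tree descent argument, in the same spirit as Lemmas~\ref{L:RS_CON} and \ref{L:FAILURE_S_VS_NS}. Write $p \equiv \langle X|X=t_X\rangle$ and set
\[\Omega \triangleq \{C_X\{p/X\} : X \text{ does not occur in the scope of any conjunction in } C_X \text{ and } C_X\{q/X\}\notin F\}.\]
Taking $C_X\equiv X$ shows $p\in\Omega$, since then $C_X\{q/X\}\equiv q\notin F$; hence it suffices to prove $\Omega\cap F=\emptyset$. Following the template of the earlier failure lemmas, I would establish the Claim that for every $s\in\Omega$, each proof tree of $Strip(\mathcal{P}_{\text{CLL}_R},M_{\text{CLL}_R})\vdash sF$ has a proper subtree whose root is labelled $s'F$ for some $s'\in\Omega$. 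Iterating the Claim produces an infinite strictly ascending chain of nodes inside a single proof tree, contradicting well-foundedness (Def.~\ref{D:PROOF}); thus no $s\in\Omega$ can lie in $F$, and in particular $\langle X|X=t_X\rangle\notin F$.

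The Claim would be proved by induction on the depth of the proof tree $\mathcal T$ of $sF$, with a case analysis on the shape of $C_X$ and on the last rule applied. Because $X$ never occurs under a conjunction, whenever the topmost operator of $C_X$ is $\wedge$ we have $X\notin FV(C_X)$, so $C_X\{p/X\}\equiv C_X\{q/X\}\notin F$ and no proof tree of $sF$ exists — the Claim holds vacuously; the same happens when $C_X$ is $X$-free. The prefix, disjunction, external-choice and parallel cases, together with the backward-propagation rules $Rp_2$--$Rp_9$, reduce $sF$ to a premise $s'F$ where $s'$ comes from a subcontext of $C_X$ with $p$ substituted, and $s'\in\Omega$ follows from the corresponding clauses of Lemma~\ref{L:F_NORMAL} applied to $C_X\{q/X\}\notin F$ (e.g. for $\vee$ one uses Lemma~\ref{L:F_NORMAL}(1) to locate a consistent disjunct). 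When $C_X\equiv\langle Y|E\rangle$ and the last rule is $Rp_{14}$, the premise is $\langle t_Y|E\rangle\{p/X\}F$; here I would use Lemma~\ref{L:ONE_STEP_UNFOLDING_VARIABLE}(5) to keep ``not in the scope of any conjunction'' after unfolding and Lemma~\ref{L:F_NORMAL}(8) to pass from $C_X\{q/X\}\notin F$ to $\langle t_Y|E\rangle\{q/X\}\notin F$, so that the premise node lies in $\Omega$. The base case $C_X\equiv X$ under $Rp_{14}$ is where the hypothesis is consumed: the premise is $t_X\{p/X\}F$, and $t_X$ qualifies as a context with $X$ not under conjunction while $t_X\{q/X\}=_{RS}q\notin F$, whence $t_X\{p/X\}\in\Omega$.

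The genuinely delicate cases are those ending in the divergence rule $Rp_{15}$ (which can only fire when $C_X\equiv\langle Y|E\rangle$) and the stable $\tau$-descendants it quantifies over. Here I cannot descend into a syntactic subterm; instead I must exhibit \emph{one} premise node $rF$ with $p\stackrel{\epsilon}{\Longrightarrow}|r$ and $r\in\Omega$, and the consistent solution $q$ is what supplies it. From $C_X\{q/X\}\notin F$ and Theorem~\ref{L:LLTS} (LTS2) there is a stable $r_q$ with $C_X\{q/X\}\stackrel{\epsilon}{\Longrightarrow}_F|r_q$; Lemma~\ref{L:MULTI_TAU_GF_STABLE} presents $r_q$ as a context evolution, and its clause (MS-$\tau$-3) lets me replay the same path with $p$ in place of $q$. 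Strong guardedness of $X$ in $t_X$ is what legitimizes the transport: every $\tau$-descendant of $p$ is itself of context form $C'_X\{p/X\}$ (by the analysis of the $C_X\equiv X$ case, using Lemma~\ref{L:MULTI_TAU_GF_STABLE} with $\widetilde{Y}=\emptyset$ forced by (MS-$\tau$-5), and Lemma~\ref{L:STABILIZATION}), so no uncontrolled ``$q$-only'' behaviour is produced; consistency of the transported descendant is recovered from $r_q\notin F$ by propagating consistency backward through the $1$-active holes via Lemma~\ref{L:FAILURE_GF}, together with Lemma~\ref{L:MULTI_STEP_UNFOLDING_FAILURE} to absorb the intervening multi-step unfoldings.

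The main obstacle, and the step I expect to demand the most care, is precisely this transport in the $Rp_{15}$ case: one must simultaneously (i) keep the candidate descendant in the prescribed form $C\{p/X\}$ — which forces one to compose the evolving context with the context forms of the $\tau$-evolved copies of $p$ sitting in active holes — and (ii) guarantee $C\{q/X\}\notin F$ for the \emph{same} composite context $C$, re-using the consistency witnessed by $q$ rather than the (possibly different) descendant $r_q$ returned by Lemma~\ref{L:MULTI_TAU_GF_STABLE}. Reconciling these two demands is where strong guardedness, the fixed-point equation $q=_{RS}t_X\{q/X\}$, and Lemmas~\ref{L:FAILURE_EQUATION_UNIQUE_PRE} and~\ref{L:FAILURE_EQUATION_UNIQUE} must be combined; once the descent Claim is secured, the contradiction with Def.~\ref{D:PROOF} closes the argument.
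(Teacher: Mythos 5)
Your skeleton is essentially the paper's: the same set $\Omega$ (the paper writes it as $B_Y\{\langle X|X=t_X\rangle/Y\}$ with $B_Y\{q/Y\}\notin F$ and $Y$ not in the scope of any conjunction), the same well-founded-descent claim contradicting Def.~\ref{D:PROOF}, and the same handling of the routine cases and of both $Rp_{14}$ cases (Lemma~\ref{L:F_NORMAL}(8), Lemma~\ref{L:ONE_STEP_UNFOLDING_VARIABLE}(5), and consumption of the hypothesis at $C_X\equiv X$). The gap lies in the $Rp_{15}$ case, exactly where you locate the difficulty, and the tools you name there do not close it. Your plan is to take a stable consistent descendant $r_q$ of $C_X\{q/X\}$, present it via Lemma~\ref{L:MULTI_TAU_GF_STABLE} as $C'_{X,\widetilde Y}\{q/X,\widetilde{q'_Y}/\widetilde Y\}$ with $q\stackrel{\tau}{\Longrightarrow}|q'_Y$, and then ``replay the same path with $p$ in place of $q$'' via (MS-$\tau$-3). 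But (MS-$\tau$-3-ii) demands that the process substituted into each hole of $\widetilde Y$ perform at least one $\tau$-step, i.e.\ it needs $\langle X|X=t_X\rangle\stackrel{\tau}{\Longrightarrow}p'_Y$, and nothing relates the $\tau$-capabilities of $q$ to those of $\langle X|X=t_X\rangle$: there is no uniformity hypothesis $p\bowtie q$, and since $\tau.s=_{RS}s$ the equation carries no stability information. Concretely, for $t_X\equiv a.X\;\Box\;b.0$ the process $\langle X|X=t_X\rangle$ is stable, while $\tau.q$ is a consistent solution whenever $q$ is, so the solution may well be unstable; then $q$ moves inside the holes of $C'_{X,\widetilde Y}$ but $\langle X|X=t_X\rangle$ cannot, (MS-$\tau$-3-ii) is inapplicable, and the fallback (MS-$\tau$-3-i) leaves a trailing $\Rrightarrow$, so the node actually reached in the proof tree is no longer syntactically of the form $C\{p/X\}$ required for membership in $\Omega$. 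Symmetrically, when $q$ is stable and $\langle X|X=t_X\rangle$ is not, the replayed endpoint need not be stable, hence is not a legitimate witness for the $Rp_{15}$ premise set. Lemmas~\ref{L:FAILURE_EQUATION_UNIQUE_PRE} and~\ref{L:FAILURE_EQUATION_UNIQUE} only exchange consistent fillers inside a fixed context; they neither create the missing $\tau$-moves nor repair the broken syntactic form, and they cannot be applied with $\langle X|X=t_X\rangle$ as a filler in any case, since its consistency is precisely what is being proved.

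The missing idea --- and the paper's actual key move, absent from your sketch --- is to apply the precongruence theorem (Theorem~\ref{T:CONGRUENCE}) together with the fixed-point equation \emph{before} extracting any descendants: from $C_X\{q/X\}\notin F$ and $q=_{RS}t_X\{q/X\}$ one first obtains $C_X\{t_X\{q/X\}/X\}\notin F$, and only then takes the stable descendant. After this pre-substitution every process sitting in a hole has the form $t_X\{q/X\}$, whose $\tau$-moves are pure context moves of the strongly guarded $t_X$ (with a single strongly guarded variable, (MS-$\tau$-5) forces $\widetilde Y=\emptyset$, so the filler is never touched along $\tau$-sequences); these are in lock-step with the $\tau$-moves of $\langle X|X=t_X\rangle$, whose transitions are by $Ra_{16}$ exactly those of $t_X\{\langle X|X=t_X\rangle/X\}$. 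This synchronization is what makes the replay, the stability transfer (Lemma~\ref{L:ONE_ACTION_VISIBLE_GUARDED}), and the final $\Omega$-membership work, the last by a second application of Theorem~\ref{T:CONGRUENCE} and by absorbing the evolved copies into a composite context $D_{Y,\widetilde W}\{\widetilde{t_X^W\{Y/X\}}/\widetilde W\}$; Lemma~\ref{L:FAILURE_EQUATION_UNIQUE} is not used at all in the paper's proof of this lemma. If you want to salvage your route, the repair is to replace $q$ by the equally valid solution $t_X\{q/X\}$ (again Theorem~\ref{T:CONGRUENCE}) at the very start, which is just the paper's proof in disguise.
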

\begin{proof}
  Assume $p =_{RS} t_X\{p/X\}$ for some $p \notin F$. Then  $t_X\{p/X\} \notin F$.
Set
\[\Omega=
\left\{B_Y\{\langle X|X=t_X \rangle/Y\}:
  \begin{array}{l}
        B_Y\{p/Y\} \notin F\;\text{and}\;Y\;\text{does not occur in the scope of} \\
         \text{any conjunction in}\;B_Y
  \end{array}
\right\}.
\]
  It is obvious that $\langle X|X=t_X \rangle \in \Omega$ by taking $B_Y \triangleq Y$.
Thus we intend to show that $\Omega \cap F = \emptyset$.
Assume $C_Y\{\langle X|X=t_X \rangle/Y\} \in \Omega$.
Let $\mathcal T$ be any proof tree for $Strip(\mathcal{P}_{\text{CLL}_R} ,M_{\text{CLL}_R} ) \vdash C_Y\{\langle X|X=t_X \rangle/Y\}F$.
Similar to Lemma~\ref{L:FAILURE_S_VS_NS}, it is sufficient to prove that $ \mathcal T$ has a proper subtree with root $sF$ for some $s \in \Omega$, which is a routine case analysis based on the last rule applied in $\mathcal T$.
Here we  treat only two cases as examples.\\

\noindent Case 1 $C_Y \equiv Y$.

       Then $C_Y\{\langle X|X=t_X \rangle/Y\} \equiv  \langle X|X=t_X \rangle $.
       Clearly, the last rule applied in $\mathcal T$ is
\[\text{either}\; \frac{\langle t_X|X=t_X \rangle F}{\langle X|X=t_X \rangle F}\;\text{or}\;\frac{\{rF:\langle X|X=t_X \rangle \stackrel{\epsilon}{\Longrightarrow}|r\}}{\langle X|X=t_X \rangle F}.\]

For the former, $\mathcal T$ has a proper subtree with root $\langle t_X|X=t_X \rangle F$;
moreover, $\langle t_X|X=t_X \rangle \equiv t_X\{\langle X|X=t_X\rangle/X\} \in \Omega$ due to $t_X\{p/X\} \notin F$, as desired.

For the latter, if $\langle X|X=t_X \rangle\not\stackrel{\tau}{\longrightarrow}$, then, in $\mathcal T$, the unique node directly above the root is labelled with $\langle X|X=t_X \rangle F$;
moreover $\langle X|X=t_X \rangle \in \Omega$, as desired.
 In the following, we consider the nontrivial case $\langle X|X=t_X \rangle \stackrel{\tau}{\longrightarrow}$.
Since $t_X\{p/X\} \notin F$, by Theorem~\ref{L:LLTS}, we get $t_X\{p/X\} \stackrel{\epsilon}{\Longrightarrow}_F|p'$ for some $p'$. For this transition, since $X$ is strongly guarded and does not occur in the scope of any conjunction, by Lemma~\ref{L:MULTI_TAU_GF_STABLE}, there exists a stable context $B_X$ such that
\begin{enumerate}[({a.}1)]
   \renewcommand{\theenumi}{(a.\arabic{enumi})}
  \item \ $X$ is strongly guarded and does not occur in the scope of any conjunction,
  \item \ $p' \equiv B_X\{p/X\}$, and
  \item \ $t_X\{\langle X|X=t_X \rangle/X\} \stackrel{\epsilon}{\Longrightarrow}B_X\{\langle X|X=t_X \rangle/X\}$.
\end{enumerate}
Since $p'\equiv B_X\{p/X\}\not\stackrel{\tau}{\longrightarrow}$, by (a.1) and Lemma~\ref{L:ONE_ACTION_VISIBLE_GUARDED},  $B_X\{\langle X|X=t_X \rangle/X\}\not\stackrel{\tau}{\longrightarrow}$.
Then it follows from (a.3) and $\langle X|X=t_X \rangle \stackrel{\tau}{\longrightarrow}$ that $ \langle X|X=t_X \rangle \stackrel{\epsilon}{\Longrightarrow}|B_X\{\langle X|X=t_X \rangle/X\}$.
Hence  $\mathcal T$ has a proper subtree with root $B_X\{\langle X|X=t_X \rangle/X\}F$;
moreover, $B_X\{\langle X|X=t_X \rangle/X\} \in \Omega$ because of $p' \notin F$, (a.1) and (a.2).\\

\noindent Case 2 $C_Y \equiv \langle Z |E\rangle$.

Here, $C_Y\{\langle X|X=t_X \rangle/Y\} \equiv  \langle Z|E \{\langle X|X=t_X\rangle /Y\} \rangle $.
      Then the last rule applied in $\mathcal T$ is
\[\text{either}\;\frac{\langle t_Z|E\rangle \{\langle X|X=t_X \rangle/Y\} F}{\langle Z|E \rangle \{\langle X|X=t_X \rangle/Y\} F}(Z=t_Z \in E)\;\text{or}\;\frac{\{rF:\langle Z|E \rangle \{\langle X|X=t_X \rangle/Y\}  \stackrel{\epsilon}{\Longrightarrow}|r\}}{\langle Z|E \rangle \{\langle X|X=t_X \rangle/Y\}  F}.\]

For the first alternative, by Lemma~\ref{L:F_NORMAL}(8), it follows from 
$\langle Z|E \rangle \{p/Y\} \notin F$ that $\langle t_Z|E \rangle \{p/Y\} \notin F$.
Since $Y$ does not occur in the scope of any conjunction in $\langle Z|E \rangle$, by Lemma~\ref{L:ONE_STEP_UNFOLDING_VARIABLE}(5), neither does it in $\langle t_Z|E \rangle$.
Therefore $\langle t_Z|E\rangle \{\langle X|X=t_X \rangle/Y\} \in \Omega$, as desired.

For the second alternative, since $\langle Z|E \rangle \{p/Y\} \notin F$ and $p =_{RS} t_X\{p/X\}$, we get $\langle Z|E \rangle \{t_X\{p/X\}/Y\} \notin F$  by Theorem~\ref{T:CONGRUENCE}.
So $\langle Z|E \rangle \{t_X\{p/X\}/Y\} \stackrel{\epsilon}{\Longrightarrow}_F|p'$ for some $p'$.
Then, for this transition, by Lemma~\ref{L:MULTI_TAU_GF_STABLE}, there exist processes $q_W(W \in \widetilde{W})$ and a context $D_{Y,\widetilde{W}}$ with $Y \notin \widetilde{W}$ such that
\begin{enumerate}[({b.}1)]
  \renewcommand{\theenumi}{(b.\arabic{enumi})}
  \item \ $t_X\{p/X\} \stackrel{\tau}{\Longrightarrow}|q_W\;\text{for each}\;W\in \widetilde{W}\;\text{and}\;p'\equiv D_{Y,\widetilde{W}}\{t_X\{p/X\}/Y,\widetilde{q_W}/\widetilde{W}\}$,

  \item \ $Y$ and each $W(\in \widetilde{W})$ are strongly guarded and do not occur in the scope of any conjunction in $D_{Y,\widetilde{W}}$, and 
  \item \ $\langle Z|E \rangle \{r/Y\} \stackrel{\epsilon}{\Longrightarrow} D_{Y,\widetilde{W}}\{r/Y,\widetilde{r_W}/\widetilde{W}\}$ for any $r$ and $r_W(W \in \widetilde{W})$ such that $r \stackrel{\tau}{\Longrightarrow}r_W$ for each $W \in \widetilde{W}$. 
\end{enumerate}
Then, since $X$ is strongly guarded and does not occur in the scope of any conjunction in $t_X$, by Lemma~\ref{L:MULTI_TAU_GF_STABLE} and \ref{L:ONE_ACTION_VISIBLE_GUARDED}, for each transition $t_X\{p/X\} \stackrel{\tau}{\Longrightarrow}|q_W$,
there exists a stable context $t_X^W$ such that
\begin{enumerate}[({c.}1)]
   \renewcommand{\theenumi}{(c.\arabic{enumi})}
  \item \ $X$ is strongly guarded and does not occur in the scope of any conjunction in $t_X^W$,
  \item \ $q_W \equiv t_X^W\{p/X\}$, and
  \item \ $ t_X\{\langle X|X=t_X \rangle/X\} \stackrel{\tau}{\Longrightarrow}|t_X^W\{\langle X|X=t_X \rangle/X\}$.
\end{enumerate}
%
For the simplicity of notation, we let $Q_W$ stand for $t_X^W\{\langle X|X=t_X \rangle/X\}$ for each $W \in \widetilde{W}$.
So, by (c.3), $\langle X|X=t_X \rangle\stackrel{\tau}{\Longrightarrow}|Q_W$ for each $W \in \widetilde{W}$.
Hence it follows from (b.3) that
\[
\langle Z|E \rangle \{ \langle X|X=t_X\rangle /Y\} \stackrel{\epsilon}{\Longrightarrow}  D_{Y,\widetilde{W}}\{ \langle X|X=t_X\rangle /Y,\widetilde{Q_W}/\widetilde{W}\}. \tag{\ref{L:UNIQUE_SOLUTION_EXISTENCE}.1}
\]
By (b.2) and (c.1), it is not difficult to see that $X$ is strongly guarded and does not occur in the scope of any conjunction in $D_{Y,\widetilde{W}}\{t_X/Y,\widetilde{t^W_X}/\widetilde{W}\}$.
So, by Lemma~\ref{L:ONE_ACTION_VISIBLE_GUARDED} and $p' \equiv D_{Y,\widetilde{W}}\{t_X/Y,\widetilde{t^W_X}/\widetilde{W}\}\{p/X\}\not\stackrel{\tau}{\longrightarrow}$, we get \[
D_{Y,\widetilde{W}}\{t_X/Y,\widetilde{t^W_X}/\widetilde{W}\}\{\langle X|X=t_X\rangle/X\}  \not\stackrel{\tau}{\longrightarrow}.
\]
Hence $D_{Y,\widetilde{W}}\{ \langle X|X=t_X\rangle /Y,\widetilde{Q_W}/\widetilde{W}\}\not\stackrel{\tau}{\longrightarrow}$  by Lemma~\ref{L:ONE_ACTION_TAU} and  ${\mathcal I}(\langle X|X=t_X\rangle )={\mathcal I}(t_X\{\langle X|X=t_X\rangle /X\})$.
Then $\mathcal T$ has a proper subtree with root $D_{Y,\widetilde{W}}\{ \langle X|X=t_X\rangle /Y,\widetilde{Q_W}/\widetilde{W}\}F$ due to (\ref{L:UNIQUE_SOLUTION_EXISTENCE}.1).
Moreover, by Theorem~\ref{T:CONGRUENCE} and $p=_{RS}t_X\{p/X\}$, it follows from $p'\equiv D_{Y,\widetilde{W}}\{t_X\{p/X\}/Y,\widetilde{t_X^W\{p/X\}}/\widetilde{W}\} \notin F$ that $D_{Y,\widetilde{W}}\{p/Y,\widetilde{t_X^W\{p/X\}}/\widetilde{W}\} \notin F$.
Set
\[D'_Y \triangleq D_{Y,\widetilde{W}}\{\widetilde{t_X^W\{Y/X\}}/\widetilde{W}\}.\]
Therefore, $D_{Y,\widetilde{W}}\{ \langle X|X=t_X\rangle /Y,\widetilde{Q_W}/\widetilde{W}\} \equiv D'_Y\{\langle X|X=t_X\rangle /Y\}   \in \Omega$, as desired.
\end{proof}

We now have the assertion below which states that given an equation $X =_{RS}t_X$ satisfying some conditions, $\langle X|X=t_X\rangle$ is the unique consistent solution whenever consistent solutions exist.

\begin{theorem}[Unique solution]\label{T:UNIQUE_SOLUTION}
  For any $p,q \notin F $ and $t_X$ where $X$ is strongly guarded and does not occur in the scope of any conjunction, if $p =_{RS} t_X\{p/X\}$ and $q =_{RS} t_X\{q/X\}$ then $p =_{RS} q$.
Moreover, $\langle X | X=t_X \rangle$ is the unique consistent solution modulo $=_{RS}$ for the equation $X =_{RS} t_X$ whenever  consistent solutions exist.
\end{theorem}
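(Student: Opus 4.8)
\textbf{Proof proposal for Theorem~\ref{T:UNIQUE_SOLUTION}.}

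The plan is to assemble the theorem from the two lemmas that immediately precede it, after first reducing the two consistent solutions $p$ and $q$ to a \emph{stabilized} pair to which Lemma~\ref{L:UNIQUE_SOLUTION} applies. The statement has two parts: a uniqueness claim (any two consistent solutions are $=_{RS}$-equal) and an existence-plus-canonicity claim ($\langle X|X=t_X\rangle$ is itself a consistent solution whenever one exists, hence \emph{the} solution). I would treat the second part first, since it supplies the concrete witness needed to phrase the uniqueness cleanly.

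For the existence part, suppose a consistent solution exists, say $q=_{RS}t_X\{q/X\}$ with $q\notin F$. Lemma~\ref{L:UNIQUE_SOLUTION_EXISTENCE} applies verbatim under exactly the hypotheses on $t_X$ given here (strongly guarded, not in the scope of any conjunction), and yields $\langle X|X=t_X\rangle\notin F$. It remains to check that $\langle X|X=t_X\rangle$ is genuinely a solution, i.e.\ $\langle X|X=t_X\rangle=_{RS}t_X\{\langle X|X=t_X\rangle/X\}$. This is precisely the one-step unfolding $\langle X|X=t_X\rangle\Rrightarrow_1\langle t_X|X=t_X\rangle\equiv t_X\{\langle X|X=t_X\rangle/X\}$, so Lemma~\ref{L:MULTI_STEP_UNFOLDING_IMPLIES_RS} gives the required $=_{RS}$ equality for free.

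For the uniqueness part, let $p,q\notin F$ with $p=_{RS}t_X\{p/X\}$ and $q=_{RS}t_X\{q/X\}$. The obstacle is that Lemma~\ref{L:UNIQUE_SOLUTION} additionally demands $p\bowtie q$ (uniformity w.r.t.\ stability), which two arbitrary solutions need not satisfy. The standard device is to replace each solution by its $\tau$-prefix: set $p'\triangleq\tau.p$ and $q'\triangleq\tau.q$. Then $p'$ and $q'$ are both non-stable, hence trivially $p'\bowtie q'$; they are consistent since by Lemma~\ref{L:F_NORMAL}(2) $\alpha.r\in F$ iff $r\in F$. By Theorem~\ref{T:CONGRUENCE} (precongruence) applied to the context $t_X$, together with $p=_{RS}\tau.p=_{RS}t_X\{p/X\}$ and the guardedness/no-conjunction shape of $t_X$, I would verify $\tau.p=_{RS}t_X\{\tau.p/X\}$ and likewise for $q$; here Lemma~\ref{L:CONGRUENCE_PRE} ($C_X\{p/X\}=_{RS}C_X\{\tau.p/X\}$ for stable $p$) is the key tool, since if $p$ happens to be stable it lets me pass between $t_X\{p/X\}$ and $t_X\{\tau.p/X\}$, and if $p$ is already non-stable the prefixing is harmless. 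With $p',q'$ now consistent, uniform, and each a solution of $X=_{RS}t_X$, Lemma~\ref{L:UNIQUE_SOLUTION} yields $p'=_{RS}q'$, i.e.\ $\tau.p=_{RS}\tau.q$. Finally $p=_{RS}\tau.p=_{RS}\tau.q=_{RS}q$ by $\tau$-law absorption (again Lemma~\ref{L:CONGRUENCE_PRE} with the trivial context $C_X\equiv X$), completing the argument.

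The step I expect to be delicate is the reduction to the uniform pair: one must confirm that prefixing by $\tau$ preserves the property of being a solution of the \emph{same} equation, which relies essentially on $\tau.p=_{RS}p$ and on $\sqsubseteq_{RS}$ (hence $=_{RS}$) being a precongruence, so that $t_X\{\tau.p/X\}=_{RS}t_X\{p/X\}=_{RS}p=_{RS}\tau.p$. Everything else is bookkeeping. Once the uniform reduction is in place, the theorem follows by directly invoking Lemmas~\ref{L:UNIQUE_SOLUTION}, \ref{L:UNIQUE_SOLUTION_EXISTENCE}, \ref{L:MULTI_STEP_UNFOLDING_IMPLIES_RS} and Theorem~\ref{T:CONGRUENCE}, with no further case analysis required.
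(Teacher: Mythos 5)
Your proposal is correct and follows essentially the same route as the paper: it establishes that $\langle X|X=t_X\rangle$ solves the equation via $\Rrightarrow_1$ and Lemma~\ref{L:MULTI_STEP_UNFOLDING_IMPLIES_RS}, gets its consistency from Lemma~\ref{L:UNIQUE_SOLUTION_EXISTENCE}, and handles uniqueness by $\tau$-prefixing to force uniformity w.r.t.\ stability (using $\tau.p =_{RS} p$ and Theorem~\ref{T:CONGRUENCE} to preserve solutionhood) before invoking Lemma~\ref{L:UNIQUE_SOLUTION}. The only cosmetic difference is that you prefix both solutions unconditionally, whereas the paper first checks $p \bowtie q$ and otherwise prefixes only the stable one; both variants are sound.
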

\begin{proof}
    If $p \bowtie q$ then $p =_{RS} q$ follows from Lemma~\ref{L:UNIQUE_SOLUTION}, otherwise, w.l.o.g, we assume that $p$ is stable and $q$ is not.
    By Theorem~\ref{T:CONGRUENCE}, $\tau.p =_{RS} p =_{RS} t_X\{p/X\} =_{RS} t_X\{\tau.p/X\}$.
    Then, by Lemma~\ref{L:UNIQUE_SOLUTION}, it follows from $\tau.p,q \notin F$, $\tau.p \bowtie q$, $\tau.p =_{RS} t_X\{\tau.p/X\}$ and $q =_{RS} t_X\{q/X\}$ that $\tau.p =_{RS} q$ . Hence $p =_{RS} q$.

Suppose that $X=_{RS}t_X$ has consistent solutions.
It is obvious that $\langle X|X=t_X \rangle =_{RS} t_X\{\langle X|X=t_X \rangle /X\}$ due to $\langle X|X=t_X \rangle \Rrightarrow_1 \langle t_X|X=t_X \rangle  \equiv t_X\{\langle X|X=t_X \rangle /X\}$ and Lemma~\ref{L:MULTI_STEP_UNFOLDING_IMPLIES_RS}.
Further, by Lemma~\ref{L:UNIQUE_SOLUTION_EXISTENCE}, $\langle X | X=t_X \rangle$ is the unique consistent solution of the equation $X =_{RS} t_X$.
\end{proof}

As an immediate consequence, we have

\begin{corollary}
  For any term $t_X$ where $X$ is strongly guarded and does not occur in the scope of any conjunction, then the equation $X =_{RS}t_X$ has consistent solutions iff $\langle X|X=t_X \rangle \notin F$.
\end{corollary}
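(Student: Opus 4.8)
The plan is to dispatch the two implications of the biconditional separately, each by invoking a result already in hand. For the forward implication --- that the existence of a consistent solution forces $\langle X|X=t_X \rangle \notin F$ --- I would appeal directly to Lemma~\ref{L:UNIQUE_SOLUTION_EXISTENCE}. By the reading of what it means for $X =_{RS} t_X$ to \emph{have consistent solutions}, there is a process $q \notin F$ with $q =_{RS} t_X\{q/X\}$; since by hypothesis $X$ is strongly guarded and does not occur in the scope of any conjunction in $t_X$, Lemma~\ref{L:UNIQUE_SOLUTION_EXISTENCE} applies verbatim and yields $\langle X|X=t_X \rangle \notin F$.

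For the converse implication, I would exhibit $\langle X|X=t_X \rangle$ itself as a consistent solution. First I would record that $\langle X|X=t_X \rangle$ solves the equation, independently of any hypothesis on $t_X$: the single one-step unfolding gives $\langle X|X=t_X \rangle \Rrightarrow_1 \langle t_X|X=t_X \rangle \equiv t_X\{\langle X|X=t_X \rangle/X\}$, and Lemma~\ref{L:MULTI_STEP_UNFOLDING_IMPLIES_RS} turns this $\Rrightarrow$-relationship into $\langle X|X=t_X \rangle =_{RS} t_X\{\langle X|X=t_X \rangle/X\}$. Under the assumption $\langle X|X=t_X \rangle \notin F$, this process is then a solution of $X =_{RS} t_X$ that is consistent, so consistent solutions exist.

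A remark on where the work sits. I do not expect any genuine obstacle here: the corollary merely repackages Lemma~\ref{L:UNIQUE_SOLUTION_EXISTENCE} together with the unfolding identity, both of which carry the real weight (and the former is the delicate one, proceeding by the non-well-founded proof-tree argument over the set $\Omega$). The one point worth flagging is the asymmetry in the use of hypotheses: the solution-identity $\langle X|X=t_X \rangle =_{RS} t_X\{\langle X|X=t_X \rangle/X\}$ holds for \emph{every} recursive term and needs neither strong guardedness nor the absence of conjunction, so those two side conditions enter only through the forward direction, that is, only through Lemma~\ref{L:UNIQUE_SOLUTION_EXISTENCE}.
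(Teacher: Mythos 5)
Your proposal is correct and follows essentially the same route as the paper: the paper's one-line proof ``Immediately by Theorem~\ref{T:UNIQUE_SOLUTION}'' simply packages the two ingredients you invoke directly, since the proof of that theorem establishes the solution identity $\langle X|X=t_X \rangle =_{RS} t_X\{\langle X|X=t_X \rangle/X\}$ via $\Rrightarrow_1$ and Lemma~\ref{L:MULTI_STEP_UNFOLDING_IMPLIES_RS}, and obtains consistency of $\langle X|X=t_X \rangle$ from Lemma~\ref{L:UNIQUE_SOLUTION_EXISTENCE}. Your unpacking is, if anything, slightly more careful, since the backward direction genuinely needs the solution identity from inside the theorem's proof rather than from its statement alone.
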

\begin{proof}
  Immediately by Theorem~\ref{T:UNIQUE_SOLUTION}.
\end{proof}

We  conclude this section with providing a brief discussion.
For Theorem~\ref{T:UNIQUE_SOLUTION}, the condition that $X$ is strongly guarded can not be relaxed to that $X$ is weakly guarded.
For instance, consider the equation $X =_{RS} \tau.X$, it has infinitely many consistent solutions.
In fact, for any $p$, it always holds that $p =_{RS} \tau.p$.
Moreover, the condition that $p,q\notin F$ is also necessary.
For example, both $\langle X| X = a.X \rangle$ and $\bot$ are  solutions of the equation $X =_{RS} a.X$, but they are not equivalent modulo $=_{RS}$.


\section{Conclusions and future work}

This paper considers recursive operations over LLTSs in pure process-algebraic style and
a process calculus $\text{CLL}_R$, which is obtained from CLL by adding recursive operations, is proposed.
We show that the behavioral relation $\sqsubseteq_{RS}$ is precongruent w.r.t all operations in $\text{CLL}_R$, which reveals that this calculus supports compositional reasoning.
Moreover, we also provide a theorem on the uniqueness of consistent solution of a given equation $X =_{RS} t_X$ where $X$ is required to be strongly guarded and does not occur in the scope of any conjunction in $t_X$.

Although CLL contains logic operators $\wedge$ and $\vee$ over processes, due to lack of modal operators, it does not afford describing abstract properties of concurrent systems.
As we know, some modal operators could be characterized by equations and fixpoints \cite{Bradfield01}.
Fortunately, under the mild condition that the set of actions $Act$ is finite, we can integrate standard temporal operators \emph{always} and \emph{unless} into $\text{CLL}_R$ \cite{Zhu13} but this requires us to strengthen Theorem~\ref{T:UNIQUE_SOLUTION} by removing the restriction that \textquotedblleft recursive variables do not occur in the scope of any conjunction in recursive specifications\textquotedblright.
We leave the strengthened Theorem~\ref{T:UNIQUE_SOLUTION} as a open problem.

In this paper, we adopt a proof method \emph{well-ordered proof tree contradiction} to obtain properties of $F$-predicate. It reflects a kind of principle \emph{negation as failure} and it is different from proof method \emph{witnesses} adopted by L\"{u}ttgen and Vogler.
Their method requires one to find proofs (i.e., witness set) to illustrate the existence of properties.
However, the way of constructing witnesses is similar to our method.

Future work could proceed along two directions.
Firstly, we will add hiding operator to $\text{CLL}_R$. As an important feature of LLTS \cite{Luttgen10}, hiding in the presence of recursion may lead to divergence and introduce inconsistency by (LTS2).
For example, $\langle X | X= a.X \rangle \notin F$ but $\langle X |X= a.X \rangle \backslash a \in F$.
The other direction of future work is to find a (ground) complete proof system for regular processes in $\text{CLL}_R$ along lines adopted in \cite{Milner89,Baeten08}.
Here a process is regular if its LTS has only finitely many states and transitions.


\begin{thebibliography}

\bibitem[Aceto 2001]{Aceto01}
  Aceto, L., Fokkink, W.J. and Verhoef, C. (2001) Structural operational semantics.
  In Bergstra~J.A., Ponse~A. and Smolka~S.A. (editors), \emph{Handbook of Process Algebra}, Chapter \textbf{3}, 197-292. Elsevier Science.

\bibitem[Andersen \emph{et al.} 1994]{Andersen94}
   Andersen, H.R.,  Stirling, C. and  Winskel, G. (1994)
   A compositional proof system for the modal $\mu $-calculus.
    \emph{Proceeding of the 9th Annual IEEE Symposium on Logic in Computer Science}, 144-153. IEEE Computer Society Press.


 \bibitem[Baeten and Bravetti 2008]{Baeten08}
    Baeten, J.C.M. and Bravetti, M. (2008) A ground-complete axiomatisation of finite-state processes in a generic process algebra. \emph{Math. Sturuct. in Comp. Science} \textbf{18}, 1057-1089.


  \bibitem[Bergstra \emph{et al.} 2001]{Bergstra01}
      Bergstra, J.A.,  Fokkink, W. and Ponse, A. (2001) Process algebra with recursive operations. In Bergstra, J.A., Ponse, A. and Smolka, S.A. (editors),
    \emph{Handbook of Process Algebra}, Chapter \textbf{5},  333-389. Elsevier Science.


   \bibitem[Bloom 1994]{Bloom94}
    Bloom, B. (1994) Ready simulation, bisimulation, and the semantics of the CCS-like languages.
    Ph.D dissertation, Massachusetts Institute of Technology, Cambridge, Mass., Aug.


 \bibitem[Bol and Groote 1996]{Bol96}
    Bol, R. and Groote, J.F. (1996) The meaning of negative premises in transition system specifications. \emph{Journal of the ACM} \textbf{43}, 863-914.

\bibitem[Bradfield and Stirling 2001]{Bradfield01}
  Bradfield, J.C. and Stirling, C. (2001) Modal logics and mu-calculi: an introduction.
  In Bergstra~J.A., Ponse~A. and Smolka~S.A. (editors), \emph{Handbook of Process Algebra}, Chapter \textbf{4}, 293-330. Elsevier Science.




\bibitem[Cleaveland and L\"{u}ttgen 2000]{Cleaveland00}
    Cleaveland, R. and L\"{u}ttgen, G. (2000) A semantic theory for heterogeneous system design. In \emph{FSTTCS 2000, LNCS} \textbf{1974}, 312-324. Springer-Verlag.

\bibitem[Cleaveland and L\"{u}ttgen 2002]{Cleaveland02}
    Cleaveland, R. and L\"{u}ttgen, G. (2002) A logical process calculus.  In  \emph{{EXPRESS 2002},   {ENTCS}} {\textbf{68} 2}. {Elsevier Science}.


%


\bibitem[Gelfond and Lifchitz 1988]{Gelfond88}
    Gelfond, M. and  Lifchitz, V. (1988) The stable model semantics for logic programming.
   In {Kowalski, R.} and {Bowen, K.} (editors),
   {\emph{Proceedings of the 5th International Conference on Logic Programming}}, 1070-1080.
   {MIT Press}.

%



%

 \bibitem[Groote 1992]{Groote92}
   {Groote, J.F.} and
   {Vaandrager, F.} (1992)
   {Structured operational semantics and bisimulation as a congruence}.
   {\emph{Information and Compuation}}
   {\textbf{100}},
   {202-260}.


 \bibitem[Groote 1993]{Groote93}
   {Groote, J.F.} (1993)
   {Transition system specifications with negative premises}.
   {\emph{Theoretical Computer Science}}
   {\textbf{118}}, {263-299}.

%

 \bibitem[L\"{u}ttgen and Vogler 2007]{Luttgen07}
   {L\"{u}ttgen, G.} and
   {Vogler, W.} (2007)
   {Conjunction on processes: full-abstraction via ready-tree semantics}.
   {\emph{Theoretical Computer Science}}
   {\textbf{373}}
  ({1-2}),
   {19-40}.

 \bibitem[L\"{u}ttgen and Vogler 2010]{Luttgen10}
   {L\"{u}ttgen, G.} and
   {Vogler, W.} (2010)
   {Ready simulation for concurrency: it's logical}.
   {\emph{Information and computation}}
   {\textbf{208}},
   {845-867}.


 \bibitem[L\"{u}ttgen and Vogler 2011]{Luttgen11}
   {L\"{u}ttgen, G.} and
   {Vogler, W.} (2011)
   {Safe reasoning with Logic LTS}.
   {\emph{Theoretical Computer Science}}
   {\textbf{412}},
   {3337-3357}.

   \bibitem[Milner 1983]{Milner83}
   Milner, R. (1983) Calculi for synchrony and asynchrony. \emph{Theoretical Computer Science} \textbf{25} (3), 267-310.

   \bibitem[Milner 1989]{Milner89}
   {Milner, R.} (1989)
   {A complete axiomatization for observational congruence of finite-state behaviours}.
   {\emph{Information and Computation}}
   {\textbf{81}},
   {227-247}.
%

 \bibitem[Nicola and Hennessy 1983]{Nicola83}
   {De Nicola, R.} and  {Hennessy, M.} (1983)
   {Testing equivalences for processes}.
   {\emph{Theoretical Computer Science}}
   {\textbf{34}},
   {83-133}.

%

\bibitem[Plotkin 1981]{Plotkin81}
   {Plotkin, G.} (1981)
   {A structural approach to operational semantics}.
   {Report DAIMI FN-19},
   {Computer Science Department, Aarhus University}.
  Also in
   {\emph{Journal of Logic and Algebraic Programming}}
   {\textbf{60}}
  ({2004}),
   {17-139}.

%
%

 \bibitem[Verhoef 95]{Verhoef95}
   {Verhoef, C.} (1995)
   {A congruence theorem for structured operational semantics with predicates and negative premisess}.
   {\emph{Nordic Journal of Computing}}
   {\textbf{2} (2)},
   {274-302}.

 \bibitem[Zhang \emph{et al.} 2011]{Zhang11}
   {Zhang, Y., Zhu, Z.H., Zhang, J.J. and Zhou, Y.} (2011)
   {A process algebra with logical operators}.
   {arXiv:1212.2257}.

 \bibitem[Zhu \emph{et al.} 2013]{Zhu13}
   {Zhu, Z.H.},
   {Zhang, Y.} and
   {Zhang, J.J.} (2013)
   {Merging process algebra and action-based computation tree logic}.
   {Submitted to \emph{Theoretical Computer Science}, arXiv:1212.6813}.
\end{thebibliography}
\end{document}